\documentclass[10pt,a4paper]{article}
\usepackage[utf8]{inputenc}
\usepackage{amsmath, mathtools}
\usepackage{amsfonts, bbm, mathrsfs, yfonts}
\usepackage{amssymb, amsthm, upgreek}
\usepackage{enumitem, hyperref,xcolor, textcmds}
\usepackage{xargs}
\usepackage{booktabs}
\usepackage{geometry}
\usepackage{tikz}
\usepackage{caption, subcaption, placeins}
\usepackage{newtxtext}
\usepackage[subscriptcorrection]{newtxmath}

\usepackage{graphicx}
\usepackage{natbib}

\graphicspath{{../../figures/}}

\usepackage[plain,noend]{algorithm2e}

\usepackage{upgreek}
\usetikzlibrary{calc,arrows.meta,positioning, fit}
\DontPrintSemicolon

\makeatletter
\renewcommand{\algocf@captiontext}[2]{#1\algocf@typo. \AlCapFnt{}#2} 
\def\@algocf@capt@plain{top}
\renewcommand{\algocf@makecaption}[2]{%
  \addtolength{\hsize}{\algomargin}%
  \sbox\@tempboxa{\algocf@captiontext{#1}{#2}}%
  \ifdim\wd\@tempboxa >\hsize
    \hskip .5\algomargin%
    \parbox[t]{\hsize}{\algocf@captiontext{#1}{#2}}
  \else%
    \global\@minipagefalse%
    \hbox to\hsize{\box\@tempboxa}
  \fi%
  \addtolength{\hsize}{-\algomargin}%
}
\makeatother


\def\dist{\operatorname{dist}}
\def\trace{\textrm{Tr}}

\renewcommand{\d}{{\rm d}}
\def\Leb{\mathrm{Leb}}
\newcommand{\argmax}{\operatorname*{arg\,max}}


\title{Geodesic slice sampling on Riemannian manifolds}

\author{Alain Durmus\thanks{\'Ecole Polytechnique, France, Email: alain.durmus@polytechnique.edu} 
	\and Samuel Gruffaz\thanks{Universit\'e Paris Saclay, France, Email: samuel.gruffaz@ens-paris-saclay.fr} 
	\and Mareike Hasenpflug\thanks{University of Passau, Germany, Email: mareike.hasenpflug@uni-passau.de, daniel.rudolf@uni-passau.de} 
	\and Daniel Rudolf\footnotemark[3]}

\theoremstyle{definition}
\newtheorem{theorem}{Theorem}
\newtheorem{lemma}[theorem]{Lemma}

\newtheorem{remark}[theorem]{Remark}
\newtheorem{example}[theorem]{Example}
\newtheorem{assumption}{Assumption}

\begin{document}
\maketitle

\begin{abstract}
We propose a theoretically justified and practically applicable slice sampling based Markov chain Monte Carlo (MCMC) method for approximate sampling from probability measures on Riemannian manifolds. 
The latter naturally arise as posterior distributions in Bayesian inference of matrix-valued parameters, 
for example belonging to either the Stiefel or the Grassmann manifold.
Our method, called geodesic slice sampling, 
generalizes Hit-and-run slice sampling on $\mathbb{R}^{d}$ to Riemannian manifolds by 
abstracting straight lines to geodesics.
It is reversible with respect to the distribution of interest, and converges to it in total variation distance.
We demonstrate the robustness of our sampler's performance compared to other MCMC methods dealing with manifold valued distributions through extensive numerical experiments, on both synthetic and real data.
In particular, we illustrate its remarkable ability to cope with anisotropic target densities,
without using gradient information and preconditioning. 
\end{abstract}


\section{Introduction}
 Parameters in statistical models are often constrained to ensure identifiability or to incorporate expert knowledge \citep{holbrook2016bayesian,holbrook2020nonparametric, mantoux2021understanding}.
 These constraints impose a manifold structure on the parameter space, providing a geometric interpretation of the model.
Extracting information with Bayesian inference from such models requires the ability to sample (at least
approximately) from the usually highly intractable posterior distribution on the manifold \citep{lieAccepteddimension}.
However, due to the involved geometric features, standard $\mathbb{R}^d$ approaches are not directly applicable \citep{zappa2018monte}.
This highlights the importance of developing efficient sampling techniques specifically designed for manifolds.

In this paper we tackle this problem and consider a target measure $\pi$ on a general Riemannian manifold $\mathsf{M}$ that has a density with respect to the Riemannian measure $\nu_{\mathfrak{g}}$ of the manifold, i.e., that is of the form
\begin{equation}\label{Eq: Target - introduction}
	\pi(\d x) = \frac{p(x)}{\int_{\mathsf{M}} p(y)\, \nu_{\mathfrak{g}}(\d y)} \nu_{\mathfrak{g}}(\d x),
\end{equation}
where $p : \mathsf{M} \to [0, \infty)$ satisfies $\int_{\mathsf{M}} p(y) \ \nu_{\mathfrak{g}}(\d y) \in (0, \infty)$.
Such frameworks are encountered  in various applications, 
e.g., brain connectivity network analysis \citep{mantoux2021understanding}, dimensionality reduction \citep{holbrook2016bayesian}, computer vision \citep{lui2012advances}, texture analysis \citep{kunze2004bingham} and protein conformation modelling \citep{hamelryck2006sampling}.
In most instances, the Riemannian manifolds that arise are matrix manifolds, such as the Stiefel manifold or the Grassmann manifold \citep{edelman1998geometry}.

One popular way to approximately sample from intractable distributions are Markov chain Monte Carlo (MCMC) methods. 
We follow this approach and propose a practical MCMC algorithm based on slice sampling techniques.
This method, that we refer to as \emph{geodesic slice sampler} (GSS), incorporates the geometry of the underlying Riemannian manifold by using the geodesics.
We briefly describe the transition mechanism of GSS.
It crucially exploits the fact that on a Riemannian manifold, for every pair $(x,v)$, where $x \in \mathsf{M}$ and $v\in T_x\mathsf{M}$ is an element of the tangent space at $x$, 
there exists a unique geodesic $\gamma_{(x,v)}$ emanating from $x$ in the direction $v$.

More precisely, given the current state $x \in \mathsf{M}$ with $p(x)> 0$, a single transition of the GSS targeting the distribution $\pi$ proceeds in three steps.
First, a level $t$ is uniformly sampled from the interval $(0, p(x))$.
Second, a geodesic $\gamma_{(x,v)}$ that passes through $x$ is randomly chosen.
Lastly, a point is generated from the intersection of the geodesic $\gamma_{(x,v)}$ and the level set $L(t) := \{ y \in \mathsf{M} \mid p(y) > t\}$. This final step presents the most significant challenge and requires special care to ensure invariance of  the target distribution. To this end, we carefully adapt Neal's stepping-out and shrinkage procedure \cite{neal2003slice} to our manifold setting.

Random walk Metropolis-Hastings, Hamiltonian Monte Carlo and Langevin-type algorithms have already been adapted to the Riemannian manifold setting.
Moreover, there exist several tailor-made algorithms for specific manifolds.
Consult Section \ref{Sec: Literature review} for a literature review of  MCMC-methods on Riemannian manifolds.
However, to the best of our knowledge GSS represents the first practical slice sampling-based MCMC method applicable to general Riemannian manifolds.
Following a slice sampling paradigm is appealing because, by design, the length of the transition step is fitly chosen for each transition.
This is especially advantageous when efficiently exploring the target distribution requires  varying the length of transition steps based on the position and direction of the move, e.g., due to anisotropy.
It is worth noting that some slice sampling algorithms achieve this without any hyperparameters, that need to be tuned. 
However, they are usually introduced by a practical implementation.
Nonetheless, it is expected that the resulting slice sampler algorithms' performance will be more robust with regard to the choice of these parameters compared to, for example, the sensitivity of a Metropolis-Hastings or Hamiltonian Monte Carlo algorithm to the selection of step size. We refer to \cite{neal2003slice} for some more details on these properties of slice sampling; see also
\citep{murray2010elliptical} for a further discussion of advantages and limitations of a slice sampling approach.

To conclude this introduction, our main contributions can be summarized as follows:
\begin{itemize}
	\item We propose a slice sampling based MCMC-method, which we call geodesic slice sampling, to target distributions of the form \eqref{Eq: Target - introduction}.
	It combines a geodesical Hit-and-run algorithm with a 1-dimensional slice sampler arriving at a method that generalizes Hit-and-run slice sampling 
	to Riemannian manifolds.
	\item We demonstrate the applicability of GSS in numerical experiments and evaluate its strong suits as well as its drawbacks. Its main feature is the robustness of its performance to the shape
	of the target, without using gradient information and with an easy tuning of parameters using the diameter of the manifold.
	\item 
	 We verify the correctness of GSS by showing 	reversibility with respect to $\pi$ and convergence to $\pi$ in total variation distance.
\end{itemize}

The structure of the paper is as follows.
First we provide an introduction to slice sampling on $\mathbb{R}^d$ in Section \ref{Sec: Slice sampling on Rd}, before turning to GSS on general Riemannian manifolds in Section \ref{Sec: Geodesic slice sampling}.
This is followed by a literature review of  MCMC-methods on Riemannian manifolds in Section \ref{Sec: Literature review}.
Readers that are interested in more details on the differential geometry background used in Section \ref{Sec: Methodology} may find it in Supplementary material \ref{Sec: Manifolds}.
Section \ref{Sec: Applications} is devoted to numerical experiments.
 Further numerics can also be found in Supplementary material \ref{Sec: Numerics supplement}.
The proof of the reversibility and ergodicity of GSS with respect to $\pi$ is given in Supplementary material \ref{Sec: Validity}.
A formal treatment of the stepping-out and shrinkage procedure is also included there.

\subsection{General notation}

We introduce some general notation that is valid throughout the whole paper.
Let $ \mathbb{N} $ be the set of strictly positive integers and call $ \mathbb{N}_0 := \mathbb{N} \cup \{0\} $.
We denote by $\Leb_d$ the $d$-dimensional \emph{Lebesgue measure} on $\mathbb{R}^d$
and by $\mathcal{B}(\mathbb{R}^{d})$ the \emph{Borel-$\sigma$-algebra}.
Similar, for a set $\mathsf{S} \in \mathcal{B}(\mathbb{R}^{d})$ we write $\mathcal{B}(\mathsf{S})$ for the trace Borel-$\sigma$-algebra on $\mathsf{S}$.
Moreover, we set $\mathbb{S}^{d-1} := \{ x \in \mathbb{R}^d \mid \|x\| = 1\}$ to be the $d-1$-dimensional \emph{Euclidean unit sphere}.
For $x \in \mathbb{R}^d$, let $x^\top$ be its transpose, and write $\mathrm{Id}_d \in \mathbb{R}^{d \times d}$ for the \emph{identity matrix}.
For any set $\mathsf{S} \in \mathcal{B}(\mathbb{R}^{d})$, finite or
satisfying $\Leb_d(\mathsf{S})\in (0, \infty)$, 
denote the discrete, respectively continuous, \emph{uniform distribution} on $\mathsf{S}$ as $\mathrm{Unif}(\mathsf{S})$.
Whenever we introduce random variables in the sequel, we assume them to be defined on some rich enough probability space $(\Omega, \mathcal{F}, \mathbb{P})$.
Let $(\mathsf{X}, \mathcal{X})$ be a measurable space and let $R$ be an $(\mathsf{X}, \mathcal{X})$-valued random variable.
Then we denote by $\mathbb{P}^{R}(\mathsf{A}) := \mathbb{P}(R \in \mathsf{A})$, $\mathsf{A} \in \mathcal{X}$,
the \emph{distribution of $R$}.
If $\mathbb{P}^R = \mu$ for some probability measure $\mu$, then we also write $R \sim \mu$.
For a set $\mathsf{A} \in \mathcal{X}$, its corresponding indicator function $\mathbbm{1}_{\mathsf{A}}$ is defined as $\mathbbm{1}_{\mathsf{A}}(x) = 1$ if $x \in \mathsf{A}$ and $\mathbbm{1}_{\mathsf{A}}(x) = 0$ if $x \notin \mathsf{A}$.
The Dirac measure at a point $x \in \mathsf{X}$ is denoted as $\delta_x$, i.e. we have $\delta_x(\mathsf{A}) = \mathbbm{1}_{\mathsf{A}}(x)$ for all $\mathsf{A} \in \mathcal{X}$.
 Given another probability measure $\eta$ on $(\mathsf{X}, \mathcal{X})$, we denote by $d_{\mathrm{tv}}(\mu, \eta)=\sup_{\mathsf{A} \in \mathcal{X}} |\mu(\mathsf{A})- \eta(\mathsf{A})|$ the \emph{total variation distance} between $\mu$ and $\eta$.
For a Markov kernel $K: \mathsf{X} \times\mathcal{X} \to [0,1]$ define inductively over the natural numbers $K^1 = K$ and $K^{n+1}(x,\mathsf{A}) = \int_{\mathsf{X}} K^n(y,\mathsf{A})\ K(x, \d y)$ for all $x \in \mathsf{X}$ and $\mathsf{A} \in \mathcal{X}$.
Furthermore, let $\mathsf{Y}$ be a set and let $f: \mathsf{X} \to \mathsf{Y}$ be a map from $\mathsf{X}$ to $\mathsf{Y}$.
For some set $\mathsf{S} \subseteq \mathsf{X}$, we denote by $f\vert_{\mathsf{S}}$ the \emph{restriction of $f$ to $\mathsf{S}$}.
Now equip also $\mathsf{Y}$ with a $\sigma$-algebra and turn it into the measurable space $(\mathsf{Y}, \mathcal{Y})$.
Additionally assume $f$ to be measurable and let $\mu$ be a measure on $(\mathsf{X}, \mathcal{X})$.
We call $f_\sharp\mu(\mathsf{A}) := \mu(f^{-1}(\mathsf{A}))$, $\mathsf{A} \in \mathcal{Y}$,
the \emph{push forward measure} of $\mu$ under $f$, and for $\mathsf{S} \in \mathcal{X}$ we call $\mu\vert_{\mathsf{S}}(\mathsf{A}) := \mu(\mathsf{S} \cap \mathsf{A})$, $\mathsf{A} \in \mathcal{X}$,
the \emph{restriction of $\mu$ to $\mathsf{S}$.}
To emphasize that a union is disjoint we write $\sqcup$.
 We denote the Gaussian distribution on $\mathbb{R}^d$ with mean $x \in \mathbb{R}^{d}$ and covariance matrix $\Sigma \in \mathbb{R}^{d\times d}$ by $\mathcal{N}(x, \Sigma)$. 
For $\kappa \in \mathbb{R}^{d}$, let $\mathrm{diag}(\kappa) \in \mathbb{R}^{d \times d}$ be the diagonal matrix with diagonal entries given by the components of $\kappa$.
Moreover, we define 
$\boldsymbol{0}_d \in \mathbb{R}^d$ to be the vector and 
$\boldsymbol{0}_{d,k}\in \mathbb{R}^{d\times k}$ to be the matrix with all entries zero, and write $\mathrm{tr}(\Sigma)$ for the trace of a square matrix $\Sigma$.
Finally, for simplicity we also use $\land$ and $\lor$ to denote the minimum respectively the maximum between two real numbers, i.e., $r \land s := \min\{r,s\}$ and $r \lor s := \max\{r,s\}$ for $r,s \in \mathbb{R}$.

\section{Methodology: Geodesic slice sampling}\label{Sec: Methodology}

\subsection{Slice sampling on $\mathbb{R}^d$}\label{Sec: Slice sampling on Rd}

In this section we revisit slice sampling on $\mathbb{R}^d$ in order to provide a general introduction to the ideas of (uniform simple) slice sampling.
For simplicity, we assume the target density to be strictly positive throughout this subsection.
The slice sampler, as all MCMC-methods, defines a Markov chain which, for a given measurable unnormalized density $p : \mathbb{R}^d \to (0, \infty)$
that satisfies $\int_{\mathbb{R}^d} p(y) \Leb_d(\d y) \in (0,\infty)$,
can be used to approximately sample from the probability measure
\[
\pi(\d x)= \frac{p(x)}{ \int_{\mathbb{R}^d} p(y) \Leb_d(\d y)} \ \Leb_d(\d x).
\]
At its heart are the \emph{(super) level sets} of $p$ defined by
\[
L(t) := \{x \in \mathbb{R}^d \mid p(x)> t\}, \qquad t \in (0, \infty),
\]
which contain all points that have a function value with respect to $p$ that
is greater than a specified value.
The uniform simple slice sampler, which we also call \emph{idealized slice sampler}, generates approximate samples from $\pi$ by drawing suitably from these level sets.

We denote by $(Y_k)_{k \in \mathbb{N}}$ the Markov chain that corresponds to the idealized slice sampler.
A transition from $Y_k = x \in \mathbb{R}^d$ to $Y_{k+1}$ works as follows:
\begin{enumerate}
	\item Draw a random level $T_{k+1} \sim \mathrm{Unif}\big((0, p(x))\big)$, call the result $t$. This specifies a level set $L(t)$.
	\item Draw $Y_{k+1}\sim \mathrm{Unif}(L(t))$ uniformly from this level set $L(t)$.
\end{enumerate}
A graphic representation of the transition mechanism for $d=1$ can be found in Figure \ref{Fig: Idealized slice sampling}.
\begin{figure}
	\centering
	\subfloat[Sample the level $t$ uniformly from $(0, p(x))$.]{\includegraphics[width=0.3\textwidth]{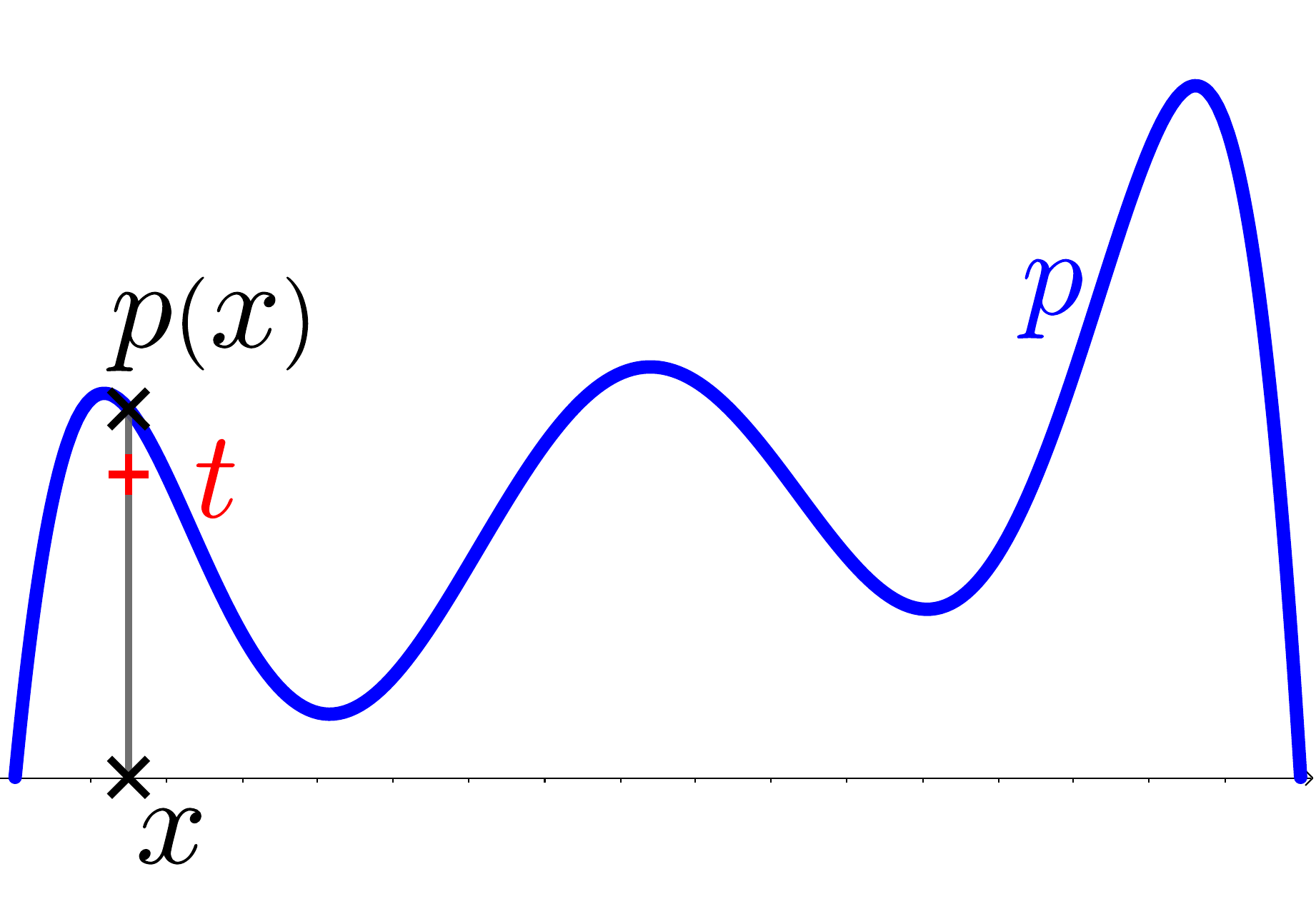}}
	\hspace*{0.015\textwidth}
	\subfloat[The level $t$ specifies a level set $L(t)$.]{\includegraphics[width=0.3\textwidth]{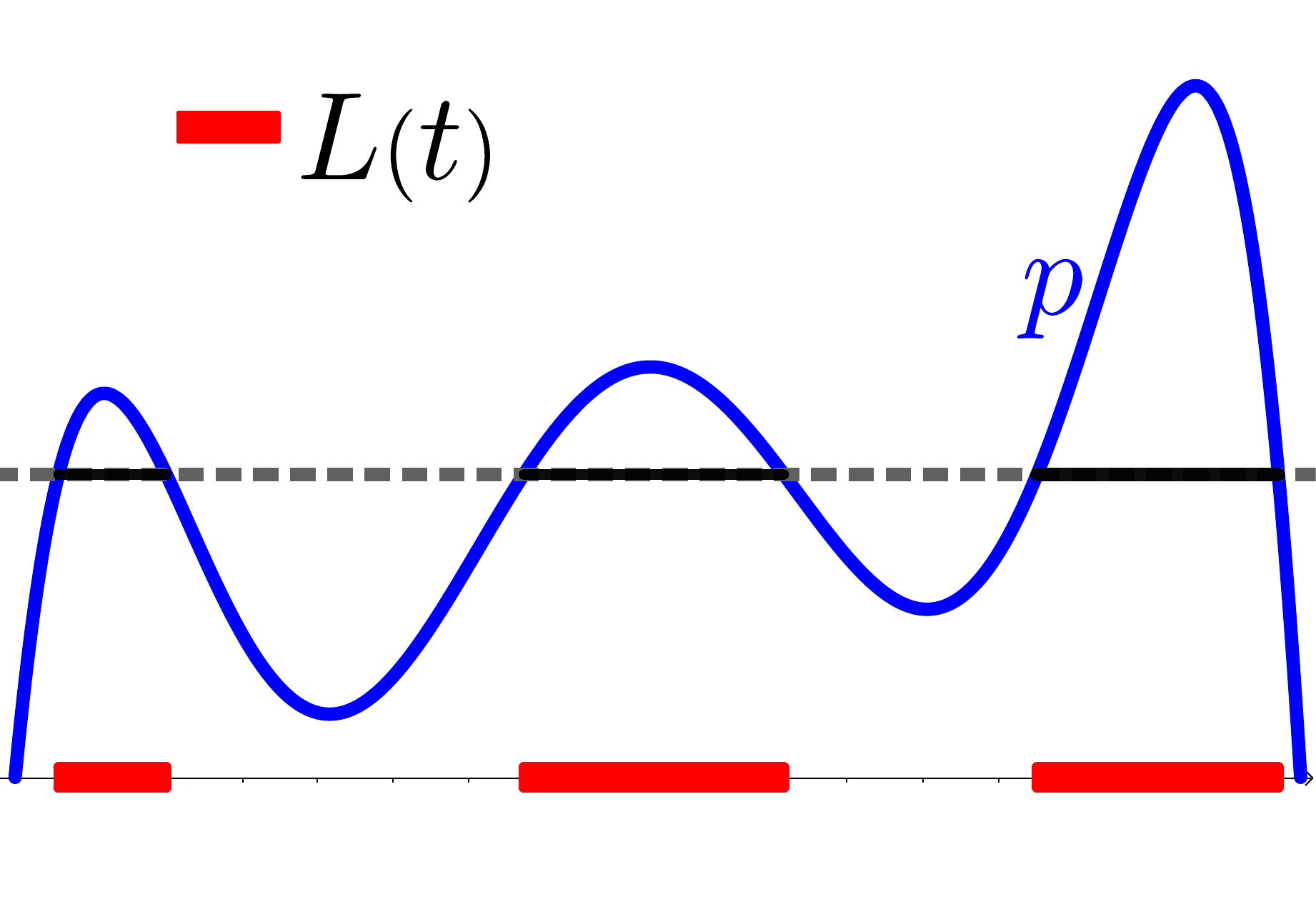}}
	\hspace*{0.015\textwidth}
	\subfloat[Sample the next point $y$ uniformly from $L(t)$.]{\includegraphics[width=0.3\textwidth]{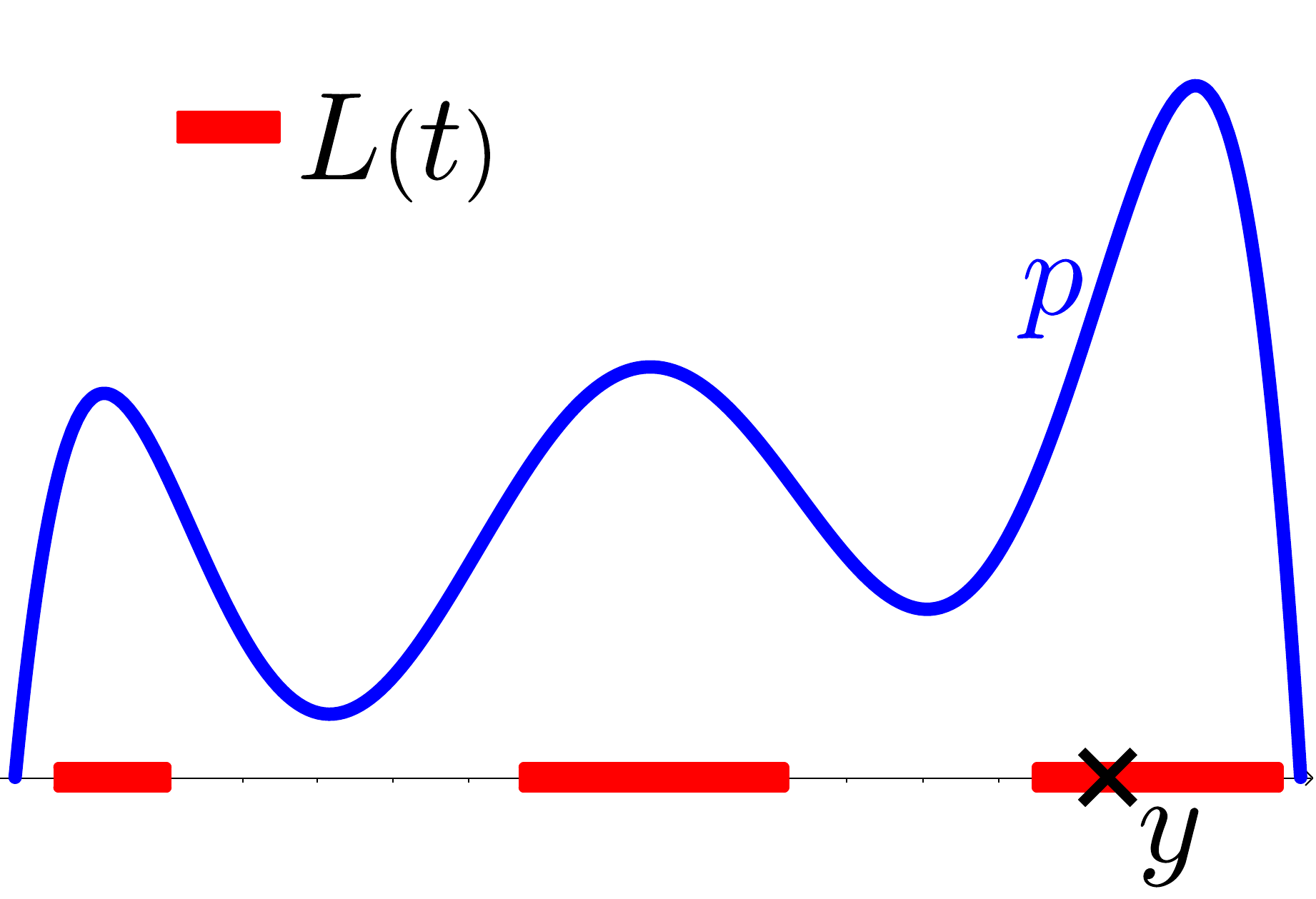}}%
	\caption{Transition mechanism of the idealized slice sampler for initial point $x$.}
	\label{Fig: Idealized slice sampling}
\end{figure}
We can also describe the idealized slice sampler through its transition kernel given by
\begin{align*}
	H: \mathbb{R}^d\times \mathcal{B}(\mathbb{R}^d) &\to [0,1]\\
	(x,\mathsf{A}) & \mapsto \frac{1}{p(x)} \int_{(0, p(x))} \frac{1}{\Leb_d\big(L(t)\big)}\int_{L(t)} \mathbbm{1}_{\mathsf{A}}(y)\ \Leb_d(\d y)\, \Leb_1(\d t).
\end{align*}
Since slice sampling was brought to the attention of the statistics community in \citep{besag1993spatial}, 
the properties of idealized slice sampling, 
have been investigated in several works, e.g., \citep{mira2002efficiency,natarovskii2021quantitative,roberts2002convergence, rudolf2013positivity, rudolf2018comparison}.  
The following result illustrates a major advantage of slice sampling:
\citet[Corollary 3.7]{natarovskii2021quantitative} show that the spectral gap of $H$ only depends on the level set function $t \mapsto \Leb_d(L(t))$, i.e., on the volume of the level sets, not their shape.
This means that the performance of the idealized slice sampler is ignorant of the introduction of, e.g., multimodality, local modes or anisotropy as long as the volume of the level sets is not modified.

Unfortunately, each transition of the idealized slice sampler requires to sample from the uniform distribution $\mathrm{Unif}(L(t))$, $t > 0$, of a level set.
Since in general there is no efficient algorithm to tackle this problem,
this is a major prevention for the implementation of the idealized slice sampler.
One modification strategy to obtain a practical algorithm is called \emph{hybrid slice sampling}, see \citep{latuszynski2024convergence}.
Here, the uniform distribution on the level sets is replaced by a family of kernels $(H_t)_{t > 0}$
such that for any $t>0$ (where it is well-defined) $\mathrm{Unif}(L(t))$ is invariant for $H_t$.
This leads to a transition kernel of the form
\[
(x,\mathsf{A}) \mapsto \frac{1}{p(x)} \int_{(0, p(x))} H_t(x,\mathsf{A})\ \Leb_1(\d t), \qquad x \in \mathbb{R}^d, \mathsf{A} \in \mathcal{B}(\mathbb{R}^d).
\]
A concrete example for the case $d=1$ is the \emph{stepping-out} and a \emph{shrinkage} procedure introduced in \citep{neal2003slice}. 
To provide a basic understanding of these two schemes,
we give a verbal description and a visualization in Figure \ref{Fig: Step-out and shrink}.
The stepping-out procedure is treated in more detail 
in Supplementary material \ref{Sec: Stepping-out}. This includes pseudocode and a careful definition of the generated distributions.
For an extensive treatment of the shrinkage procedure, we refer to \citep{ReversibilityEllipticalSliceSampler}.

The stepping-out and shrinkage based hybrid slice sampler takes a point $x \in \mathbb{R}$ (current state of Markov chain) and a level set $L(t)$ (level generated as in the first step of the transition mechanism of the idealized slice sampler), and proceeds in two steps.
\begin{description}
	\item [ \textbf{1. Stepping-out:} ] Under the specification of two hyper parameters $w > 0$ and $m \in \mathbb{N}$, the stepping-out procedure chooses a random segment of $\mathbb{R}$ containing $x$. 
	To this end, an interval of length $w$ is placed randomly around $x$ by sampling the left interval boundary point $L_1 \sim \mathrm{Unif}\big((x-w,x)\big)$
	and setting the right interval boundary to $R_1 = L_1 + w$.
	Then this interval is extended iteratively to the left by intervals of length $w$ until for the first time the left boundary leaves the level set $L(t)$, or the maximal number of stepping-out steps to the left $\upiota$ is reached.
	Here
	$\upiota$ is obtained by randomly splitting $m+1$, the maximal number of total stepping-out steps, into two summands $\upiota$ and $m +1 - \upiota$. 
	Similarly the interval is extended iteratively to the right by intervals of length $w$ until the right boundary hits $\mathbb{R}\setminus L(t)$ for the first time, or $m+1-\upiota$ steps have been performed. 
	This provides a randomly generated interval $I = (L^\ast,  R^\ast)$, where 
	\[
	L^\ast = L_1 - \big(\uptau_\ell \land (\upiota-1) \big)w \qquad \text{and} \qquad R^\ast =  R_1 + \big(\uptau_r\land (m - \upiota)\big)w
	\]
	with 
	\[
	\uptau_\ell := \inf\{k \geq 0 \mid L_1 - k w \notin L(t)\} \qquad \text{and} \qquad \uptau_r:= \{k \geq 0 \mid R_1 + kw \notin L(t)\}.
	\]
	\item [ \textbf{2. Shrinkage:} ] We generate a point from $I\cap L(t)$ with the shrinkage procedure.
	Roughly speaking, it is an adaptive acceptance/rejection scheme
	that shrinks the proposal area with each rejection.\footnote{For convenience of applying the results from \cite{ReversibilityEllipticalSliceSampler} we describe here a scheme that differs slightly from the original one in \citep{neal2003slice} and rather resembles the one of the elliptical slice sampler, see \citep{murray2010elliptical}.
		Crucially, we view here the interval $I$  as a circle, i.e., if one ``leaves'' the interval at the right boundary, it is immediately ``reentered'' at the left boundary.}
	Observe that a set
	$\mathsf{J} \subseteq \mathbb{R}$  (viewed as a circle) is divided by two points $y,z \in \mathsf{J}$ into two segments, namely $(y \land z, y \lor z)\cap \mathsf{J}$ and $\mathsf{J}\setminus (y \land z, y \lor z)$.
	If  $\mathsf{J}$ contains the initial point $x$, we set
	\[
	\mathbb{J}(y,z,\mathsf{J}) :=\begin{dcases}
		(y \land z, y \lor z)\cap \mathsf{J},& \text{if } x \in (y \land z, y \lor z),\\
		\mathsf{J} \setminus (y \land z, y \lor z), & \text{otherwise},
	\end{dcases}
	\]
	to be the segment containing $x$.
	The shrinkage procedure now builds a sequence of such segments.
	First we sample $Y_1 \sim \mathrm{Unif}(I)$, and set $\mathsf{J}_1 = I$.
	For $k \in \mathbb{N}$, we let $Y_{k+1}$ be a random variable with conditional distribution
	\[
	\mathbb{P}\left(Y_{k+1} \in \cdot \mid Y_1, \ldots, Y_k, \mathsf{J}_1, \ldots, \mathsf{J}_k\right) = \mathrm{Unif}(\mathsf{J}_{k}),
	\]
	that is, we draw the next proposal uniformly from the current segment.
	Observe that $Y_{k+1}$ divides $\mathsf{J}_k$ into two segments.
	Set $\mathsf{J}_{k+1} = \mathbb{J}(Y_{k}, Y_{k+1}, \mathsf{J}_{k})$ for $k \in \mathbb{N}$, i.e., we keep the segment of $\mathsf{J}_k$ that contains the initial point $x$.
	This is continued until we generate a proposal that lies in $L(t)$.
	Hence, overall the shrinkage procedure yields a random point $Y^\ast = Y_\uptau$ where $\uptau := \inf\{ k \in \mathbb{N} \mid Y_k \in L(t)\} $.
	If the stepping-out and shrinkage scheme is embedded into a 1-dimensional hybrid slice sampler this point is then the next random variable of the chain.
\end{description}
We comment on the hyperparameters of the stepping-out procedure.
\begin{remark}\label{R: Hyperparameters of stepping-out}
	The output of the stepping-out procedure can be viewed as a ``loose'' approximation of the set $L(t)$.
	The maximal possible length of this approximation is given by $mw$, but also the individual choice of $m$ and $w$ affects the quality of the approximation depending on the shape of $L(t)$.
	Choosing $m$ larger and $w$ smaller can lead to an interval that lies ``tighter'' around $L(t)$.
	However, if $L(t)$ has ``holes'', it is also more likely that parts of $L(t)$ are ``cut off''.
	Moreover, increasing $m$ increases the upper bound for the computational cost of the stepping-out procedure. 
\end{remark}
\begin{figure}
	\centering
	\subfloat[Stepping-out procedure with one step to the left and three steps to the right.]{%
		\makebox[\textwidth]{
			\begin{minipage}{\textwidth}
				\includegraphics[width=0.24\textwidth]{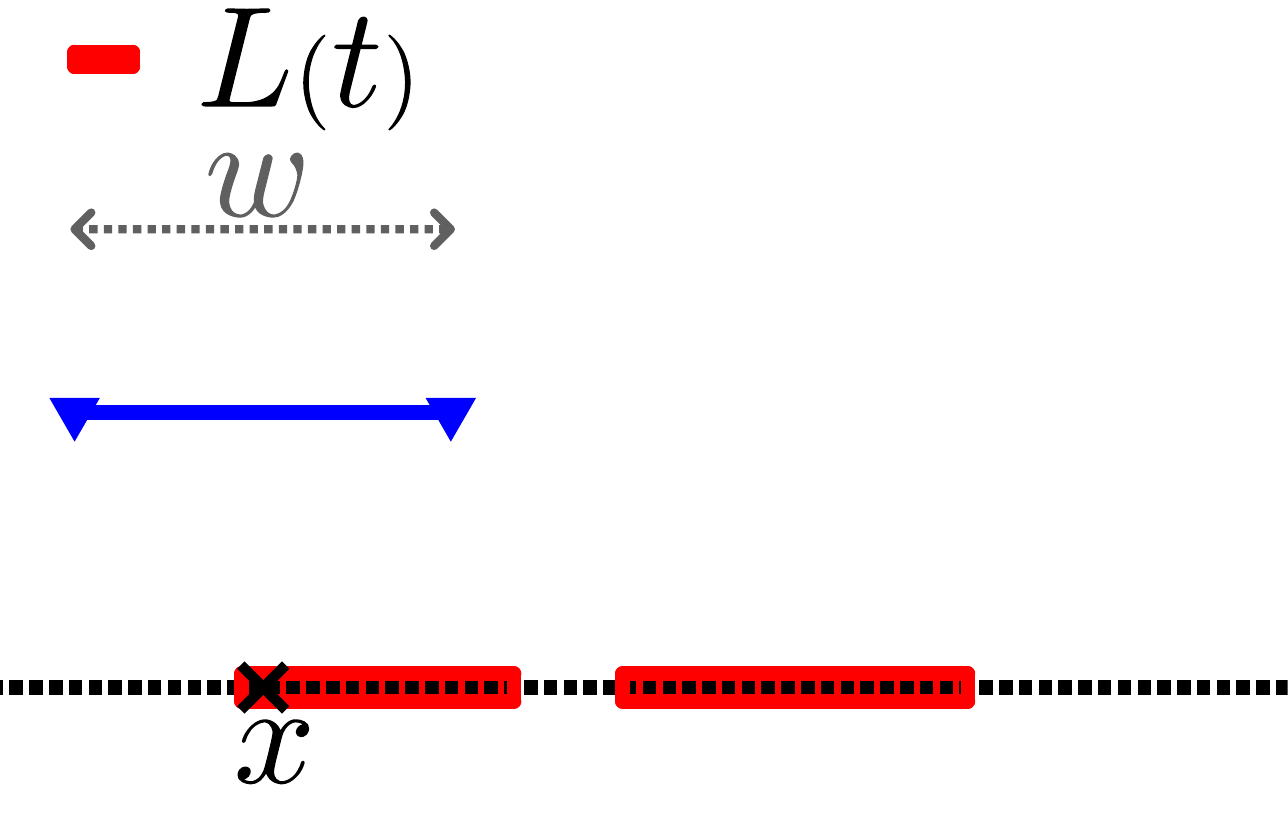}
				\hfill
				\includegraphics[width=0.24\textwidth]{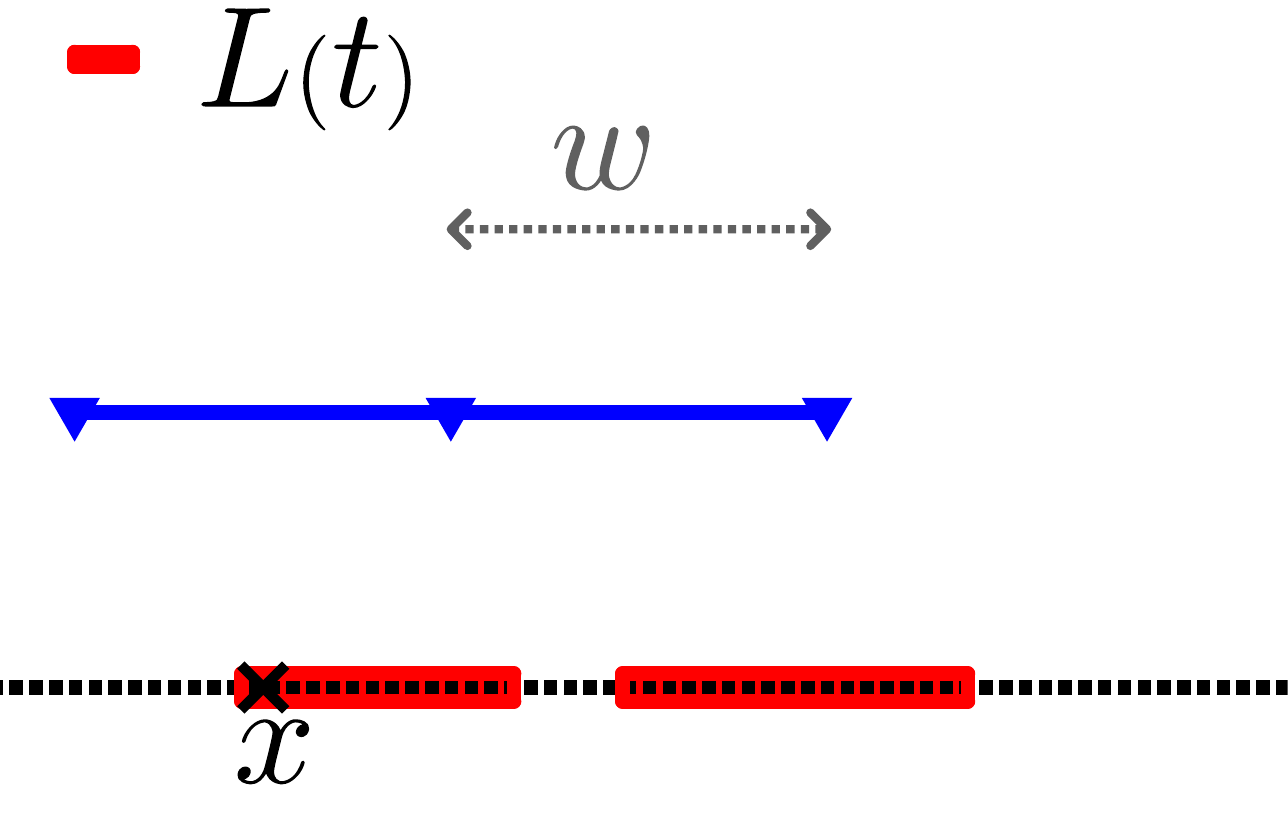}
				\hfill
				\includegraphics[width=0.24\textwidth]{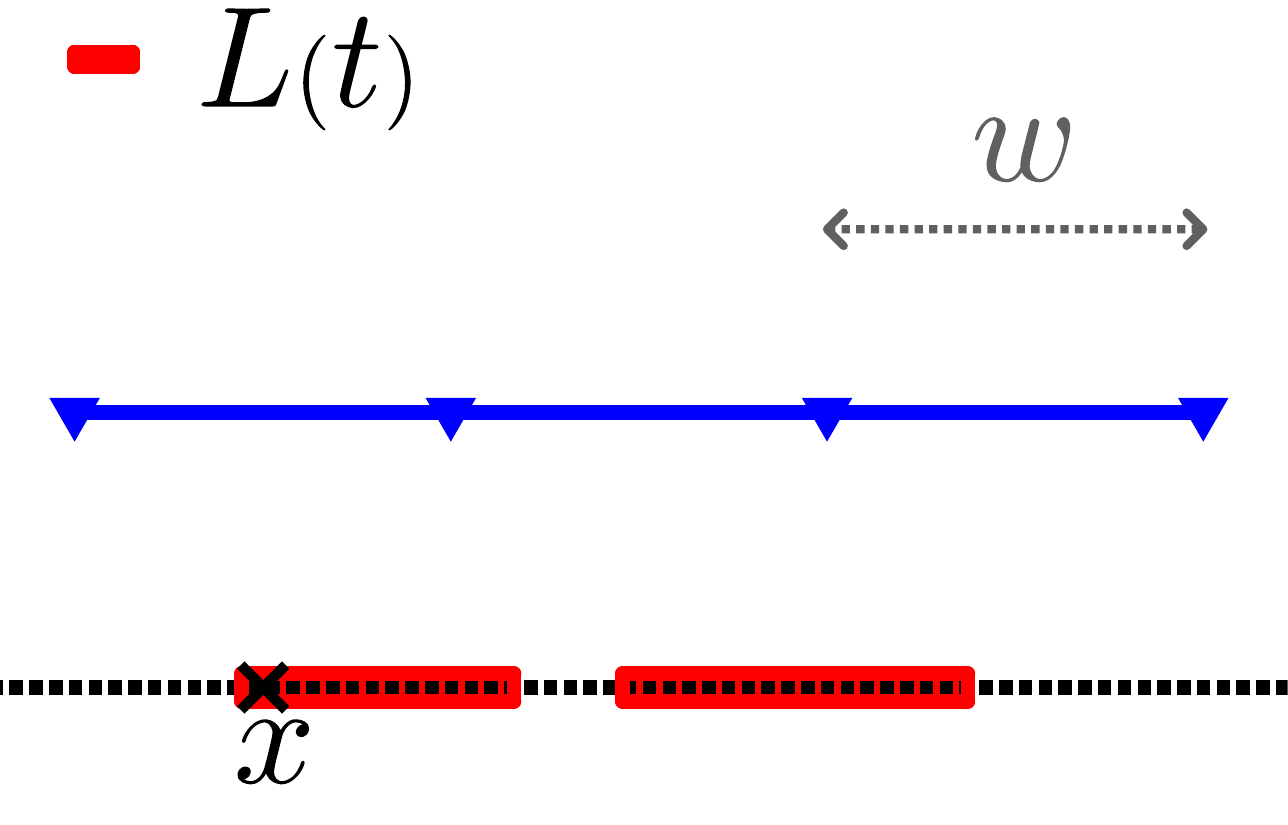}
				\hfill
				\includegraphics[width=0.24\textwidth]{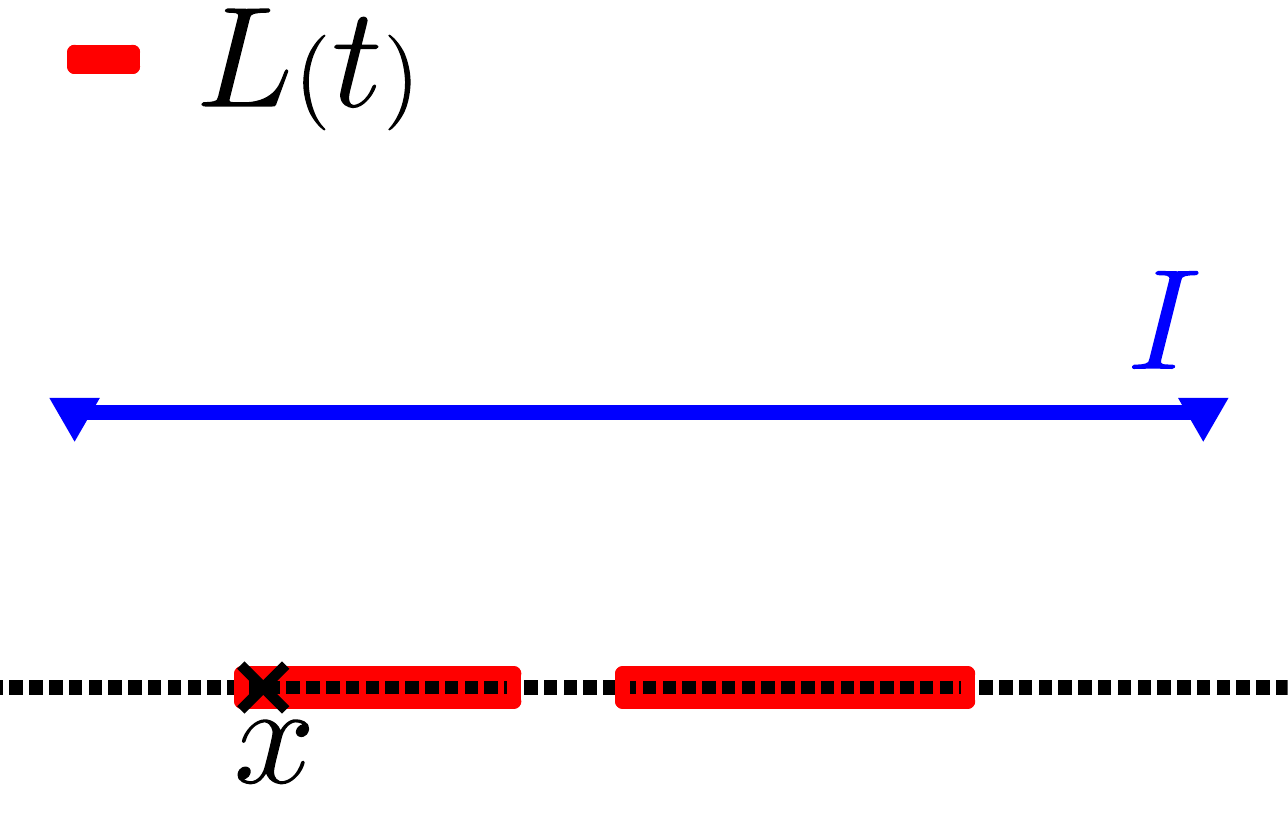}
			\end{minipage}
	}}
	
	\subfloat[Shrinkage procedure where the fourth proposal is accepted.]{%
		\makebox[\textwidth]{
			\begin{minipage}{\textwidth}
				\includegraphics[width=0.24\textwidth]{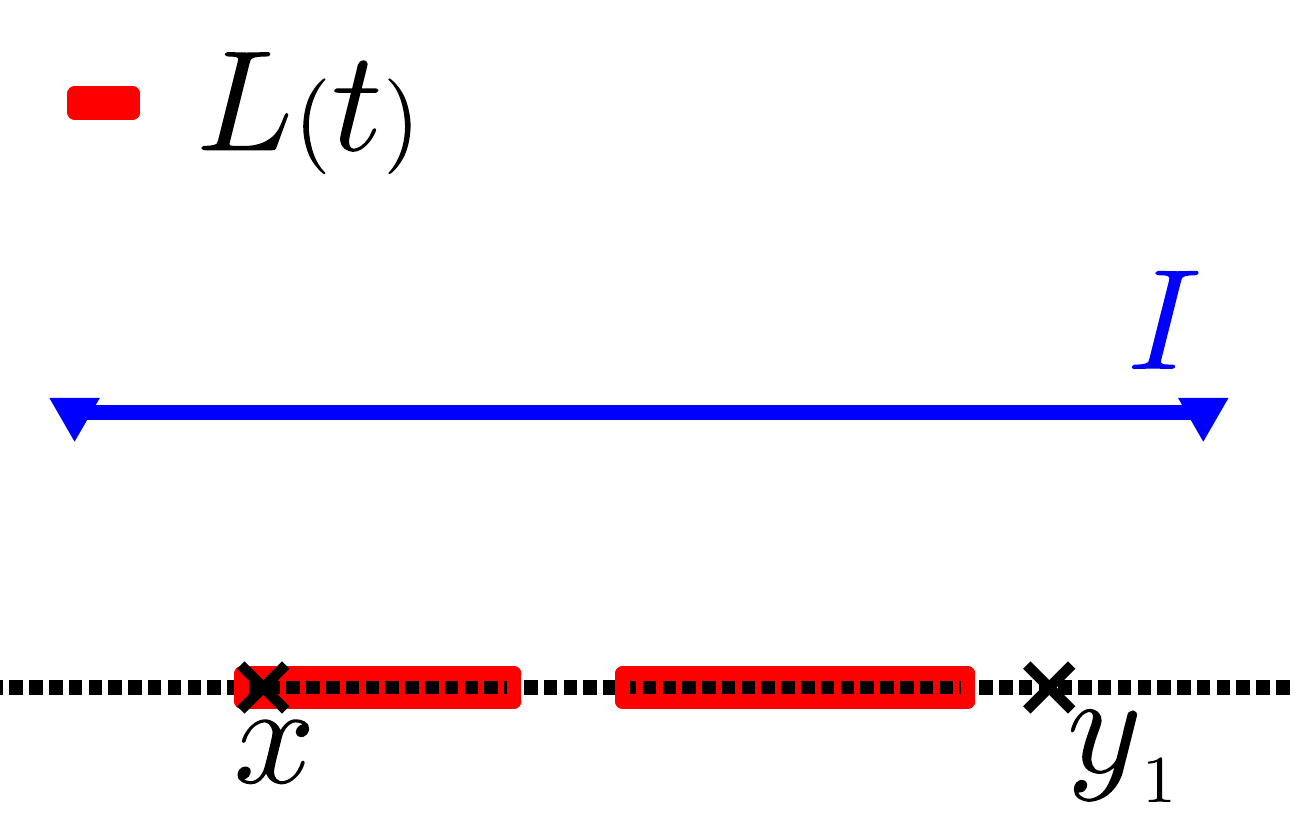}
				\hfill
				\includegraphics[width=0.24\textwidth]{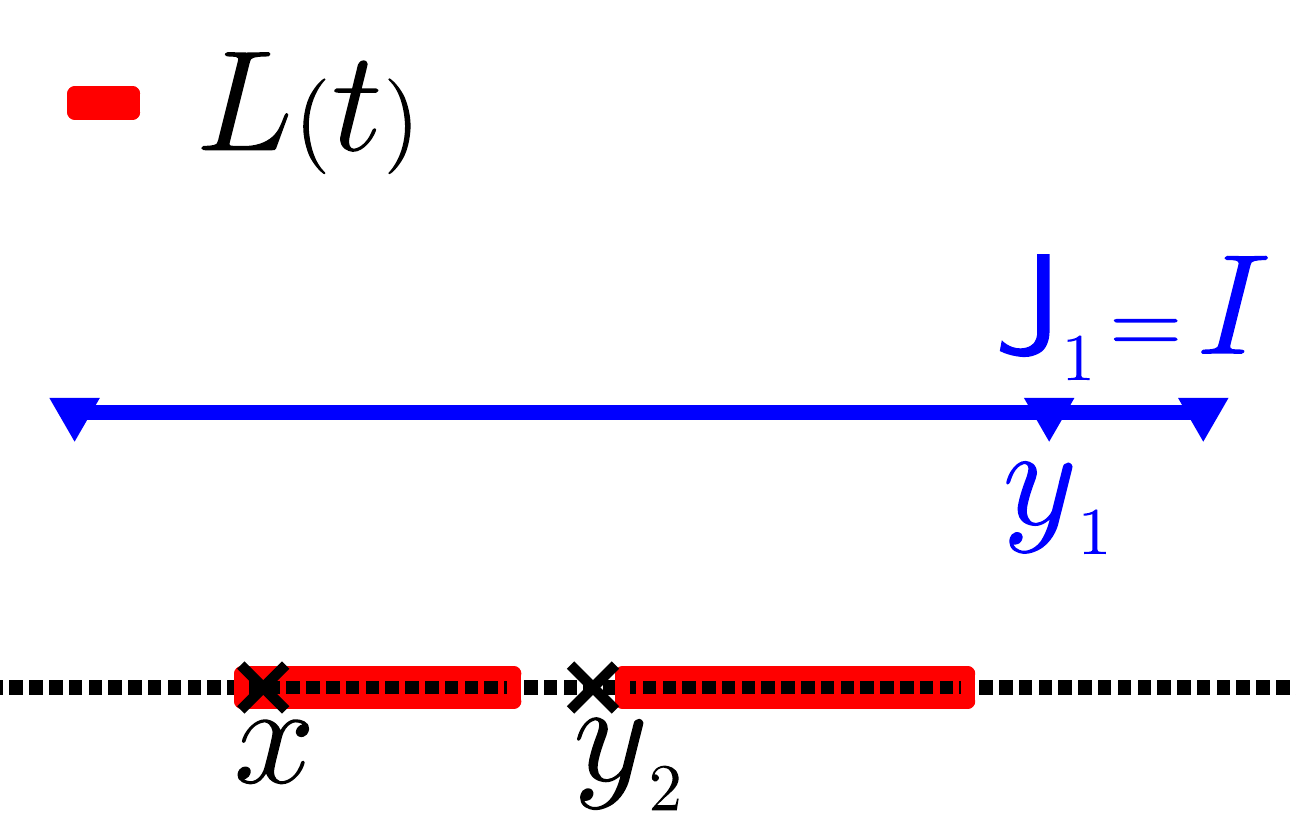}
				\hfill
				\includegraphics[width=0.24\textwidth]{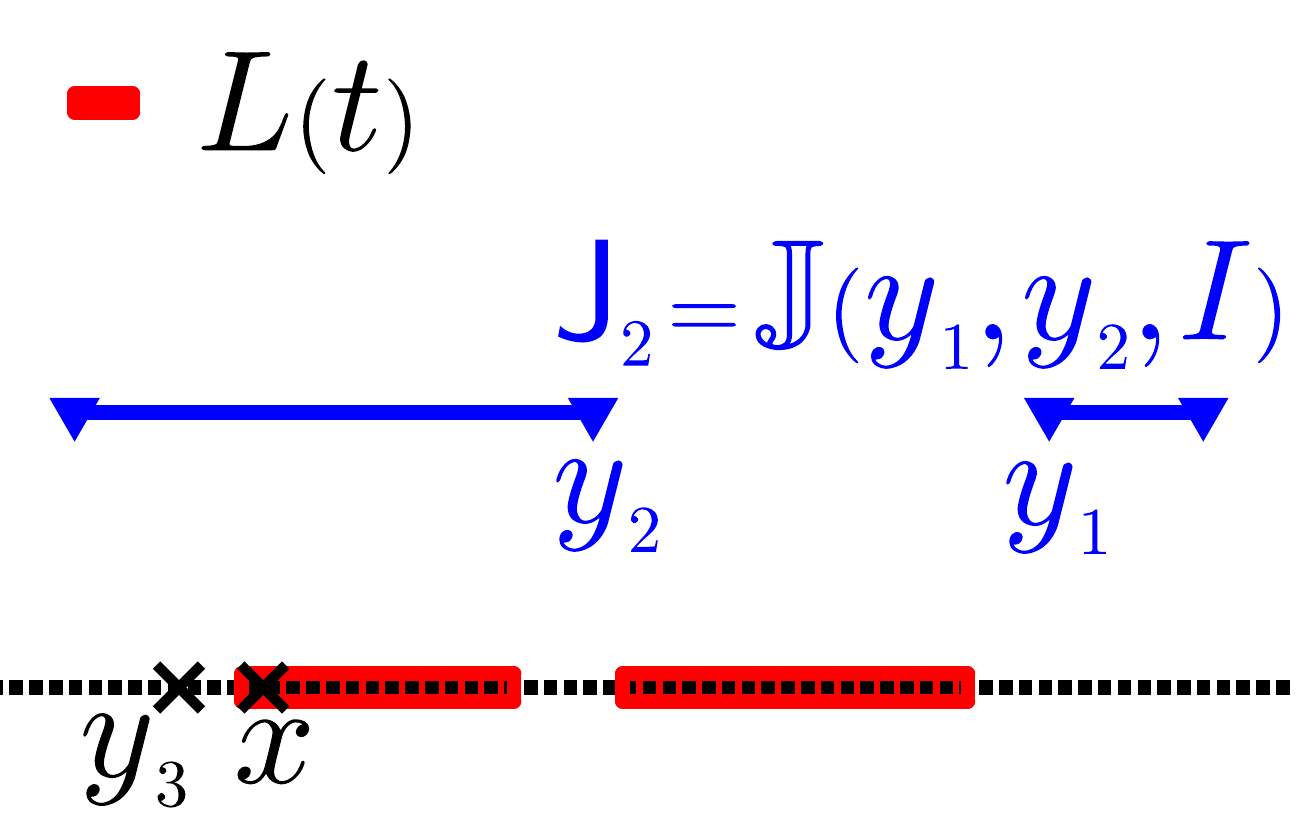}
				\hfill
				\includegraphics[width=0.24\textwidth]{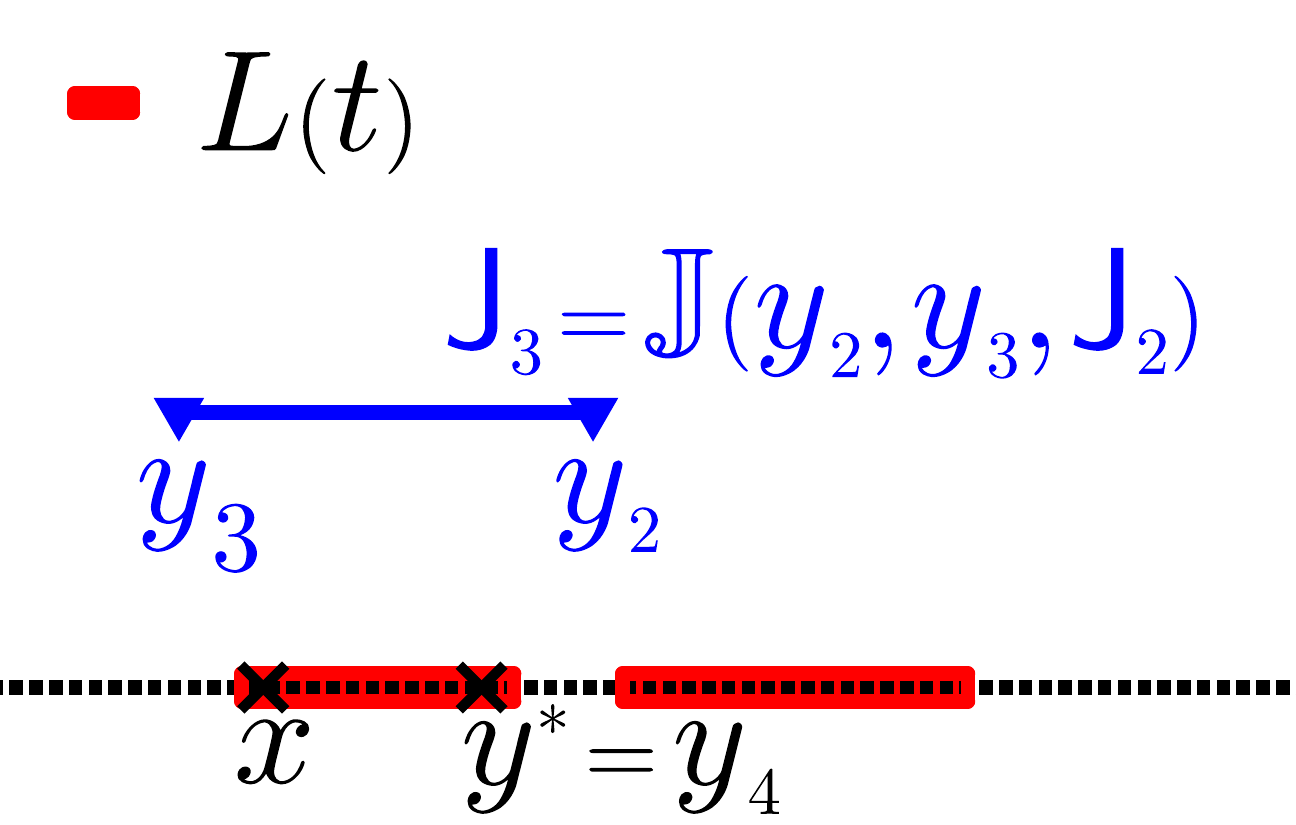}
			\end{minipage}
	}}
	\caption{One run of the stepping-out and shrinkage procedure each.}
	\label{Fig: Step-out and shrink}
\end{figure}
Now we have a practical slice sampling algorithm on $\mathbb{R}$.
One way to lift the stepping-out and shrinkage approach to $\mathbb{R}^d$ is to combine it with the Hit-and-run algorithm arriving at something called the \emph{Hit-and-run slice sampler}\footnote{Hit-and-run slice sampling is already mentioned in \citep[Section 29.7]{Mackay}.
 Convergence and comparison results for this sampler are provided in \citep{latuszynski2024convergence, rudolf2018comparison}, and it is used as a benchmark approach in \citep{murray2010elliptical, schaer2023gibbsian}.},
which essentially samples a random line through the current point and then runs a slice sampler on this line.
For $x \in \mathbb{R}^d$ and $v \in \mathbb{S}^{d-1}$ we define 
\[  
\gamma_{(x,v)}(\theta) = x + \theta v, \qquad \theta \in \mathbb{R},
\]
to be the \emph{line through $x$ in the direction $v$}, and for $t > 0$ 
\[  
L(x,v,t) := \{\alpha \in \mathbb{R} \mid p\left(\gamma_{(x,v)}(\alpha) \right)> t\} = \{\alpha \in \mathbb{R} \mid \gamma_{(x,v)}(\alpha) \in L(t)\}
\]
to be the parametrized intersection of a straight line with a level set.
The transition mechanism of the Hit-and-run slice sampler then proceeds as depicted in Figure \ref{Fig: Hit-and-run slice sampling}.
\begin{figure}[h]
	\begin{tikzpicture}[->,>={Stealth[round,sep]},shorten >=1pt,auto, node distance = 0.5cm,tight-matrix/.style={every outer matrix/.append style={inner sep=+0pt}}]
		
		\node (input) [draw, rectangle, align=center] {\textbf{Input:} current state $x \in \mathbb{R}^d$};
		
		\matrix (dummy1) [tight-matrix, below = of input, nodes = draw, column sep = 0.25\textwidth] {
			\node (level) [align=center, rectangle] {Sample level $t$ uniformly\\ distributed from $(0,p(x))$.}; &
			\node (direction) [ align=center, rectangle] 
			{Sample direction $v$ uniformly\\ distributed from $\mathbb{S}^{d-1}$.};	\\
		};

		\matrix (step-out) [draw, below = of dummy1, column sep=2pt, dotted, very thick, matrix anchor = north, anchor = center]{
			\node (step-out-text) [draw, align=center, rectangle, solid, thin] {Generate $I$ with stepping-out\\ procedure for $L(x,v,t)$.}; &
			\node (step-out-pic) [draw, align=center, rectangle, solid, thin] {
				\includegraphics[width=0.15\textwidth]{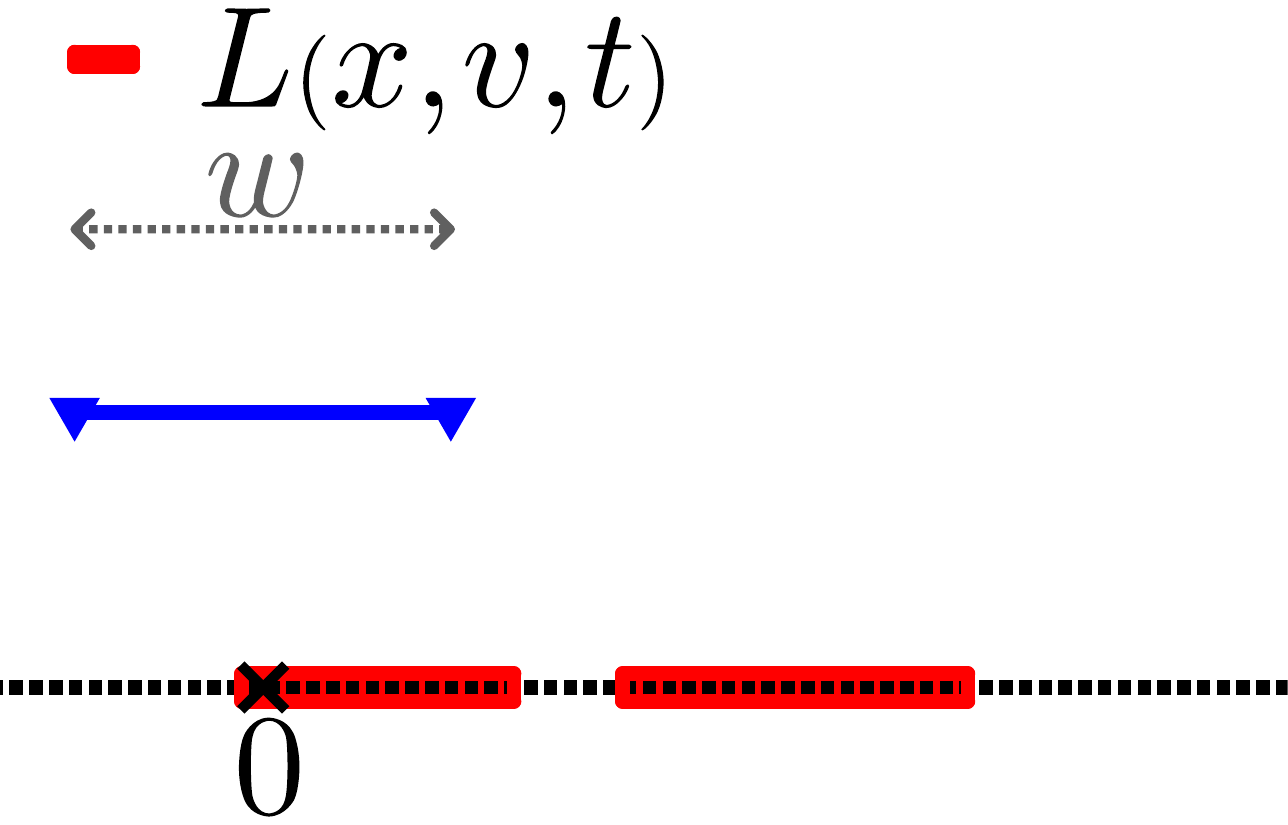}
				\includegraphics[width=0.15\textwidth]{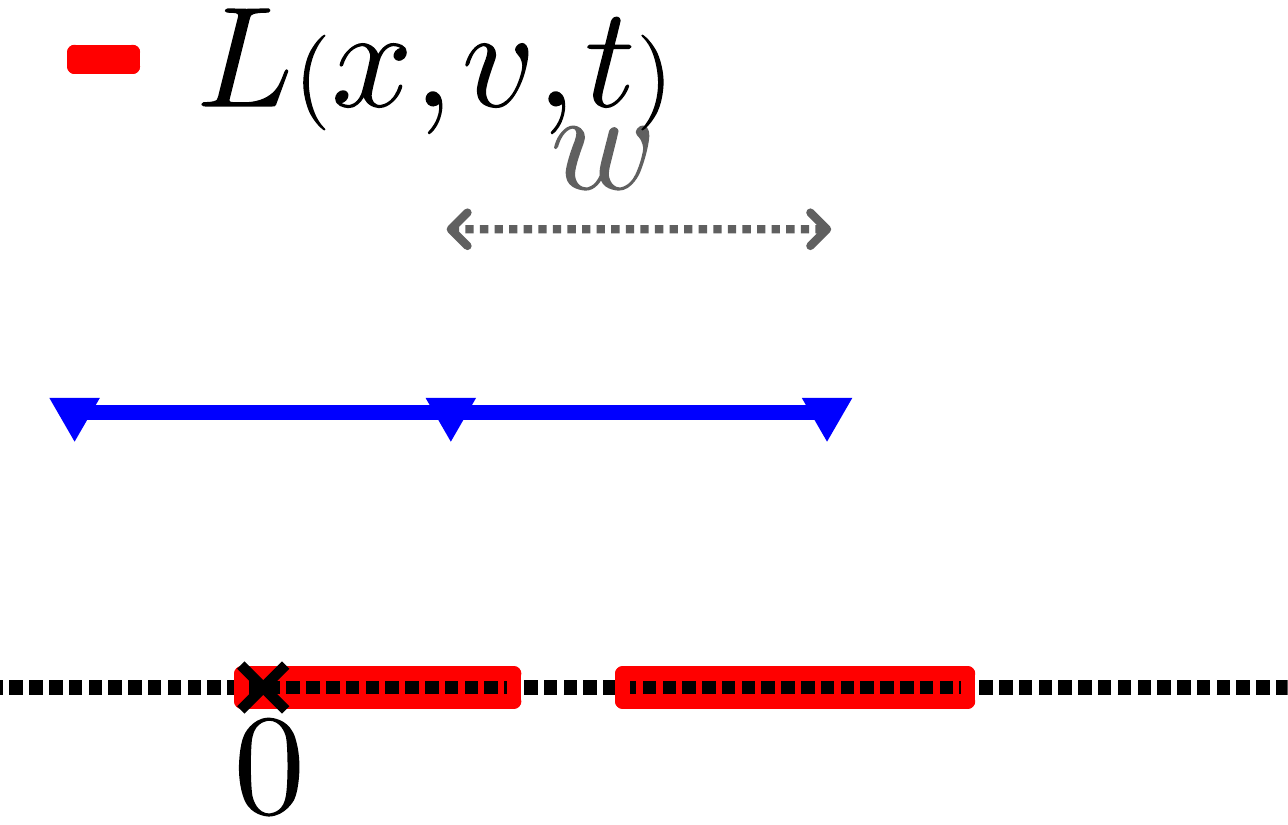}
				\includegraphics[width=0.15\textwidth]{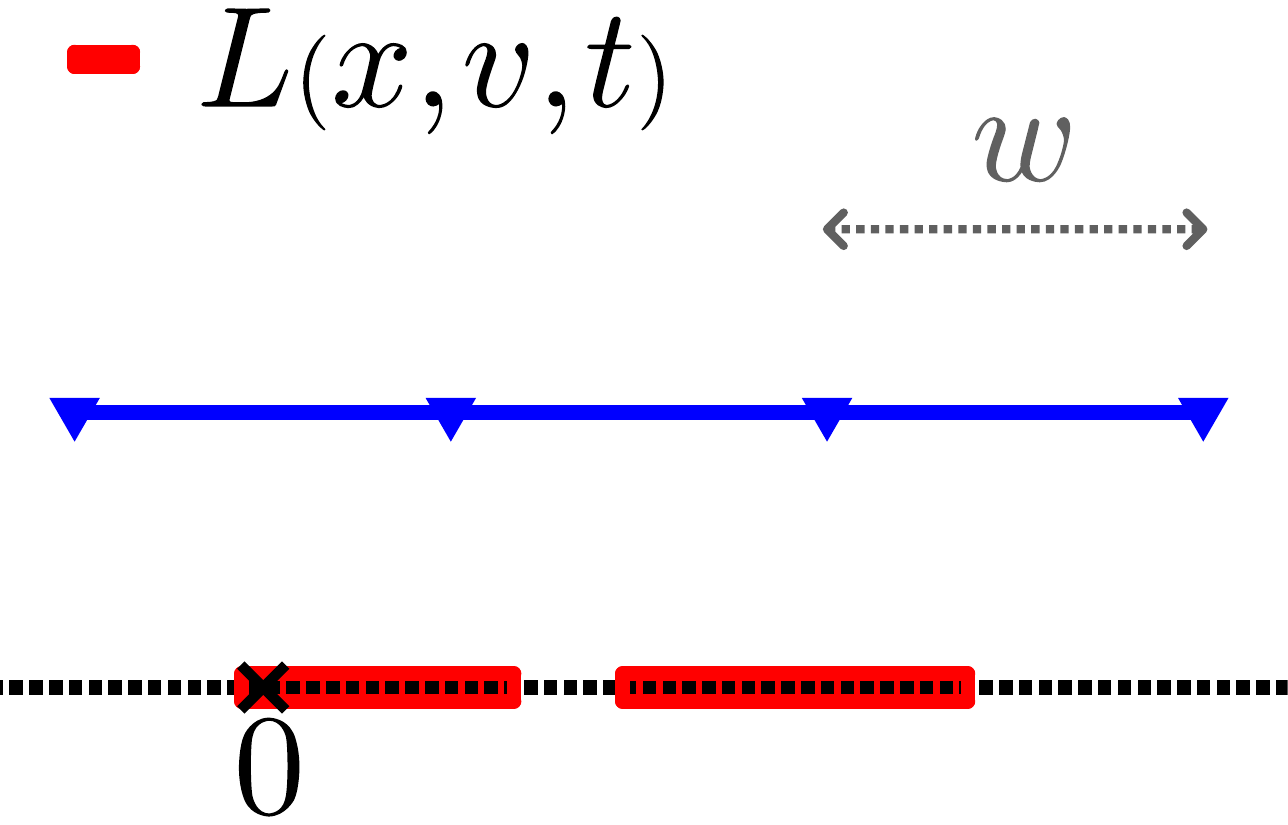}
				\includegraphics[width=0.15\textwidth]{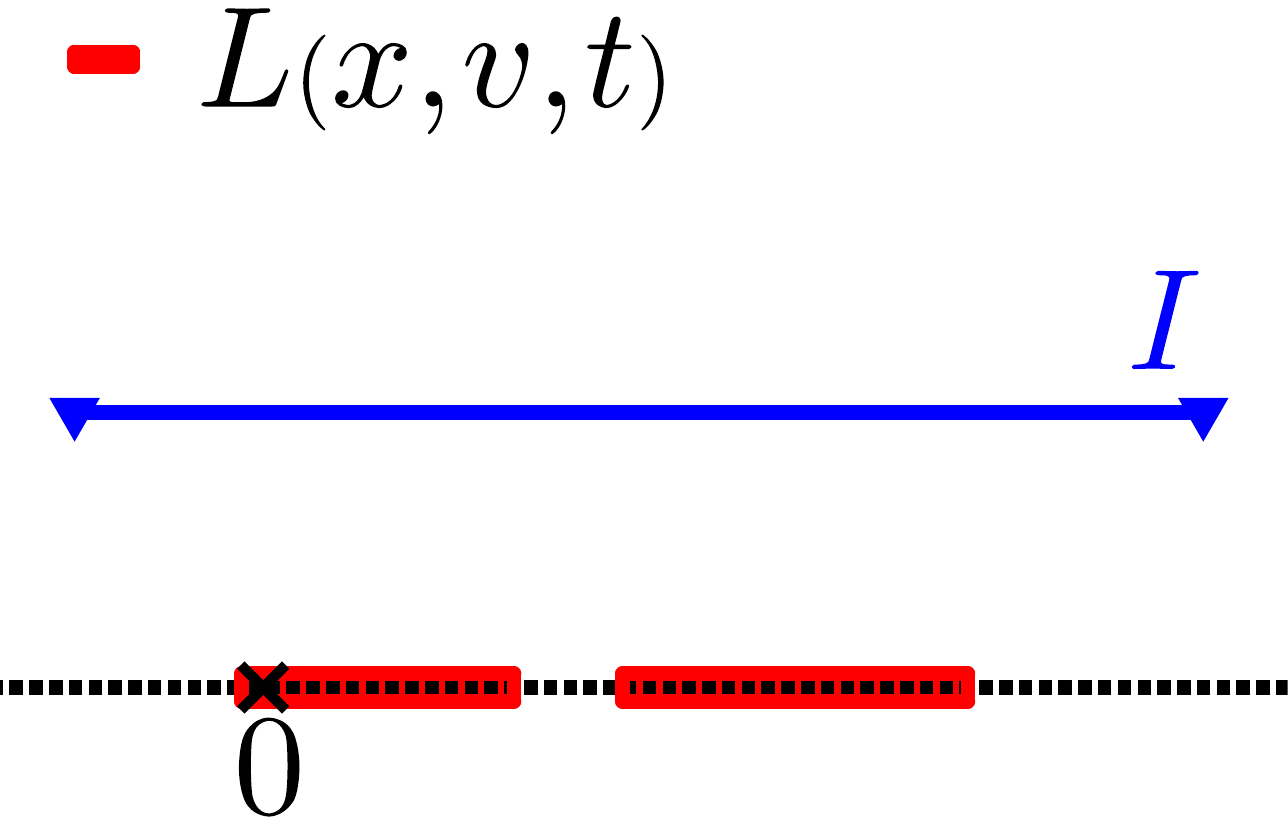}};\\
		};
		
		\matrix (shrink) [draw, below = of step-out, nodes = draw, column sep=2pt, dotted, very thick, matrix anchor = north, anchor = center]{
			\node (shrink-text) [align=center, rectangle, solid, thin] {Generate $\theta$ from $I \cap L(x,v,t)$\\ with shrinkage procedure.}; &
			\node (shrink-pic)[align=center, rectangle, solid, thin]{
				\includegraphics[width=0.15\textwidth]{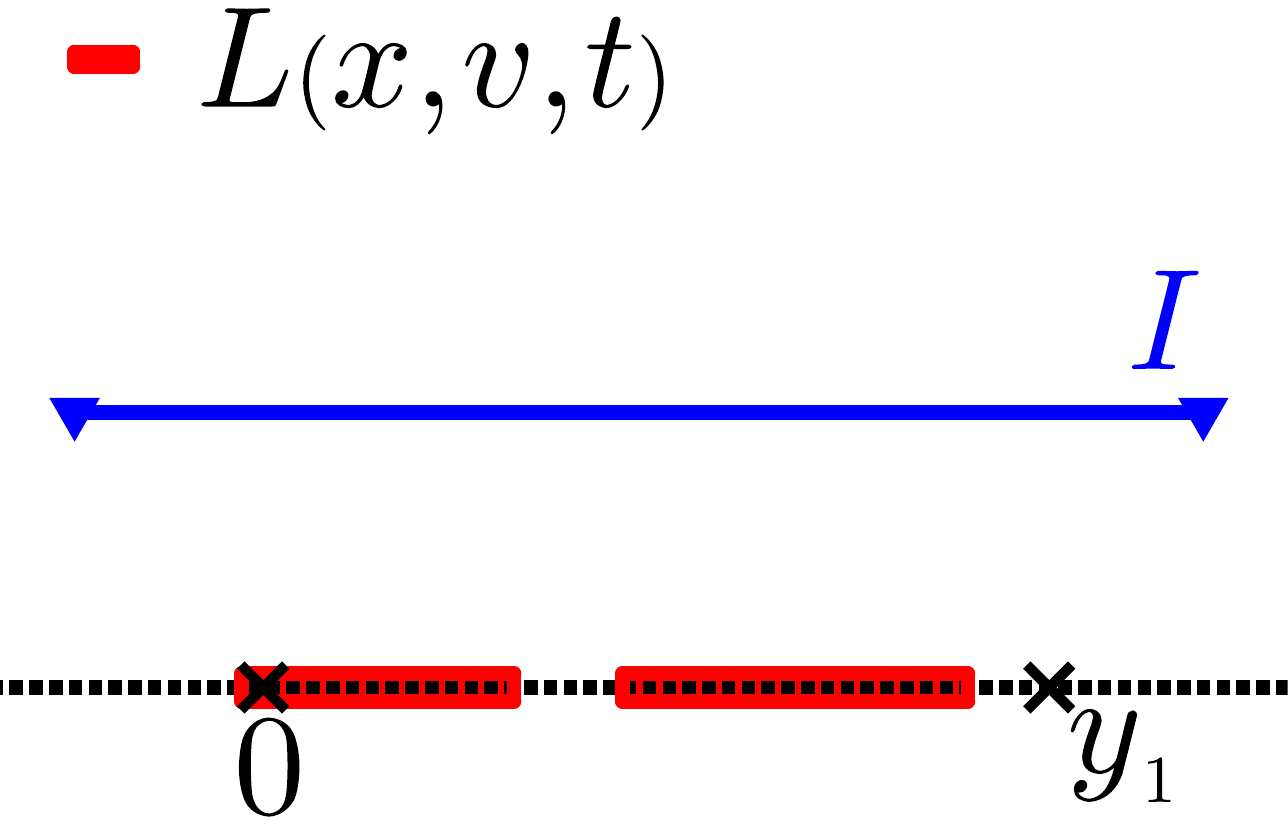}
				\includegraphics[width=0.15\textwidth]{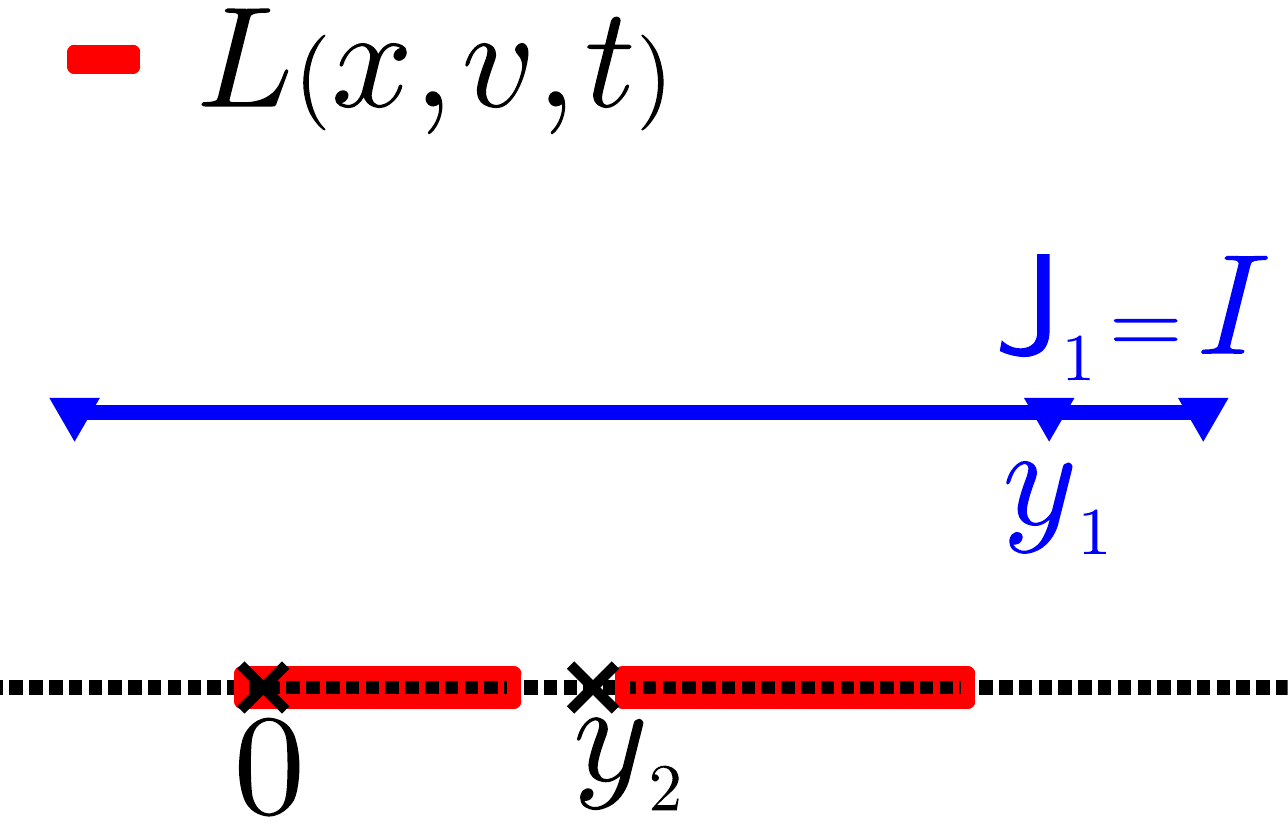}
				\includegraphics[width=0.15\textwidth]{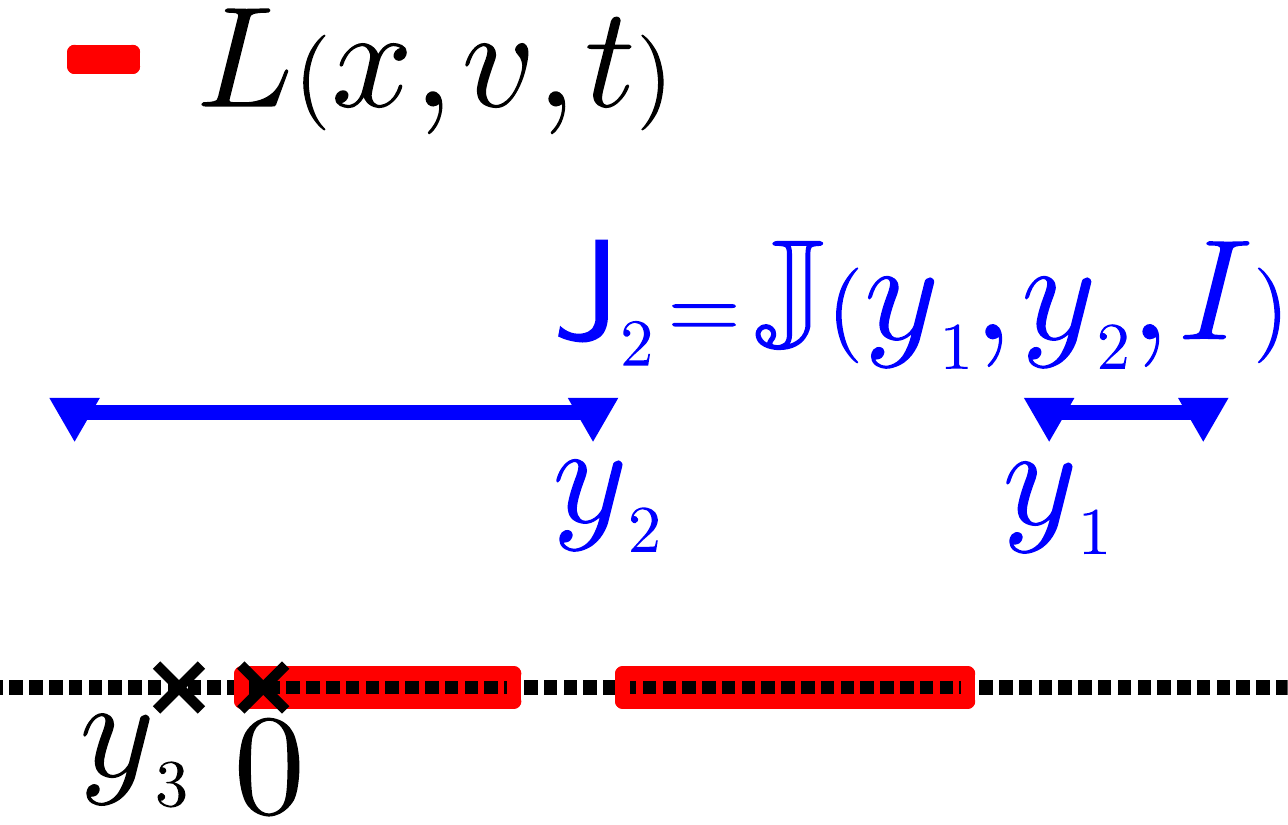}
				\includegraphics[width=0.15\textwidth]{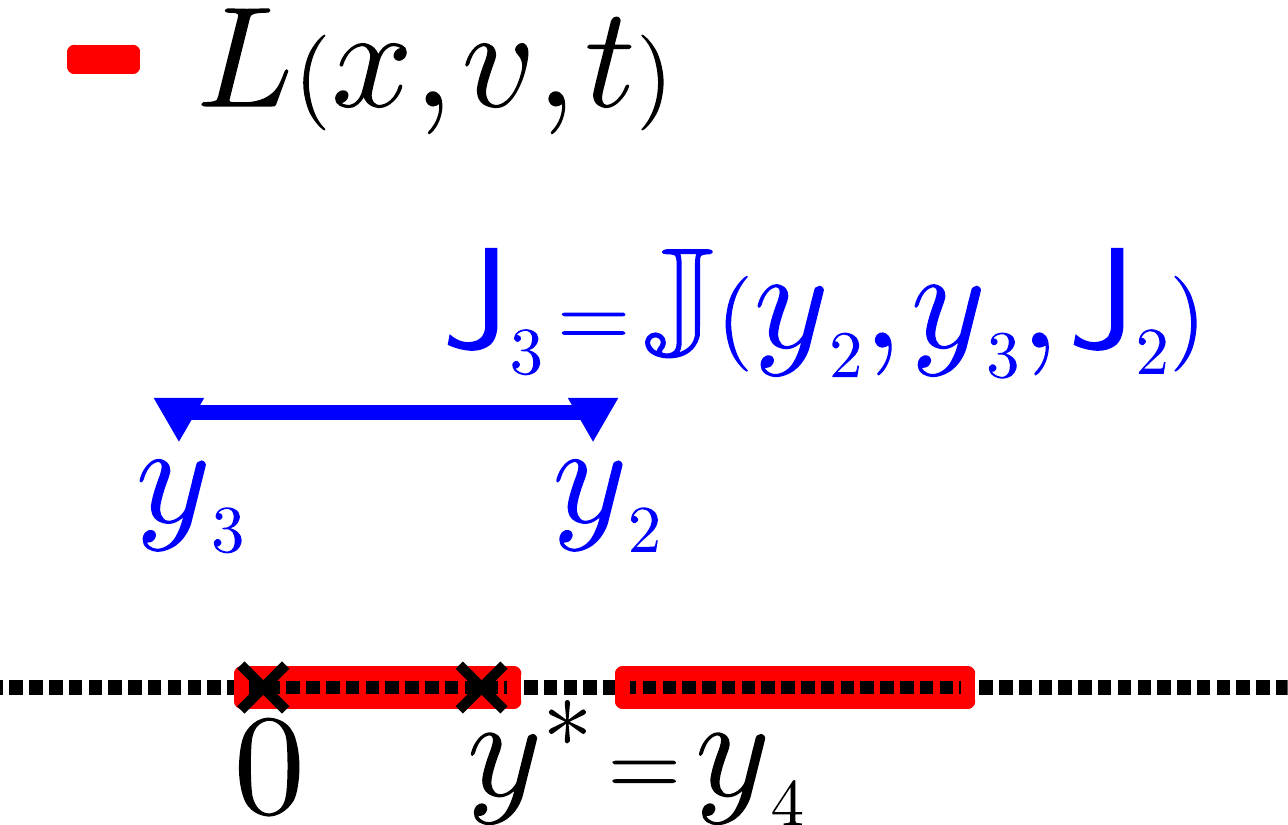}}; \\
		};	
		
		\node (output) [draw, align=center, rectangle, below = of shrink] {\textbf{Output:} next state $y=\gamma_{(x,v)}(\theta)$};
		
		\draw (input) -- (level);
		\draw (input) -- (step-out);
		\draw (level) -- (step-out);
		\draw (direction) -- (step-out);
		\draw (step-out) -- (shrink);
		\draw (shrink) -- (output);    
	\end{tikzpicture}
	\caption{Flow chart describing the transition mechanism of the Hit-and-run slice sampler.}
	\label{Fig: Hit-and-run slice sampling}
\end{figure}
%
For a complete algorithmic description see Algorithm \ref{A: GSS} with $\sigma_{d-1}^{(x)}$ being the uniform distribution on $\mathbb{S}^{d-1}$ for all $x \in \mathbb{R}^d$.
Equivalently, the transition mechanism can be described as first sampling a direction $v$ uniformly from $\mathbb{S}^{d-1}$, and then running a 1-dimensional stepping-out and shrinkage based slice sampler for the unnormalized density $\alpha \mapsto p\left(\gamma_{(x,v)}(\alpha)\right)$ with initial point 0.

Our interest in this framework arises from the fact that, by generalizing straight lines to geodesics, it can be leveraged to general Riemannian manifolds, which we discuss in the next section.

\subsection{Geodesic slice sampling}\label{Sec: Geodesic slice sampling}
We now turn to slice sampling on Riemannian manifolds.
For the sake of brevity, we keep the introduction of objects from differential and Riemannian geometry to a bare minimum here
and focus on providing some intuition for this setting.
 References and a more formal summary
 can be found in Supplementary material \Ref{Sec: Manifolds}.
\begin{remark}[Riemannian manifold]
 Essentially, manifolds are topological spaces that are locally homeomorphic to open subsets of $\mathbb{R}^d$.
 This preserves several local properties of Euclidean space, e.g. local compactness,
 and allows to generalise a number of notions from $\mathbb{R}^d$ to manifolds.
 However, many properties of $\mathbb{R}^{d}$ are also lost.
 For example, in general a manifold does not admit a vector space structure.
 For an illustration of some of the notions discussed below see Figure \Ref{fig:tangent_manifold}.
 
 If the local homeomorphisms of a manifold $\mathsf{M}$, which are called coordinate neighbourhoods, 
 are compatible in a smooth way, 
 we speak of a smooth manifold.
 For this kind of manifolds, one can define for all points $x \in \mathsf{M}$ a tangent space $T_x\mathsf{M}$.
 Conceptually, $T_x\mathsf{M}$ can be viewed as the collection of the gradients of curves  through $x$ on $\mathsf{M}$.
 Every tangent space $T_x\mathsf{M}$ is a vector space of the same dimension as the image domain of the coordinate neighbourhoods of $\mathsf{M}$.
 This number is also called the dimension of $\mathsf{M}$.
 
 The tangent spaces can be turned into inner product spaces in the following way: 
 A Riemannian metric $\mathfrak{g}$ is a map providing a smooth assignment of points $x\in \mathsf{M}$ to a positive definite inner product $\mathfrak{g}_x(\cdot,\cdot)$ defined on the corresponding tangent space $T_x\mathsf{M}$.
 A Riemannian manifold $(\mathsf{M},\mathfrak{g})$ is simply a smooth manifold $\mathsf{M}$ endowed with a Riemannian metric $\mathfrak{g}$.
 
 The Riemannian metric $\mathfrak{g}$ induces a measure $\nu_{\mathfrak{g}}$ on $\mathsf{M}$ called the \emph{Riemannian measure}.
 It can be viewed as an extension of the Lebesgue measure to Riemannian manifolds, see e.g. \citep[Section XII.1]{AnalysisIII}.
 The measure $\nu_{\mathfrak{g}}$ is defined on the Borel-$\sigma$-algebra induced by the topology of $\mathsf{M}$ which is denoted by $\mathcal{B}(\mathsf{M})$.
 
 To generalize the notion of a straight line in $\mathbb{R}^d$ to a general Riemannian manifold $\mathsf{M}$,
  one leverages the property that a straight line is characterized by a vanishing second derivative:
 For a curve $\gamma: \mathsf{I} \to \mathsf{M}$, where $\mathsf{I} \subseteq \mathbb{R}$ is an interval,
 its velocity vector field assigns to each point $s \in \mathsf{I}$ an element $(\d \gamma)/(\d t)\vert_s \in T_{\gamma(s)}\mathsf{M}$,
 which we may interpret as the gradient of $\gamma$ at $s$.
 Defining  a derivative of the velocity vector field of a curve on a general Riemannian manifold $\mathsf{M}$ is not straightforward because each $(\d \gamma)/(\d t)\vert_s$  belongs to a different space.
 This issue can be solved by defining a notion of derivative called \textit{covariate derivative} $D/ dt$ related to the local geometry of the manifold through the use of a Riemannian metric $\mathfrak{g}$.
 The curve $\gamma$ is now called \textit{a geodesic} if the covariant derivative of its velocity vector field is zero, 
 i.e., if the equation $(D/dt)\big((\d \gamma)/(\d t)\big)=0$ holds.
 A Riemannian manifold $(\mathsf{M},\mathfrak{g})$ is complete if for any $x \in \mathsf{M}$ and any $v \in T_x\mathsf{M}$, there exists a unique geodesic
 \begin{equation}\label{Eq: Geodesic through x in direction v}
 	\gamma_{(x,v)}: \mathbb{R} \to \mathsf{M}
 \end{equation}
 satisfying $\gamma_{(x,v)}(0) = x$ and $(\d \gamma_{(x,v)})/(\d t)\vert_0 = v$.
 We may interpret $\gamma_{(x,v)}$ as the geodesic emanating from $x$ in direction $v$.
 Moreover, we define by $\mathrm{Exp}_x:  T_x\mathsf{M} \to \mathsf{M}, v \mapsto \gamma_{x,v}(1)$ the exponential map at $x \in \mathsf{M}$. 
 Note that the relation $\gamma_{(x,v)}(\theta) = \mathrm{Exp}_x(\theta v)$ holds for any $\theta \in \mathbb{R}$ and $ v \in T_x\mathsf{M}$.
\end{remark}



The need to sample from measures $\pi$ defined on connected, complete Riemannian manifolds occurs in several applications in Bayesian statistics, see e.g. \citep{holbrook2016bayesian,holbrook2020nonparametric,lieAccepteddimension, mantoux2021understanding}.
\begin{figure}
	\begin{center}
		\includegraphics[width=0.85\textwidth]{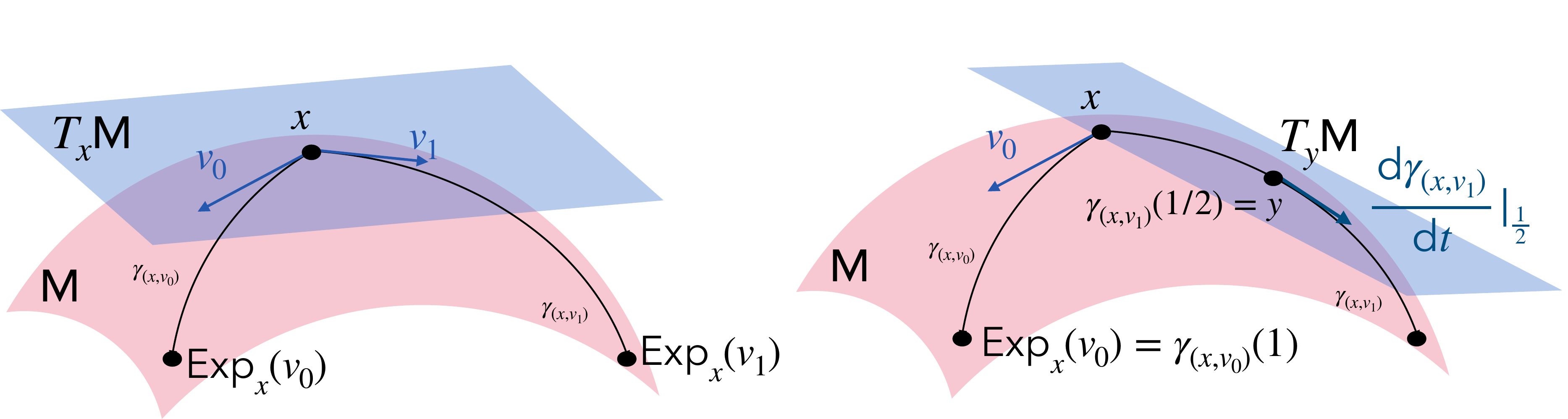}
	\end{center}
		\captionof{figure}{Illustration of tangent spaces, geodesics and the Exponential map. Note that $T_x\mathsf{M}\neq T_y\mathsf{M}$.
		}
	\label{fig:tangent_manifold}
\end{figure}
More precisely, in the following we consider sampling from probability measures $\pi$ defined on a state space $\mathsf{M}$ satisfying the following assumption:
\begin{assumption}\label{Ass: Manifolds}
	Let $\mathsf{M}$ be a $d$-dimensional, smooth, connected manifold.
	In addition we assume that $\mathsf{M}$ is endowed with a Riemannian metric $\mathfrak{g}$ that renders $\mathsf{M}$ complete.
\end{assumption}
We assume in this paper that the probability measure $\pi$ on $\mathsf{M}$ admits a density with respect to the Riemannian measure $\nu_{\mathfrak{g}}$, i.e.,
\begin{equation}\label{Eq: Definition target density}
	\pi(\d x) := \frac{p(x)}{\int_{\mathsf{M}} p(y) \ \nu_{\mathfrak{g}}(\d y)} \nu_{\mathfrak{g}}(\d x),
\end{equation}
where $p: \mathsf{M} \to [0, \infty)$ is the unnormalized target density.
We provide some examples for manifolds satisfying Assumption \ref{Ass: Manifolds}.
\begin{example}\label{Ex:Manifolds}
	\begin{enumerate}
		\item \label{Ex:Manifolds_i} The most simple example for a $d$-dimensional, smooth, connected manifold is $\mathbb{R}^d$.
		Since for each $x \in \mathbb{R}^d$ the tangent space to $\mathbb{R}^d$ at $x$ is again $\mathbb{R}^d$, we can equip it with the Riemannian metric 
		$\mathfrak{g}_x(v_1,v_2)=v_1^\top v_2$, for any $x \in \mathbb{R}^d, v_1,v_2 \in T_x\mathsf{M}$,
		rendering $\mathbb{R}^d$ complete.
		The induced Riemannian measure is the Lebesgue measure.		
		\item \label{Ex:Manifolds_ii} We equip the Euclidean unit sphere $\mathbb{S}^{d-1}$ with the standard Riemannian metric $\widehat{\mathfrak{g}}$ induced by its embedding $\mathrm{Id}: \mathbb{S}^{d-1} \to \mathbb{R}^d$ in $\mathbb{R}^d$, that is,
		$\widehat{\mathfrak{g}}_x = \left(\mathrm{Id}_*( v_1)\right)^\top \mathrm{Id}_*( v_2)$, for any $x \in \mathbb{S}^{d-1}, v_1, v_2 \in T_x\mathbb{S}^{d-1},$
		where $\mathrm{Id}_*$ is the map on the tangent spaces induced by $\mathrm{Id}$.
		Then $(\mathbb{S}^{d-1}, \widehat{\mathfrak{g}})$ satisfies Assumption \ref{Ass: Manifolds}.
		The corresponding Riemannian measure $\nu_{\widehat{\mathfrak{g}}}$ is the standard volume measure.
		For example applications involving directions give rise to the use of spheres in statistical models \citep{habeck2023geodesic}.
		\item \label{Ex:Manifolds_iii} Let $k,n \in \mathbb{N}$ with $k \leq n$.
		The $k(k-1)/2 + k(n-k)$-dimensional, smooth, connected \emph{Stiefel manifold} $\mathcal{V}(n,k) := \{\Gamma \in \mathbb{R}^{n \times k} \mid \Gamma^\top \Gamma = \mathrm{Id}_k\}$
		consists of (ordered)	$k$-tuples of vectors in $\mathbb{R}^n$ that form an orthonormal system.
		This means that each point on the Stiefel manifold describes (not uniquely) a $k$-dimensional subspace of $\mathbb{R}^n$.
		To characterize the tangent space $T_\Gamma\mathcal{V}(n,k)$ to a point $\Gamma \in \mathcal{V}(n,k)$, we need a matrix $\Gamma_\perp \in \mathbb{R}^{n \times (n-k)}$ such that the columns of $\Gamma$ and $\Gamma_\perp$ form an orthonormal basis of $\mathbb{R}^n$.
		Then $	T_\Gamma\mathcal{V}(n,k):= \{\Gamma \Pi + \Gamma_\perp \Sigma \in  \mathbb{R}^{n \times k} \mid \Pi \in \mathbb{R}^{k \times k} \text{ skew symmetric}, \Sigma \in \mathbb{R}^{(n-k)\times k}\}$.
		If we introduce the Riemannian metric
		\[
		\mathfrak{g}_\Gamma(\Delta_1, \Delta_2) = \mathrm{tr}\big(\Delta_1^\top (\mathrm{Id}_n - \frac{1}{2} \Gamma \Gamma^\top) \Delta_2\big) = \frac{1}{2} \mathrm{tr}(\Pi_1^\top \Pi_2) + \mathrm{tr}(\Sigma_1^\top \Sigma_2), 
		\]
		where $\Delta_1 = \Gamma \Pi_1 + \Gamma_\perp \Sigma_1, \Delta_2 = \Gamma \Pi_2 + \Gamma_\perp \Sigma_2 \in T_\Gamma\mathcal{V}(n,k)$,
		Assumption \ref{Ass: Manifolds} holds true for $\mathcal{V}(n,k)$.
		For more details see \citep{edelman1998geometry}.
		The Stiefel manifold commonly emerges in statistical analyses involving matrix structures.
		The next example provides a concrete instance for this.
		
	\end{enumerate}
\end{example}

\begin{example}[Analysis of functional connectivity networks of brains]\label{Ex: Brain network}
	In Section \ref{Sec: Applications},
	we consider a model introduced in \citep{mantoux2021understanding} that aims to infer the structure of adjacency matrices from functional connectivity networks of brains.
		Consider $J$ adjacency matrices $(\Upphi^{(j)})_{j \in \{1, \ldots, J\}}$ of different networks with $n \in \mathbb{N}$ nodes each.
		Let $k \in \mathbb{N}$ with $k \leq n$.
		The model that we consider has parameters $\theta=(\sigma_\kappa^2,\sigma^2_\epsilon,\mu,F)$, $\sigma_\kappa^2,\sigma^2_\epsilon>0,\mu \in \mathbb{R}^k$, and $F\in \mathbb{R}^{n\times k}$, and is defined 
		as
		\begin{equation}
		\begin{aligned}
			\Upphi^{(j)}&= \Gamma^{(j)}\operatorname{diag}(\kappa^{(j)}) (\Gamma^{(j)})^{\top}+\mathcal{E}(\epsilon^{(j)}), \qquad
			\epsilon^{(j)}\overset{\text{i.i.d.}}{\sim} \mathcal{N}(0,\sigma_\epsilon^2 \mathrm{Id}_{n(n+1)/2}),&\\
				\kappa^{(j)}&\overset{\text{i.i.d.}}{\sim} \mathcal{N}(\mu,\sigma_\kappa^2 \mathrm{Id}_k), \qquad\Gamma^{(j)}\overset{\text{i.i.d.}}{\sim}\operatorname{vMF}(F), &j \in \{1,\ldots,J\},
		\end{aligned}\label{eq:model_adjanc_matrices}
		\end{equation}
		where $\Gamma^{(j)}\in \mathcal{V}(n,k)$, $\kappa^{(j)} \in \mathbb{R}^k$ is called pattern weight vector, 
		and $\epsilon^{(j)} \in \mathbb{R}^{n(n+1)/2}$ represents the symmetric residual noise (i.e., $\mathcal{E}$ maps a vector in $\mathbb{R}^{n(n+1)/2}$ to the symmetric matrix in $\mathbb{R}^{n \times n}$ determined by its components).
		Moreover, $\operatorname{vMF}(F)$ denotes the matrix von Mises–Fisher distribution on the Stiefel manifold $\mathcal{V}(n,k)$
		which has unnormalized density
		\begin{equation}
			\label{eq:Mises_fisher_distribution}
			p_{\operatorname{vMF}(F)}(\Gamma) = \exp\left(\mathrm{tr}(F^\top \Gamma )\right), \qquad \Gamma \in \mathcal{V}(n,k),
		\end{equation}  
		with respect to the Riemannian measure on $\mathcal{V}(n,k)$.
		The unobserved variables $\Gamma^{(j)}\in \mathcal{V}(n,k)$ and $\kappa^{(j)}$ determine the individual-level specificity of network $j$. 
\end{example}

Very much in parallel to the (special) case $\mathbb{R}^d$, we are now able to develop a slice sampling approach
to target $\pi$ defined on $\mathsf{M}$. 
We can easily extend the uniform simple slice sampler 
from $\mathbb{R}^d$ to $\mathsf{M}$.
(In fact this works for every measure space, see Supplementary material \ref{Sec: Slice sampling in measurable spaces}.)
The level sets, containing all points where the unnormalized density $p$ is greater than a given value $t \in (0, \infty) $, take the form
$L(t) := \{x \in \mathsf{M} \mid p(x) > t\}. $
Define the set $\mathsf{W}:= \{x \in \mathsf{M} \mid p(x)  > 0\}$ where $p$ is strictly positive. 
The idealized slice sampler on $\mathsf{M}$ then proceeds analogously to $\mathbb{R}^d$, arriving at the transition kernel
\begin{align*}
	H: \mathsf{W}\times \mathcal{B}(\mathsf{W}) \to [0,1],\qquad
	(x,\mathsf{A})  \mapsto \frac{1}{p(x)} \int_{(0, p(x))} \frac{1}{\nu_{\mathfrak{g}}\big(L(t)\big)}\int_{L(t)} \mathbbm{1}_{\mathsf{A}}(y)\ \nu_{\mathfrak{g}}(\d y)\, \Leb_1(\d t).
\end{align*}
The implementation of the kernel requires sampling from the manifold-equivalent $\nu_{\mathfrak{g}}(L(t))^{-1}\nu_{\mathfrak{g}}\vert_{L(t)}$ of the uniform distribution on a level set $L(t)$.
Doing this efficiently poses a problem, as in $\mathbb{R}^d$, because in general we have no knowledge about the shape of
the level sets other than that they are $d$-dimensional, measurable sets.
Therefore, we propose a hybrid slice sampler that extends Hit-and-run slice sampling from $\mathbb{R}^d$ to the general Riemannian manifold $\mathsf{M}$.
As generalization of straight lines to $\mathsf{M}$, it uses the geodesics $\gamma_{(x,v)}$.
Like in $\mathbb{R}^d$, we want to index the geodesics emanating from a point $x \in \mathsf{M}$ by a ``unit sphere of directions'', which naturally is given by the \emph{unit tangent sphere} $\mathbb{S}_x^{d-1} := \{ v \in T_x\mathsf{M} \mid \mathfrak{g}_x(v,v) = 1\}$ in $T_x\mathsf{M}$.
The natural immersion of $\mathbb{S}^{d-1}_x$ into the inner product space $(T_x\mathsf{M}, \mathfrak{g}_x)$ via the identity induces a Riemannian metric $\widehat{\mathfrak{g}}_x$ on $\mathbb{S}_x^{d-1}$.
We call the normalization $\sigma_{d-1}^{(x)} :=  \nu_{\widehat{\mathfrak{g}}_x}(\mathbb{S}^{d-1}_x)^{-1}\nu_{\widehat{\mathfrak{g}}_x}$, for any $x \in \mathsf{M}$,
of the Riemannian measure $\nu_{\widehat{\mathfrak{g}}_x}$, induced by $\widehat{\mathfrak{g}}_x$, \emph{uniform distribution on $\mathbb{S}_x^{d-1}$} .
Then for $x \in \mathsf{M}$, $v \in \mathbb{S}_x^{d-1}$ and $t > 0$ we immediately obtain
$L(x,v,t) := \{\alpha \in \mathbb{R} \mid p\left(\gamma_{(x,v)}(\alpha) \right)> t\} = \{\alpha \in \mathbb{R} \mid \gamma_{(x,v)}(\alpha) \in L(t)\} $
as the parametrized intersection of a geodesic with a level set.

We now present the extension of the Hit-and-run slice sampler to manifolds replacing straight lines by geodesics
under observance of the subtleties  of the Riemannian setting.
We call this sampler the \emph{geodesic slice sampler}.
Roughly speaking, we arrive at a transition mechanism
which at each step randomly chooses a geodesic and then runs a stepping-out and shrinkage based 1-dimensional slice sampler on this geodesic:
Given a point $x \in \mathsf{M}$ with $p(x) > 0$, a new point $y \in \mathsf{M}$ is generated by first sampling a level $t$ uniformly from $(0, p(x))$, and a direction $v$ uniformly from $\mathbb{S}^{d-1}_x$, i.e., from $\sigma_{d-1}^{(x)}$.
The sampled level $t$ defines a level set $L(t)$, and the sampled direction $v$ specifies a geodesic $\gamma_{(x,v)}$ emanating from $x$ in direction $v$.
Now we use Neal's stepping-out and shrinkage techniques described in Section \ref{Sec: Slice sampling on Rd} to generate a point $\theta \in \mathbb{R}$ from the intersection $L(x,v,t)$ of the level set $L(t)$ and the geodesic $\gamma_{(x,v)}$.
The new point $y \in \mathsf{M}$ is then given by $y = \gamma_{(x,v)}(\theta)$.
\begin{algorithm}
	\caption{Geodesic slice sampler.}
	\label{A: GSS}
	\KwIn{point $x \in \mathsf{M}$ with $p(x) > 0$, hyperparameters $w \in (0, \infty)$ and $m \in \mathbb{N}$}
	\KwOut{point $y \in \mathsf{M}$ with $p(y)> 0$}
	\BlankLine
	Draw $T \sim \mathrm{Unif}\big((0, p(x))\big)$, call the result $t$.\;
	Draw $V \sim \sigma_{d-1}^{(x)}$, call the result $v$.\;
	Generate a realization of $(L,R) = \texttt{Step-out}_{w,m}(x,v,t)$, call the result $(\ell,r)$.\;
	Generate a realization of $\Theta= \texttt{Shrink}_{\ell, r}(x,v,t)$, call the result $\theta$.\;
	\Return{$y=\gamma_{(x,v)}(\theta)$.}
\end{algorithm}

    \begin{minipage}{0.55\textwidth}
        \begin{algorithm}[H]
			\label{A: Stepping-out}
			\caption{Stepping-out procedure.\\ Call as $\texttt{Step-out}_{w,m}(x,v,t)$.}
				\KwIn{point $x \in \mathsf{M}$, direction $v \in \mathbb{S}_x^{d-1}$, level $t \in (0, p(x))$,  hyperparameters $w \in (0, \infty)$ and  $m \in \mathbb{N}$}
				\KwOut{two points $\ell, r \in \mathbb{R} $ such that $\ell < 0 < r$}
				Draw $\Upsilon \sim \mathrm{Unif}\big((0,w)\big)$, call the result $u$.\;
				Set $\ell : =-u$ and $r := \ell + w$.\;
				Draw $J \sim \mathrm{Unif}(\{1, \ldots, m\})$, call the result $\upiota$.\;
				Set $i = 2$ and $j = 2$.\;
				\While{$i \leq \upiota$ and $p\left(\gamma_{(x,v)}(\ell)\right)>t$}{Set $\ell = \ell - w$.\;
				 Update $ i  = i+1$.}
				\While{$j \leq m + 1 -\upiota$ and $p\left(\gamma_{(x,v)}(r)\right)>t$}{Set $r = r + w$.\;
				Update $j = j+1$.}
				\Return{$(\ell, r)$}
		\end{algorithm}
    \end{minipage}
	\hfill
    \begin{minipage}{0.45\textwidth}
        \begin{algorithm}[H]
			\label{A: Shrinkage}
			\caption{Shrinkage procedure.\\ Call as $\texttt{Shrink}_{\ell, r}(x,v,t)$.}
				\KwIn{point $x \in \mathsf{M}$, direction $v \in \mathbb{S}_x^{d-1}$, level $t \in (0, p(x))$ and parameters $\ell < 0 < r$}
				\KwOut{point $\theta \in L(x,v,t)\cap [\ell, r)$}
				Draw $ \Theta \sim \mathrm{Unif}\big((0, r-l)\big) $, call the result $\theta_h$.\;
				Set $ \theta := \theta_h - \mathbbm{1}_{\{\theta_h > r \}} (r-l)$.\;
				Set $\theta_{\min} := \theta_h$ and $\theta_{\max} := \theta_h $.\;
				\While{$p\left(\gamma_{(x,v)}(\theta)\right)\leq t$}{\If{$\theta_h \in [ \theta_{\min}, r-l)$}{Set $\theta_{\min}=  \theta_h$.}
				\Else{Set $\theta_{\max}=  \theta_h$.}
				Draw $\Theta\sim \mathrm{Unif}\big((0, \theta_{\max})\cup [\theta_{\min}, r-l)\big)$, call result $\theta_h$.\;
				Set $\theta = \theta_h - \mathbbm{1}_{\{\theta_h > r \}} (r-l)$.}
				\Return{$\theta$.}
		\end{algorithm}
\end{minipage}
A complete algorithmic description of the geodesic slice sampler in pseudo code can be found in Algorithm \ref{A: GSS}.
It calls Algorithm \ref{A: Stepping-out} and Algorithm \ref{A: Shrinkage} representing the stepping-out and shrinkage procedure on the geodesic respectively.
We comment on the prerequisites of the geodesic slice sampler.
\begin{remark}
	\begin{enumerate}
		\item In order to implement Algorithm \ref{A: GSS} we need to be able to perform the following operations:
		\begin{itemize}
			\item Evaluation of the unnormalized density $p(x)$ at every $x \in \mathsf{M}$.
			\item Sampling from $\sigma_{d-1}^{(x)}$ for all $x \in \mathsf{M}$. If we know an isometric isomorphism $\mathbb{R}^d \to T_x\mathsf{M}$ for all $x \in \mathsf{M}$ this is an easy task.
			\item Evaluation of geodesics $\gamma_{(x,v)}(\theta)$ for all $x\in \mathsf{M}$, $v \in \mathbb{S}^{d-1}_x$ and $\theta \in \mathbb{R}$.
			Some cases where this is possible are provided in Example \ref{Ex:_Illustative_scenarios_for_GSS}.
		\end{itemize}
		\item The geodesic slice sampler takes two hyperparameters, namely $w \in (0,\infty)$ and $m \in \mathbb{N}$.
		They arise from the usage of Algorithm \ref{A: Stepping-out} (the stepping-out procedure), and their influence on the geodesic slice sampler is derived from their influence on the stepping-out procedure, see Remark \ref{R: Hyperparameters of stepping-out}.
		Roughly speaking, $mw$ determines the maximal possible size of the neighbourhood of the initial point $x \in \mathsf{M}$ that the geodesic slice sampler takes into account when performing its transition.
		This affects the reach of the algorithm as well as its ability to jump between modes of the target distribution $\pi$.
		Choosing $m$ larger and $w$ smaller can be seen as increasing the likelihood that a smaller neighbourhood of $x$ is considered for the transition, compared to the maximal possible reach of the algorithm.
		Depending on the shape of the unnormalized density $p$, this can hamper the ability of the geodesic slice sampler to jump between modes or allow the consideration of more ``relevant'' neighbourhoods.
		Observe that larger $m$ may lead to higher computational cost by increasing the maximal possible cost of Algorithm \ref{A: Stepping-out}.
	\end{enumerate}
\end{remark}
	Before we provide some illustrative scenarios, we point out two aspects related to the Riemannian setting:
	First, note that the direction used to parametrize the geodesic is an element of the tangent space at the current point $x$,
	i.e. its distribution depends on $x$.
	This is not visible in the Euclidean case from Section \ref{Sec: Slice sampling on Rd}, 
	because there the tangent space at each point and the underlying state space $\mathbb{R}^{d}$ coincide.
	
	Secondly, it is crucial to choose the hyperparameter $m$ finite,
	i.e. to bound the maximal number of stepping-out steps within one transition of the geodesic slice sampler.
	Otherwise the stepping-out procedure may not terminate in finite time with positive probability,
	because on a Riemannian manifold $\mathrm{Leb}_{1}(L(x,v,t)) < \infty$ is not guaranteed.
\begin{example}\label{Ex:_Illustative_scenarios_for_GSS}
	\begin{enumerate}
		\item \label{Ex:_Illustative_scenarios_for_GSS_i} The Hit-and-run slice sampler on $\mathbb{R}^d$ described in Section \ref{Sec: Slice sampling on Rd} fits into the framework of the geodesic slice sampler.
		\item \label{Ex:_Illustative_scenarios_for_GSS_ii} We consider $\mathbb{S}^{d} \subseteq\mathbb{R}^{d+1}$.
		For $x \in \mathbb{S}^{d}$ we have	$\mathbb{S}^{d-1}_x = \{v \in \mathbb{S}^{d} \mid x^\top v = 0\}$.
		The projection $\psi_x: \mathbb{S}^{d}  \to \mathbb{S}^{d-1}_x, v \mapsto \left(\mathrm{Id} - xx^\top\right) v$ onto the subspace orthogonal to $x\in \mathbb{S}^d$,	
		where $\mathrm{Id}$ denotes the identity on $\mathbb{S}^d$, yields the formula
		$\sigma_{d-1}^{(x)} = (\psi_x)_\sharp\left(\nu_{\widehat{\mathfrak{g}}}(\mathbb{S}^{d})^{-1}\nu_{\widehat{\mathfrak{g}}}\right).$
		The geodesics of $\mathbb{S}^{d}$ are the great circles given by the explicit formula
		$\gamma_{(x,v)}(\theta) = \cos(\theta) x + \sin(\theta)v, \theta \in \mathbb{R}$,
		for $x \in \mathbb{S}^{d}$ and $v\in \mathbb{S}^{d-1}_x$.
		Of course we may apply the geodesic slice sampler for arbitrary hyperparameters as described in Algorithm \ref{A: GSS}.
		However, since all geodesics are periodic of a known period length (namely $2 \pi$), this renders 
		the stepping-out procedure
		somehow superfluous.
		For wisely chosen hyperparameters ($w= 2\pi, m =1$), the geodesic segment sampled by Algorithm \ref{A: Stepping-out} always equals exactly one winding of the great circle, and we can simply replace line 3 in Algorithm \ref{A: GSS} by deterministically setting $(\ell, r) = (-\pi, \pi)$.
		The resulting algorithm is the geodesic shrinkage slice sampler on the sphere from \citet{habeck2023geodesic}.
		\item \label{Ex:_Illustative_scenarios_for_GSS_iii} For the Stiefel manifold defined in Example \ref{Ex:Manifolds}.3, an isometric isomorphism between $\mathbb{R}^{k(k-1)/2 + k(n-k)}$ and the tangent space $T_\Gamma\mathcal{V}(n,k)$ at a point $\Gamma \in \mathcal{V}(n,k)$ is given by using the first $k(k-1)/2$ components of $v \in \mathbb{R}^{k(k-1)/2 + k(n-k)}$ to determine a skew symmetric matrix $\Pi$ and the remaining ones to form a matrix $\Sigma \in \mathbb{R}^{(n-k) \times k}$. These two matrices determine an element of $T_\Gamma\mathcal{V}(n,k)$ (after fixing $\Gamma_\perp$) by $\Delta = \Gamma \Pi + \Gamma_\perp \Sigma$.
		We provide an explicit formula for the geodesic $\gamma_{(\Gamma, \Delta)}$.
		To this end let $QR = \Gamma_\perp \Sigma$ be the compact QR-decomposition of $\Gamma_\perp \Sigma = (\mathrm{Id}_n - \Gamma \Gamma^\top) \Delta$.
		For $\theta \in \mathbb{R}$ set $N_1(\theta) \in \mathbb{R}^{k\times k}$ and $N_2(\theta) \in \mathbb{R}^{k\times k}$ to be
		\[
		\begin{pmatrix}
			N_1(\theta) \\ N_2(\theta)
		\end{pmatrix}
		= \exp \left(\theta \begin{pmatrix}
			\Pi\ & -R^\top \\ R\ & \boldsymbol{0}_{k\times k}
		\end{pmatrix}\right)
		\begin{pmatrix}
			\mathrm{Id}_k \\ \boldsymbol{0}_{k\times k}
		\end{pmatrix},
		\]
		where $\exp$ denotes here the matrix exponential. 
		Then
		\[
		\gamma_{(\Gamma, \Delta)}(\theta) = \Gamma N_1(\theta) + Q N_2(\theta), \qquad \theta \in \mathbb{R}.
		\]
		The derivation of these results can be found in \citep{edelman1998geometry}.
	\end{enumerate}
\end{example}

Finally, we present the Markov transition kernel corresponding to the geodesic slice sampler.
To this end, we first give a rigorous specification of the unnormalized density $p$:
\begin{assumption}\label{Ass: Unnormalized density}  
	The unnormalized density $p: \mathsf{M} \to [0, \infty)$ is a lower semi-continuous\footnote{All level sets $L(t):= \{x \in \mathsf{M} \mid p(x) > t\}$, $t \in \mathbb{R}$, are open.} function such that $ \int_{\mathsf{M}} p(x)\ \nu_{\mathfrak{g}}(\d x) \in (0,\infty)$.
\end{assumption} 
We denote by $\|p\|_{\infty} := \sup_{x \in \mathsf{M}} |p(x)|$
the supremum norm of $p$.
Observe that Assumption \ref{Ass: Unnormalized density} gives $\|p\|_{\infty} \in (0,\infty]$.
\begin{remark}
	We impose lower semicontinuity of the unnormalized density $p$ in Assumption \ref{Ass: Unnormalized density} to ensure that Algorithm \ref{A: Shrinkage} (the shrinkage procedure) terminates almost surely. 
	This guarantees that its output has indeed a distribution on $\mathbb{R}$.
	For more details see \citep{ReversibilityEllipticalSliceSampler}.
\end{remark}
Next we fix $w \in (0, \infty)$ and $m \in \mathbb{N}$.
For simplicity we drop these two hyperparameters of the geodesic slice sampler in our subsequent notation.
Let $x \in \mathsf{M}$, $v \in \mathbb{S}_x^{d-1}$ and $t \in (0, p(x))$.
We denote by 
\begin{equation}\label{Eq: Definition stepping-out distribution within GSS}
	\xi_{L(x,v,t)}^{(0)}(\mathsf{A}) := \mathbb{P}(\texttt{Step-out}_{w,m}(x,v,t) \in \mathsf{A}), \qquad \mathsf{A} \in \mathcal{B}(\mathbb{R}^2),
\end{equation}
the distribution of the output of Algorithm \ref{A: Stepping-out} and by
\begin{equation}\label{Eq: Definition shrinkage kernel within GSS}
	Q_{L(x,v,t)}^{\ell, r}(0, \mathsf{A}) := \mathbb{P}(\texttt{Shrink}_{\ell, r}(x,v,t) \in \mathsf{A}), \qquad \mathsf{A} \in \mathcal{B}(\mathbb{R}),
\end{equation}
where $\ell < 0 < r$, the distribution of the output of Algorithm \ref{A: Shrinkage}.
Note that in \eqref{Eq: Definition stepping-out distribution within GSS} and \eqref{Eq: Definition shrinkage kernel within GSS} the right hand side only depends on $x \in \mathsf{M}$, $v \in \mathbb{S}_x^{d-1}$ and $t \in (0, p(x))$ through the set $L(x,v,t)$.
A formal definition of these distributions and some of their properties can be found in Supplementary material \ref{Sec: Stepping-out} and \ref{Sec: Shrinkage procedure}.
For $x \in \mathsf{M}$, $t \in (0,p(x))$ and $\mathsf{A} \in \mathcal{B}(\mathsf{M})$ we define the auxiliary Markov kernels
\[ 
K_t(x,\mathsf{A}):= \int_{\mathbb{S}_x^{d-1}}\int_{\mathbb{R}^2} \int_{[\ell, r)}  \mathbbm{1}_{\mathsf{A}}\left(\gamma_{(x,v)}(\theta)\right) 
\ Q_{L(x,v,t)}^{\ell, r}(0, \d \theta) \ \xi_{L(x,v,t)}^{(0)}\big(\d (\ell, r) \big)\ \sigma_{d-1}^{(x)}(\d v).
\]
Then the Markov kernel
\begin{equation}\label{Eq: Definition of geodesic slice sampling kernel}
		K: \mathsf{M} \times \mathcal{B}(\mathsf{M}) \to [0,1],\qquad
		(x,\mathsf{A})\mapsto \begin{dcases}
			\frac{1}{p(x)} \int_{(0, p(x))} K_t(x,\mathsf{A})\ \Leb_1(\d t),&  p(x) > 0,\\
			\delta_x(\mathsf{A}), & p(x)=0,
		\end{dcases}
\end{equation}
where the case $p(x)= 0$ is added to extend $K$ to a Markov kernel on $\mathsf{M}$,
corresponds to Algorithm \ref{A: GSS}.
As long as the underlying manifold satisfies Assumption \ref{Ass: Manifolds} and the target distribution $\pi$ satisfies Assumption \ref{Ass: Unnormalized density},
this is a well-defined object.
And although an implementation requires the operations listed in Remark  \ref{R: Hyperparameters of stepping-out}.1,
it may serve as a starting point for further algorithmic development or as an intermediate step in the analysis of other algorithms.
The following theorem verifies the correctness of $K$.
\begin{theorem}\label{Thm: Reversibility}
	Suppose Assumption \ref{Ass: Manifolds} and \ref{Ass: Unnormalized density} are satisfied,
	and let $\pi$ be defined as in \eqref{Eq: Definition target density}.
	Fix $w \in (0, \infty)$ and $m \in \mathbb{N}$. 
	Then the Markov kernel $K$ given in \eqref{Eq: Definition of geodesic slice sampling kernel} is reversible with respect to $\pi$.
	Moreover, if\, $\mathsf{W} = \{x \in \mathsf{M} \mid p(x) > 0\}$ is connected, then we have $\lim_{n \to \infty} d_{\mathrm{tv}}(K^n(x, \cdot), \pi)=0$
		for $\pi$-almost all $x \in \mathsf{M}$.
\end{theorem}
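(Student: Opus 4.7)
The plan is to establish the detailed balance identity
\[
\int_\mathsf{A} K(x,\mathsf{B})\,\pi(\d x) \;=\; \int_\mathsf{B} K(x,\mathsf{A})\,\pi(\d x), \qquad \mathsf{A},\mathsf{B}\in\mathcal{B}(\mathsf{M}),
\]
by combining the classical slice sampling augmentation with a symmetry argument along geodesics.

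\textbf{Reduction to level sets.} First I would lift everything to a joint state space. Define the auxiliary measure $\tpi$ on $\mathsf{M}\times(0,\infty)$ with density proportional to $\mathbbm{1}\{t<p(x)\}$ with respect to $\nu_\mathfrak{g}\otimes\Leb_1$. Its $x$-marginal is $\pi$, while the conditional distribution of $x$ given $t$ is the normalized restriction $\pi_t := \nu_\mathfrak{g}\vert_{L(t)}/\nu_\mathfrak{g}(L(t))$. The factor $p(x)^{-1}\mathbbm{1}\{t<p(x)\}\,\Leb_1(\d t)$ appearing in \eqref{Eq: Definition of geodesic slice sampling kernel} is precisely the disintegration of $\tpi$ in the $t$-direction. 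A routine Fubini argument therefore reduces the reversibility of $K$ with respect to $\pi$ to the reversibility of $K_t$ with respect to $\pi_t$ for (almost) every $t\in(0,\|p\|_\infty)$.

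\textbf{Reversibility of $K_t$.} The heart of the argument is a change of variables driven by the involution
\[
\iota_\theta:(x,v)\longmapsto\bigl(\gamma_{(x,v)}(\theta),\,-\dot\gamma_{(x,v)}(\theta)\bigr)
\]
on the unit tangent bundle. Uniqueness of geodesics under Assumption \ref{Ass: Manifolds} gives $\iota_\theta\circ\iota_\theta=\operatorname{id}$, and the classical Liouville-type invariance of the geodesic flow, together with the fact that $v\mapsto -v$ is an isometry of each $(T_x\mathsf{M},\mathfrak{g}_x)$, ensures that $\iota_\theta$ preserves the product measure $\nu_\mathfrak{g}(\d x)\otimes\sigma_{d-1}^{(x)}(\d v)$. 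Writing out the detailed balance integrand for $K_t$ in the form
\[
\int_\mathsf{A}\!\int_{\mathbb{S}_x^{d-1}}\!\int_{\mathbb{R}^2}\!\int_{[\ell,r)}\mathbbm{1}_\mathsf{B}\bigl(\gamma_{(x,v)}(\theta)\bigr)\,Q_{L(x,v,t)}^{\ell,r}(0,\d\theta)\,\xi^{(0)}_{L(x,v,t)}(\d(\ell,r))\,\sigma_{d-1}^{(x)}(\d v)\,\nu_\mathfrak{g}(\d x)
\]
and applying $\iota_\theta$ swaps $\mathbbm{1}_\mathsf{A}(x)$ with $\mathbbm{1}_\mathsf{B}(\gamma_{(x,v)}(\theta))$; what remains is the inner one-dimensional integrand. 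The 1-dimensional stepping-out and shrinkage kernel on any slice $S\subset\mathbb{R}$ is reversible with respect to $\mathrm{Unif}(S)$, a property I would invoke from Section \ref{Sec: Stepping-out} and Section \ref{Sec: Shrinkage procedure}, following the line of argument in \cite{ReversibilityEllipticalSliceSampler}. Piecing these ingredients together yields reversibility of $K_t$ with respect to $\pi_t$.

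\textbf{Main obstacle.} The delicate part is the bookkeeping of the 1-dimensional objects under $\iota_\theta$. The slice at the reversed base point satisfies $L(\gamma_{(x,v)}(\theta),-\dot\gamma_{(x,v)}(\theta),t)=\theta-L(x,v,t)$, so one must verify that the pushforward of $\xi^{(0)}_{L(x,v,t)}$ and of $Q_{L(x,v,t)}^{\ell,r}$ under the affine reflection $\alpha\mapsto\theta-\alpha$ coincides, in the sense of detailed balance at $0\leftrightarrow\theta$, with the corresponding stepping-out distribution and shrinkage kernel at the new base point. This reduces to a translation-and-reflection equivariance of Algorithm \ref{A: Stepping-out} and Algorithm \ref{A: Shrinkage} combined with their genuine 1-dimensional reversibility; it is plausible from the symmetric design of the two procedures but needs an explicit check that I would carry out in Section \ref{Sec: Validity}. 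Note that cut-locus issues play no role here since the geodesic flow on the unit tangent bundle is globally defined under Assumption \ref{Ass: Manifolds}.
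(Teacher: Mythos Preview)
Your proposal is correct and takes essentially the same approach as the paper: reduction to reversibility of $K_t$ via the slice augmentation, invariance of the Liouville measure under the geodesic-flow-plus-flip involution $T^{(\theta)}=\iota_\theta$, and the combination of one-dimensional detailed balance (Lemmas~\ref{L: Key identity for reversiblity - measurable functions} and~\ref{L: Reversibility of the pushforward kernel}) with reflection equivariance (Lemmas~\ref{L: stepping out under rotation} and~\ref{L: Shrinkage under rotation}) of the stepping-out and shrinkage procedures. The only organizational difference is that the paper first introduces an auxiliary uniform variable $\alpha\in L(x,v,t)\cap(\ell,r)$ and applies the Liouville change of variables $T^{(\alpha)}$ with $\alpha$ held fixed by Fubini, rather than directly with the shrinkage output $\theta$; this cleanly decouples the geometric step from your ``main obstacle'', since $\theta$ depends on $(x,v,\ell,r)$ through the inner kernels.
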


The proof of this statement can be found in Supplementary material \ref{Sec: Proof reversibility} and \ref{Sec: proof ergodicity}.

\begin{remark}\label{R: Harris recurence}
	If the injectivity radius $\mathrm{inj}(\mathsf{M})$ of the Riemannian manifold $\mathsf{M}$ is strictly positive,
	the convergence statement of Theorem \ref{Thm: Reversibility} can be extended to 
	hold for all starting points $x \in \mathsf{W}$ as long as the hyperparameters satisfies $mw \leq \mathrm{inj}(\mathsf{M})$.
	For example on Cartan-Hadamard manifolds this is always true.
	However, even when we go away from the setting $\mathrm{inj}(\mathsf{M}) = \infty$,
	the convergence for all starting points in $\mathsf{W}$ is easily preserved for all choices of $m$ and $w$ with a small alteration
	either of the hyperparameters of the first iteration of the geodesic slice sampler or a randomization of the hyperparameters at every iteration.
	For further details see Supplementary material \ref{Sec: proof ergodicity}.
\end{remark}

\subsection{Literature review of MCMC-methods on Riemannian manifolds}\label{Sec: Literature review}

In this section, we aim to provide an overview of existing MCMC-methods on Riemannian manifolds in the literature.
Roughly speaking they can be assigned to three different categories.

The first class of MCMC algorithms are defined on open sets of $\mathbb{R}^d$ but consider non canonical Riemannian metrics.
\citet{girolami2011riemann} generalize Hamiltonian Monte Carlo (HMC) to $\mathbb{R}^d$ equipped with an arbitrary metric tensor obtaining Riemannian manifold HMC (RMHMC), which assumes knowledge about the Riemannian metric of the underlying manifold.
They also introduce a MALA-type algorithm in this setting.

A second class of MCMC algorithms consists of methods defined on submanifolds of $\mathbb{R}^d$ associated with the canonical metric introduced by the embedding in $\mathbb{R}^d$ that are not necessarily open sets.
This includes Hamiltonian based MCMC-methods tailor-made for specific classes of manifolds that have been further developed upon RMHMC such as for implicitly defined manifolds \citep{brubaker2012family}, manifolds embedded in Euclidean space \citep{byrne2013geodesic} and the sphere \citep{lan2014spherical}.

Distributions on submanifolds of $\mathbb{R}^d$ can also be approximated by proposing a sample from the ambient $\mathbb{R}^d$ projected to the manifold and then running a Metropolis-Hastings acceptance rejection step, see e.g. \citep{mantoux2021understanding,zappa2018monte}. 
However, as usually the conditional distribution of the proposal is intractable and is therefore not taken into account in the acceptance rejection step, the resulting Metropolis-Hastings algorithm is biased.
However, for submanifolds of $\mathbb{R}^d$ defined by inequalities and equality constraints, \citet{zappa2018monte} propose a bias-free modification of this method.
In the special case when the underlying manifold is a hypersphere equipped with the angular Gaussian distribution as reference measure, 
other specialized bias-free reprojected MCMC algorithms have been proposed 
such as reprojected preconditioned Crank–Nicolson algorithm or reprojected Elliptical Slice sampling, see \citep{lieAccepteddimension}.

The third class of MCMC-methods on Riemannian manifolds employs geodesic flows.
\citet{mangoubi2018rapid} analyse a geodesic random walk on Riemannian manifolds with positive bounded curvature,
which is invariant with respect to the Riemannian measure. 
This analysis has been extended in \citep{goyal2019sampling} to arbitrary target probability measures on manifolds with bounded non-negative curvature by adding a Metropolis-Hastings-like acceptance step resulting in a geodesic Metropolis-Hastings algorithm.
Note that a similar approach has been taken before in \citep{lee2017geodesic} to target the uniform distribution on a polytope by equipping it with a Hessian structure.
In addition to this class of Metropolis-Hastings algorithms, there already exist MCMC-methods for specific manifolds that combine slice sampling with geodesics,
that is, for distributions on $\mathbb{R}^d$ there is Hit-and-Run slice sampling \citep{latuszynski2024convergence,Mackay}, and for distributions on hyperspheres there is geodesic slice sampling on the sphere which uses great circles instead of straight lines \citep{habeck2023geodesic}.
They can both be viewed as special cases of GSS.

\section{Application}\label{Sec: Applications}
We numerically assess the performance of GSS (Algorithm \ref{A: GSS}) in comparison with other Riemannian MCMC algorithms.
In our experiments, we consider the compact Stiefel manifold $\mathcal{V}(n,k)\subseteq \mathbb{R}^{n \times k}$ and the Grassmann manifold $\mathcal{G}(n,k)$ with, $k,n \in\mathbb{N}$, $ k<n$.
They find applications in shape analysis \citep{hong2017regression}, dimensionality reduction \citep{holbrook2016bayesian} and computer vision \citep{lui2012advances,nguyen2019neural}. 
We give a brief overview of the conducted experiments.
 All the code is available on GitHub\footnote{\texttt{https://github.com/samuelgruffaz/Geodesic_Slice_Sampling_on_Riemannian_Manifold.git}}
and for details on the samplers used in the numerical experiments for comparison, we refer to Supplementary material section \ref{Sec: MCMC samplers}.

\begin{itemize}
	\item We first consider the case where the target distribution belongs to the class of von Mises-Fisher distributions on $\mathcal{V}(n,k)$ and $\mathcal{G}(n,k)$.
		When examining the Stiefel manifold $\mathcal{V}(n,k)$, we compare GSS with a biased adaptive random walk Metropolis Hastings (RMH) sampler and an independent Metropolis Hastings sampler with a uniform proposal (UIMH).
		On the Grassmann manifold, we contrast the GSS with a gradient-informed sampler referred to as the geodesics Metropolis-adjusted Langevin algorithm (GeoMALA), inspired by the framework proposed by \citep{byrne2013geodesic}.

		These experiments on synthetic data highlight the advantage of using inexact geodesic computations (as in RMH) when the number of constraints is low ($p \ll n$), whereas exact computations are preferable otherwise.
For GSS, we observe that high values of $w$ and $m$ improve sampling performance, as measured by the Effective Sample Size (ESS).
 In highly concentrated or anisotropic target distributions, the step size must be small in some directions and larger in others.
  GSS's stepping-out and shrinkage mechanism adapts the step size to the geometry of the target distribution, a feature that eliminates the need for extensive hyperparameter tuning to achieve good performance with a fixed number of steps.
Notably, GSS can outperform GeoMALA and RMH on concentrated or anisotropic target distributions, even when the latter methods use tuned step sizes after a burn-in period. Although GSS performs worse in other scenarios, GSS is unbiased contrary to RMH (See \cite{zappa2018monte}) and GeoMALA (see Figure \ref{fig:bias_analysis_geomala} in the supplement).
 Furthermore, increasing $w$ and $m$ raises the number of target evaluations and thus the computational cost. Excessively large $w$ and $m$ offer diminishing returns, as U-turn displacements can occur on compact manifolds.
A practical heuristic is to set $mw$ around twice the diameter of the support of $\pi$, which corresponds to $mw \approx 2\pi$ for a distribution whose support is the whole Stiefel manifold. Developing a theoretical foundation for such heuristics remains an open research question, even in the Euclidean case \citep{latuszynski2024convergence, power2024weak}. 
		 For sake of conciseness, these experiements are postponed to Supplementary material 5.2.
              \item We compare GSS and RMH in inferring a latent variable model developed in \citep{mantoux2021understanding}
               and described in Example \ref{Ex: Brain network}.
			  RMH is used in \citep{mantoux2021understanding} because it compares favorably to its unbiased counterparts. 
			  This model
              is employed for the analysis of brain network structures. Specifically, it encodes principal directions of adjacency matrices obtained from magnetic resonance imaging (MRI) as points on a Stiefel manifold.
\item Finally, we introduce a Bayesian Matrix von Mises-Fisher clustering model for action recognition in videos. We approximate the posterior distribution using GSS and compare our resulting model with other existing approaches \citep{lin2017bayesian, sengupta2017bayesian}.
        \end{itemize}

We do not conduct comparisons with the Riemannian Hamiltonian Monte Carlo (RHMC) algorithm \citep{girolami2011riemann}, since it is not tailored for constrained manifolds (as the Stiefel or the Grassmann manifold).
The follow-up paper \citep{byrne2013geodesic} handles the case of constrained spaces using the expression of geodesic.
Therefore we have included GeoMALA \citep{byrne2013geodesic}  in our comparison on synthetic data in Section 5.2 in Supplementary material. 
Additional experiments using the ARMA model can be found in Supplementary material \ref{appendix:ARMA},
and for experiments on the Euclidean unit sphere see \cite{habeck2023geodesic}.

 For comments on the structures of the Stiefel and the Grassmann manifold needed to implement GSS see Examples \ref{Ex:Manifolds}.\ref{Ex:Manifolds_iii}, \ref{Ex:_Illustative_scenarios_for_GSS}.\ref{Ex:_Illustative_scenarios_for_GSS_iii} and Supplementary material \ref{Sec: MCMC samplers}.

\subsection{A practical case: Understanding the variability in graph data sets.}
\label{sub_sec:practical_case}
As already mentioned in Remark \ref{Ex:Manifolds_iii}, we consider in this section a model, see \eqref{eq:model_adjanc_matrices}, introduced in \citep{mantoux2021understanding} that aims to infer the  structure of adjacency matrices from functional connectivity networks of  brains.

The original paper estimates the parameters using the Markov chain Monte Carlo-stochastic approximation expectation maximization (MCMC-SAEM) procedure \citep{kuhn2004coupling}.
MCMC-SAEM is an extension of the expectation maximization (EM) algorithm, where the E-step involves approximating integrals using MCMC methods.
In this context, the E-step samples approximately the density $p_\Gamma=p(\Gamma^{(j)}|\Upphi^{(j)},\kappa^{(j)},\theta_n)$, where $\theta_n$ is the current estimate for the parameter $\theta$, within a Gibbs procedure using RMH (see \citep[Algorithm 3]{mantoux2021understanding}).

For the optimization procedure, we propose to replace RMH with GSS($w=1,m=5$) to compare  performances. Since our implementation of RMH is four times faster than GSS, we chose to multiply the number of MCMC iterations by four when using RMH for the E-step. 

Comparison of the log-complete likelihood during the estimation for different synthetic datasets can be found in Supplementary material \ref{appendix:estimation}.

\bigskip
\textbf{Missing links imputation.}
We follow the experimental setup proposed in \citep[Section
5.1.2]{mantoux2021understanding} for missing links imputation: A
synthetic data set of $N = 200$ adjacency matrices
$\{\Upphi^{(i)}\}_{i=1}^{N}$ with $n=20$ nodes and $k=5$ is generated from
the model specified by \eqref{eq:model_adjanc_matrices} with
parameters $\theta=(\sigma_\kappa^2,\sigma^2_\epsilon,\mu,F)$ given in Supplementary material \ref{appendix:numerical_details_missing_link_imputation}.  In a first stage, the
MCMC-SAEM algorithm with GSS is applied to perform the estimation of
the parameters resulting in an estimator $\hat{\mathbf{\theta}}$. Then,
from the same model, we generate another 200 samples $\{\bar{\Upphi}^{(i)}\}_{i=1}^{N}$. For each of
these samples, 16\% of the edge weights corresponding to the
interactions between the last eight nodes are masked.  Denote by $\{\tilde{\Upphi}^{(i)}\}_{i=1}^{N}$ the resulting adjacency matrices. We then aim to
reconstruct the missing links of $\{\bar{\Upphi}^{(i)}\}_{i=1}^{N}$ using the following three procedures:
\begin{itemize}
\item The missing links are found from the maximum a posteriori (MAP) approximated as $\textrm{MAP} = \Gamma \operatorname{diag}(\kappa) (\Gamma)^{\top}$, where $\Gamma,\kappa$ are the result of 4000 iterations of gradient ascent on the posterior density $p(\Gamma,\kappa  |\tilde{\Upphi}^{(i)},\hat{\mathbf{\theta}})$, the conditional density of $(\Gamma,\kappa)$ given the masked observation $\tilde{\Upphi}^{(i)}$ in model \eqref{eq:model_adjanc_matrices}.
\item The missing links are found from the posterior mean (PM) defined
  as \begin{equation}
	\label{eq:PM}
\textrm{PM}=\int_{\mathbb{R}^{d}}\int_{\mathcal{V}(n,k)} \Gamma \operatorname{diag}(\kappa) (\Gamma)^{\top} p(\Gamma,\kappa |\tilde{\Upphi}^{(i)},\hat{\mathbf{\theta}})\ \nu_{\mathfrak{g}}(\d\Gamma)\, \Leb_d(\d \kappa).
  \end{equation}
   This distribution is approximated using $\hat{\mathbf{\theta}}$ and GSS or RMH within a Gibbs sampler.
\item Finally, we consider a reconstruction using simply the link from the Mean Sample (MS) $N^{-1}\sum_{i=1}^N \bar{\Upphi}^{(i)}$.
\end{itemize}
Then the same experiment is repeated, but this time,
  40\% of the edges chosen uniformly from the edges of all nodes are masked,
  instead of constraining the mask to the edges of certain nodes.
Table \ref{tab:imputation1} and \ref{tab:imputation2}, reporting relative reconstruction errors, indicate that the Bayesian estimator (PM) offers better results than the MAP or the naive estimator (MS).
Moreover, computing the posterior mean using MCMC samples from GSS is better than using RMH. 

\begin{figure}[htbp]
    \centering
    \begin{minipage}{0.45\textwidth}
        \centering
        \begin{tabular}{|c|c|c|c|}
            \hline
        	MCMC & PM & MAP  & MS\\ \hline
            GSS  & \textbf{0.5} $\pm$ 0.2 & 0.52 $\pm$ 0.16 & 0.78 $\pm$ 0.1\\ \hline
            RMH & 0.55 $\pm$ 0.26& 0.51 $\pm$ 0.15& 0.82$\pm$ 0.1\\ \hline
        \end{tabular}
        \captionof{table}{Imputation with 16\% of the edges masked not uniformly. Relative reconstruction errors are given.}
		\label{tab:imputation1}
		\begin{tabular}{|c|c|c|c|}
            \hline
        	MCMC & PM & MAP  & MS\\ \hline
			GSS & \textbf{0.31} $\pm$ 0.1 & 0.35 $\pm$ 0.08 & 0.78 $\pm$ 0.07\\ \hline
            RMH& 0.34 $\pm$ 0.1& 0.35 $\pm$ 0.08& 0.78$\pm$ 0.07 \\\hline
        \end{tabular}
        \captionof{table}{Imputation with 40\% of the edges masked uniformly. Relative reconstruction errors are given.}
        \label{tab:imputation2}
    \end{minipage}
	\hfill
    \begin{minipage}{0.45\textwidth}
        \centering
		\includegraphics[width=\textwidth]{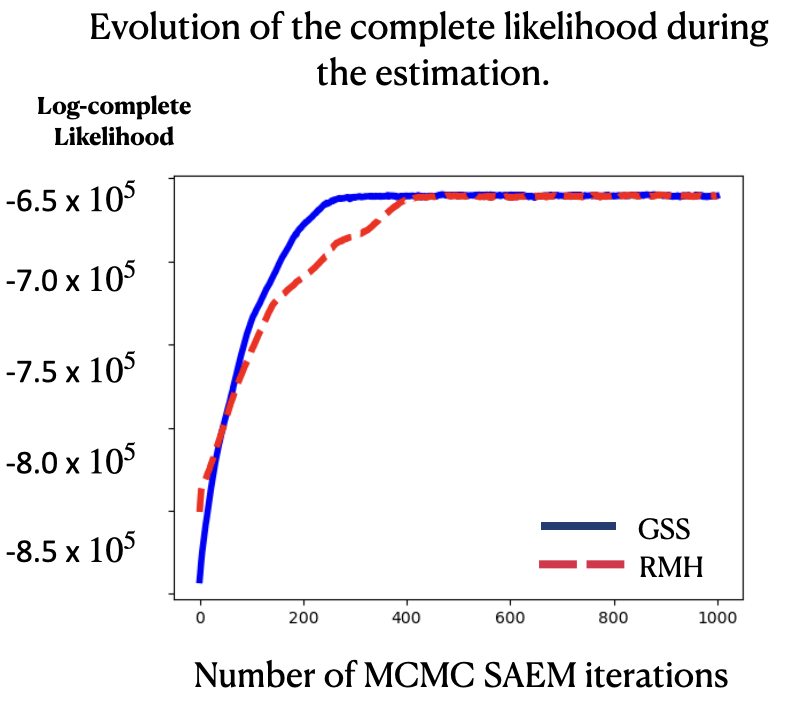}
	
\end{minipage}
\caption{(At right) Comparison of the MCMC-SAEM estimation with GSS and RMH.
We use four times more MCMC steps for RMH compared to GSS, since our implementation of RMH is four times faster.}
\label{fig:comparison_saem}
\end{figure}

\bigskip
\textbf{On a real dataset.}
We use real data to perform a comparison similar to the one proposed in \citep{mantoux2021understanding} where the authors  use connectivity matrices with $n=21$ nodes and $k=5$ in their model to analyse $N=1000$ subjects\footnote{We did not use the same data, since their dataset was unavailable to us.}.
Data were provided by the Human Connectome Project%
\footnote{\texttt{https://www.humanconnectome.org/study/hcp-young-adult/document/extensively-processed-fmri-data-documentation}}. The dataset is composed of brain connectivity matrices generated from resting-state functional MRI (rs-fMRI) by following the pipeline described in this documentation\footnote{\texttt{https://www.humanconnectome.org/storage/app/media/documentation/s1200/HCP1200-DenseConnectome+PTN+Appendix-July2017.pdf}}.
On a subject which receives no stimulation (at rest), the rs-fMRI records fluctuations in blood oxygenation levels throughout the brain.
 By maximizing the signal coherence in each region of the brain with a spatial independent component analysis (ICA)  \citep{beckmann2004probabilistic}, it yields a partition of the brain depending on its structure and variations from one individual to another. 
 Finally, the temporal correlations between the mean blood oxygenation levels in each region are assembled into a matrix.
 This matrix is called the brain's functional connectivity network. It should be noted that this network is not necessarily derived from physical reality, since it only represents correlations between brain regions. This is why  the term ``functional'' is coined.
 In this study, the connectivity matrices are defined on a parcellation of the brain into $n=25$ regions\footnote{The network modelling is related to ``netmats'' in the documentation.}. We chose the ``recon2'' group of the dataset leaving us 812 matrices of dimension $25\times 25$ to analyse.

 We choose $k=5$, and run 1000 MCMC-SAEM iterations with 20 MCMC steps per SAEM iteration when using GSS($w=1,m=10$), and  80 iterations when using RMH.
 The initialization procedure is the same as for the experiments with synthetic data described in Supplementary material \ref{appendix:estimation}.
 To evaluate the benefits of the model \eqref{eq:model_adjanc_matrices}, we calculate the rRMSE between PM and the observations, as in the previous paragraph, and compare it to the rRMSE obtained by projecting the data onto the subspace spanned by the first five principal components (PCA) of the full dataset, where each matrix $\Upphi^{(i)}$ is vectorized.
  The reconstruction using PM achieves a significantly lower average rRMSE (0.75) compared to PCA (0.95).
 Additionally, as shown in Figure \ref{fig:comparison_saem}, GSS demonstrates faster convergence than RMH when analyzing the evolution of the complete log-likelihood, consistent with observations on synthetic data. This supports the use of GSS for MCMC-SAEM estimation, where the target distribution evolves throughout the optimization process.
  The adaptive step size adjustments provided by the stepping-out and shrinkage mechanism are particularly advantageous in this context, as they eliminate the need for extensive manual tuning.  

\subsection{Bayesian clustering on the KTH video action dataset.}
In this section,
 GSS is used to estimate parameters of a Bayesian clustering model on the KTH video action data \citep{schuldt2004recognizing}
 following the pipeline used in \citep{chakraborty2019statistics}.
For four different scenarios (called ``d1'', ``d2'', ``d3'' and ``d4''), the dataset records 6 actions carried out by 25 humans, which yields 150 videos per scenario.
 From each video, a sequence of frames is extracted, and each frame is resized to $64\times 128$, before computing its histogram of oriented gradients (HOG) \citep{dalal2005histograms} features, which are $d=3780$-dimensional.
 Finally, an ARMA model is used to model the sequence of HOG features by estimating the parameter with the closed-form formula given in \citep{doretto2003dynamic}.
 A detailed description of this model and additional experiments can be found in Supplementary material \ref{appendix:ARMA}.
 For each video, let $T$ be the number of frames and $f_e\in \mathbb{R}^{d\times T}$ be the matrix formed by stacking the HOG feature vectors from each frame.
Let $f_e=\widehat{U}\mathrm{diag}(\lambda)\widehat{V}^\top$ be the SVD of $f_e$ by taking only the $d_l=50$ first components of the decomposition,  i.e. $\widehat{U} \in \mathbb{R}^{d \times d_l}$, $\lambda \in \mathbb{R}^{d_l}$ and $\widehat{V} \in \mathbb{R}^{T \times d_l}$.
 	Then the video is represented by $(\widehat{U},\lambda,\Sigma)\in \mathcal{V}(d,d_{l})\times \mathbb{R}^{d_{l}}\times \mathcal{V}(d_{l},d_{l})=:\mathcal{X}$ where 
	\begin{equation*}
		\Sigma=\mathrm{diag}(\lambda) \widehat{V}^\top M_1\widehat{V}(\widehat{V}^\top M_2 \widehat{V})^{-1}\mathrm{diag}(\lambda)^{-1},\quad M_1=\begin{pmatrix}
			0& 0\\
			 \mathrm{Id}_{T-1} & 0
			\end{pmatrix}, \,\, M_2=\begin{pmatrix}
				\mathrm{Id}_{T-1}& 0\\
				 0 & 0
				\end{pmatrix}\in \mathbb{R}^{T\times T}.
	\end{equation*}
 Clustering models using mixtures of matrix von Mises–Fisher distributions have already been proposed in \citep{lin2017bayesian,sengupta2017bayesian}, where the distribution on $\mathbb{R}^{d_l}$ is a mixture of multivariate Gaussian distributions with diagonal covariance matrix.
	The number of clusters is taken equal to the number of actions, that is, six.
	Each cluster is described by a parameter $\theta_k=(F_k^1,F_k^2,\mu_k,s_k) \in (\mathbb{R}^{d\times d_{l}} \times\mathbb{R}^{d_l\times d_{l}} \times\mathbb{R}^{d_{l}} \times (0, \infty)^{d_{l}})$ and a mixing weight $m_k\in[0,1]$.
	  The observations $y_i=(\widehat{U}_i,\lambda_i,\Sigma_i)\in\mathcal{X} $ and their cluster assignment $Z_i$, $i \in \{1, \ldots, 150\}$, are assumed to follow the generation process
	\begin{align*}
		&Z_i\overset{\text{i.i.d.}}{\sim} \mathrm{Mult}((m_k)_{k \in\{1, \ldots,6\}}),\quad y_i \overset{\text{i.i.d.}}{\sim} \sum_{k=1}^6 m_k p(\cdot |Z_i=k, \theta_k)\ ,\\
		& p(y_i|Z_i=k, \theta_k)= p_{\operatorname{vMF}(F_k^1)}(\widehat{U}_i) p_{\operatorname{vMF}(F_k^2)}(\Sigma_i)
		\exp\left(-\frac{1}{2}(\lambda  - \mu_k)^\top \mathrm{diag}(s_k)^{-1}(\lambda-\mu_k)\right),
	\end{align*}
	where $\mathrm{Mult}((m_k)_{k \in\{1, \ldots,6\}})$ is the multinomial distribution with parameter $(m_k)_{k \in\{1, \ldots,6\}}$.
	In this Bayesian setting, we impose a uniform non-informative prior on every parameter. 
	Our goal is to assess if the clustering separates the $6$ actions. 
 To this end, for each environment, the dataset is split into a training set (67\% of the data) and a test set (33\%). The clustering model is trained with the training set and evaluated on the test set.

 The parameter estimation is performed on the training set by adapting the EM algorithm described in \citep{sengupta2017bayesian} to the product space $\mathcal{X}$. We denote by $(\hat{\theta}_k)_{k\in\{1,\ldots,6\}}$ the estimated parameters.
 The cluster label for a point $y$ in the test set is given by the index $l_i=\argmax_{k\in\{1, \ldots,6\}} \{p(y|\hat{\theta}_k,Z_i=k)\} $  of the cluster with the highest model density at the observation. The related results are reported in Table \ref{table:clustering} as ``vMF clustering EM''.

 The result of the EM algorithm is then used as an initialization for the sampling of the posterior.
  However, the sampling is only done for some parameters for computational reasons and to demonstrate the benefit of GSS on the Stiefel manifold.
 The vMF distributions' parameters $(F_k^1,F_k^2)_{k\in \{1, \ldots,6\}} $ are parametrized using their SVD $F_k^1=\bar{U}_k\bar{D}_k\bar{V}_k^\top, F_k^2=\widetilde{U}_k\widetilde{D}_k\widetilde{V}_k^\top$ belonging to 
 $\mathcal{V}(d, d_l)\times\mathbb{R}^{d_l} \times \mathcal{V}(d_l, d_l) $ and $\mathcal{V}(d_l, d_l)\times\mathbb{R}^{d_l} \times \mathcal{V}(d_l, d_l)$ respectively. Then we approximately sample from the posterior associated with $(\bar{U}_k,\widetilde{U}_k)_{ k\in \{1, \ldots,6\}}$ using GSS, and the resulting posterior samples $(\bar{U}_k^j,\widetilde{U}_k^j)_{k\in\{1, \ldots,6\}}^{j\in\{1, \ldots,N\}}$ yield pseudo-posterior cluster parameter samples $(\tilde{\theta}_k^j)^{j\in \{1, \ldots,N\}}_{k\in\{1, \ldots,6\}}$ where $N=500$ is the number of samples.
 Then, the samples $(\tilde{\theta}_k^j)_{ k\in \{1, \ldots,6\}}^{j\in \{1, \ldots,N\} }$ are used to compute Bayesian assignment weights $w_{i,k}= \sum_{j} p(y_i|\tilde{\theta}_k^j,Z_i=k)/N $ for each observation $y_i\in \mathcal{X}$ of the test set and each cluster $k$.
 The label of each observation is finally assigned as the cluster $l_i=\argmax_{k\in\{1, \ldots,6\}} \{w_{i,k}\} $ giving the highest model density averaged over the sampled posterior parameters. This procedure is referred to as ``vMF clustering MCMC'' in Table \ref{table:clustering}.

 Exploiting the benefits coming with a Bayesian approach,
 for each observation $y_i$ of the test set, we also compute the empirical variance of the sample
\begin{equation}   L_i=(l_{i,j}=\argmax_{ k\in \{1, \ldots,6\}} p(y_i|\tilde{\theta}_k^j,Z_i=k))_{j \in \{1, \ldots, N\}}
\end{equation}
 and remove the point $i$ if it is positive.
 The procedure described above is then applied to this reduced test set.
 The scores resulting  from this procedure referred to as ``vMF clustering MCMC lower variance'', are given  in Table \ref{table:clustering}.

 \begin{table}
	\caption{Clustering results on the KTH action recognition dataset reported with f1 score weighted (\%).}
	\centering
	\begin{tabular}{lllll}
			  Scenario & d1 & d2& d3 &d4 \\
			  vMF clustering EM :  & 49.54  &56.61& 57.99 &49.98   \\
			  vMF clustering MCMC :  & \textbf{52.21}  &56.61& 57.99 &49.98   \\
			  vMF clustering MCMC lower variance  :  & 50.15  &\textbf{57.75}& \textbf{59.62} &\textbf{54.06}   \\
	
			\end{tabular}
		  \label{table:clustering}
	  \end{table}

      Conceptually, the Bayesian approach strengthens the estimation by averaging different plausible weights,
	   as in the ensemble methods \citep{dietterich2000ensemble}.
	   In Table \ref{table:clustering} we observe that this turns out to be always better than the simple maximum likelihood estimator (MLE).


\section*{Acknowledgement}
Data were kindly provided in part by the Human Connectome Project, WU-Minn Consortium
(Principal Investigators: David Van Essen and Kamil Ugurbil; 1U54MH091657) funded by the 16 NIH Institutes and Centers that support the NIH Blueprint for Neuroscience Research;
and by the McDonnell Center for Systems Neuroscience at Washington University.

We thank Rudrasis Chakraborty and Clément Mantoux for interesting discussions regarding the experiments on video actions and network data.
We thank Eric Moulines, Randal Douc and Philip Sch\"ar  for fruitful conversations regarding the theoretical part of this paper.
MH and DR gratefully acknowledge support of the DFG within project 432680300 – SFB 1456 subproject B02.
MH expresses her gratitude for the hospitality of \'Ecole Polytechnique.

\section*{Supplementary material}
The Supplementary Material includes a proof of Theorem \ref{Thm: Reversibility} alongside a formal description of the stepping-out and shrinkage procedure,
some more background on Riemannian geometry and slice sampling, and some further details on the numerical experiments, as well as some additional numerical experiments.



\bibliographystyle{biometrika}
\bibliography{GSSSources.bib}

\appendix

\section{Manifolds}\label{Sec: Manifolds}
We do not give a complete introduction to differential geometry in this section.
The aim is rather to give a better understanding on the key objects used by the geodesic slice sampler and provide references for what is outside the scope of this paper.

\subsection{Tangent space and Riemannian metric}

We revise some selected basic concepts on manifolds.
For a more thorough introduction to these objects see \citep[Sections I, III, IV, V]{Boothby}.

	A second countable Hausdorff space $\mathsf{M}$ with the property that
	for all $x \in \mathsf{M}$ there exists an open set $\mathsf{U} \subseteq \mathsf{M}$ containing $x$ and an open set $\mathsf{U}' \subseteq \mathbb{R}^d$ such that there is a homeomorphism $\varphi: \mathsf{U} \to \mathsf{U}'$
	is called a $d$-dimensional \emph{manifold}.
	Such a tuple $(\mathsf{U}, \varphi)$ of an open set and a homeomorphism is called a \emph{coordinate neighbourhood}.
	A manifold is additionally called \emph{smooth} if for all pairs of coordinate neighbourhoods $(\mathsf{U}, \varphi)$ and $(\mathsf{V}, \psi)$ with $\mathsf{U}\cap \mathsf{V} \neq  \emptyset$,
	the maps $\varphi \circ \psi^{-1}: \psi(\mathsf{U} \cap \mathsf{V}) \to \varphi(\mathsf{U} \cap \mathsf{V})$ and $\psi \circ \varphi^{-1}: \varphi(\mathsf{U} \cap \mathsf{V}) \to \psi(\mathsf{U} \cap \mathsf{V})$ are diffeomorphisms.
	Observe that $\psi(\mathsf{U} \cap \mathsf{V})$ and $\varphi(\mathsf{U} \cap \mathsf{V})$ are subsets of $\mathbb{R}^{d}$.
	Therefore the standard definition for diffeomorphisms applies.
	This strategy facilitated by the local Euclideanness of $\mathsf{M}$,
	allows also to define the notion of smooth functions on a smooth manifold $\mathsf{M}$.
	Namely, $f: \mathsf{U} \to \mathbb{R}$, where $\mathsf{U} \subseteq \mathsf{M}$ is open, is called a \emph{$C^\infty$-function}, 
	if for all coordinate neighbourhoods $(\mathsf{V}, \psi)$ with $\mathsf{V} \cap \mathsf{U}\neq \emptyset$ the function $f \circ \psi^{-1} : \psi(\mathsf{V} \cap \mathsf{U}) \to \mathbb{R}$ is infinitely often differentiable.
	This notion of smooth functions is used to define a \emph{tangent space}  $T_x\mathsf{M}$ to $\mathsf{M}$ for every point $x \in \mathsf{M}$.
	Given $x \in \mathsf{M}$, 
	it contains all maps $v$ from $C^\infty_x(\mathsf{M}) := \{ f: \mathsf{U} \to \mathbb{R} \mid x \in \mathsf{U},  f\ C^\infty\text{-function}\}$ to $\mathbb{R}$ that are linear and satisfy $v(fg) = f(x) v(g) + g(x) v(f)$ for all $f,g \in C^\infty_x(\mathsf{M})$.
	Here, two functions in $C^\infty_x(\mathsf{M})$ are identified with each other if they agree on an open neighbourhood of $x$.
	Intuitively, we may think about $T_x\mathsf{M}$ as the set of directional derivatives evaluated at $x$ applicable to $C^\infty$-functions on $\mathsf{M}$.
	Importantly, all tangent spaces of a manifold $\mathsf{M}$ are vector spaces of the same dimension as $\mathsf{M}$.
	Given a coordinate neighbourhood $(\mathsf{U}, \varphi)$, the coordinate frames $E_{1,x}^{\varphi}, \ldots, E_{d,x}^{\varphi} \in T_x\mathsf{M}$,
	defined as
	\[
	E_{i,x}^{\varphi}(f) = \partial_i\vert_{\varphi (x)} (f \circ \varphi^{-1}), \qquad f \in C^\infty(\mathsf{M}),
	\]
	where $\partial_i$ denotes the partial derivative with respect to the $i$th coordinate,
	form a basis of the vector space $T_x\mathsf{M}$ for all $x \in \mathsf{U}$.
	If the tangent spaces can be turned into inner product spaces in a way that is compatible with the smoothness structure on $\mathsf{M}$,
	we call the manifold Riemannian.
	Namely, a mapping $\mathfrak{g}$ that associates to each point $x \in \mathsf{M}$ a symmetric, positive definite bilinear form $\mathfrak{g}_x$ on $T_x\mathsf{M}$
	and satisfies that for all  coordinate neighbourhoods $(\mathsf{U}, \varphi)$ the entries of the Gram matrix $\mathsf{U} \to \mathbb{R}, x \mapsto \mathfrak{g}_x(E_{i,x}^{\varphi}, E_{j,x}^{\varphi})$, $i,j \in \{1, \ldots,d\}$, are $C^\infty$-functions,
	is called a \emph{Riemannian metric}.
	Each tangent space of a Riemannian manifold $(\mathsf{M}, \mathfrak{g})$ is then an inner product space where the inner product is given by $\mathfrak{g}_x$.
	The second condition concerning the entries of the Gram matrix is to be understood in the way that the inner product structures on the individual tangent spaces have to be compatible in a smooth way.

\subsection{Geodesics}

We call a map from an interval in $\mathbb{R}$ to the manifold $\mathsf{M}$ a \emph{curve}.
The Riemannian structure on $\mathsf{M}$ induces a special class of curves on $\mathsf{M}$, namely the \emph{geodesics}.
Intuitively, a geodesic can be thought of as a curve of constant velocity.
For a formal definition of geodesics consult \citep[Section VII.5]{Boothby}.
We say the manifold $\mathsf{M}$ is\emph{ geodesically complete} if all geodesics can be extended such that their domain is $\mathbb{R}$.
In this case, by virtue of \citep[Corollary 4.28]{Lee}, for all $x \in \mathsf{M}$ and all $v \in T_x\mathsf{M}$ there exists a unique geodesic
\begin{equation}
	\gamma_{(x,v)}: \mathbb{R} \to \mathsf{M}
\end{equation}
satisfying $\gamma_{(x,v)}(0) = x$ and $\frac{\d \gamma_{(x,v)}}{\d t}\vert_0 = v$ for the velocity vector field at zero.
Intuitively, $\gamma_{(x,v)}$ can be thought of as the geodesic  through $x$ in direction $v$.
It can also be written in terms of the \emph{exponential map} $\mathrm{Exp}$ (see \citep[Section VII.6]{Boothby}) as
\[  
\gamma_{(x,v)}(\theta) = \mathrm{Exp}_x(\theta v), \qquad x\in \mathsf{M}, v \in T_x\mathsf{M}, \theta \in \mathbb{R}.
\]
By the Hopf-Rinow Theorem (see e.g \citep[Theorem 6.19]{Lee}) geodesic completeness is equivalent to metric completeness for connected manifolds.

\subsection{The Riemannian measure}\label{Sec: Riemannian measure}
We now introduce a measure on $\mathsf{M}$ which can be viewed as an extension of the Lebesgue measure to Riemannian manifolds, see also \citep[Section II.5]{Sakai}.
The topology on $\mathsf{M}$ induces a Borel-$\sigma$-algebra, which we denote as $\mathcal{B}(\mathsf{M})$.
We need the following family of functions:
Given a coordinate neighbourhood $(\mathsf{U}, \varphi)$, we use the Gram matrix of the coordinate frames $E_{1,x}^{\varphi}, \ldots, E_{d,x}^\varphi$ evaluated at $x \in \mathsf{M}$ to introduce the function
\begin{align*}
	\sqrt{\det(\mathfrak{g}, \varphi)}: \mathsf{U} &\to [0, \infty)\\
	x &\mapsto \sqrt{\det\left[\left(\mathfrak{g}_x(E_{j, x}^{\varphi}, E_{k,x}^{\varphi})\right)_{\{1\leq j,k\leq d\}}\right]}.
\end{align*}
Intuitively, we may think about $\sqrt{\det(\mathfrak{g}, \varphi)}$ as measuring the deformation (of volume) when passing from $\mathbb{R}^{d}$ to $\mathsf{M}$ with $\varphi^{-1}$.
	These functions form the basis for the definition of the Riemannian measure restricted to a coordinate neighbourhood.
	However, we still require a way to extend beyond a single coordinate neighbourhood:
As $\mathsf{M}$ is by definition second countable, there exists a countable collection $\{(\mathsf{U}_i, \varphi_i)\}_{i \in \mathbb{N}}$ of coordinate neighbourhoods such that $\bigcup_{i \in \mathbb{N}}\mathsf{U}_i = \mathsf{M}$.\footnote{A space where every cover of open sets contains a countable subcover is called Lindelöf space. Second countable spaces are Lindelöf.}
Subordinate to $\{\mathsf{U}_i\}_{i \in \mathbb{N}}$, the manifold $\mathsf{M}$ admits a \emph{partition of unity} $\{\rho_i\}_{i \in \mathbb{N}}$, see \citep[Section I.2.1]{Sakai}, i.e.
\begin{itemize}
	\item $\rho_i: \mathsf{M} \to [0,\infty)$ is a $C^\infty$-function with support $\mathrm{supp}\,\rho_i \subseteq \mathsf{U}_i$ for all $i \in \mathbb{N}$,
	\item $\{\mathrm{supp}\,\rho_i\}_{i \in \mathbb{N}}$ is locally finite\footnote{For all $x \in \mathsf{M}$ there exists a neighbourhood $\mathsf{U}$ of $x$ such that $\mathsf{U} \cap \mathrm{supp}\,\rho_i\neq \emptyset$ holds only for finitely many $i \in \mathbb{N}$. },
	\item $\sum_{i=1}^{\infty}\rho_i(x) = 1$ for all $x \in \mathsf{M}$. 
\end{itemize}
With the help of such a partition of unity we can now define the \emph{Riemannian measure} $\nu_{\mathfrak{g}}$ induced by the Riemannian metric $\mathfrak{g}$  by
\begin{equation}\label{Eq: Definition of Riemannian measure - appendix}
	\nu_{\mathfrak{g}}(\mathsf{A}) := \sum_{i=1}^{\infty} \int_{\varphi_i(\mathsf{U}_i)} \left(\rho_i \cdot \mathbbm{1}_{\mathsf{A}} \cdot \sqrt{\det(\mathfrak{g}, \varphi_i)}\right) \circ \varphi_i^{-1} (z)\ \Leb_d(\d z), \qquad \mathsf{A} \in \mathcal{B}(\mathsf{M}),
\end{equation}
which is a measure on the measure space $(M, \mathcal{B}(\mathsf{M}))$.
We provide some brief arguments that the Riemannian measure is well-defined and indeed a measure.
Observe that $\mathbbm{1}_{\mathsf{A}} \circ \varphi_i^{-1}$ is Borel measurable, and $\rho_i \circ \varphi_i^{-1}$ and $\sqrt{\det(\mathfrak{g},\varphi_i)}\circ \varphi_i^{-1}$ are smooth for all $A \in \mathcal{B}(\mathsf{M})$ and all $i \in \mathbb{N}$.
Hence the appearing Lebesgue integrals are defined. 
For independence of the construction from the choice of open covering and partition of unity see \citep[page 62]{Sakai}. 
The $\sigma$-additivity of $\nu_{\mathfrak{g}}$ is inherited from the $\sigma$-additivity of the Lebesgue integral.
Applying standard extension arguments using the additivity of the Lebesgue integral and monotone convergence theorem, 
we can extend \eqref{Eq: Definition of Riemannian measure - appendix} to measurable functions $f: \mathsf{M} \to [0, \infty)$ yielding
\[  
\int_{\mathsf{M}} f(x)\ \nu_{\mathfrak{g}}(\d x) = \sum_{i=1}^{\infty} \int_{\varphi_i(\mathsf{U}_i)} \left(\rho_i \cdot f \cdot \sqrt{\det(\mathfrak{g}, \varphi_i)}\right) \circ \varphi_i^{-1} (z)\ \Leb_d(\d z).
\]

\subsection{The uniform distribution on the unit tangent spheres}

Throughout this section we fix $x \in M$.
We denote by 
\[
\mathbb{S}_x^{d-1} := \{ v \in T_x\mathsf{M} \mid \mathfrak{g}_x(v,v) = 1\}
\]
the \emph{unit tangent sphere} to $\mathsf{M}$ at $x$.
It is the unit sphere in the inner product space $(T_x\mathsf{M}, \mathfrak{g}_x)$ and
is immersed into $T_x\mathsf{M}$ via the identity map $\mathrm{Id}: \mathbb{S}_x^{d-1} \to T_x\mathsf{M}$.
For all $v \in \mathbb{S}_x^{d-1}$, this induces a map $\mathrm{Id}_*: T_v\mathbb{S}_x^{d-1} \to T_vT_x\mathsf{M}$ on the tangent spaces, see \citep[Theorem IV.1.2]{Boothby}.
As $(T_x\mathsf{M}, \mathfrak{g}_x)$ is a $d$-dimensional inner product space, the tangent space $T_vT_x\mathsf{M}$ to $T_x\mathsf{M}$ at $v \in T_vM$ is again $T_x\mathsf{M}$, see \citep[Section II.3]{Boothby} for more details on this construction.
Because $(\mathfrak{g}_x)_v$ then becomes $\mathfrak{g}_x$, we can exploit this to define the Riemannian metric $\widehat{\mathfrak{g}}_x$ on $\mathbb{S}_x^{d-1}$ given by
\[
\left(\widehat{\mathfrak{g}}_x\right)_v(\xi_1, \xi_2) := \mathfrak{g}_x\big(\mathrm{Id}_*(\xi_1), \mathrm{Id}_*(\xi_2)\big), \qquad v \in\mathbb{S}_x^{d-1}, \xi_1, \xi_2 \in T_v\mathbb{S}_x^{d-1},
\]
see \citep[Corollary V.2.5]{Boothby}.
As described in Supplementary material section \ref{Sec: Riemannian measure}, $\widehat{\mathfrak{g}}_x$ induces the (Riemannian) measure $\nu_{\widehat{\mathfrak{g}}_x}$ on $\mathbb{S}_x^{d-1}$.
Since $\mathbb{S}_x^{d-1}$ is compact, $\nu_{\widehat{\mathfrak{g}}_x}\left(\mathbb{S}_x^{d-1}\right)$ is finite, and we may define the probability measure
\[
\sigma_{d-1}^{(x)} := \frac{1}{\nu_{\widehat{\mathfrak{g}}_x}(\mathbb{S}^{d-1}_x)} \nu_{\widehat{\mathfrak{g}}_x}.
\]
Note that, up to a push forward under an isometric isomorphism, $\sigma_{d-1}^{(x)}$ is the uniform distribution on the Euclidean unit sphere $\mathbb{S}^{d-1}$.

\section{Validity}\label{Sec: Validity}

The aim of this section is to prove the reversibility and ergodicity of the geodesic slice sampler.
To this end, we introduce a stepping-out distribution to describe Algorithm \ref{A: Stepping-out} and a shrinkage kernel to describe Algorithm \ref{A: Shrinkage}.
Their properties collected in Lemma \ref{L: Key identity for reversiblity - measurable functions}, Lemma \ref{L: stepping out under rotation}, Lemma \ref{L: Reversibility of the pushforward kernel} and Lemma \ref{L: Shrinkage under rotation} intuitively ensure ``reversibility on every parametrised intersection of geodesic and levelset''.
Together with the invariance of the Liouville measure under a certain map on the tangent bundle, both introduced in Supplementary material \ref{Sec: Proof reversibility}, they are the essential ingredients for the proof of the reversibility of the geodesic slice sampler.
For verifying ergodicity we establish irreducibility and aperiodicity of the kernel restricted to the set where the target density $p$ is strictly positive. 
A key tool here is a suitable transformation from polar to Cartesian coordinates.
\subsection{Stepping-out procedure}\label{Sec: Stepping-out}

Throughout this section fix $w \in (0, \infty)$ and $m \in \mathbb{N}$.
We consider a generalization of the stepping-out procedure described in Section \ref{Sec: Slice sampling on Rd} that targets an arbitrary set $\mathsf{S} \in \mathcal{B}(\mathbb{R})$, see  Algorithm \ref{A: Stepping-out general set}.

\begin{algorithm}
	\caption{Stepping-out procedure targeting $\mathsf{S} \in \mathcal{B}(\mathbb{R})$.}
	\label{A: Stepping-out general set}
	\KwIn{point $\theta \in \mathbb{R}$, hyperparameters $w \in (0, \infty)$ and $m \in \mathbb{N}$} 
	\KwOut{two points $\ell, r \in \mathbb{R} $ such that $\ell < \theta < r$}
	Draw $\Upsilon \sim \mathrm{Unif}((0,w))$, call the result $u$.\;
	Set $\ell : =\theta -u$ and $r := \ell + w$.\;
	Draw $J \sim \mathrm{Unif}(\{1, \ldots, m\})$, call the result $\upiota$.\;
	Set $i = 2$ and $j = 2$.\;
	\While {$i \leq \upiota$ and $\ell \in \mathsf{S}$}{Set $\ell = \ell - w$.\;
	Update $ i  = i+1$.}
	\While{$j \leq m + 1 -\upiota$ and $r \in \mathsf{S}$}{	Set $r = r + w$.\;
	Update $j = j+1$.}
	\Return $(\ell, r)$
\end{algorithm}

To formally describe the resulting distribution on $\mathbb{R}^2$ we use stopped random variables.
Let $ \Upsilon \sim \mathrm{Unif}([0,w])$.	
For every $ \theta \in \mathbb{R} $ let
\begin{equation}\label{Eq: Definition L_i, R_j}
	\begin{split}
		L_i^{(\theta)}&:= \theta -\Upsilon - (i-1)w, \qquad i \in \mathbb{N},\\
		R_j^{(\theta)}& :=\theta  + (w- \Upsilon ) + (j-1) w =  \theta -\Upsilon + jw, \qquad j \in \mathbb{N},
	\end{split}
\end{equation}
be two sequences of random variables.
Observe that setting  
\begin{equation}\label{Eq: Convention for L_0,R_0}
	\begin{split}
		L_0^{(\theta)} &:= R_1^{(\theta)} =  \theta - \Upsilon - (-1)w\\
		R_0^{(\theta)} &:= L_1^{(\theta)} =  \theta - \Upsilon + 0 \cdot w
	\end{split}
\end{equation}
is consistent with this definition.
By construction both sequences are strictly monotone, i.e.
\begin{equation}\label{Eq: Monotonicity L, R}
	\begin{split}
		L_{i+1}^{(\theta)} = \theta - \Upsilon - iw &< \theta -\Upsilon - (i-1)w = L_i^{(\theta)}, \qquad i \in \mathbb{N}_0, \theta \in \mathbb{R},\\
		R_j^{(\theta)} = \theta - \Upsilon + jw &< \theta - \Upsilon + (j+1) w = R_{j+1}^{(\theta)}, \qquad j \in \mathbb{N}_0, \theta \in \mathbb{R}.
	\end{split}
\end{equation}
We now define appropriate stopping times depending on the target set $\mathsf{S} \in \mathcal{B}(\mathbb{R})$.
To this end, let $J$ be a random variable that is independent of all previous objects  satisfying
$J \sim \mathrm{Unif}(\{1,\ldots, m\})$,
and set
\begin{equation}\label{Eq: Definition stopping times}
	\begin{split}
		\tau_{\mathsf{S}}^{(\theta)} &:= \inf\{ i \in \mathbb{N} \mid L_i^{(\theta)} \notin \mathsf{S} \} \land J , \\
		\mathfrak{T}_{\mathsf{S}}^{(\theta)} &:= \inf\{ j \in \mathbb{N} \mid R_j^{(\theta)} \notin \mathsf{S} \} \land (m + 1 -J) , \qquad \theta \in \mathbb{R}, \mathsf{S} \in \mathcal{B}(\mathbb{R}).
	\end{split}
\end{equation}
Note that $\tau_{\mathsf{S}}^{(\theta)}$ and $\mathfrak{T}_{\mathsf{S}}^{(\theta)}$ are finite stopping times with respect 
to the filtration generated by the sequence $( J, L_1^{(\theta)}, \ldots, L_n^{(\theta)}, R_1^{(\theta)}, \ldots, R_n^{(\theta)} )_{n \in \mathbb{N}} $.
More precisely, we have the bounds
\begin{equation}\label{Eq: Bound tau, T}
	\begin{split}
		1 \leq \tau_{\mathsf{S}}^{(\theta)} \leq J \leq m,\\
		1 \leq \mathfrak{T}_{\mathsf{S}}^{(\theta)} \leq m + 1 - J \leq m + 1 - 1 = m,\\
		2 \leq \tau_{\mathsf{S}}^{(\theta)} + \mathfrak{T}_{\mathsf{S}}^{(\theta)} \leq J + m + 1 - J \leq m + 1.
	\end{split}
\end{equation}
As $\tau_{\mathsf{S}}^{(\theta)}$ and $\mathfrak{T}_{\mathsf{S}}^{(\theta)}$ are finite stopping times, 
\begin{equation}\label{Eq: Defintion stepping-out random variables}
	\left( \boldsymbol{L}_{\mathsf{S}}^{(\theta)},\boldsymbol{R}_{\mathsf{S}}^{(\theta)}  \right) := \left( L_{\tau_{\mathsf{S}}^{(\theta)}}^{(\theta)},R_{\mathfrak{T}_{\mathsf{S}}^{(\theta)}}^{(\theta)}  \right)
\end{equation}
are random variables for all $\theta \in \mathbb{R}$ and $\mathsf{S} \in \mathcal{B}(\mathbb{R})$, see \citep[Lemma 9.23]{Klenke}.
We define the \emph{stepping-out distributions}
\[  
\xi_{\mathsf{S}}^{(\theta)} := \mathbb{P}^{\big( \boldsymbol{L}_{\mathsf{S}}^{(\theta)},\boldsymbol{R}_{\mathsf{S}}^{(\theta)}  \big)}, \qquad \theta\in \mathbb{R}, \mathsf{S} \in \mathcal{B}(\mathbb{R}),
\]
on $(\mathbb{R}^2,\mathcal{B}(\mathbb{R}^2))$.
Observe that the output of Algorithm \ref{A: Stepping-out general set} has distribution $\xi_{\mathsf{S}}^{(\theta)}$.
Since Algorithm \ref{A: Stepping-out} is a special case of Algorithm \ref{A: Stepping-out general set} with $\mathsf{S} = L(x,v,t)$ and $\theta = 0$, this definition is coherent with \eqref{Eq: Definition stepping-out distribution within GSS}.
Moreover, note that $(\theta, C) \mapsto \xi_{\mathsf{S}}^{(\theta)}(C)$ is a transition kernel. More details on this can be found in Remark \ref{R: Stepping-out distribution is a kernel}.

We collect two properties of the stepping-out distribution that are useful to show reversibility of the geodesic slice sampler.
Their proof is postponed to Supplementary material \ref{Sec: Properties of stepping-out}.

\begin{lemma}\label{L: Key identity for reversiblity - measurable functions}
	Let $\mathsf{S} \in \mathcal{B}(\mathbb{R}),$ and let $f: \mathbb{R}^2 \to [0,\infty) $ be a measurable function.
	We have for all $\theta,\alpha \in \mathbb{R}$
	\begin{align*}
		\int_{\mathbb{R}^2} f(\ell, r) \mathbbm{1}_{(\ell, r)}(\alpha) \ \xi_{\mathsf{S}}^{(\theta)}\big(\d (\ell, r) \big)
		= \int_{\mathbb{R}^2} f(\ell, r) \mathbbm{1}_{(\ell, r)}(\theta) \ \xi_{\mathsf{S}}^{(\alpha)}\big(\d (\ell, r) \big).
	\end{align*}
\end{lemma}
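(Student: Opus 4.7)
The strategy is to derive an explicit density for $\xi_\mathsf{S}^{(\theta)}$ with respect to Lebesgue measure on its support and to show that the dependence on $\theta$ enters only through a single indicator $\mathbbm{1}\{\theta \in (\ell,r)\}$. Once this factorization is in place, both sides of the claimed identity become the same expression, symmetric in $(\theta,\alpha)$.

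By \eqref{Eq: Defintion stepping-out random variables} and \eqref{Eq: Bound tau, T}, the pair $(\boldsymbol{L}_\mathsf{S}^{(\theta)}, \boldsymbol{R}_\mathsf{S}^{(\theta)})$ always satisfies $\boldsymbol{R}_\mathsf{S}^{(\theta)} - \boldsymbol{L}_\mathsf{S}^{(\theta)} = (\tau_\mathsf{S}^{(\theta)} + \mathfrak{T}_\mathsf{S}^{(\theta)} - 1) w \in \{w, 2w, \ldots, mw\}$, so $\xi_\mathsf{S}^{(\theta)}$ is concentrated on $\bigsqcup_{N=1}^{m} E_N$, where $E_N := \{(\ell, \ell + Nw) : \ell \in \mathbb{R}\}$. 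I compute the density of $\xi_\mathsf{S}^{(\theta)}|_{E_N}$ with respect to $d\ell$ by fixing a target $(\ell, r) = (\ell, \ell + Nw)$ and determining, for each $\upiota \in \{1,\ldots,m\}$, which realizations of $\Upsilon$ and $J = \upiota$ produce this output. The constraint $\boldsymbol{L}_\mathsf{S}^{(\theta)} = \ell$ forces $\tau_\mathsf{S}^{(\theta)}$ to equal the unique $\tau_0 \in \{1,\ldots,N\}$ with $\theta - \ell \in ((\tau_0-1)w, \tau_0 w]$ and $\Upsilon = u_0 := \theta - \ell - (\tau_0 - 1)w \in (0,w]$, contributing the density factor $1/w$. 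The definitions in \eqref{Eq: Definition stopping times} then translate into: all interior grid points $\{\ell + kw : 1 \leq k \leq N-1\}$ must lie in $\mathsf{S}$, while at each boundary point either the point lies outside $\mathsf{S}$ or the counter is saturated ($\upiota = \tau_0$ at the left endpoint, $\upiota = m + \tau_0 - N$ at the right).

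The decisive observation is that a short four-case analysis (conditioning on whether $\ell$ and $r$ lie in $\mathsf{S}$) yields that the number $M(\ell, r, \mathsf{S})$ of admissible $\upiota \in \{1,\ldots,m\}$ satisfying both boundary-stopping constraints is independent of $\tau_0$, hence of $\theta$; explicitly, $M = m - N + 1$ when $\ell, r \notin \mathsf{S}$, $M = 1$ when exactly one of $\ell, r$ lies in $\mathsf{S}$, and $M = \mathbbm{1}\{N = m\}$ when both do. Summing the factor $1/(wm)$ over admissible $\upiota$ and multiplying by the interior-grid indicator $A(\ell, r, \mathsf{S}) := \mathbbm{1}\{\ell + kw \in \mathsf{S} \text{ for all } k = 1,\ldots,N-1\}$ gives
\[
	\xi_\mathsf{S}^{(\theta)}\big|_{E_N}(d\ell) \;=\; \frac{M(\ell, \ell + Nw, \mathsf{S})\,A(\ell, \ell + Nw, \mathsf{S})}{wm}\;\mathbbm{1}\{\theta \in (\ell, \ell + Nw]\}\,d\ell,
\]
in which the $\theta$-dependence resides entirely in a single indicator. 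Since $\{\theta \in (\ell, \ell + Nw]\}$ and $\{\theta \in (\ell, \ell + Nw)\}$ differ by a $d\ell$-null set, inserting this density into the left-hand side of the claim rewrites it as $\sum_{N=1}^{m}\int_{\mathbb{R}} f(\ell, \ell + Nw)\,\tfrac{MA}{wm}\,\mathbbm{1}\{\alpha \in (\ell, \ell + Nw)\}\,\mathbbm{1}\{\theta \in (\ell, \ell + Nw)\}\,d\ell$, which is manifestly symmetric in $(\theta, \alpha)$; swapping the two variables produces the right-hand side. The main obstacle I anticipate is the careful bookkeeping in the case analysis for $M$ — particularly verifying that the constraints $\upiota = \tau_0$ and $\upiota = m + \tau_0 - N$ (imposed respectively when $\ell \in \mathsf{S}$ or $r \in \mathsf{S}$) leave a count that is genuinely independent of $\tau_0$ in all four cases.
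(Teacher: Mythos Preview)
Your approach is correct and genuinely different from the paper's. You compute the density of $\xi_\mathsf{S}^{(\theta)}$ on each stratum $E_N$ and show it factorizes as a $\theta$-independent weight times $\mathbbm{1}_{(\ell,r)}(\theta)$; the four-case count you anticipate does work out (it is exactly the argument Neal sketches in his 2003 paper), and the independence of $M$ from $\tau_0$ is what makes the symmetry manifest. The paper instead avoids writing down the density: it proves a shift identity between the unstopped sequences started at $\theta$ and at $\alpha$ (Lemma~\ref{L: Tools for switching lemma}.\ref{L: Relation Y^x to Y^y}), lifts it to the stopped sequences (Lemma~\ref{L: Tools for switching lemma}.\ref{L: Relation stopped chains}), and then combines this with the partitions of Lemma~\ref{L: Partitions} to obtain the identity on rectangles $\mathsf{A}\times\mathsf{B}$ before extending by the $\pi$--$\lambda$ theorem.

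Your route is shorter and more transparent here, and it yields the explicit density as a by-product. The paper's route is more structural: the shift lemmas and partitions are reusable (they feed directly into the proof of Lemma~\ref{L: stepping out under rotation} as well), and they never require enumerating the boundary cases for $M$. One small point to watch when you write it out: the ``unique $\tau_0$'' you invoke is unique only for Lebesgue-a.e.\ $\ell$ (the grid boundary $\theta-\ell\in w\mathbb{Z}$ is a null set), and you should state explicitly that the change of variables from $\Upsilon$ to $\ell$ at fixed $\tau_0$ has Jacobian $1$, so the density factor $1/w$ transfers cleanly.
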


The previous lemma essentially states that, conditioned on the event that the initial point lies inside the resulting interval, the stepping-out distribution does not depend on the initial point. 
To formulate the second property we need the collection
\begin{equation}\label{Eq: Definition Lambda_alpha}
	\Lambda_\alpha : \mathbb{R}  \to \mathbb{R}, \qquad
	\theta  \mapsto \alpha - \theta
\end{equation}
of linear functions indexed by $\alpha \in \mathbb{R}$.
Intuitively, they express a U-turn at $\alpha \in \mathbb{R}$.
Moreover, for $\alpha \in \mathbb{R}$ we define
\begin{equation}\label{Eq: Definition uplambda_alpha}
	\uplambda_\alpha : \mathbb{R}^2  \to \mathbb{R}^2, \qquad
	(\ell, r)  \mapsto (\alpha - r, \alpha -\ell). 
\end{equation}
The next lemma describes the behaviour of the stepping-out distribution under U-turns.

\begin{lemma}\label{L: stepping out under rotation}
	Let $\mathsf{S} \in \mathcal{B}(\mathbb{R})$ and $\theta, \alpha \in \mathbb{R}$.
	We have
	\[ 
	\xi_{\Lambda_\alpha(\mathsf{S})}^{(\theta)} = (\uplambda_\alpha)_\sharp\xi_{\mathsf{S}}^{(\Lambda_\alpha(\theta))}.
	\]
\end{lemma}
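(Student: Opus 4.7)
The plan is to construct a coupling between the random ingredients defining $\xi_{\Lambda_\alpha(\mathsf{S})}^{(\theta)}$ and those defining $\xi_\mathsf{S}^{(\Lambda_\alpha(\theta))}$ by reflecting both auxiliary sources of randomness. On the same probability space, take $\Upsilon \sim \mathrm{Unif}([0,w])$ and $J \sim \mathrm{Unif}(\{1,\ldots,m\})$ independent, and set $\Upsilon' := w - \Upsilon$ and $J' := m+1-J$. The distributions $\mathrm{Unif}([0,w])$ and $\mathrm{Unif}(\{1,\ldots,m\})$ are symmetric under $u \mapsto w - u$ and $j \mapsto m+1-j$ respectively, so $(\Upsilon', J')$ has the same joint distribution as $(\Upsilon, J)$ and can therefore be used to drive the construction \eqref{Eq: Definition L_i, R_j}--\eqref{Eq: Defintion stepping-out random variables} of $(\boldsymbol{L}_\mathsf{S}^{(\Lambda_\alpha(\theta))}, \boldsymbol{R}_\mathsf{S}^{(\Lambda_\alpha(\theta))})$.

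Next, substituting $\Upsilon'$ and $\Lambda_\alpha(\theta) = \alpha - \theta$ into \eqref{Eq: Definition L_i, R_j} gives the elementary reflection identities
\begin{align*}
	L_i^{(\Lambda_\alpha(\theta))} &= \alpha - R_i^{(\theta)}, & R_j^{(\Lambda_\alpha(\theta))} &= \alpha - L_j^{(\theta)}, \qquad i,j \in \mathbb{N},
\end{align*}
where the left-hand sides are built from $(\Upsilon', J')$ and the right-hand ones from $(\Upsilon, J)$. Combined with the obvious equivalence $r \in \mathsf{S} \iff \alpha - r \in \Lambda_\alpha(\mathsf{S})$ and the identity $m+1-J' = J$, a direct inspection of \eqref{Eq: Definition stopping times} then yields the almost sure relations
\begin{align*}
	\tau_\mathsf{S}^{(\Lambda_\alpha(\theta))} = \mathfrak{T}_{\Lambda_\alpha(\mathsf{S})}^{(\theta)}, \qquad \mathfrak{T}_\mathsf{S}^{(\Lambda_\alpha(\theta))} = \tau_{\Lambda_\alpha(\mathsf{S})}^{(\theta)}.
\end{align*}

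Finally, plugging these stopping time identities back into the definition \eqref{Eq: Defintion stepping-out random variables} of the stopped random variables produces
\begin{align*}
	\boldsymbol{L}_\mathsf{S}^{(\Lambda_\alpha(\theta))} = L_{\mathfrak{T}_{\Lambda_\alpha(\mathsf{S})}^{(\theta)}}^{(\Lambda_\alpha(\theta))} = \alpha - R_{\mathfrak{T}_{\Lambda_\alpha(\mathsf{S})}^{(\theta)}}^{(\theta)} = \alpha - \boldsymbol{R}_{\Lambda_\alpha(\mathsf{S})}^{(\theta)}
\end{align*}
and, symmetrically, $\boldsymbol{R}_\mathsf{S}^{(\Lambda_\alpha(\theta))} = \alpha - \boldsymbol{L}_{\Lambda_\alpha(\mathsf{S})}^{(\theta)}$. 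Recalling \eqref{Eq: Definition uplambda_alpha}, this can be rewritten as $(\boldsymbol{L}_\mathsf{S}^{(\Lambda_\alpha(\theta))}, \boldsymbol{R}_\mathsf{S}^{(\Lambda_\alpha(\theta))}) = \uplambda_\alpha\bigl(\boldsymbol{L}_{\Lambda_\alpha(\mathsf{S})}^{(\theta)}, \boldsymbol{R}_{\Lambda_\alpha(\mathsf{S})}^{(\theta)}\bigr)$ almost surely. Passing to distributions and using that $\uplambda_\alpha$ is an involution gives exactly the claimed push-forward identity. The only subtle point in the proof is the bookkeeping involved when swapping $J$ with $J' = m+1-J$, which is what forces the stopping time on the left to trade places with the one on the right; the rest is a line-by-line verification.
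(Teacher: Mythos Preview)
Your proof is correct and takes a genuinely different route from the paper. The paper works entirely at the level of distributions: it first establishes, via the explicit integral formula of Lemma~\ref{L: joint distribution} and the change of variables $\widetilde u = -u + w$, a reflection identity for the joint law of the unstopped sequences (Lemma~\ref{L: unstopped stepping put chain under rotation}.\ref{L: unstopped chains under rotation - no stopping times yet}), then upgrades this to the stopped sequences by decomposing over the values of the stopping times and doing careful bookkeeping with auxiliary sets $\mathsf{A}_k^\upiota, \widetilde{\mathsf{A}}_k^\upiota$, etc. Your coupling argument replaces all of this with a single pathwise identity: the substitution $\Upsilon' = w - \Upsilon$ is exactly the probabilistic counterpart of the paper's change of variables, and $J' = m+1-J$ encodes what the paper achieves through the index shift in its sum over $\upiota$. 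What your approach buys is brevity and transparency---the pathwise equality $(\boldsymbol{L}_\mathsf{S}^{(\Lambda_\alpha(\theta))}, \boldsymbol{R}_\mathsf{S}^{(\Lambda_\alpha(\theta))}) = \uplambda_\alpha(\boldsymbol{L}_{\Lambda_\alpha(\mathsf{S})}^{(\theta)}, \boldsymbol{R}_{\Lambda_\alpha(\mathsf{S})}^{(\theta)})$ makes the result essentially obvious once the coupling is set up, whereas the paper's distributional argument obscures this behind several layers of summation. The paper's approach, on the other hand, reuses machinery (Lemma~\ref{L: joint distribution}, the decomposition by stopping time values) that it has already developed for Lemma~\ref{L: Key identity for reversiblity - measurable functions}, so its marginal cost is lower in context.
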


\subsection{Shrinkage procedure}\label{Sec: Shrinkage procedure}

In this section, for every half open interval $[\ell,r)\subseteq \mathbb{R}$, we introduce an algorithm that generalises Algorithm \ref{A: Shrinkage}  approximating
the uniform distribution on $\mathsf{S} \cap (\ell,r)$ for an arbitrary open set $\mathsf{S} \in \mathcal{B}\big(\mathbb{R}\big)$.
To express this scheme as a kernel, we employ the kernel of the shrinkage procedure, essentially operating on $\mathbb{S}^1$ parametrised by $[0,2\pi)$, introduced in \citep{ReversibilityEllipticalSliceSampler} and push it forward to an arbitrary interval $[\ell,r)$ as described in \citep[Appendix A]{rudolf2022robust}.
For the convenience of the reader we quickly sketch the kernel $Q_{\textgoth{S}}$ from \citep{ReversibilityEllipticalSliceSampler}.
To this end set
\[
	\mathbb{J}( \alpha, \theta) := \begin{dcases}
									[\alpha, \theta), & \alpha < \theta,\\
									[0, \theta) \cup [\alpha, 2\pi), & \alpha \geq \theta,
							\end{dcases}
	\qquad
	\widehat{\mathbb{J}}(\alpha, \theta):= \begin{dcases}
									[\alpha, \theta), & \alpha \leq \theta,\\
									[0, \theta) \cup [\alpha, 2\pi), & \alpha > \theta.
							\end{dcases}
\]
We denote by $\delta_z$ the Dirac measure at $z \in \mathbb{R}$.
Let $\Uptheta$ be a random variable on $[0, 2\pi)$, and let $(\Upgamma_n, \Upgamma_n^{\min}, \Upgamma_n^{\max})_{n \in \mathbb{N}}$ be a sequence of random variables with conditional distributions
\begin{align*}
	&\mathbb{P}\big((\Upgamma_{n+1}, \Upgamma_{n+1}^{\min}, \Upgamma_{n+1}^{\max}) \in \mathsf{C}\mid  (\Upgamma_n, \Upgamma_n^{\min}, \Upgamma_n^{\max}) = (z, z^{\min}, z^{\max}), \Uptheta = \theta\big)\\
	&\qquad= \mathbbm{1}_{\widehat{\mathbb{J}}(z^{\min}, z)}(\theta) \cdot\left(\mathrm{Unif}\big(\mathbb{J}(z^{\min}, z)\big) \otimes \delta_{z^{\min}} \otimes \delta_{z}\right)(\mathsf{C}) \\  
	&\qquad \qquad+   \mathbbm{1}_{\mathbb{J}(z, z^{\max})}(\theta) \cdot\left(\mathrm{Unif}\big(\mathbb{J}(z, z^{\max})\big) \otimes \delta_z \otimes \delta_{z^{\max}}\right)(\mathsf{C})
\end{align*}
for $\theta, z, z^{\min}, z^{\max} \in [0,2\pi)$ with $\theta, z \in \mathbb{J}(z^{\min}, z^{\max})$, $\mathsf{C} \in \mathcal{B}([0, 2\pi)^3), \theta \neq z$ and $n \in \mathbb{N}$.
Moreover, for every set $\textgoth{S} \in \mathcal{B}([0,2\pi))$ which is \emph{open in $\mathbb{S}^{1}$}, i.e. satisfies that for all $\theta \in \textgoth{S}$ there exists $\varepsilon > 0$ such that $\mathbb{J}(\theta - \varepsilon \mod 2 \pi, \theta + \varepsilon \mod 2\pi) \subseteq \textgoth{S}$, define the stopping time
\[
	\mathcal{T}_{\textgoth{S}} := \inf\{n \in \mathbb{N} \mid \Upgamma_n \in \textgoth{S}\}.
\]
Then the kernel of the shrinkage procedure targeting $\mathrm{Unif}(\textgoth{S})$ is given by
\begin{equation*}
	Q_{\textgoth{S}}: \textgoth{S} \times \mathcal{B}(\textgoth{S} )\to [0,1], \quad (\theta, \mathsf{A}) \mapsto \mathbb{P}\left(\Upgamma_{\mathcal{T}_{\textgoth{S}}} \in A, \mathcal{T}_{\textgoth{S}} < \infty \mid \Uptheta = \theta\right).
\end{equation*}

To ``bend the interval $[\ell,r)$ onto $\mathbb{S}^1$'', where $\ell, r \in \mathbb{R}$ such that $\ell < r$, we use the family of maps
\begin{align*}
	h_{\ell,r}: [\ell,r) \to [0,2\pi), \qquad
	\theta \mapsto \frac{2\pi}{r-\ell}\theta \ \mod 2\pi.
\end{align*}
Note that due to the restriction of the domain, these functions are bijective and therefore have inverses $ h_{\ell,r}^{-1} $. 
Let $\ell, r \in \mathbb{R}$ such that $\ell < r$, and let $\mathsf{S} \in \mathcal{B}(\mathbb{R})$ be an open set such that $(\ell,r) \cap \mathsf{S} \neq \emptyset$.
For such numbers $\ell, r$ and sets $\mathsf{S}$ we define the shrinkage kernel as the push forward of the kernel $Q_{\textgoth{S}}$ for $\textgoth{S} = h_{\ell,r}(\mathsf{S} \cap (\ell, r))$\footnote{Observe that $h_{\ell,r}(\mathsf{S} \cap (\ell, r))$ is open in $\mathbb{S}^{1}$, since $\mathsf{S} \cap (\ell, r)$ is open as a set in $\mathbb{R}$, and non-empty by assumption.} under $h_{\ell,r}^{-1}$, i.e.
\begin{align*}
	Q_{\mathsf{S}}^{\ell,r}(\theta, \mathsf{A}) := Q_{h_{\ell,r}(\mathsf{S} \cap (\ell,r))}\big(h_{\ell,r}(\theta),h_{\ell,r}(\mathsf{A})\big), \qquad \theta \in \mathsf{S} \cap (\ell,r), \mathsf{A} \in \mathcal{B}\big(\mathsf{S}\cap(\ell,r)\big).
\end{align*}
Observe that this agrees with the definition made in \eqref{Eq: Definition shrinkage kernel within GSS}, where we have $\theta = 0$ and $\mathsf{S} = L(x,v,t)$ for $x \in \mathsf{M}$, $v \in \mathbb{S}_x^{d-1}$ and $t \in (0, p(x))$.

We briefly discuss the measureability of the shrinkage kernel in the arguments $(\theta, \ell, r)$.
\begin{remark}
	Let $L$ and $R$ be two random variables independent of $\Theta$ and $(\Upgamma_n, \Upgamma_n^{\min}, \Upgamma_n^{\max})_{n \in \mathbb{N}}$ satisfying $L < R$ almost surely, and let $ \mathsf{S} \in \mathcal{B}(\mathbb{R}) $, $\mathsf{A} \in \mathcal{B}(\mathsf{S})$.
	By a disintegration argument, we have for all $\theta \in [0, 2\pi)$, $\ell,r \in \mathbb{R}$ that
	\begin{align*}
		f_k(\theta, \ell, r) &:= \mathbb{E}\left( \mathbbm{1}_{h_{\ell,r}(\mathsf{A}\cap \mathsf{S} \cap (\ell,r))}(\Gamma_k)\prod_{i =1}^{k-1}\mathbbm{1}_{[0,2\pi)\setminus h_{\ell,r}(\mathsf{S} \cap (\ell,r))}(\Gamma_i) \mid \Theta = \theta\right)\\
		&= \mathbb{E}\left( \mathbbm{1}_{h_{L,R}(\mathsf{A}\cap \mathsf{S} \cap (L,R))}(\Gamma_k)\prod_{i =1}^{k-1}\mathbbm{1}_{[0,2\pi)\setminus h_{L,R}(\mathsf{S} \cap (L,R))}(\Gamma_i) \mid \Theta = \theta, L = \ell, R = r\right).
	\end{align*}
	Therefore
	\[
	Q_{\mathsf{S}}^{\ell,r}(\theta, \mathsf{A}\cap (\ell,r)) = \sum_{k = 1}^{\infty} f_k\left(h_{\ell,r}(\theta), \ell, r\right), \qquad \theta \in \mathsf{S}\cap (\ell,r), \ell < r,
	\]
	is measurable in $(\theta, \ell,r)$.
	The equality above holds by definition of the shrinkage kernel, see also \citep[Proof of Theorem 2.10]{ReversibilityEllipticalSliceSampler}.
	Consequently 
	\[
	\{(\theta, \ell, r )\in \mathbb{R}^3 \mid \ell < r, \theta \in \mathsf{S} \cap (\ell, r)\} \times \mathcal{B}(\mathsf{S}) \to [0,1], 
	\qquad \big((\theta, \ell, r), B\big) \mapsto Q_{\mathsf{S}}^{\ell,r}(\theta, \mathsf{B}\cap (\ell,r))
	\]
	is a transition kernel.
\end{remark}

\begin{remark}\label{R: Shrinkage and stepping out are compatible}
	Note that the combination of stepping-out distribution and shrinkage kernel as in \eqref{Eq: Definition of geodesic slice sampling kernel} is valid, i.e. the random interval generated by the stepping-out procedure can be used as an input for the shrinkage procedure.
	To see this, fix $x \in \mathsf{M}$, $v \in \mathbb{S}_x^{d-1}$ and $t \in (0,p(x))$.
	Let $\boldsymbol{L}_{L(x,v,t)}^{(0)}$ and $ \boldsymbol{R}_{L(x,v,t)}^{(0)}$ be as in \eqref{Eq: Defintion stepping-out random variables}.
	We need to verify that
	\begin{itemize}
		\item $L(x,v,t)$ is open,
		\item $0 \in L(x,v,t) \cap \big(\boldsymbol{L}_{L(x,v,t)}^{(0)}, \boldsymbol{R}_{L(x,v,t)}^{(0)}\big)$ almost surely. In particular this implies that
		this intersection is almost surely non-empty.
	\end{itemize}
	The lower semi-continuity of $p$ yields that $L(x,v,t)$ is open.
	Since $t \in (0,p(x))$, we have $0 \in L(x,v,t)$.
	Moreover, we have $0 \in \big(\boldsymbol{L}_{L(x,v,t)}^{(0)}, \boldsymbol{R}_{L(x,v,t)}^{(0)}\big)$ almost surely by construction.
	Therefore 0 is also almost surely contained in the intersection of these two sets.
\end{remark}

We provide two properties of the shrinkage kernel $Q_{\mathsf{S}}^{\ell,r}$ that are useful to derive the reversibility of the geodesic slice sampler.
Both are essentially extensions of corresponding properties of $Q_{\textgoth{S}}$.

\begin{lemma}\label{L: Reversibility of the pushforward kernel}
	Let $\ell, r \in \mathbb{R}$ and  $\mathsf{S} \in \mathcal{B}(\mathbb{R})$ an open set such that $(\ell,r) \cap \mathsf{S} \neq \emptyset$.
	Then the kernel $Q_{\mathsf{S}}^{\ell,r}$ is reversible with respect to $ \mathrm{Unif}\big(\mathsf{S}\cap(\ell,r)\big) $.
\end{lemma}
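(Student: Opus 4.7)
The plan is to leverage the fact that $Q_\mathsf{S}^{\ell,r}$ is defined as a pushforward of $Q_{\textgoth{S}}$ under the bijection $h_{\ell,r}^{-1}$ and to transfer the reversibility of $Q_{\textgoth{S}}$ with respect to $\mathrm{Unif}(\textgoth{S})$ (established in the cited work \cite{ReversibilityEllipticalSliceSampler}) through this bijection. The main observation is that $h_{\ell,r} \colon [\ell,r) \to [0,2\pi)$ is an affine bijection whose Jacobian is the constant $2\pi/(r-\ell)$, so it maps $\mathrm{Unif}(\mathsf{S} \cap (\ell,r))$ exactly onto $\mathrm{Unif}(\textgoth{S})$, where $\textgoth{S} = h_{\ell,r}(\mathsf{S}\cap(\ell,r))$ is open in $\mathbb{S}^1$ and non-empty by assumption.

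First, I would record the assumed reversibility of $Q_{\textgoth{S}}$ from \cite{ReversibilityEllipticalSliceSampler}, namely that for all bounded measurable $\tilde f, \tilde g \colon \textgoth{S} \to \mathbb{R}$,
\[
\int_{\textgoth{S}} \tilde f(\psi) \int_{\textgoth{S}} \tilde g(\beta)\, Q_{\textgoth{S}}(\psi, \d \beta)\, \mathrm{Unif}(\textgoth{S})(\d \psi)
= \int_{\textgoth{S}} \tilde g(\psi) \int_{\textgoth{S}} \tilde f(\beta)\, Q_{\textgoth{S}}(\psi, \d \beta)\, \mathrm{Unif}(\textgoth{S})(\d \psi).
\]
Next, for measurable $f,g \colon \mathsf{S}\cap(\ell,r) \to \mathbb{R}$ bounded, I apply the definition $Q_\mathsf{S}^{\ell,r}(\theta, \mathsf{A}) = Q_{\textgoth{S}}(h_{\ell,r}(\theta), h_{\ell,r}(\mathsf{A}))$, and perform the change of variables $\psi = h_{\ell,r}(\theta)$, $\beta = h_{\ell,r}(\alpha)$ in the integral
\[
\int f(\theta) \int g(\alpha)\, Q_\mathsf{S}^{\ell,r}(\theta, \d \alpha)\, \mathrm{Unif}(\mathsf{S}\cap(\ell,r))(\d \theta).
\]
Because $h_{\ell,r}$ has constant Jacobian and the normalizing factors of both uniform distributions absorb it, this integral equals the corresponding integral against $Q_{\textgoth{S}}$ and $\mathrm{Unif}(\textgoth{S})$ with $\tilde f = f \circ h_{\ell,r}^{-1}$ and $\tilde g = g \circ h_{\ell,r}^{-1}$.

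Applying the reversibility of $Q_{\textgoth{S}}$ and then reversing the change of variables yields the desired reversibility of $Q_\mathsf{S}^{\ell,r}$ with respect to $\mathrm{Unif}(\mathsf{S}\cap(\ell,r))$. The only potentially delicate point is ensuring that the hypothesis of the result from \cite{ReversibilityEllipticalSliceSampler} applies, i.e., that $\textgoth{S}$ is open in $\mathbb{S}^1$ and non-empty; openness follows because $\mathsf{S}\cap(\ell,r)$ is open in $\mathbb{R}$ and $h_{\ell,r}$ is a homeomorphism onto its image (with the circle topology), while non-emptiness is given by assumption. No further obstacle arises, as the argument is entirely a transfer-of-structure computation.
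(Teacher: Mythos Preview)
Your proposal is correct and follows essentially the same route as the paper: invoke the reversibility of $Q_{\textgoth{S}}$ with respect to $\mathrm{Unif}(\textgoth{S})$ from \cite{ReversibilityEllipticalSliceSampler}, then transfer it through the bijection $h_{\ell,r}^{-1}$ using that $(h_{\ell,r}^{-1})_\sharp\mathrm{Unif}(\textgoth{S}) = \mathrm{Unif}(\mathsf{S}\cap(\ell,r))$. The only cosmetic difference is that the paper cites \cite[Proposition 19]{rudolf2022robust} for the general fact that a pushforward kernel inherits reversibility with respect to the pushforward measure, whereas you carry out this change-of-variables verification by hand.
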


To obtain this result, we push the reversibility statement for $Q_{\textgoth{S}}$ formulated in \citep{ReversibilityEllipticalSliceSampler} forward to the shrinkage kernel on arbitrary half open intervals.

\begin{proof}
	By \citep[Theorem 2.10]{ReversibilityEllipticalSliceSampler} we know that $ Q_{h_{\ell,r}(\mathsf{S} \cap (\ell, r))} $ is reversible with respect to $ \mathrm{Unif}\big(h_{\ell,r}(\mathsf{S} \cap (\ell, r))\big)$.
	Observe that by \citep[Proposition 19]{rudolf2022robust} this implies that $ Q_{\mathsf{S}}^{\ell,r} $ is reversible with respect to $ (h_{\ell,r}^{-1})_\sharp\mathrm{Unif}\big(h_{\ell,r}(\mathsf{S} \cap (\ell, r))\big) =  \mathrm{Unif}\big(\mathsf{S} \cap (\ell, r)\big)$.

\end{proof}

The next lemma can be seen as describing the behaviour of the shrinkage kernel under U-turns.

\begin{lemma}\label{L: Shrinkage under rotation}                                                                
	Let $\ell, r \in \mathbb{R}$ such that $\ell < r$ and let $\mathsf{S} \in \mathcal{B}(\mathbb{R})$ be an open set such that $(\ell,r) \cap \mathsf{S} \neq \emptyset$.
	For all $\alpha \in \mathsf{S} \cap (\ell,r)$ and $\mathsf{A} \in \mathcal{B}(\Lambda_\alpha(\mathsf{S} \cap (\ell,r)))$ we have
	\[  
	Q_{\Lambda_\alpha(\mathsf{S})}^{\uplambda_\alpha(\ell,r)}(0, \mathsf{A})=Q_{\mathsf{S}}^{\ell,r}\big(\alpha, \Lambda_\alpha(\mathsf{A})\big),
	\]
	where $\Lambda_\alpha$ and $\uplambda_\alpha$ are defined as in \eqref{Eq: Definition Lambda_alpha} and \eqref{Eq: Definition uplambda_alpha} respectively.
\end{lemma}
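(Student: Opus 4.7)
The plan is to pull both sides of the claimed identity back to the circle kernel $Q_{\textgoth{S}}$ on $\mathbb{S}^1\cong[0,2\pi)$ via the push-forward definition, and then conclude by invoking an isometry-invariance property of $Q_{\textgoth{S}}$. Set $\ell':=\alpha-r$, $r':=\alpha-\ell$, $\textgoth{S}:=h_{\ell,r}(\mathsf{S}\cap(\ell,r))$ and $\textgoth{S}':=h_{\ell',r'}(\Lambda_\alpha(\mathsf{S})\cap(\ell',r'))$. The definition of the shrinkage kernel as the push-forward of $Q_{\textgoth{S}}$ under $h_{\ell,r}^{-1}$ yields
\begin{align*}
Q_{\mathsf{S}}^{\ell,r}\bigl(\alpha,\Lambda_\alpha(\mathsf{A})\bigr)&=Q_{\textgoth{S}}\bigl(h_{\ell,r}(\alpha),\,h_{\ell,r}(\Lambda_\alpha(\mathsf{A}))\bigr),\\
Q_{\Lambda_\alpha(\mathsf{S})}^{\ell',r'}(0,\mathsf{A})&=Q_{\textgoth{S}'}\bigl(h_{\ell',r'}(0),\,h_{\ell',r'}(\mathsf{A})\bigr).
\end{align*}

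Next I would analyse the conjugating map $\Phi:=h_{\ell',r'}\circ\Lambda_\alpha\circ h_{\ell,r}^{-1}:[0,2\pi)\to[0,2\pi)$. Since $r'-\ell'=r-\ell$, a direct calculation gives $\Phi(\phi)=\bigl(\tfrac{2\pi\alpha}{r-\ell}-\phi\bigr)\bmod 2\pi$, so $\Phi$ is an isometric reflection of $\mathbb{S}^1$. Using $\Lambda_\alpha\circ\Lambda_\alpha=\mathrm{id}_\mathbb{R}$ together with $\Lambda_\alpha(\mathsf{S}\cap(\ell,r))=\Lambda_\alpha(\mathsf{S})\cap(\ell',r')$, one checks that $\Phi(h_{\ell,r}(\alpha))=h_{\ell',r'}(0)$, $\Phi(\textgoth{S})=\textgoth{S}'$, and $\Phi(h_{\ell,r}(\Lambda_\alpha(\mathsf{A})))=h_{\ell',r'}(\mathsf{A})$.

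It therefore suffices to prove that $Q_{\textgoth{T}}$ is equivariant under isometries of $\mathbb{S}^1$: for every isometry $\Psi$ of $\mathbb{S}^1$, every open $\textgoth{T}\subseteq[0,2\pi)$, every $\theta\in\textgoth{T}$ and every $\mathsf{B}\in\mathcal{B}(\textgoth{T})$,
\[
Q_{\Psi(\textgoth{T})}\bigl(\Psi(\theta),\Psi(\mathsf{B})\bigr)=Q_{\textgoth{T}}(\theta,\mathsf{B}).
\]
I would establish this by a direct coupling on the Markov chain $(\Upgamma_n,\Upgamma_n^{\min},\Upgamma_n^{\max})_{n\in\mathbb{N}}$ defining $Q_{\textgoth{T}}$: the initial uniform distribution on $[0,2\pi)$ and each subsequent uniform-on-arc conditional are isometry-invariant, and the segment-update rule based on the set $\mathbb{J}$ is symmetric in its two endpoint arguments. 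Hence under an orientation-reversing $\Psi$ the roles of $\Upgamma^{\min}$ and $\Upgamma^{\max}$ are simply interchanged, and $(\Psi(\Upgamma_n))_{n\in\mathbb{N}}$ started at $\Psi(\theta)$ has the same law as $(\Upgamma_n)_{n\in\mathbb{N}}$ started at $\theta$; the stopping time $\mathcal{T}_{\Psi(\textgoth{T})}$ transforms accordingly.

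Applying this invariance with $\Psi=\Phi$, $\textgoth{T}=\textgoth{S}$, $\theta=h_{\ell,r}(\alpha)$ and $\mathsf{B}=h_{\ell,r}(\Lambda_\alpha(\mathsf{A}))$ and chaining the two displayed identities above yields the claim. The main obstacle is the careful bookkeeping of the endpoint symmetry for the orientation-reversing (reflection) case; beyond the invariance of the uniform distribution on arcs and the symmetry of $\mathbb{J}$, no new ingredient is required.
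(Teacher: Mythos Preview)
Your approach is correct and coincides with the paper's: both reduce the identity to the circle kernel $Q_{\textgoth{S}}$ via the maps $h_{\ell,r}$ and identify the conjugating map as the reflection $\phi\mapsto(h_{\ell,r}(\alpha)-\phi)\bmod 2\pi$ on $\mathbb{S}^1$, then invoke the equivariance of $Q_{\textgoth{S}}$ under this reflection. The only difference is that the paper obtains this equivariance by citing \cite[Lemma~2.10]{ReversibilityEllipticalSliceSampler} (where the map is denoted $g_\theta=\widetilde{\Lambda}_{h_{\ell,r}(\alpha)}$), whereas you sketch a direct coupling argument for it; your remark that the roles of $\Upgamma^{\min}$ and $\Upgamma^{\max}$ are interchanged under the orientation-reversing isometry is exactly the point, though note that $\mathbb{J}(\cdot,\cdot)$ itself is not literally symmetric in its arguments but rather satisfies $\Psi(\mathbb{J}(a,b))=\mathbb{J}(\Psi(b),\Psi(a))$ under a reflection $\Psi$.
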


In order to leverage a similar property of the kernel $Q_{\textgoth{S}}$ for showing the above statement, we need the collection of maps
\begin{align*}
	\widetilde{\Lambda}_\alpha : [0,2\pi) \to [0,2\pi), \qquad
	\theta \mapsto \alpha - \theta \mod 2 \pi
\end{align*}
indexed by $\alpha \in [0,2\pi)$.

\begin{proof}
	We aim to apply \citep[Lemma 2.12]{ReversibilityEllipticalSliceSampler}.
	Let $\ell, r \in \mathbb{R}$ such that $\ell < r$.
	Moreover, let $\mathsf{S} \in \mathcal{B}(\mathbb{R})$ be an open set such that $(\ell,r) \cap \mathsf{S} \neq \emptyset$,  and $\alpha \in (\ell,r)\cap \mathsf{S}$.
	Observe that for $ \theta \in \mathsf{S} \cap (\ell, r) $ we have
	\begin{align*}
		&\widetilde{\Lambda}_{h_{\ell,r}(\alpha)}\left(h_{\ell,r}(\theta)\right)
		=  \left( \frac{2\pi}{r-\ell}\alpha \mod 2\pi - \frac{2\pi}{r-\ell}\theta \mod 2 \pi \right) \mod 2\pi\\
		&\qquad= \left(\frac{2\pi}{r-\ell} (\alpha-\theta)\right) \mod 2\pi
		= \left(\frac{2\pi}{(\alpha-\ell)- (\alpha-r)} (\alpha-\theta)\right) \mod 2\pi\\
		&\qquad = h_{\uplambda_\alpha(\ell,r)}\left(\Lambda_\alpha(\theta)\right).
	\end{align*}
	Therefore
	\begin{equation*}
		\widetilde{\Lambda}_{h_{\ell,r}(\alpha)}\Big(h_{\ell,r}\big(\mathsf{S} \cap (\ell, r)\big)\Big) 
		= h_{\uplambda_\alpha(\ell,r)}\Big(\Lambda_\alpha\big(\mathsf{S} \cap (\ell,r)\big) \Big)
		= h_{\uplambda_\alpha(\ell,r)}\Big(\Lambda_\alpha(\mathsf{S}) \cap  \big(\Lambda_\alpha(r), \Lambda_\alpha(\ell) \big)\Big),
	\end{equation*}
	and
	\begin{equation*}
		\widetilde{\Lambda}_{h_{\ell,r}(\alpha)}\left(h_{\ell,r}\big(\Lambda_\alpha(\mathsf{A})\big)\right) = h_{\uplambda_\alpha(\ell,r)} (\mathsf{A})
	\end{equation*}
	for all $\mathsf{A} \in \mathcal{B}(\Lambda_\alpha(\mathsf{S} \cap (\ell,r)))$, as $\Lambda_\alpha^{-1} = \Lambda_\alpha$.
	Since $ \widetilde{\Lambda}_{h_{\ell,r}(\alpha)}\left(h_{\ell,r}(\alpha) \right) =h_{\ell,r}(\alpha) -h_{\ell,r}(\alpha)\mod 2\pi = 0$ and $\widetilde{\Lambda}_\alpha^{-1}= \widetilde{\Lambda}_\alpha  $, we get by \citep[Lemma 2.12, note that $g_\theta = \widetilde{\Lambda}_\alpha$]{ReversibilityEllipticalSliceSampler} that for $\mathsf{A}\in \mathcal{B}(\Lambda_\alpha(\mathsf{S} \cap (\ell,r)))$
	\begin{align*}
		Q_{\Lambda_\alpha(\mathsf{S})}^{\uplambda_\alpha(\ell,r)}(0, \mathsf{A})
		&= Q_{h_{\uplambda_\alpha(\ell,r)}\big(\Lambda_\alpha(\mathsf{S}) \cap (\Lambda_\alpha(r),\Lambda_\alpha(\ell))\big)} \left(0, h_{\uplambda_\alpha(\ell,r)}(\mathsf{A})\right)\\
		&= Q_{\widetilde{\Lambda}_{h_{\ell,r}(\alpha)}\big(h_{\ell,r}(\mathsf{S}\cap (\ell,r))\big)}\left( \widetilde{\Lambda}_{h_{\ell,r}(\alpha)}\left(h_{\ell,r}(\alpha) \right), \widetilde{\Lambda}_{h_{\ell,r}(\alpha)}\big(h_{\ell,r}(\Lambda_\alpha(\mathsf{A}))\big)\right)\\
		&=Q_{h_{\ell,r}(\mathsf{S} \cap (\ell,r))}\big( h_{\ell,r}(\alpha) , h_{\ell,r}(\Lambda_\alpha(\mathsf{A}))\big)\\
		&=Q_{\mathsf{S}}^{\ell,r}(\alpha, \Lambda_\alpha(\mathsf{A})). 
	\end{align*}
\end{proof}

\subsection{Reversibility of the geodesic slice sampler}\label{Sec: Proof reversibility}

Note that the \emph{tangent bundle }
\[ 
T\mathsf{M} := \bigcup_{x \in \mathsf{M}} \{x\} \times T_x\mathsf{M}
\]
of $\mathsf{M}$ is a smooth, $2d$-dimensional, connected manifold, see e.g. \citep[Section I.2.2]{Sakai}.
To prove Theorem \ref{Thm: Reversibility}, we employ the Riemannian structure of $T\mathsf{M}$.
We only briefly sketch how this Riemannian structure is introduced, for more details see \citep[Section II.4]{Sakai}.

We denote by
\[  
\mathrm{proj}_{\mathsf{M}} : T\mathsf{M} \to \mathsf{M}, \qquad (x,v) \mapsto x
\]
the projection map from the tangent bundle onto $\mathsf{M}$.
Observe that for $u=(x,v) \in T\mathsf{M}$ the tangent space $T_uT\mathsf{M}$ to $T\mathsf{M}$ at $u$ can be identified with the direct sum 
\[  
T_uT\mathsf{M} = T_x\mathsf{M} \oplus T_x\mathsf{M}.
\]
For $u=(x,v) \in \mathsf{M}$ this identification allows us to introduce a ``canonical'' metric on $T\mathsf{M}$ referenced to as the \emph{Sasaki metric} 
\[  
\mathfrak{G}_u(\eta, \widetilde{\eta}) := \mathfrak{g}_x\left(\eta_h, \widetilde{\eta}_h\right) + \mathfrak{g}_x(\eta_v, \widetilde{\eta}_v), \qquad \eta = (\eta_h, \eta_v), \widetilde{\eta} = (\widetilde{\eta}_h, \widetilde{\eta}_v) \in T_x\mathsf{M} \oplus T_x\mathsf{M}.
\]
Together with $\mathfrak{G}$, the tangent bundle is a Riemannian manifold. 
However, in fact we are more interested in a submanifold of $T\mathsf{M}$, that is, the \emph{unit tangent bundle}
\[ 
U\mathsf{M} := \bigcup_{x \in \mathsf{M}} U_x\mathsf{M} := \bigcup_{x \in \mathsf{M}} \{x\}\times \mathbb{S}^{d-1}_x.
\]
We call the Riemannian measure $\nu_{\mathfrak{G}}$, which is induced by the Sasaki metric $\mathfrak{G}$ onto the unit tangent bundle $U\mathsf{M}$, the \emph{Liouville measure}.
Observe that the restriction $\mathrm{proj}_{\mathsf{M}}\vert_{U\mathsf{M}}$ is a Riemannian submersion, and for  $x \in \mathsf{M}$ the fibre $\mathrm{proj}_{\mathsf{M}}\vert_{U\mathsf{M}}^{-1}(x)$ equipped with the metric induced by $\mathfrak{g}$ is $\left(\mathbb{S}^{d-1}_x, \widehat{\mathfrak{g}}_x\right)$. Note that additionally $\nu_{\widehat{\mathfrak{g}}_x}(\mathbb{S}^{d-1}_x) = \nu_{\widehat{\mathfrak{g}}}(\mathbb{S}^{d-1})$ for all $x \in \mathsf{M}$.
Applying Fubini's theorem for manifolds (see \citep[Theorem II.5.6]{Sakai}), this yields a nice expression for the Liouville measure, namely we have 
\begin{equation}\label{Eq: Nice representation for Liouville measure}
	\int_{U\mathsf{M}} f(x,v) \ \nu_{\mathfrak{G}}\big(\d (x,v) \big) = \int_{\mathsf{M}} \int_{\mathbb{S}_x^{d-1}} f(x,v)\ \nu_{\widehat{\mathfrak{g}}_x}(\d v)\, \nu_{\mathfrak{g}}(\d x)
\end{equation}
for all measurable functions $f: U\mathsf{M} \to [0,\infty)$.

In the following, we combine the Liouville measure with a family of maps $T^{(\theta)}: U\mathsf{M} \to U\mathsf{M}$ indexed by $\theta \in \mathbb{R}$ which can be interpreted as ``walking along the geodesic specified by $u \in U\mathsf{M}$ with step length $\theta$ and then doing a U-turn''.
To this end,  for $ \theta \in \mathbb{R} $ we denote the \emph{geodesic flow} by
\begin{align*}
	\phi_\theta: U\mathsf{M} \to U\mathsf{M}, \qquad
	(x,v) \mapsto \left(\left.\gamma_{(x,v)}(\theta), \frac{\d \gamma_{(x,v)}}{\d t}\right\vert_\theta\right).
\end{align*}
For more details on the geodesic flow see \citep[Section II.4.II]{Sakai}.
Moreover, we define a flip on the unit tangent bundle
\begin{align*}
	\mathfrak{I}: U\mathsf{M}  \to U\mathsf{M}, \qquad
	(x,v) \mapsto (x,-v).
\end{align*}
Then we set
\[  
T^{(\theta)} :=\mathfrak{I} \circ \phi_\theta, \qquad \theta \in \mathbb{R}.
\]
Observe that the Liouville measure is invariant under the geodesic flow (see \citep[Exercise II.16]{Sakai}) and under the flip $\mathfrak{I}$ (see \citep[Lemma 1.34]{Paternain}).
Therefore we have due to the representation in \eqref{Eq: Nice representation for Liouville measure} for all measurable functions $f: U\mathsf{M} \to [0,\infty)$ and all $\theta \in \mathbb{R}$ that
\begin{equation}\label{Eq: Invaraince under Ttheta}
	\begin{split}
		&\int_{\mathsf{M}} \int_{\mathbb{S}_x^{d-1}} f\left(T^{(\theta)}(x,v)\right)\ \sigma_{d-1}^{(x)}(\d v)\, \nu_{\mathfrak{g}}(\d x)
		= \frac{1}{\nu_{\widehat{\mathfrak{g}}}(\mathbb{S}^{d-1})}\int_{U\mathsf{M}} f\left(T^{(\theta)}(x,v)\right) \ \nu_{\mathfrak{G}}\big(\d (x,v) \big) \\
		&\qquad = \frac{1}{\nu_{\widehat{\mathfrak{g}}}(\mathbb{S}^{d-1})} \int_{U\mathsf{M}} f(x,v) \ \nu_{\mathfrak{G}}\big(\d (x,v) \big) 
		= \int_{\mathsf{M}} \int_{\mathbb{S}_x^{d-1}} f(x,v)\ \sigma_{d-1}^{(x)}(\d v)\, \nu_{\mathfrak{g}}(\d x),
	\end{split}
\end{equation}
i.e. the Liouville measure is invariant under $T^{(\theta)}$.

We shed some further light on the interaction of $T^{(\theta)}$ and the geodesics.
\begin{remark}\label{R: Interaction Ttheta geodesic}
	Let  $ x \in \mathsf{M} $, $ v \in \mathbb{S}_x^{d-1} $ and $ \theta, \alpha \in \mathbb{R} $.
	Using the rescaling property of geodesics (see \citep[Lemma 5.18]{Lee}) and the chain rule, we have
	\begin{align*}
		(\mathfrak{I} \circ \phi_\theta )(x,v) &= \left(\gamma_{(x,v)}(\theta), -\frac{\d \gamma_{(x,v)}}{\d t}\vert_\theta\right)
		= \left(\gamma_{(x,-v)}(-\theta), \frac{\d \gamma_{(x,-v)}}{\d t}\vert_{-\theta}\right) 
		= (\phi_{-\theta} \circ \mathfrak{I})(x,v).
	\end{align*}
	Since\footnote{This is a basic property of a flow, see e.g. \citep[Chapter 9]{LeeSmooth}. Observe that the geodesic flow is a flow on the (unit) tangent bundle.} $\phi_{\theta} \circ \phi_{-\alpha} =\phi_{\theta-\alpha}$, this yields
	\begin{align*}
		\gamma_{T^{(\alpha)}(x,v)}(\theta) &= \mathrm{proj}_{\mathsf{M}}\big(\phi_\theta(T^{(\alpha)}(x,v))\big)
		=  \mathrm{proj}_{\mathsf{M}}\big((\phi_\theta \circ \mathfrak{I} \circ \phi_\alpha)(x,v)\big)\\
		&=  \mathrm{proj}_{\mathsf{M}}\big((\phi_\theta  \circ \phi_{-\alpha}\circ \mathfrak{I})(x,v)\big)
		=  \mathrm{proj}_{\mathsf{M}}\big(( \phi_{\theta-\alpha}\circ \mathfrak{I})(x,v)\big)\\
		&=  \mathrm{proj}_{\mathsf{M}}\left((\mathfrak{I} \circ  \phi_{\Lambda_\alpha(\theta)})(x,v)\right)
		= \mathrm{proj}_{\mathsf{M}}\left( \phi_{\Lambda_\alpha(\theta)}(x,v)\right)
		=\gamma_{(x,v)}\big(\Lambda_\alpha(\theta)\big).
	\end{align*}
	In particular this implies
	\begin{align*}
		L\big(T^{(\alpha)}(x,v), t\big) &= \{\theta \mid p\big(\gamma_{T^{(\alpha)}(x,v)}(\theta)\big) > t\}
		=  \{\theta \mid p\left( \gamma_{(x,v)}\big(\Lambda_\alpha(\theta)\big) \right) > t\}
		=\Lambda_\alpha\left(L(x,v,t)\right),
	\end{align*}
	as $ \Lambda_\alpha^{-1} = \Lambda_\alpha $.
\end{remark}

Now we can prove the reversibility of the geodesic slice sampler.

\begin{proof}[of reversibility in Theorem \ref{Thm: Reversibility}]
	Set $\mathsf{W}:=\{x \in \mathsf{M} \mid p(x) > 0\}$. 
	Since $\pi(\mathsf{W}^\complement)= 0$ and $K(x,\mathsf{W}^\complement)= 0$ for all $x \in \mathsf{W}$,
	for all $\mathsf{A}, \mathsf{B} \in \mathcal{B}(\mathsf{M})$ holds
	\[
		\int_{\mathsf{B}} K(x,\mathsf{A})\ \pi(\d x) = \int_{\mathsf{W}\cap \mathsf{B}} K(x,\mathsf{A})\ \pi(\d x)
		=\int_{\mathsf{W}\cap \mathsf{B}} K(x,\mathsf{A} \cap \mathsf{W})\ \pi(\d x).
	\]
	Therefore, it suffices to show that 
	\[
		\widetilde{K}: \mathsf{W} \times \mathcal{B}(\mathsf{W}),
		\qquad (x,\mathsf{A}) \mapsto \frac{1}{p(x)} \int_{(0, p(x))} K_t(x,\mathsf{A})\ \mathrm{Leb}_{1}(\d t)
	\]
	is reversible with respect to $\pi$.
	By virtue of \cite[Lemma 1]{latuszynski2024convergence} this follows if $K_t$ is reversible with respect to $\nu_{\mathfrak{g}}\vert_{\mathsf{L}}(t)$ for all $t \in (0, \|p\|_{\infty})$.
	
	Let $ t \in   (0, \|p\|_{\infty}) $ and $ \mathsf{A},\mathsf{B} \in \mathcal{B}(\mathsf{W}) $.
	After introducing the uniform distribution on $ L(x,v,t) $, we get by Lemma \ref{L: Key identity for reversiblity - measurable functions} that
	\begin{align*}
		&\int_{L(t)\cap \mathsf{A}} K_t(x,\mathsf{B})\  \nu_{\mathfrak{g}}(\d x) \\
		&\quad=\int_{\mathsf{M}}\int_{\mathbb{S}_x^{d-1}}\int_{\mathbb{R}^2} \int_{(\ell,r)} \mathbbm{1}_{L(t)\cap \mathsf{A}}(x) \mathbbm{1}_{\mathsf{B}}\left(\gamma_{(x,v)}(\theta) \right) 
		\ Q_{L(x,v,t)}^{\ell,r}(0, \d \theta) \, \xi_{L(x,v,t)}^{(0)}\big(\d (\ell,r) \big)\, \sigma_{d-1}^{(x)}(\d v)\, \nu_{\mathfrak{g}}(\d x)\\
		& \quad=\int_{\mathbb{R}} \int_{\mathsf{M}}  \int_{\mathbb{S}_x^{d-1}}  \int_{\mathbb{R}^2} \int_{(\ell,r)} \mathbbm{1}_{L(t)\cap \mathsf{A}}(x) \mathbbm{1}_{\mathsf{B}} \left(\gamma_{(x,v)}(\theta) \right) \frac{1}{\mathrm{Leb}_1(L(x,v,t) \cap (\ell,r))} \mathbbm{1}_{L(x,v,t) \cap (\ell,r)}(\alpha)\\
		&\hspace{5cm} \times Q_{L(x,v,t)}^{\ell,r}(0, \d \theta) \ \xi_{L(x,v,t)}^{(0)}\big(\d (\ell,r) \big) \ \sigma_{d-1}^{(x)}(\d v) \  \nu_{\mathfrak{g}}(\d x) \ \mathrm{Leb}_1(\d \alpha)\\
		& \quad=\int_{\mathbb{R}} \int_{\mathsf{M}}  \int_{\mathbb{S}_x^{d-1}}  \int_{\mathbb{R}^2} \int_{(\ell,r)} \mathbbm{1}_{L(t)\cap \mathsf{A}}(x)  \mathbbm{1}_{\mathsf{B}}\left(\gamma_{(x,v)}(\theta) \right) \frac{1}{\mathrm{Leb}_1(L(x,v,t) \cap (\ell,r))} \mathbbm{1}_{L(x,v,t)}(\alpha) \mathbbm{1}_{(\ell,r)}(0)\\
		&\hspace{5cm} \times Q_{L(x,v,t)}^{\ell,r}(0, \d \theta) \ \xi_{L(x,v,t)}^{(\alpha)}\big(\d (\ell,r) \big) \ \sigma_{d-1}^{(x)}(\d v) \  \nu_{\mathfrak{g}}(\d x) \ \mathrm{Leb}_1(\d \alpha).
	\end{align*}
	Using \eqref{Eq: Invaraince under Ttheta}, we obtain 
	\begin{align*}
		&\int_{L(t)\cap \mathsf{A}} K_t(x,\mathsf{B})\  \nu_{\mathfrak{g}}(\d x) \\
		& \quad=\int_{\mathbb{R}} \int_{\mathsf{M}}  \int_{\mathbb{S}_x^{d-1}}  \int_{\mathbb{R}^2} \int_{(\ell,r)}  \frac{\mathbbm{1}_{L(t)\cap \mathsf{A}}\left(\gamma_{(x,v)}(\alpha)\right) \mathbbm{1}_{\mathsf{B}}\left(\gamma_{T^{(\alpha)}(x,v)}(\theta) \right)\mathbbm{1}_{L\left(T^{(\alpha)}(x,v),t\right)}(\alpha) \mathbbm{1}_{(\ell,r)}(0)}{\mathrm{Leb}_1\big(L(T^{(\alpha)}(x,v),t) \cap (\ell,r)\big)}\\
		&\hspace{4cm}
		\times Q_{L\left(T^{(\alpha)}(x,v),t\right)}^{\ell,r}(0, \d \theta) \, \xi_{L\left(T^{(\alpha)}(x,v),t\right)}^{(\alpha)}\big(\d (\ell,r) \big) \, \sigma_{d-1}^{(x)}(\d v) \,  \nu_{\mathfrak{g}}(\d x) \, 	\mathrm{Leb}_1(\d \alpha).
	\end{align*}
	For all $\alpha \in \mathbb{R}$, we have
	\[  
	\gamma_{(x,v)}(\alpha) \in L(t)\quad \Leftrightarrow \quad p\left(\gamma_{(x,v)}(\alpha) \right) > t\quad \Leftrightarrow \quad \alpha \in L(x,v,t),
	\]
	and by Remark \ref{R: Interaction Ttheta geodesic}
	\[  
	\alpha \in L\big(T^{(\alpha)}(x,v), t\big)\quad \Leftrightarrow\quad p(x) = p\left(\gamma_{(x,v)}(\alpha-\alpha)\right) = p\left(\gamma_{T^{(\alpha)}(x,v)}(\alpha)\right) > t\quad \Leftrightarrow \quad x \in L(t).
	\]
	If we also apply Remark \ref{R: Interaction Ttheta geodesic} to $\mathbbm{1}_{\mathsf{\mathsf{B}}}$ and to the set in the 1-dimensional Lebesgue measure in the numerator, we overall obtain
	\begin{align*}
		&\int_{L(t)\cap \mathsf{A}} K_t(x,\mathsf{B})\  \nu_{\mathfrak{g}}(\d x) \\
		&\quad =\int_{\mathbb{R}} \int_{\mathsf{M}}  \int_{\mathbb{S}_x^{d-1}}  \int_{\mathbb{R}^2} \int_{(\ell,r)}  \frac{\mathbbm{1}_{L(x,v,t)}\left(\alpha\right) \mathbbm{1}_{\mathsf{A}}\left(\gamma_{(x,v)}(\alpha)\right) \mathbbm{1}_{\mathsf{B}}\left(\gamma_{(x,v)}\big(\Lambda_\alpha(\theta)\big) \right)\mathbbm{1}_{L(t)}(x) \mathbbm{1}_{(\ell,r)}(0)}{\mathrm{Leb}_1\left(\Lambda_\alpha(L(x,v,t)) \cap (\ell,r)\right)}\\
		&\hspace{4cm}
		\times Q_{L(T^{(\alpha)}(x,v),t)}^{\ell,r}(0, \d \theta) \ \xi_{L(T^{(\alpha)}(x,v),t)}^{(\alpha)}\big(\d (\ell,r) \big) \ \sigma_{d-1}^{(x)}(\d v) \  \nu_{\mathfrak{g}}(\d x) \ 	\mathrm{Leb}_1(\d \alpha).
	\end{align*}
	Then Lemma \ref{L: stepping out under rotation} together with Remark \ref{R: Interaction Ttheta geodesic} yields
	\begin{align*}
		&\int_{L(t)\cap \mathsf{A}} K_t(x,\mathsf{B})\  \nu_{\mathfrak{g}}(\d x) \\
		&\quad=\int_{\mathbb{R}} \int_{\mathsf{M}}  \int_{\mathbb{S}_x^{d-1}}  \int_{\mathbb{R}^2} \int_{\Lambda_\alpha((\ell,r))}  \frac{\mathbbm{1}_{L(x,v,t)}\left(\alpha\right) \mathbbm{1}_{\mathsf{A}}\left(\gamma_{(x,v)}(\alpha)\right) \mathbbm{1}_{\mathsf{B}}\left(\gamma_{(x,v)}\big(\Lambda_\alpha(\theta)\big) \right)\mathbbm{1}_{L(t)}(x) \mathbbm{1}_{\Lambda_\alpha((\ell,r))}(0)}{\mathrm{Leb}_1\big(\Lambda_\alpha(L(x,v,t)\cap(\ell,r))\big)}\\
		&\hspace{4cm}
		\times Q_{L(T^{(\alpha)}(x,v),t)}^{\uplambda_\alpha(\ell,r)}(0, \d \theta) \ \xi_{L(x,v,t)}^{(0)}\big(\d (\ell,r) \big) \ \sigma_{d-1}^{(x)}(\d v) \  \nu_{\mathfrak{g}}(\d x) \ 	\mathrm{Leb}_1(\d \alpha).
	\end{align*}
	Observe that
	\begin{align*}
		\mathrm{Leb}_1\big(\Lambda_\alpha(L(x,v,t)\cap(\ell,r))\big) = \mathrm{Leb}_1\big(L(x,v,t)\cap(\ell,r)\big)
	\end{align*}
	for all $x \in M$, $v \in \mathbb{S}_x^{d-1}$ and  $\alpha, \ell, r \in \mathbb{R}$.
	Hence
	\begin{align*}
		&\int_{L(t)\cap \mathsf{A}} K_t(x,\mathsf{B})\  \nu_{\mathfrak{g}}(\d x)\\
		& \quad=\int_{\mathbb{R}} \int_{\mathsf{M}}  \int_{\mathbb{S}_x^{d-1}}  \int_{\mathbb{R}^2} \int_{\Lambda_\alpha((\ell,r))}  \frac{\mathbbm{1}_{L(x,v,t)}\left(\alpha\right) \mathbbm{1}_{\mathsf{A}}\left(\gamma_{(x,v)}(\alpha)\right) \mathbbm{1}_{\mathsf{B}}\left(\gamma_{(x,v)}\big(\Lambda_\alpha(\theta)\big) \right)\mathbbm{1}_{L(t)}(x) \mathbbm{1}_{\Lambda_\alpha((\ell,r))}(0)}{ \mathrm{Leb}_1(L(x,v,t)\cap(\ell,r))}\\
		&\hspace{4cm}
		\times Q_{L(T^{(\alpha)}(x,v),t)}^{\uplambda_\alpha(\ell,r)}(0, \d \theta) \ \xi_{L(x,v,t)}^{(0)}\big(\d (\ell,r) \big) \ \sigma_{d-1}^{(x)}(\d v) \  \nu_{\mathfrak{g}}(\d x) \ 	\mathrm{Leb}_1(\d \alpha).
	\end{align*}
	Note that due to Remark \ref{R: Shrinkage and stepping out are compatible} and Remark \ref{R: Interaction Ttheta geodesic}, we may apply Lemma \ref{L: Shrinkage under rotation}, such that we obtain
	\begin{align*}
		&\int_{L(t)\cap \mathsf{A}} K_t(x,\mathsf{B})\  \nu_{\mathfrak{g}}(\d x) \\
		&\quad=\int_{\mathbb{R}} \int_{\mathsf{M}}  \int_{\mathbb{S}_x^{d-1}}  \int_{\mathbb{R}^2} \int_{(\ell,r)} \mathbbm{1}_{L(x,v,t)}\left(\alpha\right) \mathbbm{1}_{\mathsf{A}}\left(\gamma_{(x,v)}(\alpha)\right) \mathbbm{1}_{\mathsf{B}}\left(\gamma_{(x,v)}(\theta) \right) \frac{1}{\mathrm{Leb}_1(L(x,v,t)\cap(\ell,r))}\\
		&\hspace{2cm}\mathbbm{1}_{L(t)}(x)  \mathbbm{1}_{\Lambda_\alpha((\ell,r))}(0)
		\ Q_{L(x,v,t)}^{\ell,r}(\alpha, \d \theta) \ \xi_{L(x,v,t)}^{(0)}\big(\d (\ell,r) \big) \ \sigma_{d-1}^{(x)}(\d v) \  \nu_{\mathfrak{g}}(\d x) \ 	\mathrm{Leb}_1(\d \alpha)\\
		& \quad=\int_{\mathbb{R}} \int_{\mathsf{M}}  \int_{\mathbb{S}_x^{d-1}}  \int_{\mathbb{R}^2} \int_{(\ell,r)} \mathbbm{1}_{L(x,v,t)}\left(\alpha\right) \mathbbm{1}_{\mathsf{A}}\left(\gamma_{(x,v)}(\alpha)\right) \mathbbm{1}_{\mathsf{B}}\left(\gamma_{(x,v)}(\theta) \right) \frac{1}{\mathrm{Leb}_1(L(x,v,t)\cap(\ell,r))}\\
		&\hspace{2cm}\mathbbm{1}_{L(t)}(x)  \mathbbm{1}_{(\ell,r)}(\alpha)
		\ Q_{L(x,v,t)}^{\ell,r}(\alpha, \d \theta) \ \xi_{L(x,v,t)}^{(0)}\big(\d (\ell,r) \big) \ \sigma_{d-1}^{(x)}(\d v) \  \nu_{\mathfrak{g}}(\d x) \ 	\mathrm{Leb}_1(\d \alpha)\\
		& \quad= \int_{\mathsf{M}}  \int_{\mathbb{S}_x^{d-1}} \int_{\mathbb{R}^2}\frac{1}{\mathrm{Leb}_1(L(x,v,t)\cap(\ell,r))} \int_{L(x,v,t)\cap (\ell,r)}  \int_{(\ell,r)}  \mathbbm{1}_{\mathsf{A}}\left(\gamma_{(x,v)}(\alpha)\right) \mathbbm{1}_{\mathsf{B}}\left(\gamma_{(x,v)}(\theta) \right) \\
		&\hspace{2cm}\mathbbm{1}_{L(t)}(x) 
		\ Q_{L(x,v,t)}^{\ell,r}(\alpha, \d \theta)\ 	\mathrm{Leb}_1(\d \alpha) \ \xi_{L(x,v,t)}^{(0)}\big(\d (\ell,r) \big) \ \sigma_{d-1}^{(x)}(\d v) \  \nu_{\mathfrak{g}}(\d x) .
	\end{align*}
	This expression is symmetric in $\mathsf{A}$ and $\mathsf{B}$, because of the reversibility of the shrinkage kernel. Namely Lemma \ref{L: Reversibility of the pushforward kernel} yields
	\begin{align*}
		&\int_{L(t)\cap \mathsf{A}} K_t(x,\mathsf{B})\  \nu_{\mathfrak{g}}(\d x)\\
		& \quad= \int_{\mathsf{M}}  \int_{\mathbb{S}_x^{d-1}} \int_{\mathbb{R}^2}\frac{1}{\mathrm{Leb}_1(L(x,v,t)\cap(\ell,r))} \int_{L(x,v,t)\cap (\ell,r)}  \int_{(\ell,r)}  \mathbbm{1}_{\mathsf{A}}\left(\gamma_{(x,v)}(\theta)\right) \mathbbm{1}_{\mathsf{B}}\left(\gamma_{(x,v)}(\alpha) \right) \\
		&\hspace{2cm}\mathbbm{1}_{L(t)}(x) 
		\ Q_{L(x,v,t)}^{\ell,r}(\alpha, \d \theta)\ 	\mathrm{Leb}_1(\d \alpha) \ \xi_{L(x,v,t)}^{(0)}\big(\d (\ell,r) \big) \ \sigma_{d-1}^{(x)}(\d v) \  \nu_{\mathfrak{g}}(\d x).
	\end{align*}
	Consequently,
	\begin{align*}
		\int_{L(t)\cap \mathsf{A}} K_t(x,\mathsf{B})\  \nu_{\mathfrak{g}}(\d x) =\int_{L(t)\cap \mathsf{B}} K_t(x,\mathsf{A})\  \nu_{\mathfrak{g}}(\d x).
	\end{align*}
\end{proof}

\subsection{Ergodicity of the geodesic slice sampler}\label{Sec: proof ergodicity}
As in the previous section, let $\mathsf{W}:= \{x \in \mathsf{M} \mid p(x) > 0\}$ and
$\widetilde{K}: \mathsf{W} \times \mathcal{B}(\mathsf{W}),(x,\mathsf{A}) \mapsto K(x,\mathsf{A})$ be the restriction of $K$ to $\mathsf{W}$.
By \cite[Theorem 1]{tierney1994markov}, the following lemma implies that $\lim_{n \to \infty} d_{\mathrm{tv}}(\widetilde{K}^n(x, \cdot), \pi)=0$ holds for $\pi$-almost all $x \in \mathsf{M}$. 
Since $\widetilde{K}^n(x, \cdot)= K^n(x, \cdot)$ for all $x \in \mathsf{W}$, $n \in \mathbb{N}$, we get the ergodicity statement of Theorem \ref{Thm: Reversibility}.
\begin{lemma}\label{L: Irreducibility and aperiodicity}
	Let the assumptions of Theorem \ref{Thm: Reversibility} be satisfied.
	Then the kernel $\widetilde{K}$ is $\pi$-irreducible and aperiodic.
\end{lemma}
The proof of this lemma requires some preparation.
Let $x \in \mathsf{M}$.
For all $v \in \mathbb{S}_x^{d-1}$, we denote by $t_{\mathrm{cut}}(x,v)$ the cut time of $(x,v)$.
Then $\mathrm{Cut}(x)= \{\gamma_{(x,v)}(t_{\mathrm{cut}}(x,v)) \mid v \in \mathbb{S}_x^{d-1}, t_{\mathrm{cut}}(x,v)< \infty\}$ is called the \emph{cut locus} at $x$, and $D_{\mathrm{inj}}= \{\theta v \in T_x\mathsf{M} \mid v \in \mathbb{S}^{d-1}_x, \theta \in [0, t_{\mathrm{cut}}(x,v)) \}$ is called the \emph{injectivity domain} at $x$.
Observe that the exponential map $\mathrm{Exp}_x:T_x\mathsf{M} \to \mathsf{M}$ at $x$ restricted to the injectivity domain is a diffeomorphism onto $\mathsf{M} \setminus \mathrm{Cut}(x)$. For more details related to cut points see \cite[Chapter 10]{Lee}.
Additionally fix an isometry $\Phi_x : \mathbb{R}^{d} \to T_x\mathsf{M}$, 
where $\mathbb{R}^{d}$ is equipped with the standard inner product and $T_x\mathsf{M}$ with $\mathfrak{g}_x$, respectively.
Set $\varphi_{x}= \Phi_x^{-1} \circ \mathrm{Exp}_x^{-1}: \mathsf{M}\setminus \mathrm{Cut}(x) \to \Phi_x^{-1}(D_{\mathrm{inj}}(x))$.
Then $(\mathsf{M}\setminus\mathrm{Cut}(x), \varphi_{x})$ is a coordinate neighbourhood of $\mathsf{M}$, called \emph{normal coordinate neighbourhood}, 
which is specially adapted to the geodesic emanating at $x$.
\begin{remark}\label{R: Relation Riemannian and Lebesgue mass}
	Let $x \in \mathsf{M}$ and $\mathsf{A} \in \mathcal{B}(\mathsf{M})$.
	Since the cut locus at $x$ is a $\nu_{\mathfrak{g}}$-null set, we have by construction of the Riemannian measure that
	\[
		\nu_{\mathfrak{g}}(\mathsf{A}) = \nu_{\mathfrak{g}}(\mathsf{A}\setminus\mathrm{Cut}(x)) = \int_{\mathbb{R}^{d}} \mathbbm{1}_{\varphi_x(\mathsf{A}\setminus\mathrm{Cut}(x))}(z) \cdot \left(\sqrt{\det(\mathfrak{g}, \varphi_x)} \circ \varphi_x^{-1}\right)(z)\ \mathrm{Leb}_{d}(\d z).
	\]
	Therefore $\nu_{\mathfrak{g}}(\mathsf{A}) > 0$ implies $\mathrm{Leb}_{d}(\varphi_x(\mathsf{A}\setminus\mathrm{Cut}(x)))>0$.
\end{remark}
\begin{lemma}\label{L: reach a given set}
	Assume the same setting as in Theorem \ref{Thm: Reversibility}.
	Let $\mathsf{A} \in \mathcal{B}(\mathsf{W})$ with $\pi(\mathsf{A}) > 0$.
	Moreover, let $x \in \mathsf{W}$ such that for all $y \in \varphi_{x}(\mathsf{A}\setminus \mathrm{Cut}(x))$ holds $\|y\| < w/2$.
	Then we have $\widetilde{K}(x,\mathsf{A})> 0$.
\end{lemma}
\begin{proof}
	Set $\mathsf{R}=\{(\ell, r) \in \mathbb{R}^2 \mid \ell < 0, r > w/2\}$.
	Observe that
	\[
		\xi_{\mathsf{L}(x,v,t)}^{(0)}(\mathsf{R}) \geq \mathbb{P}\left (R_1^{(0)} > \frac{w}{2}\right ) = \mathbb{P}\left (\Upsilon \in \left [0, \frac{w}{2}\right ]\right ) = \frac{1}{2},
		\qquad v \in \mathbb{S}_x^{d-1}, t \in (0, \infty)
	\]
	by \eqref{Eq: Monotonicity L, R} and \eqref{Eq: Definition L_i, R_j}.
	Moreover \eqref{Eq: Bound tau, T} and \cite[proof of Theorem 2.10]{ReversibilityEllipticalSliceSampler} imply 
	\[
		Q^{\ell,r}_{\mathsf{L}(x,v,t)}(0, \cdot) \geq (mw)^{-1} \mathrm{Leb}_{1}\big(\cdot \cap\mathsf{L}(x,v,t)\cap (\ell,r)\big), 
	\]
	for all $v \in \mathbb{S}_x^{d-1}$, $ t \in (0, \infty)$ and $\xi_{\mathsf{L}(x,v,t)}^{(0)}$-almost all $(\ell,r)\in \mathbb{R}^2$.
	We obtain
	\begin{align*}
		&\widetilde{K}(x,\mathsf{A})
		\geq \begin{multlined}[0.85\textwidth][t]
			\frac{1}{p(x)\ mw} \int_{0}^{p(x)}\int_{\mathbb{S}_x^{d-1}} \int_{\mathsf{R}} \int_{\mathsf{L}(x,v,t)\cap (\ell,r) } \mathbbm{1}_{\mathsf{A}}(\gamma_{(x,v)}(\theta))\\
			\times \mathrm{Leb}_{1}(\d \theta)\, \xi_{\mathsf{L}(x,v,t)}^{(0)}\big(\d(\ell,r)\big)\, \sigma_{d-1}^{(x)}(\d v)\,\mathrm{Leb}_{1}(\d t)
		\end{multlined}\\
		&\quad\geq \begin{multlined}[0.85\textwidth][t]
			\frac{1}{p(x)\ mw}  \int_{0}^{p(x)}\int_{\mathbb{S}_x^{d-1}} \xi_{\mathsf{L}(x,v,t)}^{(0)}(\mathsf{R})  \int_{\mathsf{L}(x,v,t)\cap [0, \frac{w}{2}) } \mathbbm{1}_{\mathsf{A}}(\gamma_{(x,v)}(\theta))\ \mathrm{Leb}_{1}(\d \theta)\\
			 \times \sigma_{d-1}^{(x)}(\d v)\,\mathrm{Leb}_{1}(\d t)
		\end{multlined}\\
		&\quad\geq
			\frac{1}{2 p(x)\ mw}  \int_{0}^{p(x)}\int_{\mathbb{S}_x^{d-1}}  \int_{\mathsf{L}(x,v,t)\cap [0, \frac{w}{2}) } \mathbbm{1}_{\mathsf{A}\setminus\mathrm{Cut}(x)}(\gamma_{(x,v)}(\theta))
			\ \mathrm{Leb}_{1}(\d \theta)\, \sigma_{d-1}^{(x)}(\d v)\,\mathrm{Leb}_{1}(\d t)\\
		&\quad=	\frac{1}{2mw}  \int_{\mathbb{S}_x^{d-1}}  \int_{[0, \frac{w}{2}) } \left(1 \wedge \frac{p(\gamma_{(x,v)}(\theta))}{p(x)}\right) \mathbbm{1}_{\mathsf{A}\setminus\mathrm{Cut}(x)}(\gamma_{(x,v)}(\theta))\ 
		\mathrm{Leb}_{1}(\d \theta)\, \sigma_{d-1}^{(x)}(\d v),
	\end{align*}
	where for the last equality we used that $\mathbbm{1}_{\mathsf{L}(x,v,t)}(\theta) = \mathbbm{1}_{(-\infty, p(\gamma_{(x,v)}(\theta)))}(t)$.
	Exploiting $\sigma^{(x)}_{d-1} = \nu_{\widehat{\mathfrak{g}}}(\mathbb{S}^{d-1})^{-1} (\Phi_x)_\sharp\nu_{\widehat{\mathfrak{g}}}$ and $\gamma_{(x,v)}(\theta) = \mathrm{Exp}_x(\theta v)$,
	we get with a change from polar to Cartesian coordinates, see \cite[Theorem 16.22]{schilling2017measures}, that
	\begin{align*}
			&\widetilde{K}(x,\mathsf{A})
			\geq \begin{multlined}[0.65\textwidth][t]
					\frac{1}{2mw a}  \int_{\mathbb{S}^{d-1}} \int_{ [0, \infty) } 
				\left(1 \wedge \frac{p(\mathrm{Exp}_x\circ\Phi_x (\theta v))}{p(x)}\right) \mathbbm{1}_{[0, \frac{w}{2})}(\theta)
				\mathbbm{1}_{\mathsf{A}\setminus\mathrm{Cut}(x)}\big(\mathrm{Exp}_x \circ \Phi_x (\theta v)\big)\\
				\times \mathrm{Leb}_{1}(\d \theta)\, \nu_{\widehat{\mathfrak{g}}}(\d v)
			\end{multlined}\\
			&\quad \geq
				\begin{multlined}[t]
					\frac{1}{2mw a}  \int_{\mathbb{S}^{d-1}} \int_{ [0, \infty) } 
					\left(1 \wedge \frac{p(\varphi_{x}^{-1}(\theta v))}{p(x)}\right) \mathbbm{1}_{[0, \frac{w}{2})}(\theta) \mathbbm{1}_{\mathsf{A}\setminus\mathrm{Cut}(x)}\big(\varphi_{x}^{-1}(\theta v)\big) \mathbbm{1}_{\Phi_x^{-1}(D_{\mathrm{inj}}(x))}(\theta v)\\
					\times \mathrm{Leb}_{1}(\d \theta)\, \nu_{\widehat{\mathfrak{g}}}(\d v)
				\end{multlined}\\
			&\quad =
				\frac{1}{2mw a}  \int_{\mathbb{S}^{d-1}} \int_{ [0, \infty) } 
				\left(1 \wedge \frac{p(\varphi_{x}^{-1}(\theta v))}{p(x)}\right) \mathbbm{1}_{[0, \frac{w}{2})}(\theta) \mathbbm{1}_{\varphi_{x}(\mathsf{A}\setminus\mathrm{Cut}(x))}(\theta v)\ 
				 \mathrm{Leb}_{1}(\d \theta)\, \nu_{\widehat{\mathfrak{g}}}(\d v)\\
			&\quad =\frac{1}{2mw a}  \int_{\mathbb{R}^{d}} \|y\|^{1-d}
			\left(1 \wedge \frac{p(\varphi_{x}^{-1}(y))}{p(x)}\right) \mathbbm{1}_{[0, \frac{w}{2})}(\|y\|) \mathbbm{1}_{\varphi_x(\mathsf{A}\setminus\mathrm{Cut}(x))}(y)\ \mathrm{Leb}_{d}(\d y),
	\end{align*}
	where $ a := \nu_{\widehat{\mathfrak{g}}}(\mathbb{S}^{d-1})$.
	Hence by assumption
	\[
		\widetilde{K}(x,\mathsf{A})
		\geq \frac{1}{2mw}  \int_{\varphi_x(\mathsf{A}\setminus\mathrm{Cut}(x))} \|y\|^{1-d}
		\left(1 \wedge \frac{p(\varphi_{x}^{-1}(y))}{p(x)}\right) \ \mathrm{Leb}_{d}(\d y).
	\]
	Since $\pi$ is absolutely continuous with respect to the Riemannian measure, $\pi(\mathsf{A}) > 0$ implies $\nu_{\mathfrak{g}}(\mathsf{A})> 0$.
	Moreover, we have $p(\varphi_{x}^{-1}(y))> 0$ for all $y \in \varphi_{x}(\mathsf{A} \setminus \mathrm{Cut}(x))$ as $\mathsf{A} \subseteq \mathsf{W}$.
	Therefore by Remark \ref{R: Relation Riemannian and Lebesgue mass}, 
	the right hand side is an integral of a strictly positive function over a set of positive Lebesgue mass,
	which implies $\widetilde{K}(x, \mathsf{A}) > 0$.
\end{proof}
The Riemannian metric $\mathfrak{g}$ induces a metric on $\mathsf{M}$, which we denote as $\mathrm{dist}$, see \cite[Section V.3]{Boothby}.
For any set $\mathsf{A} \subseteq \mathsf{M}$ and $x \in \mathsf{M}$, we denote the distance from $x$ to $\mathsf{A}$ as $\mathrm{dist}(x,\mathsf{A})= \inf\{\mathrm{dist}(x,y)\mid y \in \mathsf{A}\}$.
\begin{proof}[of Lemma \ref{L: Irreducibility and aperiodicity}]
	Let $\mathsf{A} \in \mathcal{B}(\mathsf{W})$ with $\pi(\mathsf{A})> 0$.
	Because $\mathsf{M}$ is Lindelöff, the $\sigma$-subadditivity of $\pi$ yields, that there exists some $x_0 \in \mathsf{A}$ such that $\pi(\mathsf{A} \cap B_{w/4}(x_0)) > 0$,
	where $B_{w/4}(x_0)= \{y \in \mathsf{M} \mid\mathrm{dist}(x_0,y) < w/4\}$ denotes the metric ball with radius $w/4$ centred at $x_0$.
	Set $\mathsf{B}_1=\mathsf{A} \cap B_{w/4}(x_0)$ and define $\mathsf{B}_{n+1}=\{x \in \mathsf{W} \mid \mathrm{dist}(x, \mathsf{B}_n) < w/2\}$ for all $n \in \mathbb{N}$ inductively.
	Let $x \in \mathsf{B}_1$ and $y \in \varphi_{x}(\mathsf{B}_1\setminus\mathrm{Cut}(x))$.
	Then there exists some $z \in \mathsf{B}_1\setminus\mathrm{Cut}(x)$ such that $\varphi_{x}(z)=y$.
	By triangle inequality holds $\mathrm{dist}(x,z) \leq \mathrm{dist}(x,x_0) + \mathrm{dist}(x_0,z) < w/2$,
	such that there exist $v \in \mathbb{S}^{d-1}_x$ and $\theta \in [0, w/2)$ with $\mathrm{Exp}_x(\theta v) = z$.
	Since $\Phi_x$ is an isometry, this implies $\|y\| = \theta < w/2$.
	Therefore Lemma \ref{L: reach a given set} implies $\widetilde{K}(x, \mathsf{A}) \geq\widetilde{K}(x, \mathsf{B}_1) > 0$ for all $x \in \mathsf{B}_1$.
	We show next that for $x \in \mathsf{B}_{n+1}$ holds $\widetilde{K}(x, \mathsf{B}_n) > 0$ for all $n \in \mathbb{N}$.
	To this end, let $n \in \mathbb{N}$ and fix $x \in \mathsf{B}_{n+1}$.
	By construction of $\mathsf{B}_{n+1}$, the intersection $\mathsf{C}_x =B_{w/2}(x) \cap \mathsf{B}_{n}$ is non-empty and open as the intersection of two open sets.
	Therefore the definition of the Riemannian measure implies $\nu_{\mathfrak{g}}(\mathsf{C}_x)> 0$.
	Since also $\mathsf{C}_x \subseteq \mathsf{W}$, we have $\pi(\mathsf{C}_x) > 0$.
	Moreover, by similar arguments as above, for all  $y \in \varphi_{x}(\mathsf{C}_x\setminus\mathrm{Cut}(x))$ holds $\|y\| = \theta < w/2$.
	As $\mathsf{C}_x \subseteq \mathsf{B}_n$, we get $\widetilde{K}(x, \mathsf{B}_n) \geq \widetilde{K}(x, \mathsf{C}_x) > 0$ by Lemma \ref{L: reach a given set}.
	
	We now show $\widetilde{K}^n(x, \mathsf{A}) > 0$ for all $x \in \mathsf{B}_n$ and all $n \in \mathbb{N}$ by induction:
	The case $n=1$ is already treated. Assuming that the statement holds for some $n \in \mathbb{N}$,
	we get for all $x \in \mathsf{B}_{n+1}$ from $\widetilde{K}(x, \mathsf{B}_{n})> 0$ that
	\[
		\widetilde{K}^{n+1}(x,\mathsf{A}) = \int_{\mathsf{W}} \widetilde{K}^n(y, \mathsf{A}) \ \widetilde{K}(x,\d y)
		\geq \int_{\mathsf{B}_n} \widetilde{K}^n(y, \mathsf{A}) \ \widetilde{K}(x,\d y) > 0.
	\]
	Since $\mathsf{W}$ is by assumption path connected \citep[see][Propositions 4.23 and 4.26]{LeeTopological}, we have
	$\bigcup_{n \in \mathbb{N}} \mathsf{B}_n = \mathsf{W}$.
	This implies that $\mathsf{A}$ is accessible, such that $\widetilde{K}$ is indeed $\pi$-irreducible.
	
	To show aperiodicity, let now $\mathsf{A} \in \mathcal{B}(\mathsf{W})$ be accessible.
	Note that for an irreducible kernel the invariant probability measure is always a maximal irreducibility measure, see \cite[Theorem 9.2.15]{douc2018markov}.
	Hence $\pi(\mathsf{A}) > 0$.
	As $\mathsf{B}_{n} \subseteq \mathsf{B}_{n+1}$ for all $n \in \mathbb{N}$,
	the previous considerations imply that for all $x \in \mathsf{W}$ there exists some $k \in \mathbb{N}$ with $\widetilde{K}^n(x, \mathsf{A}) > 0$ for all $n \geq k$.
	Therefore, $\widetilde{K}$ is also aperiodic, see \cite[Theorem 9.3.10]{douc2018markov}.
\end{proof}

	We now turn to proving that, as noted in Remark \ref{R: Harris recurence}, the convergence in total variation distance in Theorem \ref{Thm: Reversibility} holds for all starting points $x \in \mathsf{W}$ if the hyperparameters of the geodesic slice sampler satisfy $mw \leq \mathrm{inj}(\mathsf{M})$.
	But first we start with elaborating on the alterations to the geodesic slice sampler that preserve this property for any choice of hyperparameters $m \in \mathbb{N}$ and $w \in (0, \infty)$, also mentioned in Remark \ref{R: Harris recurence}.
\begin{remark}\label{R: Modification of geodesic slice sampling}
	Assume in the setting of Theorem \ref{Thm: Reversibility} that the injectivity radius $\mathrm{inj}(\mathsf{M})$ of the Riemannian manifold $\mathsf{M}$ is strictly positive.
	We present two ways to modify the geodesic slice sampler that result in a sampler that converges to the target distribution $\pi$ in total variation distance for all starting points $x \in \mathsf{W}$
	and essentially allows a free choice of the hyperparameters $m$ and $w$.
	Firstly, for the first iteration of the geodesic slice sampler started at an arbitrary starting point in $\mathsf{W}$ we can choose the hyperparameters $m$ and $w$ such that $mw \leq \mathrm{inj}(\mathsf{M})$.
		This way, we simulate an initial distribution that is absolutely continuous with respect to the target distribution $\pi$,
		and after this first iteration we may fix the hyperparameters $m$ and $w$ regardless of the injectivity radius.
	
		Alternatively, we can fix some (small) probability $r \in (0, 1)$ and two sets of hyperparameters $(m_1,w_1)\in \mathbb{N} \times (0, \infty)$ and $(m_2, w_2)\in \mathbb{N} \times (0, \infty)$,
		where $(m_1, w_1)$ satisfies $m_1w_1 \leq \mathrm{inj}(\mathsf{M})$ and $(m_2, w_2)$ does not need to obey any further restrictions.
		We now determine for each iteration at random which set of hyperparameters to use:
		Denoting with $K_1$ the geodesic slice sampler that uses the hyperparameters $m_1$ and $w_1$, and with $K_2$ the geodesic slice sampler with hyperparameters $m_2$ and $w_2$,
		the kernel $K_0 := r K_1 + (1-r) K_2$ satisfies $\lim_{n \to \infty} d_{tv}(K_0^n(x, \cdot), \pi) = 0$ for all starting points $x \in \mathsf{W}$.
		A proof of these claims can be found at the end of this section and Lemma \ref{L: absolute continuity of the geodesic slice sampler}.
\end{remark}

\begin{lemma}\label{L: absolute continuity of the geodesic slice sampler}
	Impose Assumptions \ref{Ass: Manifolds} and \ref{Ass: Unnormalized density}, and let $\pi$ be defined as in \eqref{Eq: Definition target density}.
	If the hyperparameters $m \in \mathbb{N}$ and $w \in (0, \infty)$ satisfy $mw \leq  \mathrm{inj}(\mathsf{M})$,
	then for all $x \in \mathsf{W}$  the measure $\widetilde{K}(x, \cdot)$ is absolutely continuous with respect to $\pi$. 
\end{lemma}

\begin{proof}
	Fix $x \in \mathsf{W}$, and let $\mathsf{A} \in \mathcal{B}(\mathsf{W})$ with $\pi(\mathsf{A}) = 0$.
	Then we have $\nu_{\mathfrak{g}}(\mathsf{A}) = 0$, as $p > 0$ on $\mathsf{W}$.
	By Remark \ref{R: Relation Riemannian and Lebesgue mass} this implies $\mathrm{Leb}_{d}(\varphi_x(\mathsf{A}\setminus \mathrm{Cut}(x))) = 0$.
	With a similar change of variables argument as in the proof of Lemma \ref{L: reach a given set}, we get
	\begin{align*}
		&\mathrm{Leb}_{d}\big(\varphi_x(\mathsf{A}\setminus \mathrm{Cut}(x))\big)
		= \int_{\mathbb{S}^{d-1}} \int_{[0, \infty)} \mathbbm{1}_{\varphi_x(\mathsf{A} \setminus \mathrm{Cut}(x))} (\theta v) \theta^{d-1}\ \mathrm{Leb}_{1}(\d \theta)\, \nu_{\widehat{\mathfrak{g}}}(\d v)\\
		&\qquad= \int_{\mathbb{S}_x^{d-1}} \int_{[0, t_{\mathrm{cut}}(x,v))} \mathbbm{1}_{\mathsf{A}}\left(\gamma_{(x,v)}(\theta)\right)\theta^{d-1}\ \mathrm{Leb}_{1}(\d \theta)\, \nu_{\widehat{\mathfrak{g}}_x}(\d v)\\
		&\qquad= \frac{1}{2} \int_{\mathbb{S}_x^{d-1}} \int_{(-t_{\mathrm{cut}}(x,-v), t_{\mathrm{cut}}(x,v))} \mathbbm{1}_{\mathsf{A}}\left(\gamma_{(x,v)}(\theta)\right)\theta^{d-1}\ \mathrm{Leb}_{1}(\d \theta)\, \nu_{\widehat{\mathfrak{g}}_x}(\d v),
	\end{align*}
	where for the last equality we used that $\nu_{\widehat{\mathfrak{g}}_x}$ inherits rotation invariance from $\nu_{\widehat{\mathfrak{g}}}$.
	Define $\widetilde{\mathsf{A}} := \{(\theta, v) \in \mathbb{R}\times \mathbb{S}_x^{d-1} \mid \gamma_{(x,v)}(\theta v) \in \mathsf{A}\}$.
	Since by assumption $mw \leq \mathrm{inj}(\mathsf{M}) \leq t_{\mathrm{cut}}(x,v)$ for all $v \in \mathbb{S}_x^{d-1}$,
	we obtain that $\widetilde{\mathsf{A}}$ is a null set for the product measure $\mathrm{Leb}_{1}\vert_{[-mw, mw]} \otimes \nu_{\widehat{\mathfrak{g}}_x}$.
	Since $Q_{\mathsf{L}(x,v,t)}^{\ell,r}(0, \cdot)$ is absolutely continuous with respect to $\mathrm{Leb}_{1}\vert_{(\ell,r)}$ by virtue of \cite[Proof of Theorem 2.10]{ReversibilityEllipticalSliceSampler}, and $(\ell,r) \subseteq [-mw, mw]$ holds $\xi_{\mathsf{L}(x,v,t)}^{(0)}$-almost surely by definition of the stepping-out distributions,
	this implies $\widetilde{K}(x, \mathsf{A}) = 0$.
\end{proof}

\begin{lemma}\label{L: criterion for Harris recurrence}
	Let $K_1$ and $K_2$ be two Markov kernels on a measurable space $(\mathsf{X}, \mathcal{X})$, and let $r \in (0, 1)$.
	Assume that the kernel $K := r K_1 + (1-r)K_2$ on $(\mathsf{X}, \mathcal{X})$ has an invariant probability measure $\pi$ and is $\pi$-irreducible.
	If $K_1(x, \cdot)$ is absolutely continuous with respect to $\pi$ for all $x \in \mathsf{X}$, then $K$ is Harris recurrent.
\end{lemma}

\begin{proof}
	Let $h : \mathsf{X} \to [0, \infty)$ be a bounded harmonic function of $K$, i.e. $h$ is a bounded measurable function that satisfies $Kh = h$,
	where $Kh$ denotes the function $x \mapsto \int_{\mathsf{X}} h(y)\ K(x, \d y)$.
	We show that $h$ is constant.
	Since $K$ is recurrent (see e.g. \cite[Theorem 10.1.6]{douc2018markov}), this then implies by  \cite[Theorem 2]{tierney1994markov} that $K$ is Harris recurrent.
	
	As $K$ is recurrent, by \cite[Proposition 3.13]{nummelin1984general} the function $h$ is $\pi$-almost surely constant with value $\pi(h)$,
	where $\pi(h)$ denotes the integral of $h$ with respect to $\pi$.
	Define $g: \mathsf{X} \to \mathbb{R}, x \mapsto h(x) - \pi(h)$.
	The function $g$ is $\pi$-almost surely equal to zero and satisfies $Kg = g$.
	Since $K_1(x, \cdot)$ is absolutely continuous with respect to $\pi$, we get 
	\[
	g(x) = Kg(x) = r K_1g(x) + (1-r) K_2g(x) = (1-r) K_2g(x),
	\]
	for all $x \in \mathsf{X}$.
	Induction over $n$, yields $K_2^ng = 1/(1-r)^n g$ for all $n \in \mathbb{N}$.
	Since $g$ is bounded, this implies that $g(x) = 0$ for all $x \in \mathsf{X}$, which is equivalent to $h(x) = \pi(h)$ for all $x \in \mathsf{X}$.
\end{proof}

\begin{proof}[of Remark \ref{R: Harris recurence} and Remark \ref{R: Modification of geodesic slice sampling}]
	If $mw \leq \mathrm{inj}(\mathsf{M})$, then $\widetilde{K}(x, \cdot)$ is absolutely continuous with respect to $\pi$ for all $x \in \mathsf{W}$ by Lemma \ref{L: absolute continuity of the geodesic slice sampler}.
	In conjecture with the reversibility of $\widetilde{K}$ with respect to $\pi$ and Lemma 5, this yields that $\widetilde{K}$ is Harris recurrent, 
	see \cite[Corollary 1]{tierney1994markov}.
	Hence, we get $\lim_{n \to \infty} d_{tv}(\widetilde{K}^n(x,\cdot), \pi) = 0$ for all $x \in \mathsf{W}$, see \cite[Theorem 1]{tierney1994markov}.
	Moreover, combining Lemma \ref{L: absolute continuity of the geodesic slice sampler} and Lemma \ref{L: criterion for Harris recurrence}, 
	we obtain $\lim_{n \to \infty} d_{tv}(K_0^n(x,\cdot), \pi) = 0$ for all $x \in \mathsf{W}$ again by \cite[Theorem 1]{tierney1994markov},
	where $K_0$ is defined as in Remark \ref{R: Modification of geodesic slice sampling}.
\end{proof}

\section{Properties of the stepping-out procedure}\label{Sec: Properties of stepping-out}

In this section we provide the proofs of Lemma \ref{L: Key identity for reversiblity - measurable functions} and Lemma \ref{L: stepping out under rotation}.
To this end, we suppose that we are in the setting of Supplementary material \ref{Sec: Stepping-out} summarised as follows:
\begin{assumption}\label{Set: Stepping-out}
	Fix $w \in (0, \infty)$ and $m \in \mathbb{N}$, and let $\mathsf{S} \in \mathcal{B}(\mathbb{R})$.
	For all $\theta \in \mathbb{R}$ define the sequences $(L_i^{(\theta)})_{i \in \mathbb{N}}$ and $( R_j^{(\theta)})_{j \in \mathbb{N}}$ as in \eqref{Eq: Definition L_i, R_j}, and the stopping times $\tau_{\mathsf{S}}^{(\theta)}$ and $\mathfrak{T}_{\mathsf{S}}^{(\theta)}$ as in \eqref{Eq: Definition stopping times}.
\end{assumption}
In the proof of both lemmas we push a corresponding property of the sequences $(L_i^{(\theta)})_{i \in \mathbb{N}}$ and $( R_j^{(\theta)})_{j \in \mathbb{N}}$ to the stopped sequences.
Observe that it is relatively simple to establish the required properties for $(L_i^{(\theta)})_{i \in \mathbb{N}}$ and $( R_j^{(\theta)})_{j \in \mathbb{N}}$, because their joint distributions are available explicitly.

\begin{lemma}\label{L: joint distribution}
	Suppose Assumption \ref{Set: Stepping-out} is satisfied.
	Let $i,j \in \mathbb{N}$ and $\theta \in \mathbb{R}$.
	We have for all $\mathsf{A}_1, \ldots, \mathsf{A}_i, \mathsf{B}_1, \ldots, \mathsf{B}_j \in \mathcal{B}(\mathbb{R})$ that
	\begin{align*}
		&\mathbb{P}\Big(L_1^{( \theta)}\in \mathsf{A}_1, \ldots, L_i^{( \theta)} \in \mathsf{A}_i, R_1^{(\theta)} \in \mathsf{B}_1, \ldots, R_j^{(\theta)}\in \mathsf{B}_j\Big)\\
		&\qquad= \frac{1}{w} \int_{0}^{w} \prod_{k = 1}^i \mathbbm{1}_{\mathsf{A}_k}\big(\theta-u- (k-1)w\big) \cdot \prod_{l=1}^j \mathbbm{1}_{\mathsf{B}_l}(\theta-u+lw) \ \mathrm{Leb}_1(\d u).
	\end{align*} 
\end{lemma}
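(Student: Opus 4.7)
The plan is simply to unwrap the definitions. By \eqref{Eq: Definition L_i, R_j}, every $L_k^{(\theta)}$ and every $R_l^{(\theta)}$ is an affine deterministic function of the single random variable $\Upsilon \sim \mathrm{Unif}([0,w])$. In particular, the event
\[
\{L_1^{(\theta)}\in \mathsf{A}_1, \ldots, L_i^{(\theta)} \in \mathsf{A}_i, R_1^{(\theta)} \in \mathsf{B}_1, \ldots, R_j^{(\theta)}\in \mathsf{B}_j\}
\]
is exactly the preimage under $\Upsilon$ of
\[
\mathsf{C} := \bigcap_{k=1}^{i} \{u \in \mathbb{R} \mid \theta - u - (k-1)w \in \mathsf{A}_k\} \cap \bigcap_{l=1}^{j} \{u \in \mathbb{R} \mid \theta - u + lw \in \mathsf{B}_l\},
\]
and each of the intersected sets is measurable since $\mathsf{A}_k, \mathsf{B}_l \in \mathcal{B}(\mathbb{R})$ and translation is measurable.

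Next, I would rewrite the desired probability as the expectation of a product of indicator functions, namely
\[
\mathbb{E}\left[\prod_{k=1}^{i}\mathbbm{1}_{\mathsf{A}_k}\!\big(\theta-\Upsilon-(k-1)w\big)\prod_{l=1}^{j}\mathbbm{1}_{\mathsf{B}_l}(\theta-\Upsilon+lw)\right],
\]
and then invoke the transfer formula with respect to the distribution of $\Upsilon$. Since $\Upsilon$ has Lebesgue density $w^{-1}\mathbbm{1}_{[0,w]}$, the expectation immediately rewrites as
\[
\frac{1}{w}\int_{0}^{w}\prod_{k=1}^{i}\mathbbm{1}_{\mathsf{A}_k}\!\big(\theta-u-(k-1)w\big)\prod_{l=1}^{j}\mathbbm{1}_{\mathsf{B}_l}(\theta-u+lw)\,\mathrm{Leb}_1(\d u),
\]
which is precisely the right-hand side of the claim.

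There is no real obstacle: the statement is a direct computation based on the fact that all $2$d-many random variables appearing in the joint distribution are deterministic functions of the single uniform variable $\Upsilon$. The only thing one needs to be mildly careful about is the indexing convention (matching $(i-1)w$ on the left sequence and $jw$ on the right sequence with \eqref{Eq: Definition L_i, R_j}); once this bookkeeping is in place, the pushforward argument closes the proof.
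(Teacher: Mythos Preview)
Your proof is correct and follows essentially the same approach as the paper: both recognize that all the $L_k^{(\theta)}$ and $R_l^{(\theta)}$ are deterministic affine functions of the single uniform variable $\Upsilon$, rewrite the event as an indicator product, and integrate against the density $w^{-1}\mathbbm{1}_{[0,w]}$ of $\Upsilon$. The paper's version is slightly terser but the argument is identical.
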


\begin{proof}
	Let $i,j \in \mathbb{N}$, $\theta \in \mathbb{R}$ and $\mathsf{A}_1, \ldots, \mathsf{A}_i, \mathsf{B}_1, \ldots, \mathsf{B}_j \in \mathcal{B}(\mathbb{R})$.
	Then, as $\Upsilon \sim \mathrm{Unif}([0,w])$, we have
	\begin{align*}
		&\mathbb{P}\Big(L_1^{( \theta)}\in \mathsf{A}_1, \ldots, L_i^{( \theta)} \in \mathsf{A}_i, R_1^{(\theta)} \in \mathsf{B}_1, \ldots, R_j^{(\theta)}\in \mathsf{B}_j\Big)\\
		&\qquad=\mathbb{P}\Big(\theta-\Upsilon-(k-1)w \in \mathsf{A}_k,\, k \in \{1,\ldots, i\},\ \theta-\Upsilon + lw \in \mathsf{B}_l,\, l \in \{1,\ldots, j\}\Big)\\
		&\qquad= \frac{1}{w} \int_{0}^{w} \prod_{k = 1}^i \mathbbm{1}_{\mathsf{A}_k}(\theta-u- (k-1)w) \cdot \prod_{l=1}^j \mathbbm{1}_{\mathsf{B}_l}(\theta-u+lw) \ \mathrm{Leb}_1(\d u).	
	\end{align*}
\end{proof}

We present partitions of certain events that facilitate the extension of properties of the sequences $(L_i^{(\theta)})_{i \in \mathbb{N}}$ and $( R_j^{(\theta)})_{j \in \mathbb{N}}$ to the stepping-out distribution.

\begin{lemma}\label{L: Partitions}
	Impose Assumption \ref{Set: Stepping-out}. 
	Then for all $\theta, \alpha \in \mathbb{R}$ we have
	\begin{enumerate}
		\item $	\begin{aligned}[t]\left\{ L_i^{(\theta)} < \alpha < R_j^{(\theta)} \right\} 
			&= \bigsqcup_{k = 0}^{i-1} \left\{ L_{k+1}^{(\theta)} < \alpha < L_k^{(\theta)} \right\} \ \sqcup \ \bigsqcup_{l = 1}^{j-1} \left\{ R_l^{(\theta)} < \alpha < R_{l+1}^{(\theta)} \right\} \\
			&\qquad \sqcup\ \bigsqcup_{k=1}^{i-1}\left\{L_k^{(\theta)} = \alpha \right\} \ \sqcup\ \bigsqcup_{l=1}^{i-1}\left\{R_l^{(\theta)} = \alpha \right\} \text{ for all } i,j \in \mathbb{N},\end{aligned}$ 
		\label{L: Representation of Li <= y <= Rj}
		\item $\begin{aligned}[t]\{\boldsymbol{L}_{\mathsf{S}}^{(\theta)} < \alpha < L_0^{(\theta)}\} &= \bigsqcup_{i = 1}^m \{ L_{\tau_{\mathsf{S}}^{(\theta)}-i+1}^{(\theta)} < \alpha < 	 L_{\tau_{\mathsf{S}}^{(\theta)}-i}^{(\theta)}\, ,\ \tau_{\mathsf{S}}^{(\theta)} \geq i\} \\
			&\qquad \sqcup\ \bigsqcup_{i' = 1}^{m-1}\{L_{i'}^{(\theta)} = \alpha,\ \tau_{\mathsf{S}}^{(\theta)}\geq i'+1\},\end{aligned}$
		\label{L: Partition of Ltau <= < <0 L0}
		\item $\begin{aligned}[t] \{R_1^{(\theta)} < \alpha < \boldsymbol{R}_{\mathsf{S}}^{(\theta)}\} &= \bigsqcup_{j = 1}^{m-1} \{ R_{\mathfrak{T}_{\mathsf{S}}^{(\theta)}-j}^{(\theta)} < 	\alpha < 	 R_{\mathfrak{T}_{\mathsf{S}}^{(\theta)}-j+1}^{(\theta)}\,,\ \mathfrak{T}_{\mathsf{S}}^{(\theta)} \geq j+1\}\\
			&\qquad  \sqcup \ \bigsqcup_{j'=2}^{m-1} \left\{R_{j'}^{(\theta)} = \alpha, \mathfrak{T}_{\mathsf{S}}^{(\theta)} \geq j'+1\right\},\end{aligned}$
		\label{L: Partition of R1 < 0< RT}
		\item $\mathbb{P}\left( L_n^{(\theta)} = \alpha \right)	= \mathbb{P}\left( R_n^{(\theta)} = \alpha \right) = 0$ for all $n \in \mathbb{N}$.
		\label{L: Overlap is nullset}
	\end{enumerate}
	
\end{lemma}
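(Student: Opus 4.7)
The plan is to exploit the strict monotonicity from \eqref{Eq: Monotonicity L, R} together with the explicit affine form $L_n^{(\theta)} = \theta - \Upsilon - (n-1)w$ and $R_n^{(\theta)} = \theta - \Upsilon + nw$ from \eqref{Eq: Definition L_i, R_j}, and to reduce (2) and (3) to (1) by stratifying on the corresponding stopping time. Claim (4) is immediate: the event $\{L_n^{(\theta)} = \alpha\}$ equals $\{\Upsilon = \theta - \alpha - (n-1)w\}$, a single atom for the continuous law $\mathrm{Unif}([0,w])$, and the same reasoning handles $R_n^{(\theta)}$.

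For (1), I would invoke the identifications $L_0^{(\theta)} = R_1^{(\theta)}$ and $R_0^{(\theta)} = L_1^{(\theta)}$ from \eqref{Eq: Convention for L_0,R_0} together with \eqref{Eq: Monotonicity L, R} to arrange the points as a strictly ordered chain
\[
L_i^{(\theta)} < L_{i-1}^{(\theta)} < \cdots < L_1^{(\theta)} < L_0^{(\theta)} = R_1^{(\theta)} < R_2^{(\theta)} < \cdots < R_j^{(\theta)}.
\]
Any $\alpha$ with $L_i^{(\theta)} < \alpha < R_j^{(\theta)}$ then lies in exactly one of the open intervals between consecutive points of this chain or coincides with exactly one intermediate point, and listing these cases exhausts the left-hand side. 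The one spot that merits care is the junction $L_0^{(\theta)} = R_1^{(\theta)}$: the $k = 0$ interval $(L_1^{(\theta)}, L_0^{(\theta)})$ together with the $l = 1$ equality $\{R_1^{(\theta)} = \alpha\}$ cover the half-open interval $(L_1^{(\theta)}, R_1^{(\theta)}]$, so no point is missed.

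For (2), I would partition on the value of $\tau_\mathsf{S}^{(\theta)} \in \{1, \ldots, m\}$ and apply (1) inside each stratum with parameters $i = n$, $j = 1$, which, using $L_0^{(\theta)} = R_1^{(\theta)}$, gives
\[
\{L_n^{(\theta)} < \alpha < L_0^{(\theta)}\} = \bigsqcup_{k=0}^{n-1} \{L_{k+1}^{(\theta)} < \alpha < L_k^{(\theta)}\} \sqcup \bigsqcup_{k=1}^{n-1} \{L_k^{(\theta)} = \alpha\}.
\]
Reindexing via $i = n - k$ in the interval sum and $i' = k$ in the equality sum, then swapping the order of summation and collecting the strata into $\{\tau_\mathsf{S}^{(\theta)} \geq i\}$ and $\{\tau_\mathsf{S}^{(\theta)} \geq i' + 1\}$ respectively, yields the claimed decomposition. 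Statement (3) follows by the symmetric argument applied to the $R$-sequence with $R_0^{(\theta)} = L_1^{(\theta)}$; the strict inequality $\alpha > R_1^{(\theta)}$ excludes $\{R_1^{(\theta)} = \alpha\}$, which is why the equality index starts at $j' = 2$ rather than at $1$.

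The main obstacle is purely bookkeeping: no step is conceptually hard, but the reindexing in (2) and (3) must be carried out carefully, and one must verify that the upper summation bounds match the stopping-time constraints \eqref{Eq: Bound tau, T}, in particular that the equality sums terminate at $m-1$ rather than $m$ because the truncations $J$ and $m+1-J$ prevent the stopping times from exceeding $m$.
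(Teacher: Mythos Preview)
Your argument is correct. For parts (1) and (4) it coincides with the paper's proof: monotonicity of the chain for (1), and the observation that $\{L_n^{(\theta)}=\alpha\}$ pins the continuous variable $\Upsilon$ to a single point for (4).

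For parts (2) and (3) you take a slightly different route from the paper. The paper argues directly at the level of a fixed $\upomega$: given $\upomega$ in the left-hand side, it locates the unique index $\widehat{k}$ (or $k'$) with $L_{\widehat{k}+1}^{(\theta)}(\upomega) < \alpha < L_{\widehat{k}}^{(\theta)}(\upomega)$ (or $L_{k'}^{(\theta)}(\upomega)=\alpha$), sets $k = \tau_\mathsf{S}^{(\theta)}(\upomega) - \widehat{k}$, and checks the converse inclusion by hand. You instead stratify on $\{\tau_\mathsf{S}^{(\theta)} = n\}$, apply (1) with $(i,j)=(n,1)$ inside each stratum, and then swap the order of the resulting double union to collect the strata back into $\{\tau_\mathsf{S}^{(\theta)} \geq i\}$. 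Your approach makes the dependence on (1) explicit and handles disjointness automatically via the $\bigsqcup$ in (1); the paper's approach avoids the double-union reindexing but has to verify both inclusions separately. Both are elementary and of comparable length.
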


\begin{proof}
	\emph{To \ref{L: Representation of Li <= y <= Rj}:}
	Observe that the monotonicity property \eqref{Eq: Monotonicity L, R} of $(L_k^{(\theta)})_{k \in \mathbb{N}}$ and $( R_l^{(\theta)})_{l \in \mathbb{N}}$ implies the statement.

	\emph{To \ref{L: Partition of Ltau <= < <0 L0}:}
	Let $\upomega \in \{\boldsymbol{L}_{\mathsf{S}}^{(\theta)} < \alpha < L_0^{(\theta)}\}$.
	Then there exists 
	\begin{align*}
		\widehat{k} &\in \{0, \ldots, \tau_{\mathsf{S}}^{(\theta)}(\upomega) - 1\} \text{ such that } L_{\widehat{k}+1}^{(\theta)}(\upomega) < \alpha < L_{\widehat{k}}^{(\theta)}(\upomega), \text{ or}\\
		k' &\in \{1,\ldots, \tau_{\mathsf{S}}^{(\theta)}(\upomega) - 1\} \text{ such that } L_{k'}^{(\theta)}(\upomega)= \alpha.
	\end{align*}
	In the first case, choosing $k = \tau_{\mathsf{S}}^{(\theta)}(\upomega) - \widehat{k}$ and taking \eqref{Eq: Bound tau, T} into account, 
	we get $k \in \{1, \ldots, m\}$, $k \leq \tau_{\mathsf{S}}^{(\theta)}(\upomega)$ and $L_{\tau_{\mathsf{S}}^{(\theta)}(\upomega)-k+1}^{(\theta)}(\upomega) < \alpha < 	 L_{\tau_{\mathsf{S}}^{(\theta)}(\upomega)-k}^{(\theta)}(\upomega)$.
	Or, in the second case, $k' \in \{1, \ldots, m-1\}$, $k' + 1 \leq \tau_{\mathsf{S}}^{(\theta)}(\upomega)$ and  $L_{k'}^{(\theta)}(\upomega)= \alpha$.
	Hence 
	\[ 
	\upomega \in \bigsqcup_{i = 1}^m \{ L_{\tau_{\mathsf{S}}^{(\theta)}-i+1}^{(\theta)} < \alpha < 	 L_{\tau_{\mathsf{S}}^{(\theta)}-i}^{(\theta)}\,,\ \tau_{\mathsf{S}}^{(\theta)} \geq i\} \ \sqcup\ \bigsqcup_{i' = 1}^{m-1}\{L_{i'}^{(\theta)} = \alpha, \tau_{\mathsf{S}}^{(\theta)}\geq i'+1\}. 
	\]
	Conversely, let there exists 
	\begin{align*}
		i &\in \{1, \ldots, m\} \text{ such that } \upomega \in  \{ L_{\tau_{\mathsf{S}}^{(\theta)}-i+1}^{(\theta)} < \alpha < 	 L_{\tau_{\mathsf{S}}^{(\theta)}-i}^{(\theta)}\,,\ \tau_{\mathsf{S}}^{(\theta)} \geq i\}, \text{ or}\\
		i' &\in \{1, \ldots, m-1\} \text{ such that } \{L_{i'}^{(\theta)} = \alpha, \tau_{\mathsf{S}}^{(\theta)}\geq i'+1\}.
	\end{align*}
	Then clearly $\upomega \in \{\boldsymbol{L}_{\mathsf{S}}^{(\theta)} < \alpha < L_0^{(\theta)}\}$.
	
	\emph{To \ref{L: Partition of R1 < 0< RT}:}
	We apply similar arguments as for the statement \ref{L: Partition of Ltau <= < <0 L0}.
	Let $\upomega \in \{R_1^{(\theta)} < \alpha < \boldsymbol{R}_{\mathsf{S}}^{(\theta)}\}$.
	Then there exists 
	\begin{align*}
		\widehat{k} &\in \{1, \ldots, \mathfrak{T}_{\mathsf{S}}^{(\theta)}(\upomega)-1\} \text{ such that } R_{\widehat{k}}^{(\theta)}(\upomega) < \alpha < R_{\widehat{k}+1}^{(\theta)}(\upomega), \text{ or}\\
		k'& \in \{2, \ldots, \mathfrak{T}_{\mathsf{S}}^{(\theta)}(\upomega) - 1\} \text{ such that } R_{k'}^{(\theta)}(\upomega) = \alpha.
	\end{align*}
	In the first case, again choosing $k = \mathfrak{T}_{\mathsf{S}}^{(\theta)}(\upomega) - \widehat{k}$ and observing \eqref{Eq: Bound tau, T},
	we get $k \in \{1, \ldots, m-1\}$, $k+1 \leq \mathfrak{T}_{\mathsf{S}}^{(\theta)}(\upomega)$ and $R_{\mathfrak{T}_{\mathsf{S}}^{(\theta)}(\upomega)-k}^{(\theta)}(\upomega) < \alpha < 	 R_{\mathfrak{T}_{\mathsf{S}}^{(\theta)}(\upomega)-k+1}^{(\theta)}(\upomega)$.
	Or, in the second case, we get $k' \in \{2, \ldots, m-1\}$, $k'+1 \leq \mathfrak{T}_{\mathsf{S}}^{(\theta)}(\upomega)$ and $R_{k'}^{(\theta)}(\upomega)= \alpha$.
	Thus 
	\[
	\upomega \in  \bigsqcup_{j = 1}^{m-1} \{ R_{\mathfrak{T}_{\mathsf{S}}^{(\theta)}-j}^{(\theta)} < \alpha < 	 R_{\mathfrak{T}_{\mathsf{S}}^{(\theta)}-j+1}^{(\theta)}\,,\ \mathfrak{T}_{\mathsf{S}}^{(\theta)} \geq j+1\}\ \sqcup \ \bigsqcup_{j'=2}^{m-1} \left\{R_{j'}^{(\theta)} = \alpha, \mathfrak{T}_{\mathsf{S}}^{(\theta)} \geq j'+1\right\}.
	\]
	Conversely, let 
	\begin{align*}
		\upomega& \in  \{ R_{\mathfrak{T}_{\mathsf{S}}^{(\theta)}-j}^{(\theta)} < \alpha < 	 R_{\mathfrak{T}_{\mathsf{S}}^{(\theta)}-j+1}^{(\theta)}, \mathfrak{T}_{\mathsf{S}}^{(\theta)} \geq j+1\} \text{ for some } j \in \{1, \ldots, m-1\}, \text{ or} \\
		\upomega &\in \left\{R_{j'}^{(\theta)} = \alpha, \mathfrak{T}_{\mathsf{S}}^{(\theta)} \geq j'+1\right\} \text{ for some } j' \in \{2, \ldots, m-1\}.
	\end{align*}
	Then we have $\upomega \in \{R_1^{(\theta)} < \alpha < \boldsymbol{R}_{\mathsf{S}}^{(\theta)}\}$, 
	as $\mathfrak{T}_{\mathsf{S}}^{(\theta)}(\upomega) - j \geq j+1-j = 1$ and $\mathfrak{T}_{\mathsf{S}}^{(\theta)}(\upomega) - j + 1 \leq  \mathfrak{T}_{\mathsf{S}}^{(\theta)}(\upomega) - 1 +1 = \mathfrak{T}_{\mathsf{S}}^{(\theta)}(\upomega)$.
	
	\emph{To \ref{L: Overlap is nullset}:}
	Let $n \in \mathbb{N}$.
	We have by definition of $ L_n^{(\theta)} $ that
	\begin{align*}
		\mathbb{P}\left( L_n^{(\theta)} = \alpha \right) 
		&= \frac{1}{w} \int_0^w \mathbbm{1}_{\{\alpha\}} \big(\theta-u - (n-1)w\big)\ \mathrm{Leb}_1(\d u)\\
		&\leq \frac{1}{w}\ \mathrm{Leb}_1\big(\{\theta-\alpha - (n-1)w\} \big)= 0. 
	\end{align*}
	The statement for $R_n^{(\theta)}$ follows analogously.
\end{proof}

The following lemma collects the properties of $(L_i^{(\theta)})_{i \in \mathbb{N}}$ and $( R_j^{(\theta)})_{j \in \mathbb{N}}$ and the stopped sequences which lead to the proof of Lemma \ref{L: Key identity for reversiblity - measurable functions}.

\begin{lemma}\label{L: Tools for switching lemma}
	Assume Assumption \ref{Set: Stepping-out} and let $\theta, \alpha \in \mathbb{R}$.
	\begin{enumerate}
		\item Let $i,j \in \mathbb{N}$ and $\mathsf{A}_1, \ldots, \mathsf{A}_i, \mathsf{B}_1, \ldots, \mathsf{B}_j \in \mathcal{B}(\mathbb{R})$.
		We have for all $q \in \{0, \ldots, i-1\}$ that
		\begin{align*}
			& \mathbb{P} \left( L_1^{(\theta)} \in \mathsf{A}_1, \ldots, L_i^{(\theta)} \in \mathsf{A}_i, R_1^{(\theta)} \in \mathsf{B}_1, \ldots, R_j^{(\theta)} \in \mathsf{B}_j, L_{q+1}^{(\theta)} < \alpha < L_{q}^{(\theta)} \right)\\
			& \quad = \mathbb{P} \left( R_q^{(\alpha)} \in \mathsf{A}_1, \ldots, R_1^{(\alpha)} \in \mathsf{A}_q, L_1^{(\alpha)} \in \mathsf{A}_{q+1}, \ldots, L_{i-q}^{(\alpha)} \in \mathsf{A}_i,\right. \\
			&\quad \qquad  \left. R_{q+1}^{(\alpha)} \in \mathsf{B}_1, \ldots, R_{j+q}^{(\alpha)} \in \mathsf{B}_j, R_q^{(\alpha)} < \theta < R_{q+1}^{(\alpha)} \right).
		\end{align*} 
		Observe that for $ q= 0 $ we use convention \eqref{Eq: Convention for L_0,R_0}.
		\label{L: Relation Y^x to Y^y}
		\item Let $i, j \in \mathbb{N}$.
		For all $\mathsf{A}, \mathsf{B} \in \mathcal{B}(\mathbb{R})$  and $q \in \{0, \ldots, i-1\}$ we have
		\begin{align*}
			& \mathbb{P}\left( \boldsymbol{L}_{\mathsf{S}}^{(\theta)} \in \mathsf{A},  \boldsymbol{R}_{\mathsf{S}}^{(\theta)}  \in \mathsf{B} , L_{q+1}^{(\theta)} < \alpha < L_q^{(\theta)}, 
			\tau_{\mathsf{S}}^{(\theta)} = i, \mathfrak{T}_{\mathsf{S}}^{(\theta)} = j \right)\\
			& \qquad = \mathbb{P}\left( \boldsymbol{L}_{\mathsf{S}}^{(\alpha)} \in \mathsf{A},  \boldsymbol{R}_{\mathsf{S}}^{(\alpha)}  \in \mathsf{B} , R_q^{(\alpha)} < \theta < R_{q+1}^{(\alpha)},\tau_{\mathsf{S}}^{(\alpha)} = i-q, \mathfrak{T}_{\mathsf{S}}^{(\alpha)} = j+q \right).
		\end{align*}
		\label{L: Relation stopped chains}
		\item For all $\mathsf{A},\mathsf{B} \in \mathcal{B}(\mathbb{R})$ holds
		\begin{align*}
			&\mathbb{P}\left( \boldsymbol{L}_{\mathsf{S}}^{(\theta)} \in \mathsf{A},  \boldsymbol{R}_{\mathsf{S}}^{(\theta)}  \in \mathsf{B} , \boldsymbol{L}_{\mathsf{S}}^{(\theta)} < \alpha < \boldsymbol{R}_{\mathsf{S}}^{(\theta)} \right)= \mathbb{P}\left( \boldsymbol{L}_{\mathsf{S}}^{(\alpha)} \in \mathsf{A},  \boldsymbol{R}_{\mathsf{S}}^{(\alpha)}  \in \mathsf{B} , \boldsymbol{L}_{\mathsf{S}}^{(\alpha)}< \theta < \boldsymbol{R}_{\mathsf{S}}^{(\alpha)} \right).
		\end{align*}
		\label{L: Key identity for indicator functions}
	\end{enumerate}
\end{lemma}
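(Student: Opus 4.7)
The plan is to prove items~\ref{L: Relation Y^x to Y^y}--\ref{L: Key identity for indicator functions} sequentially, each using its predecessor, with item~\ref{L: Relation Y^x to Y^y} reduced to the explicit joint density of Lemma~\ref{L: joint distribution}. For item~\ref{L: Relation Y^x to Y^y}, I will write both sides as integrals over $u \in [0,w]$ via Lemma~\ref{L: joint distribution}. The event $\{L_{q+1}^{(\theta)} < \alpha < L_q^{(\theta)}\}$ is encoded as $\mathbbm{1}\bigl(u \in (\theta - \alpha - qw,\, \theta - \alpha - (q-1)w)\bigr)$, while $\{R_q^{(\alpha)} < \theta < R_{q+1}^{(\alpha)}\}$ becomes $\mathbbm{1}\bigl(u' \in (\alpha - \theta + qw,\, \alpha - \theta + (q+1)w)\bigr)$. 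The affine substitution $u' = u + \alpha - \theta + qw$ bijects these two effective sub-regions of $[0,w]$. Under this substitution the algebraic identities
\[
	\theta - u - (k-1)w = \alpha - u' + (q-k+1)w,\qquad \theta - u + lw = \alpha - u' + (q+l)w,
\]
realize the correspondences $L_k^{(\theta)} \leftrightarrow R_{q-k+1}^{(\alpha)}$ for $k \in\{1,\ldots,q\}$, $L_k^{(\theta)} \leftrightarrow L_{k-q}^{(\alpha)}$ for $k \in \{q+1,\ldots,i\}$, and $R_l^{(\theta)} \leftrightarrow R_{q+l}^{(\alpha)}$ for $l \in \{1,\ldots,j\}$, so that the products of indicators on both sides match term by term; the edge case $q=0$ is absorbed by the convention $L_0^{(\theta)}=R_1^{(\theta)}$.

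For item~\ref{L: Relation stopped chains}, I will condition on the auxiliary uniform random variable $J$. Given $J^{(\theta)} = k$, the event $\{\tau_\mathsf{S}^{(\theta)}=i,\mathfrak{T}_\mathsf{S}^{(\theta)}=j\}$ forces $k \in \{i,\ldots,m+1-j\}$ and splits into four subcases according to whether $\tau^{(\theta)}$ (resp.\ $\mathfrak{T}^{(\theta)}$) is attained by exit from $\mathsf{S}$ (requiring $k>i$, resp.\ $k<m+1-j$, and enforcing $L_i^{(\theta)}\notin \mathsf{S}$, resp.\ $R_j^{(\theta)}\notin \mathsf{S}$) or by truncation at $J$ (requiring $k=i$, resp.\ $k=m+1-j$). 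In each subcase, item~\ref{L: Relation Y^x to Y^y} applied with the choice $\mathsf{A}_1=\cdots=\mathsf{A}_{i-1}=\mathsf{S}$, $\mathsf{B}_1=\cdots=\mathsf{B}_{j-1}=\mathsf{S}$, and $\mathsf{A}_i \in \{\mathsf{A},\mathsf{A}\cap\mathsf{S}^c\}$, $\mathsf{B}_j \in \{\mathsf{B},\mathsf{B}\cap\mathsf{S}^c\}$ depending on the case, directly equates the conditional probability on the $\theta$-side with the corresponding conditional probability on the $\alpha$-side. Since $J$ is independent of $\Upsilon$ and uniform on $\{1,\ldots,m\}$, and since $q\leq i-1$ guarantees that the shift $k\mapsto k-q$ bijects $\{i,\ldots,m+1-j\}$ onto $\{i-q,\ldots,m+1-j-q\}\subseteq \{1,\ldots,m\}$, summing the four subcases over $k$ completes item~\ref{L: Relation stopped chains}. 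The main obstacle is precisely this bookkeeping: one must verify that the membership coupling $L_i^{(\theta)}\in\mathsf{S} \Leftrightarrow L_{i-q}^{(\alpha)}\in\mathsf{S}$ (and the analogous correspondences from item~\ref{L: Relation Y^x to Y^y}) maps the four stopping-case combinations on one side bijectively to the four on the other.

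For item~\ref{L: Key identity for indicator functions}, I will apply Lemma~\ref{L: Partitions} to decompose, up to the $\mathbb{P}$-null sets from Lemma~\ref{L: Partitions}(\ref{L: Overlap is nullset}), the event $\{\boldsymbol{L}_\mathsf{S}^{(\theta)} < \alpha < \boldsymbol{R}_\mathsf{S}^{(\theta)},\tau_\mathsf{S}^{(\theta)}=i,\mathfrak{T}_\mathsf{S}^{(\theta)}=j\}$ as a disjoint union of $\{L_{q+1}^{(\theta)} < \alpha < L_q^{(\theta)}\}$ for $q=0,\ldots,i-1$ together with $\{R_l^{(\theta)} < \alpha < R_{l+1}^{(\theta)}\}$ for $l=1,\ldots,j-1$, each intersected with the stopping-time event. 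Item~\ref{L: Relation stopped chains} controls each left-hand piece, while its right-hand analogue, obtained verbatim by interchanging the roles of $L$ and $R$ throughout the proof of item~\ref{L: Relation stopped chains}, controls each right-hand piece under the reindexing $(i',j')=(i+l,j-l)$. Summing over $(i,j)$ together with $q$ or $l$ and recognising that the collected $\alpha$-side terms are precisely Lemma~\ref{L: Partitions}(\ref{L: Partition of Ltau <= < <0 L0})--(\ref{L: Partition of R1 < 0< RT}) applied at $\theta$ to the $\alpha$-process yields the claimed equality.
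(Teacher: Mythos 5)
Your proposal is correct and follows essentially the same route as the paper's proof: part 1 via the explicit joint density of Lemma \ref{L: joint distribution} and the affine substitution $u' = u + \alpha - \theta + qw$, part 2 by conditioning on $J$ and reducing to part 1 with the sets chosen as $\mathsf{S}$, $\mathsf{A}$ or $\mathsf{A}\cap\mathsf{S}^{\mathrm{c}}$, $\mathsf{B}$ or $\mathsf{B}\cap\mathsf{S}^{\mathrm{c}}$ according to whether the stopping is by exit or truncation, and part 3 via the partition of Lemma \ref{L: Partitions} and resummation. The only cosmetic difference is in part 3, where the paper handles the $\{R_q^{(\theta)}<\alpha<R_{q+1}^{(\theta)}\}$ terms by simply reading part 2 backwards with $\theta$ and $\alpha$ (and the indices $i+q$, $j-q$) interchanged rather than deriving a separate $L$--$R$ swapped analogue as you suggest.
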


\begin{proof}
	\emph{To \ref{L: Relation Y^x to Y^y}:}
	Let $i,j \in \mathbb{N}$, $q \in \{0, \ldots, i-1\}$  and $\mathsf{A}_1, \ldots, \mathsf{A}_i, \mathsf{B}_1, \ldots, \mathsf{B}_j \in \mathcal{B}(\mathbb{R})$.
	By Lemma \ref{L: joint distribution} we have
	\begin{align*}
		& \mathbb{P} \left( L_1^{(\theta)} \in \mathsf{A}_1, \ldots, L_i^{(\theta)} \in \mathsf{A}_i, R_1^{(\theta)} \in \mathsf{B}_1, \ldots, R_j^{(\theta)} \in \mathsf{B}_j, L_{q+1}^{(\theta)} < \alpha < L_{q}^{(\theta)} \right)\\
		&\qquad = \frac{1}{w} \int_0^w \prod_{k = 1}^i \mathbbm{1}_{\mathsf{A}_k}\big(\theta-u - (k-1)w\big)\cdot \prod_{l=1}^j \mathbbm{1}_{\mathsf{B}_l}\big(\theta-u+ l w\big) \\
		&\qquad\qquad\qquad\qquad\cdot \mathbbm{1}_{(-\infty, \alpha)}(\theta-u - qw) \mathbbm{1}_{(\alpha, \infty)}\big(\theta-u - (q-1)w\big)\ \mathrm{Leb}_1(\d u).
	\end{align*}
	Substituting $\widetilde{u} = u - \theta + \alpha + q w$, we obtain
	\begin{align*}
		& \mathbb{P} \left( L_1^{(\theta)} \in \mathsf{A}_1, \ldots, L_i^{(\theta)} \in \mathsf{A}_i, R_1^{(\theta)} \in \mathsf{B}_1, \ldots, R_j^{(\theta)} \in \mathsf{B}_j, L_{q+1}^{(\theta)} < \alpha < L_{q}^{(\theta)} \right)\\
		&\qquad = \frac{1}{w} \int_{\mathbb{R}}\mathbbm{1}_{(0,w)}(\widetilde{u}+\theta-\alpha-qw) \prod_{k = 1}^i \mathbbm{1}_{\mathsf{A}_k}\big(\alpha-\widetilde{u} - (k-q-1)w\big)\\
		&\qquad\qquad \cdot \prod_{l=1}^j \mathbbm{1}_{\mathsf{B}_l}\big(\alpha-\widetilde{u}+(l+q) w\big) 
		\cdot \mathbbm{1}_{(-\infty, \alpha)}(\alpha-\widetilde{u}) \mathbbm{1}_{(\alpha, \infty)}\big(\alpha-\widetilde{u}+w\big)\ \mathrm{Leb}_1(\d \widetilde{u})\\
		&\qquad = \frac{1}{w} \int_{\mathbb{R}}\mathbbm{1}_{(-\infty, \theta)}(\alpha-\widetilde{u}+qw) \mathbbm{1}_{(\theta, \infty)}\big(\alpha-\widetilde{u}+(q+1)w\big) \prod_{k' = 1}^{i-q}
		\mathbbm{1}_{\mathsf{A}_{k'+q}}\big(\alpha-\widetilde{u} - (k'-1)w\big)\\
		&\qquad\qquad\prod_{l'=1}^{q} \mathbbm{1}_{\mathsf{A}_{q+1-l'}}\big(\alpha-\widetilde{u}+l' w\big)  \cdot \prod_{l'=q+1}^{j+q} \mathbbm{1}_{\mathsf{B}_{l'-q}}\big(\alpha-\widetilde{u}+l' w\big) 
		\cdot \mathbbm{1}_{(0,w)}(\widetilde{u})\ \mathrm{Leb}_1(\d \widetilde{u}).
	\end{align*}
	Applying again Lemma \ref{L: joint distribution}, we get
	\begin{align*}
		& \mathbb{P} \left( L_1^{(\theta)} \in \mathsf{A}_1, \ldots, L_i^{(\theta)} \in \mathsf{A}_i, R_1^{(\theta)} \in \mathsf{B}_1, \ldots, R_j^{(\theta)} \in \mathsf{B}_j, L_{q+1}^{(\theta)} < \alpha < L_{q}^{(\theta)} \right)\\
		&\qquad = \mathbb{P} \left( R_q^{(\alpha)} \in \mathsf{A}_1, \ldots, R_1^{(\alpha)} \in \mathsf{A}_q, L_1^{(\alpha)} \in \mathsf{A}_{q+1}, \ldots, L_{i-q}^{(\alpha)} \in \mathsf{A}_i,\right. \\
		&\qquad \qquad\qquad  \left. R_{q+1}^{(\alpha)} \in \mathsf{B}_1, \ldots, R_{j+q}^{(\alpha)} \in \mathsf{B}_j, R_q^{(\alpha)} < \theta < R_{q+1}^{(\alpha)} \right).
	\end{align*}

	\emph{To \ref{L: Relation stopped chains}:}
	Let $i, j \in \mathbb{N}$, $q \in\{0, \ldots, i-1\}$ and $\mathsf{A},\mathsf{B} \in \mathcal{B}(\mathbb{R})$.
	If $ i+j > m+1 $, the statement is true by \eqref{Eq: Bound tau, T}, which implies that both probabilities are zero.
	We now consider $ i+j \leq m +1$.
	To this end, for all $n, \widetilde{n} \in \mathbb{N}$ with $n +\widetilde{n} \leq m+1$ and $\upiota \in \{n, \ldots, m+1-\widetilde{n}\}$ set
	\begin{align*}
		\mathsf{A}_k^{\upiota, n,\widetilde{n}} =  \mathsf{S},\quad k \in \{1, \ldots, n-1\} \qquad &\text{and} \qquad \mathsf{A}_n^{\upiota, n, \widetilde{n}} = \begin{dcases} \mathsf{A}, & \upiota = n\\ \mathsf{A} \cap (\mathbb{R} \setminus \mathsf{S}), & \upiota > n,\end{dcases}\\
		\mathsf{B}_l^{\upiota,n, \widetilde{n}} =  \mathsf{S},\quad  l \in \{1, \ldots, \widetilde{n}-1\} \qquad &\text{and} \qquad 
		\mathsf{B}_{\widetilde{n}}^{\upiota,  n, \widetilde{n}} =  \begin{dcases} \mathsf{B}, & \upiota = m+1-\widetilde{n}\\ \mathsf{B} \cap( \mathbb{R} \setminus \mathsf{S}), & \upiota < m+1-\widetilde{n}.\end{dcases}
	\end{align*}
	Then for $q \in \{0, \ldots , n-1\}$ holds
	\begin{equation}\label{Eq: Identity for sets in switching lemma for stepping-out}
		\begin{split}
			\mathsf{A}_{k}^{\upiota - q, n - q,\widetilde{n} + q} &= \mathsf{A}_{k+q}^{\upiota, n,\widetilde{n}}, \qquad \forall \ k \in \{1,\ldots, n-q\},\\
			\mathsf{B}_l^{\upiota - q, n-q, \widetilde{n}+q} & = \begin{dcases}
				\mathsf{A}_{q-l+1}^{\upiota, n,\widetilde{n}}, & l \leq q,\\
				\mathsf{B}_{l-q}^{\upiota, n,\widetilde{n}}, & l > q,\qquad \forall\ l \in \{1, \ldots, \widetilde{n}+q\}.
			\end{dcases} 
		\end{split}
	\end{equation}
	We can use these sets to express events of the form $ \{\boldsymbol{L}_{\mathsf{S}}^{(\theta)} \in \mathsf{A},  \boldsymbol{R}_{\mathsf{S}}^{(\theta)}  \in \mathsf{B} , \tau_{\mathsf{S}}^{(\theta)} = n, \mathfrak{T}_{\mathsf{S}}^{(\theta)} = \widetilde{n}\}$.
	Namely, observe that by \eqref{Eq: Bound tau, T} we have $\tau_{\mathsf{S}}^{(\theta)} \leq J \leq m+1-\mathfrak{T}_{\mathsf{S}}^{(\theta)} $.
	Together with the independence of $J$ from $(L_k^{(\theta)})_{k \in \mathbb{N}}$ and $(R_l^{(\theta)})_{l \in \mathbb{N}}$, we get
	\begin{equation}\label{Eq: Partition of stopping event}
		\begin{split}
			& \mathbb{P}\left( \boldsymbol{L}_{\mathsf{S}}^{(\theta)} \in \mathsf{A},  \boldsymbol{R}_{\mathsf{S}}^{(\theta)}  \in \mathsf{B} , 
			\tau_{\mathsf{S}}^{(\theta)} = n, \mathfrak{T}_{\mathsf{S}}^{(\theta)} = \widetilde{n}\right)\\
			&\quad = \sum_{\upiota = n}^{m+1-\widetilde{n}} \mathbb{P}\left( \boldsymbol{L}_{\mathsf{S}}^{(\theta)} \in \mathsf{A},  \boldsymbol{R}_{\mathsf{S}}^{(\theta)}  \in \mathsf{B} ,\tau_{\mathsf{S}}^{(\theta)} = n, \mathfrak{T}_{\mathsf{S}}^{(\theta)} = \widetilde{n}, J = \upiota \right)\\
			& \quad = \sum_{\upiota = n}^{m+1-\widetilde{n}} \mathbb{P}\left( L_1^{(\theta)}\in \mathsf{A}_1^{\upiota,n, \widetilde{n}}, \ldots,  L_n^{(\theta)}\in \mathsf{A}_n^{\upiota,n, \widetilde{n}}, R_1^{(\theta)} \in \mathsf{B}_1^{\upiota,n, \widetilde{n}}, \ldots, R_{\widetilde{n}}^{(\theta)} \in \mathsf{B}_{\widetilde{n}}^{\upiota,n, \widetilde{n}}, J = \upiota\right)\\
			& \quad = \frac{1}{m} \sum_{\upiota = n}^{m+1-\widetilde{n}} \mathbb{P}\left( L_1^{(\theta)}\in \mathsf{A}_1^{\upiota,n, \widetilde{n}}, \ldots, L_n^{(\theta)}\in \mathsf{A}_n^{\upiota,n, \widetilde{n}},
			R_1^{(\theta)} \in \mathsf{B}_1^{\upiota,n, \widetilde{n}}, \ldots, R_{\widetilde{n}}^{(\theta)} \in \mathsf{B}_{\widetilde{n}}^{\upiota,n, \widetilde{n}}\right).
		\end{split}
	\end{equation}
	If we use this for $n = i$ and $\widetilde{n} = j$, we get
	\begin{align*}
		& \mathbb{P}\left( \boldsymbol{L}_{\mathsf{S}}^{(\theta)} \in \mathsf{A},  \boldsymbol{R}_{\mathsf{S}}^{(\theta)}  \in \mathsf{B} , L_{q+1}^{(\theta)} < \alpha < L_q^{(\theta)}, 
		\tau_{\mathsf{S}}^{(\theta)} = i, \mathfrak{T}_{\mathsf{S}}^{(\theta)} = j\right)\\
		& \quad = \frac{1}{m} \sum_{\upiota = i}^{m+1-j} \mathbb{P}\left( L_1^{(\theta)}\in \mathsf{A}_1^{\upiota,i,j}, \ldots, L_i^{(\theta)}\in \mathsf{A}_i^{\upiota,i,j},
		R_1^{(\theta)} \in \mathsf{B}_1^{\upiota,i,j}, \ldots, R_j^{(\theta)} \in \mathsf{B}_j^{\upiota,i,j}, L_{q+1}^{(\theta)} < \alpha < L_q^{(\theta)}\right).
	\end{align*}
	Applying statement \ref{L: Relation Y^x to Y^y} and then \eqref{Eq: Identity for sets in switching lemma for stepping-out} yields
	\begin{align*}
		& \mathbb{P}\left( \boldsymbol{L}_{\mathsf{S}}^{(\theta)} \in \mathsf{A},  \boldsymbol{R}_{\mathsf{S}}^{(\theta)}  \in \mathsf{B} , L_{q+1}^{(\theta)} < \alpha < L_q^{(\theta)}, 
		\tau_{\mathsf{S}}^{(\theta)} = i, \mathfrak{T}_{\mathsf{S}}^{(\theta)} = j\right)\\
		& \quad = \frac{1}{m} \sum_{\upiota = i}^{m+1-j} \mathbb{P} \left( R_q^{(\alpha)} \in \mathsf{A}_1^{\upiota,i,j}, \ldots, R_1^{(\alpha)} \in \mathsf{A}_q^{\upiota,i,j}, L_1^{(\alpha)} \in \mathsf{A}_{q+1}^{\upiota,i,j}, \ldots, L_{i-q}^{(\alpha)} \in \mathsf{A}_i^{\upiota,i,j},\right.\\
		&\quad\qquad\qquad\qquad\qquad\left. R_{q+1}^{(\alpha)} \in \mathsf{B}_1^{\upiota,i,j}, \ldots, R_{j+q}^{(\alpha)} \in \mathsf{B}_j^{\upiota,i,j}, R_q^{(\alpha)} < \theta < R_{q+1}^{(\alpha)} \right)\\
		& \quad = \frac{1}{m} \sum_{\upiota = i}^{m+1-j} \mathbb{P} \left( L_1^{(\alpha)} \in \mathsf{A}_{1}^{\upiota-q,i-q,j+q}, \ldots, L_{i-q}^{(\alpha)} \in \mathsf{A}_{i-q}^{\upiota-q,i-q,j+q},\right. \\
		& \quad \qquad\qquad \qquad \qquad \left. R^{(\alpha)}_1 \in \mathsf{B}_1^{\upiota-q,i-q,j+q},\ldots, R_{j+q}^{(\alpha)} \in \mathsf{B}_j^{\upiota-q,i-q,j+q}, R_q^{(\alpha)} < \theta < R_{q+1}^{(\alpha)} \right).
	\end{align*}
	Then doing an index shift and using \eqref{Eq: Partition of stopping event} for $n = i-q$ and $\widetilde{n}= j+q$, we obtain
	\begin{align*}
		& \mathbb{P}\left( \boldsymbol{L}_{\mathsf{S}}^{(\theta)} \in \mathsf{A},  \boldsymbol{R}_{\mathsf{S}}^{(\theta)}  \in \mathsf{B} , L_{q+1}^{(\theta)} < \alpha < L_q^{(\theta)}, 
		\tau_{\mathsf{S}}^{(\theta)} = i, \mathfrak{T}_{\mathsf{S}}^{(\theta)} = j\right)\\
		& \quad = \frac{1}{m} \sum_{\upiota' = i -q }^{m+1-(j+q)} \mathbb{P} \left( L_1^{(\alpha)} \in \mathsf{A}_{1}^{\upiota',i-q,j+q}, \ldots, L_{i-q}^{(\alpha)} \in \mathsf{A}_{i-q}^{\upiota',i-q,j+q},\right. \\
		& \quad \qquad\qquad \qquad \qquad \left. R^{(\alpha)}_1 \in \mathsf{B}_1^{\upiota',i-q,j+q},\ldots, R_{j+q}^{(\alpha)} \in \mathsf{B}_j^{\upiota',i-q,j+q}, R_q^{(\alpha)} < \theta < R_{q+1}^{(\alpha)} \right)\\
		& \quad = \mathbb{P}\left( \boldsymbol{L}_{\mathsf{S}}^{(\alpha)} \in \mathsf{A},  \boldsymbol{R}_{\mathsf{S}}^{(\alpha)}  \in \mathsf{B} , R_q^{(\alpha)} < \theta < R_{q+1}^{(\alpha)},\tau_{\mathsf{S}}^{(\alpha)} = i-q, \mathfrak{T}_{\mathsf{S}}^{(\alpha)} = j+q \right).
	\end{align*}

	\emph{To \ref{L: Key identity for indicator functions}:}
	Let $ \mathsf{A},\mathsf{B} \in \mathcal{B}(\mathbb{R}) $.
	Using \eqref{Eq: Bound tau, T}, we obtain
	\begin{align*}
		&\mathbb{P}\left( \boldsymbol{L}_{\mathsf{S}}^{(\theta)} \in \mathsf{A},  \boldsymbol{R}_{\mathsf{S}}^{(\theta)} \in \mathsf{B} , \boldsymbol{L}_{\mathsf{S}}^{(\theta)} < \alpha < \boldsymbol{R}_{\mathsf{S}}^{(\theta)} \right)\\
		&\qquad =\sum_{i = 1}^m \sum_{j = 1}^{m+1-i} \mathbb{P}\left( L_i^{(\theta)} \in \mathsf{A},  R_j^{(\theta)}  \in \mathsf{B} , L_i^{(\theta)} < \alpha < R_j^{(\theta)}, 
		\tau_{\mathsf{S}}^{(\theta)} = i, \mathfrak{T}_{\mathsf{S}}^{(\theta)} = j \right).
	\end{align*}
	By virtue of Lemma \ref{L: Partitions}.\ref{L: Representation of Li <= y <= Rj} and  \ref{L: Partitions}.\ref{L: Overlap is nullset} this yields
	\begin{align*}
		&\mathbb{P}\left( \boldsymbol{L}_{\mathsf{S}}^{(\theta)} \in \mathsf{A},  \boldsymbol{R}_{\mathsf{S}}^{(\theta)}  \in \mathsf{B} , \boldsymbol{L}_{\mathsf{S}}^{(\theta)} < \alpha < \boldsymbol{R}_{\mathsf{S}}^{(\theta)} \right)\\
		&\quad =  \sum_{i = 1}^m \sum_{j = 1}^{m+1-i}\left[ \sum_{q = 0}^{i-1} \mathbb{P}\left(\boldsymbol{L}_{\mathsf{S}}^{(\theta)} \in \mathsf{A},  
		\boldsymbol{R}_{\mathsf{S}}^{(\theta)} \in \mathsf{B} , L_{q+1}^{(\theta)} < \alpha < L_q^{(\theta)}, 
		\tau_{\mathsf{S}}^{(\theta)} = i, \mathfrak{T}_{\mathsf{S}}^{(\theta)} = j \right) \right. \\
		&\qquad \qquad \quad + 	\left.  \sum_{q = 1}^{j-1} \mathbb{P}\left( \boldsymbol{L}_{\mathsf{S}}^{(\theta)} \in \mathsf{A},  \boldsymbol{R}_{\mathsf{S}}^{(\theta)}  \in \mathsf{B} , R_q^{(\theta)} < \alpha < R_{q+1}^{(\theta)}, 
		\tau_{\mathsf{S}}^{(\theta)} = i, \mathfrak{T}_{\mathsf{S}}^{(\theta)} = j \right) \right].
	\end{align*}
	We apply statement \ref{L: Relation stopped chains} in the first sum straight forwardly.
	In the second sum we apply  statement \ref{L: Relation stopped chains} with $i+q$ and $j-q$ and reversed roles of $\theta$ and $\alpha$.
	Relocating the summand for $q = 0$ from the first to the second sum, we obtain
	\begin{align*}
		&\mathbb{P}\left( \boldsymbol{L}_{\mathsf{S}}^{(\theta)} \in \mathsf{A},  \boldsymbol{R}_{\mathsf{S}}^{(\theta)}  \in \mathsf{B} , \boldsymbol{L}_{\mathsf{S}}^{(\theta)} < \alpha < \boldsymbol{R}_{\mathsf{S}}^{(\theta)} \right)\\
		&\quad =  \sum_{i = 1}^m \sum_{j = 1}^{m+1-i}\left[ \sum_{q = 1}^{i-1} \mathbb{P}\left(\boldsymbol{L}_{\mathsf{S}}^{(\alpha)} \in \mathsf{A},  
		\boldsymbol{R}_{\mathsf{S}}^{(\alpha)} \in \mathsf{B} , R_q^{(\alpha)} < \theta < R_{q+1}^{(\alpha)}, 
		\tau_{\mathsf{S}}^{(\alpha)} = i-q, \mathfrak{T}_{\mathsf{S}}^{(\alpha)} = j+q \right) \right. \\
		&\qquad \qquad \quad + 	\left.  \sum_{q = 0}^{j-1} \mathbb{P}\left( \boldsymbol{L}_{\mathsf{S}}^{(\alpha)} \in \mathsf{A},  \boldsymbol{R}_{\mathsf{S}}^{(\alpha)}  \in \mathsf{B} , L_{q+1}^{(\alpha)} < \theta < L_q^{(\alpha)}, 
		\tau_{\mathsf{S}}^{(\alpha)} = i+q, \mathfrak{T}_{\mathsf{S}}^{(\alpha)} = j-q \right) \right]\\
		&\quad =  \sum_{i = 1}^m \sum_{j = 1}^{m+1-i}\left[ \sum_{q = 1}^{i-1} \mathbb{P}\left(\boldsymbol{L}_{\mathsf{S}}^{(\alpha)} \in \mathsf{A},  
		\boldsymbol{R}_{\mathsf{S}}^{(\alpha)} \in \mathsf{B} , R_{\mathfrak{T}_{\mathsf{S}}^{(\alpha)} - j}^{(\alpha)} < \theta < R_{\mathfrak{T}_{\mathsf{S}}^{(\alpha)} - j + 1}^{(\alpha)}, \mathfrak{T}_{\mathsf{S}}^{(\alpha)} = j+q , \right. \right.\\
		&\quad \hspace{10cm} \tau_{\mathsf{S}}^{(\alpha)} + \mathfrak{T}_{\mathsf{S}}^{(\alpha)} = i+j\bigg)  \\
		&\quad \qquad \qquad  + 	\left.  \sum_{q = 0}^{j-1} \mathbb{P}\left( \boldsymbol{L}_{\mathsf{S}}^{(\alpha)} \in \mathsf{A},  \boldsymbol{R}_{\mathsf{S}}^{(\alpha)}  \in \mathsf{B} , 
		L_{\tau_{\mathsf{S}}^{(\alpha)} -i+1}^{(\alpha)} < \theta < L_{\tau_{\mathsf{S}}^{(\alpha)} -i}^{(\alpha)}, 
		\tau_{\mathsf{S}}^{(\alpha)} = i+q,\right. \right . \\
		&  \quad \hspace{10cm} \tau_{\mathsf{S}}^{(\alpha)} + \mathfrak{T}_{\mathsf{S}}^{(\alpha)} = i+j \bigg) \Bigg].
	\end{align*}
	Then the lower bounds in \eqref{Eq: Bound tau, T} imply
	\begin{align*}
		&\mathbb{P}\left( \boldsymbol{L}_{\mathsf{S}}^{(\theta)} \in \mathsf{A},  \boldsymbol{R}_{\mathsf{S}}^{(\theta)}  \in \mathsf{B} , \boldsymbol{L}_{\mathsf{S}}^{(\theta)} < \alpha < \boldsymbol{R}_{\mathsf{S}}^{(\theta)} \right)\\
		&\quad =  \sum_{i = 1}^m \sum_{j = 1}^{m+1-i}\left[  \mathbb{P}\left(\boldsymbol{L}_{\mathsf{S}}^{(\alpha)} \in \mathsf{A},  
		\boldsymbol{R}_{\mathsf{S}}^{(\alpha)} \in \mathsf{B} , R_{\mathfrak{T}_{\mathsf{S}}^{(\alpha)} - j}^{(\alpha)} < \theta < R_{\mathfrak{T}_{\mathsf{S}}^{(\alpha)} - j + 1}^{(\alpha)}, 
		\mathfrak{T}_{\mathsf{S}}^{(\alpha)} \geq j+1\right.\right. ,\\
		&\quad \hspace{10cm} \tau_{\mathsf{S}}^{(\alpha)} + \mathfrak{T}_{\mathsf{S}}^{(\alpha)} = i+j\bigg)  \\
		&\qquad \qquad \quad + 	\left.  \mathbb{P}\left( \boldsymbol{L}_{\mathsf{S}}^{(\alpha)} \in \mathsf{A},  \boldsymbol{R}_{\mathsf{S}}^{(\alpha)}  \in \mathsf{B} , 
		L_{\tau_{\mathsf{S}}^{(\alpha)} -i+1}^{(\alpha)} < \theta < L_{\tau_{\mathsf{S}}^{(\alpha)} -i}^{(\alpha)}, 
		\tau_{\mathsf{S}}^{(\alpha)} \geq i, \tau_{\mathsf{S}}^{(\alpha)} + \mathfrak{T}_{\mathsf{S}}^{(\alpha)} = i+j \right) \right].
	\end{align*}
	Changing the order of summation, we obtain
	\begin{align*}
		&\mathbb{P}\left( \boldsymbol{L}_{\mathsf{S}}^{(\theta)} \in \mathsf{A},  \boldsymbol{R}_{\mathsf{S}}^{(\theta)}  \in \mathsf{B} , \boldsymbol{L}_{\mathsf{S}}^{(\theta)} < \alpha < \boldsymbol{R}_{\mathsf{S}}^{(\theta)} \right)\\
		&\quad =  \sum_{j = 1}^m \sum_{i = 1}^{m+1-j}  \mathbb{P}\left(\boldsymbol{L}_{\mathsf{S}}^{(\alpha)} \in \mathsf{A},  
		\boldsymbol{R}_{\mathsf{S}}^{(\alpha)} \in \mathsf{B} , R_{\mathfrak{T}_{\mathsf{S}}^{(\alpha)} - j}^{(\alpha)} < \theta < R_{\mathfrak{T}_{\mathsf{S}}^{(\alpha)} - j + 1}^{(\alpha)}, 
		\mathfrak{T}_{\mathsf{S}}^{(\alpha)} \geq j+1 ,\right.\\
		&\quad \hspace{10cm} \tau_{\mathsf{S}}^{(\alpha)} + \mathfrak{T}_{\mathsf{S}}^{(\alpha)} = i+j\bigg)  \\
		&\quad \qquad  + 	\sum_{i = 1}^m \sum_{j = 1}^{m+1-i}  \mathbb{P}\left( \boldsymbol{L}_{\mathsf{S}}^{(\alpha)} \in \mathsf{A},  \boldsymbol{R}_{\mathsf{S}}^{(\alpha)}  \in \mathsf{B} , 
		L_{\tau_{\mathsf{S}}^{(\alpha)} -i+1}^{(\alpha)} < \theta < L_{\tau_{\mathsf{S}}^{(\alpha)} -i}^{(\alpha)}, 
		\tau_{\mathsf{S}}^{(\alpha)} \geq i,\right.\\
		&\quad \hspace{10cm} \tau_{\mathsf{S}}^{(\alpha)} + \mathfrak{T}_{\mathsf{S}}^{(\alpha)} = i+j \bigg) .
	\end{align*}
	Together with the upper bound in \eqref{Eq: Bound tau, T} this yields
	\begin{align*}
		&\mathbb{P}\left( \boldsymbol{L}_{\mathsf{S}}^{(\theta)} \in \mathsf{A},  \boldsymbol{R}_{\mathsf{S}}^{(\theta)}  \in \mathsf{B} , \boldsymbol{L}_{\mathsf{S}}^{(\theta)} < \alpha < \boldsymbol{R}_{\mathsf{S}}^{(\theta)} \right)\\
		&\quad =  \sum_{j = 1}^m   \mathbb{P}\left(\boldsymbol{L}_{\mathsf{S}}^{(\alpha)} \in \mathsf{A},  
		\boldsymbol{R}_{\mathsf{S}}^{(\alpha)} \in \mathsf{B} , R_{\mathfrak{T}_{\mathsf{S}}^{(\alpha)} - j}^{(\alpha)} < \theta < R_{\mathfrak{T}_{\mathsf{S}}^{(\alpha)} - j + 1}^{(\alpha)}, 
		\mathfrak{T}_{\mathsf{S}}^{(\alpha)} \geq j+1 \right)  \\
		&\quad \qquad  + 	\sum_{i = 1}^m  \mathbb{P}\left( \boldsymbol{L}_{\mathsf{S}}^{(\alpha)} \in \mathsf{A},  \boldsymbol{R}_{\mathsf{S}}^{(\alpha)}  \in \mathsf{B} , 
		L_{\tau_{\mathsf{S}}^{(\alpha)} -i+1}^{(\alpha)} < \theta < L_{\tau_{\mathsf{S}}^{(\alpha)} -i}^{(\alpha)}, 
		\tau_{\mathsf{S}}^{(\alpha)} \geq i\right) .
	\end{align*}
	By virtue of Lemma \ref{L: Partitions}.\ref{L: Partition of Ltau <= < <0 L0}  to \ref{L: Partitions}.\ref{L: Overlap is nullset} we get
	\begin{align*}
		&\mathbb{P}\left( \boldsymbol{L}_{\mathsf{S}}^{(\theta)} \in \mathsf{A},  \boldsymbol{R}_{\mathsf{S}}^{(\theta)}  \in \mathsf{B} , \boldsymbol{L}_{\mathsf{S}}^{(\theta)} < \alpha < \boldsymbol{R}_{\mathsf{S}}^{(\theta)} \right)\\
		&\qquad =  \mathbb{P}\left(\boldsymbol{L}_{\mathsf{S}}^{(\alpha)} \in \mathsf{A},  \boldsymbol{R}_{\mathsf{S}}^{(\alpha)} \in \mathsf{B} , R_1^{(\alpha)} < \theta < \boldsymbol{R}_{\mathsf{S}}^{(\alpha)}\right)  
		+\mathbb{P}\left( \boldsymbol{L}_{\mathsf{S}}^{(\alpha)} \in \mathsf{A},  \boldsymbol{R}_{\mathsf{S}}^{(\alpha)}  \in \mathsf{B} ,	\boldsymbol{L}_{\mathsf{S}}^{(\alpha)} < \theta < L_0^{(\alpha)}\right) \\
		&\qquad = \mathbb{P}\left(\boldsymbol{L}_{\mathsf{S}}^{(\alpha)} \in \mathsf{A},  \boldsymbol{R}_{\mathsf{S}}^{(\alpha)} \in \mathsf{B} , \boldsymbol{L}_{\mathsf{S}}^{(\alpha)}  < \theta < \boldsymbol{R}_{\mathsf{S}}^{(\alpha)}\right).
	\end{align*}
\end{proof}

Lemma \ref{L: Key identity for reversiblity - measurable functions} is now an easy corollary of the previous lemma.

\begin{proof}[of Lemma \ref{L: Key identity for reversiblity - measurable functions}]
	Lemma \ref{L: Tools for switching lemma}.\ref{L: Key identity for indicator functions} yields that the finite measures
	\[  
	\int_{\mathbb{R}^2} \mathbbm{1}_{\mathsf{C}}(\ell,r) \mathbbm{1}_{(\ell, r)}(\alpha) \ \xi_{\mathsf{S}}^{(\theta)}\big(\d (\ell,r) \big), \qquad \mathsf{C} \in \mathcal{B}(\mathbb{R}^2),
	\]
	and
	\[  
	\int_{\mathbb{R}^2}\mathbbm{1}_{\mathsf{C}}(\ell,r) \mathbbm{1}_{(\ell,r) }(\theta) \ \xi_{\mathsf{S}}^{(\alpha)}\big(\d (\ell,r) \big), \qquad \mathsf{C} \in \mathcal{B}(\mathbb{R}^2),
	\]
	agree on the intersection stable generator $\{\mathsf{A} \times \mathsf{B} \mid \mathsf{A}, \mathsf{B} \in \mathcal{B}(\mathbb{R}) \}$ of $\mathcal{B}(\mathbb{R}^2)$.
	By \citep[Lemma 1.42]{Klenke}, this implies that these two measures are equal.
\end{proof}

Before addressing Lemma \ref{L: stepping out under rotation}, we take a brief detour, because some of the notation of the proof of Lemma \ref{L: Tools for switching lemma} is convenient to show measureability of the stepping-out distribution in the argument $\theta$.
\begin{remark}\label{R: Stepping-out distribution is a kernel}
	Impose Assumption \ref{Set: Stepping-out}. 
	Let $A, B \in \mathcal{B}(\mathbb{R})$, 
	and for all $n, \widetilde{n} \in \mathbb{N}$ with $n +\widetilde{n} \leq m+1$ and $\upiota \in \{n, \ldots, m+1-\widetilde{n}\}$ 
	define $\mathsf{A}_1^{\upiota, n,\widetilde{n}}, \ldots,\mathsf{A}_n^{\upiota, n,\widetilde{n}}$, $\mathsf{B}_1^{\upiota, n,\widetilde{n}}, \ldots,\mathsf{B}_{\widetilde{n}}^{\upiota, n,\widetilde{n}}$ as in the proof of Lemma \ref{L: stepping out under rotation}.
	Combining Tonnelli's theorem and Lemma \ref{L: joint distribution}, we get that
	\[
	\theta \mapsto \mathbb{P}\left( L_1^{(\theta)}\in \mathsf{A}_1^{\upiota,n, \widetilde{n}}, \ldots, L_n^{(\theta)}\in \mathsf{A}_n^{\upiota,n, \widetilde{n}},
	R_1^{(\theta)} \in \mathsf{B}_1^{\upiota,n, \widetilde{n}}, \ldots, R_{\widetilde{n}}^{(\theta)} \in \mathsf{B}_{\widetilde{n}}^{\upiota,n, \widetilde{n}}\right)
	\]
	is measurable for all $n, \widetilde{n} \in \mathbb{N}$ with $n +\widetilde{n} \leq m+1$ and $\upiota \in \{n, \ldots, m+1-\widetilde{n}\}$ .
	Thus also
	\begin{align*}
		\theta \mapsto &\xi_{\mathsf{S}}^{(\theta)}(A \times B)\\
		&\ = \frac{1}{m}  \sum_{n=1 }^{m} \sum_{\widetilde{n} = 1}^{m+1 -n}\sum_{\upiota = n}^{m+1-\widetilde{n}} \mathbb{P}\left( L_1^{(\theta)}\in \mathsf{A}_1^{\upiota,n, \widetilde{n}}, \ldots, L_n^{(\theta)}\in \mathsf{A}_n^{\upiota,n, \widetilde{n}},
		R_1^{(\theta)} \in \mathsf{B}_1^{\upiota,n, \widetilde{n}}, \ldots, R_{\widetilde{n}}^{(\theta)} \in \mathsf{B}_{\widetilde{n}}^{\upiota,n, \widetilde{n}}\right)
	\end{align*}
	is measurable.
	It is now straight forward to verify that $(\theta, C) \mapsto \xi_{\mathsf{S}}^{(\theta)}(C)$ is a transition kernel.
\end{remark}

We turn to the proof of Lemma \ref{L: stepping out under rotation}.
Its essential strategy is to leverage the following property of $(L_i)_{i \in \mathbb{N}}$ and $(R_j)_{j \in \mathbb{N}}$ to the stepping-out distribution.

\begin{lemma}\label{L: unstopped stepping put chain under rotation}
	Let Assumption \ref{Set: Stepping-out} hold.
	\begin{enumerate}
		\item Let $ \alpha \in \mathbb{R} $, $ i,j \in \mathbb{N}$ and $ \mathsf{A}_1, \ldots, \mathsf{A}_i, \mathsf{B}_1,\ldots,\mathsf{B}_j \in \mathcal{B}(\mathbb{R}) $.
		Then we have
		\begin{align*}
			&\mathbb{P} \left(L_1^{(\theta)}\in \Lambda_\alpha(\mathsf{A}_1), \ldots, L_i^{(\theta)}\in \Lambda_\alpha(\mathsf{A}_i),R_1^{(\theta)}\in \Lambda_\alpha(\mathsf{B}_1), \ldots, R_j^{(\theta)}\in \Lambda_\alpha(\mathsf{B}_j) \right)\\
			&\quad= \mathbb{P} \left(L_1^{\left(\Lambda_\alpha(\theta)\right)}\in \mathsf{B}_1, \ldots, L_j^{\left(\Lambda_\alpha(\theta)\right)}\in \mathsf{B}_j,R_1^{\left(\Lambda_\alpha(\theta)\right)}\in \mathsf{A}_1, \ldots, R_i^{\left(\Lambda_\alpha(\theta)\right)}\in \mathsf{A}_i \right).
		\end{align*} \label{L: unstopped chains under rotation - no stopping times yet}
		\item For all $ i \in \{1, \ldots,m\} $, $ j \in \{1,\ldots, m+1-i\} $, $\theta,\alpha \in \mathbb{R}$ and $\mathsf{A}, \mathsf{B} \in \mathcal{B}(\mathbb{R}) $ holds
		\begin{align*}
			&\mathbb{P}\left(L_i^{(\theta)}\in \mathsf{A},  R_j^{(\theta)}  \in \mathsf{B} , \tau_{\Lambda_\alpha(\mathsf{S})}^{(\theta)} = i, \mathfrak{T}_{\Lambda_\alpha(\mathsf{S})}^{(\theta)} = j \right)\\
			&\qquad = \mathbb{P}\left( L_j^{(\Lambda_\alpha(\theta))}\in \Lambda_\alpha^{-1}(\mathsf{B}),  R_i^{(\Lambda_\alpha(\theta))}  \in \Lambda_\alpha^{-1}(\mathsf{A}) , \tau_{\mathsf{S}}^{(\Lambda_\alpha(\theta))} = j, \mathfrak{T}_{\mathsf{S}}^{(\Lambda_\alpha(\theta))} = i \right).
		\end{align*} \label{L: unstopped chains under rotation - with stopping times}
	\end{enumerate}
\end{lemma}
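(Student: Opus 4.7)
The plan is to establish Part~\ref{L: unstopped chains under rotation - no stopping times yet} via a single substitution in the explicit joint density from Lemma~\ref{L: joint distribution}, and then to derive Part~\ref{L: unstopped chains under rotation - with stopping times} by expanding the stopping-time events into a sum over the independent uniform variable $J$ (exactly as in the proof of Lemma~\ref{L: Tools for switching lemma}.\ref{L: Relation stopped chains}) and applying Part~\ref{L: unstopped chains under rotation - no stopping times yet} termwise, finally matching the two resulting sums through the index substitution $\upiota \mapsto m+1-\upiota$.

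The conceptual core of Part~\ref{L: unstopped chains under rotation - no stopping times yet} is a direct computation: from the formulas in \eqref{Eq: Definition L_i, R_j},
\[
	\Lambda_\alpha\!\left(L_i^{(\theta)}\right) = \alpha - \theta + \Upsilon + (i-1)w, \qquad \Lambda_\alpha\!\left(R_j^{(\theta)}\right) = \alpha - \theta + \Upsilon - jw,
\]
and the substitution $\Upsilon' := w - \Upsilon \sim \mathrm{Unif}([0,w])$ transforms these into $\Lambda_\alpha(\theta) - \Upsilon' + iw$ and $\Lambda_\alpha(\theta) - \Upsilon' - (j-1)w$, i.e., into $R_i^{(\Lambda_\alpha(\theta))}$ and $L_j^{(\Lambda_\alpha(\theta))}$ respectively. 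To turn this observation into the claim, I will write the LHS probability using the integral representation of Lemma~\ref{L: joint distribution}, exchange each $\mathbbm{1}_{\Lambda_\alpha(\mathsf{A}_k)}(s)$ for $\mathbbm{1}_{\mathsf{A}_k}(\Lambda_\alpha(s))$ (using that $\Lambda_\alpha$ is an involution), and perform the change of variable $u \mapsto w - u$ on $(0,w)$; the resulting integral is exactly the Lemma~\ref{L: joint distribution} representation of the RHS probability.

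For Part~\ref{L: unstopped chains under rotation - with stopping times}, the strategy is first to rewrite
\[
	\mathbb{P}\!\left(L_i^{(\theta)} \in \mathsf{A},\, R_j^{(\theta)} \in \mathsf{B},\, \tau_{\Lambda_\alpha(\mathsf{S})}^{(\theta)} = i,\, \mathfrak{T}_{\Lambda_\alpha(\mathsf{S})}^{(\theta)} = j\right)
	= \frac{1}{m}\sum_{\upiota=i}^{m+1-j} \mathbb{P}\!\left(L_1^{(\theta)} \in \widetilde{\mathsf{A}}_1^{\upiota,i,j},\ldots,\, R_j^{(\theta)} \in \widetilde{\mathsf{B}}_j^{\upiota,i,j}\right),
\]
via the same $J$-conditioning argument as in \eqref{Eq: Partition of stopping event}, with sets $\widetilde{\mathsf{A}}_k^{\upiota,i,j},\widetilde{\mathsf{B}}_l^{\upiota,i,j}$ built from $\Lambda_\alpha(\mathsf{S})$, $\mathsf{A}$ and $\mathsf{B}$ in the manner specified there. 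Applying Part~\ref{L: unstopped chains under rotation - no stopping times yet} to each summand (using $\widetilde{\mathsf{A}}_k^{\upiota,i,j} = \Lambda_\alpha(\Lambda_\alpha^{-1}(\widetilde{\mathsf{A}}_k^{\upiota,i,j}))$ and similarly for $\widetilde{\mathsf{B}}$) produces a sum of probabilities of events for the chains rooted at $\Lambda_\alpha(\theta)$, in which $L$'s and $R$'s have swapped roles. The pulled-back sets then have the form $\Lambda_\alpha^{-1}(\widetilde{\mathsf{A}}_k^{\upiota,i,j}) = \mathsf{S}$ for $k<i$ and $\Lambda_\alpha^{-1}(\widetilde{\mathsf{A}}_i^{\upiota,i,j})$ equal to either $\Lambda_\alpha^{-1}(\mathsf{A})$ (when $\upiota = i$) or $\Lambda_\alpha^{-1}(\mathsf{A}) \cap (\mathbb{R}\setminus\mathsf{S})$ (when $\upiota > i$), and analogously for the $\widetilde{\mathsf{B}}$ sets.

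To close the argument I will apply the same expansion identity \eqref{Eq: Partition of stopping event} to the RHS of Part~\ref{L: unstopped chains under rotation - with stopping times} (with $i \leftrightarrow j$, $\mathsf{A} \to \Lambda_\alpha^{-1}(\mathsf{B})$, $\mathsf{B} \to \Lambda_\alpha^{-1}(\mathsf{A})$, and $\mathsf{S}$ unchanged), producing a sum indexed by $\upiota' \in \{j,\ldots,m+1-i\}$. The substitution $\upiota' = m+1-\upiota$ is a bijection between the two index ranges and carries the two distinguished values $\upiota=i,\, m+1-j$ (where the endpoint constraint is ``free of $\mathsf{S}$'') exactly onto $\upiota'=m+1-i,\, j$ respectively. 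The main obstacle, and the step requiring careful bookkeeping, is verifying that under this bijection every set matches precisely: I expect this to be routine but tedious, since one must keep track of the $L/R$ swap introduced by Part~\ref{L: unstopped chains under rotation - no stopping times yet} simultaneously with the $i \leftrightarrow j$ reindexing inherent in the stopping-time expansion on the RHS.
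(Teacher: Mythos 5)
Your proposal is correct and follows essentially the same route as the paper: Part 1 via the explicit joint density of Lemma \ref{L: joint distribution} combined with the change of variables $u \mapsto w-u$ and the involution property of $\Lambda_\alpha$, and Part 2 by conditioning on $J$ as in \eqref{Eq: Partition of stopping event}, applying Part 1 termwise, and recognizing the resulting sum as the $J$-expansion of the right-hand side under the reindexing $\upiota \mapsto m+1-\upiota$ (which the paper leaves implicit but your bookkeeping of the distinguished indices $\upiota = i$ and $\upiota = m+1-j$ makes explicit and is verified correctly).
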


\begin{proof}
	\emph{To \ref{L: unstopped chains under rotation - no stopping times yet}:}
	Let $ \alpha \in \mathbb{R} $, $ i,j \in \mathbb{N}$ and $ \mathsf{A}_1, \ldots, \mathsf{A}_i, \mathsf{B}_1,\ldots,\mathsf{B}_j \in \mathcal{B}(\mathbb{R}) $.
	We have by Lemma \ref{L: joint distribution}
	\begin{align*}
		&\mathbb{P} \left(L_1^{(\theta)}\in \Lambda_\alpha(\mathsf{A}_1), \ldots, L_i^{(\theta)}\in \Lambda_\alpha(\mathsf{A}_i),R_1^{(\theta)}\in \Lambda_\alpha(\mathsf{B}_1), \ldots, R_j^{(\theta)}\in \Lambda_\alpha(\mathsf{B}_j) \right)\\
		& \quad = \frac{1}{w} \int_{0}^{w} \prod_{k=1}^{i} \mathbbm{1}_{\Lambda_\alpha(\mathsf{A}_k)}\big(\theta-u-(k-1)w\big) \prod_{l=1}^{j} \mathbbm{1}_{\Lambda_\alpha(\mathsf{B}_l)}\big(\theta-u + lw\big) \ \mathrm{Leb}_1(\d u)\\
		& \quad = \frac{1}{w} \int_{0}^{w} \prod_{k=1}^{i} \mathbbm{1}_{\mathsf{A}_k}\big(\alpha - \theta+u+(k-1)w\big) \prod_{l=1}^{j} \mathbbm{1}_{\mathsf{B}_l}\big(\alpha - \theta+u-lw\big) \ \mathrm{Leb}_1(\d u).
	\end{align*}
	Performing a change of variables $ \widetilde{u} = -u+w $ yields
	\begin{align*}
		&\mathbb{P} \left(L_1^{(\theta)}\in \Lambda_\alpha(\mathsf{A}_1), \ldots, L_i^{(\theta)}\in \Lambda_\alpha(\mathsf{A}_i),R_1^{(\theta)}\in \Lambda_\alpha(\mathsf{B}_1), \ldots, R_j^{(\theta)}\in \Lambda_\alpha(\mathsf{B}_j) \right)\\
		& \quad = \frac{1}{w} \int_{0}^{w} \prod_{k=1}^{i} \mathbbm{1}_{\mathsf{A}_k}\big(\alpha - \theta-\widetilde{u}+kw\big) \prod_{l=1}^{j} \mathbbm{1}_{\mathsf{B}_l}\big(\alpha - \theta-\widetilde{u} - (l-1)w\big) \ \mathrm{Leb}_1(\d \widetilde{u}).
	\end{align*}
	Using again Lemma \ref{L: joint distribution}, we obtain
	\begin{align*}
		&\mathbb{P} \left(L_1^{(\theta)}\in \Lambda_\alpha(\mathsf{A}_1), \ldots, L_i^{(\theta)}\in \Lambda_\alpha(\mathsf{A}_i),R_1^{(\theta)}\in \Lambda_\alpha(\mathsf{B}_1), \ldots, R_j^{(\theta)}\in \Lambda_\alpha(\mathsf{B}_j) \right)\\
		&\quad= \mathbb{P} \left(L_1^{\left(\Lambda_\alpha(\theta)\right)}\in \mathsf{B}_1, \ldots, L_j^{\left(\Lambda_\alpha(\theta)\right)}\in \mathsf{B}_j,R_1^{\left(\Lambda_\alpha(\theta)\right)}\in \mathsf{A}_1, \ldots, R_i^{\left(\Lambda_\alpha(\theta)\right)}\in \mathsf{A}_i \right).
	\end{align*}
	
	\emph{To \ref{L: unstopped chains under rotation - with stopping times}:
		Let $ i \in \{1, \ldots,m\} $, $ j \in \{1,\ldots, m+1-i\} $, $\theta,\alpha \in \mathbb{R}$ and $\mathsf{A}, \mathsf{B} \in \mathcal{B}(\mathbb{R}) $.
		For $ \upiota \in \{i,\ldots, m+1-j\} $ we define
		\begin{align*}
			\mathsf{A}_k^{\upiota} = \Lambda_\alpha(\mathsf{S}),\quad k \in \{1, \ldots, i-1\} \qquad &\text{and} \qquad \mathsf{A}_i^{\upiota} = \begin{dcases} \mathsf{A}, & \upiota = i\\ \mathsf{A} \cap  (\mathbb{R} \setminus\Lambda_\alpha(\mathsf{S})), & \upiota > i,\end{dcases}\\
			\mathsf{B}_l^{\upiota} = \Lambda_\alpha(\mathsf{S}),\quad  l \in \{1, \ldots, j-1\} \qquad &\text{and} \qquad 
			\mathsf{B}_j^{\upiota} =  \begin{dcases} \mathsf{B}, & \upiota = m+1-j\\ \mathsf{B} \cap  (\mathbb{R} \setminus \Lambda_\alpha(\mathsf{S})), & \upiota > m+1-j,\end{dcases} 
		\end{align*}
		as well as
		\begin{align*}
			\widetilde{\mathsf{A}}_k^{\upiota} =  \mathsf{S},\quad k \in \{1, \ldots, i-1\} \qquad &\text{and} \qquad \widetilde{\mathsf{A}}_i^{\upiota} = \begin{dcases} \Lambda_\alpha^{-1}(\mathsf{A}), & \upiota = i\\ \Lambda_\alpha^{-1}(\mathsf{A}) \cap  (\mathbb{R} \setminus\mathsf{S}), & \upiota > i,\end{dcases}\\
			\widetilde{\mathsf{B}}^{\upiota}_l = \mathsf{S},\quad  l \in \{1, \ldots, j-1\} \qquad &\text{and} \qquad 
			\widetilde{\mathsf{B}}^{\upiota}_j =  \begin{dcases} \Lambda_\alpha^{-1}(\mathsf{B}),  &\upiota = m+1-j\\ \Lambda_\alpha^{-1}(\mathsf{B}) \cap  (\mathbb{R} \setminus\mathsf{S}), & \upiota > m+1-j.\end{dcases}
		\end{align*}
		Observe that 
		\begin{align*}
			\Lambda_\alpha(\widetilde{\mathsf{A}}_k^{\upiota}) = \mathsf{A}_k^{\upiota}, \qquad \text{and} \qquad 
			\Lambda_\alpha(\widetilde{\mathsf{B}}_l^{\upiota}) = \mathsf{B}_l^{\upiota}, \qquad \textup{for }k \in \{1,\ldots, i\}, l\in \{1,\ldots, j\},
		\end{align*}
		as $ \Lambda_\alpha(\mathbb{R}) = \mathbb{R}$.
		Due to $ J \sim \mathrm{Unif}(\{1, \ldots,m\}) $ being independent of $(L_k^{(\theta)})_{k \in \mathbb{N}}$ and $(R_l^{(\theta)})_{l \in \mathbb{N}}$, and \eqref{Eq: Bound tau, T}, 
		using statement \ref{L: unstopped chains under rotation - no stopping times yet} gives
		\begin{align*}
			&\mathbb{P}\left(L_i^{(\theta)}\in \mathsf{A},  R_j^{(\theta)}  \in \mathsf{B} , \tau_{\Lambda_\alpha(\mathsf{S})}^{(\theta)} = i, \mathfrak{T}_{\Lambda_\alpha(\mathsf{S})}^{(\theta)} = j \right)\\
			&\qquad = \sum_{\upiota = i}^{m+1-j} \mathbb{P}\left(L_1^{(\theta)}\in \mathsf{A}_1^{\upiota},\ldots, L_i^{(\theta)}\in \mathsf{A}_i^{\upiota},R_1^{(\theta)}\in \mathsf{B}_1^{\upiota},\ldots, R_j^{(\theta)}\in \mathsf{B}_j^{\upiota}\right) \mathbb{P}(J = \upiota)\\
			&\qquad =\frac{1}{m}\sum_{\upiota = i}^{m+1-j} \mathbb{P}\left(L_1^{(\theta)}\in \Lambda_\alpha(\widetilde{\mathsf{A}}_1^{\upiota}),\ldots, L_i^{(\theta)}\in \Lambda_\alpha(\widetilde{\mathsf{A}}_i^{\upiota}),R_1^{(\theta)}\in \Lambda_\alpha(\widetilde{\mathsf{B}}_1^{\upiota}),\ldots, R_j^{(\theta)}\in\Lambda_\alpha(\widetilde{\mathsf{B}}_j^{\upiota})\right)\\
			&\qquad =\frac{1}{m}\sum_{\upiota = i}^{m+1-j} \mathbb{P}\left(L_1^{(\Lambda_\alpha(\theta))}\in \widetilde{\mathsf{B}}_1^{\upiota},\ldots, L_j^{(\Lambda_\alpha(\theta))}\in \widetilde{\mathsf{B}}_j^{\upiota},R_1^{(\Lambda_\alpha(\theta))}\in \widetilde{\mathsf{A}}_1^{\upiota},\ldots, R_i^{(\Lambda_\alpha(\theta))}\in\widetilde{\mathsf{A}}_i^{\upiota}\right)\\
			&\qquad = \mathbb{P}\left( L_j^{(\Lambda_\alpha(\theta))}\in \Lambda_\alpha^{-1}(\mathsf{B}),  R_i^{(\Lambda_\alpha(\theta))}  \in \Lambda_\alpha^{-1}(\mathsf{A}) , \tau_{\mathsf{S}}^{(\Lambda_\alpha(\theta))} = j, \mathfrak{T}_{\mathsf{S}}^{(\Lambda_\alpha(\theta))} = i \right).
	\end{align*}}
\end{proof}

\begin{proof}[of Lemma \ref{L: stepping out under rotation}]
	Let $ \mathsf{A},\mathsf{B} \in \mathcal{B}(\mathbb{R}) $.
	Using \eqref{Eq: Bound tau, T}, we obtain
	\begin{equation*}
		\begin{split}
			\xi_{\Lambda_\alpha(\mathsf{S})}^{(\theta)}(\mathsf{A} \times \mathsf{B}) 
			&=\mathbb{P}\left( \boldsymbol{L}^{(\theta)}_{\Lambda_\alpha(\mathsf{S})}\in \mathsf{A},  \boldsymbol{R}^{(\theta)}_{\Lambda_\alpha(\mathsf{S})} \in \mathsf{B} \right)\\
			&= \sum_{i = 1}^m \sum_{j = 1}^{m+1-i} \mathbb{P}\left( L_i^{(\theta)}\in \mathsf{A},  R_j^{(\theta)}  \in \mathsf{B} , \tau_{\Lambda_\alpha(\mathsf{S})}^{(\theta)} = i, \mathfrak{T}_{\Lambda_\alpha(\mathsf{S})}^{(\theta)} = j \right).
		\end{split}
	\end{equation*}
	Applying Lemma \ref{L: unstopped stepping put chain under rotation}.\ref{L: unstopped chains under rotation - with stopping times} and a reordering of the sum yields
	\begin{align*}
		\xi_{\Lambda_\alpha(\mathsf{S})}^{(\theta)}(\mathsf{A} \times \mathsf{B}) 
		&= \sum_{j = 1}^m \sum_{i = 1}^{m+1-j} \mathbb{P}\left( L_j^{(\Lambda_\alpha(\theta))}\in \Lambda_\alpha^{-1}(\mathsf{B}),  R_i^{(\Lambda_\alpha(\theta))}  \in \Lambda_\alpha^{-1}(\mathsf{A}) , \tau_{\mathsf{S}}^{(\Lambda_\alpha(\theta))} = j, \mathfrak{T}_{\mathsf{S}}^{(\Lambda_\alpha(\theta))} = i \right)\\
		& = \mathbb{P}\left( \boldsymbol{L}^{(\Lambda_\alpha(\theta))}_{\mathsf{S}}\in \Lambda_\alpha^{-1}(\mathsf{B}),  \boldsymbol{R}^{(\Lambda_\alpha(\theta))}_{\mathsf{S}} \in \Lambda_\alpha^{-1}(\mathsf{A}) \right)
		= \xi_{\mathsf{S}}^{(\Lambda_\alpha(\theta))}\left(\uplambda_\alpha^{-1}(\mathsf{A} \times \mathsf{B})\right),
	\end{align*}
	such that the probability measures $\xi_{\Lambda_\alpha(\mathsf{S})}^{(\theta)}$ and $ (\uplambda_\alpha)_\sharp\xi_{\mathsf{S}}^{(\Lambda_\alpha(\theta))}$ agree on the intersection stable generator  $\{\mathsf{A} \times \mathsf{B} \mid \mathsf{A}, \mathsf{B} \in \mathcal{B}(\mathbb{R}) \}$ of $\mathcal{B}(\mathbb{R}^2)$.
	Then \citep[Lemma 1.42]{Klenke} gives the result. 
\end{proof}

\section{Uniform simple slice sampling}\label{Sec: Slice sampling in measurable spaces}
Let $(\mathsf{X}, \mathcal{X}, \mu)$ be a $\sigma$-finite measure space.
For illustrative purposes we introduce uniform simple slice sampling for probability measures on $(\mathsf{X}, \mathcal{X})$ that are absolutely continuos with respect to $\mu$.
More precisely, let 
\[
p: \mathsf{X} \to (0, \infty)
\]
be a measurable function that satisfies 
\[
Z:= \int_{\mathsf{X}} p(x)\ \mu(\d x) \in (0, \infty).
\]
We consider the probability measure
\begin{equation}\label{Eq: Definition target - general measurable space}
	\pi(\d x) := \frac{1}{Z}p(x) \mu(\d x)
\end{equation}
that has unnormalised density $p$ with respect to $\mu$.
We define the level sets as
\[
L(t) := \{ x \in \mathsf{X} \mid p(x)> t\}, \qquad t \in (0, \infty),
\]
and the essential supremum norm of $p$
\[
\|p \|_{\text{ess-}\infty}:=\inf_{\mathsf{N} \in \mathcal{X}, \mu(\mathsf{N}) = 0} \sup_{x \in \mathsf{X}\setminus \mathsf{N}} p(x).
\]
By definition of $\|p \|_{\text{ess-}\infty}$ and because
\begin{align*}
	&\int_0^\infty \mu\big(L(t)\big) \Leb_1(\d t) = \int_0^\infty \int_{\mathsf{X}} \mathbbm{1}_{L(t)}(x) \ \mu(\d x)\, \Leb_1(\d t)\\
	&\qquad= \int_{\mathsf{X}} \int_0^\infty  \mathbbm{1}_{(0, p(x))}(t) \  \Leb_1(\d t)\, \mu(\d x)
	= \int_{\mathsf{X}} p(x)\ \mu(\d x) = Z < \infty,
\end{align*}
we have that for all $t \in (0, \|p \|_{\text{ess-}\infty})$ holds $\mu\big(L(t)\big) \in (0, \infty)$,
due to the monotonicity of $t \mapsto \mu(\mathsf{L}(t))$.
Consequently, the Markov kernel of the uniform simple slice sampler
\begin{equation}\label{Eq: Slice sampling kernel - measrurable space}
	H(x,\mathsf{A}) := \frac{1}{p(x)}\int_{0}^{p(x)} \mu_t(\mathsf{A})\ \Leb_1(\d t), \qquad x \in \mathsf{X}, \mathsf{A} \in \mathcal{X}, 
\end{equation}
where
\[
\mu_t := \frac{1}{\mu(L(t))}\mu\vert_{L(t)}, \qquad t \in (0, \|p \|_{\text{ess-}\infty}),
\]
is well defined.\footnote{Strictly speaking, $H$ is only $\mu$-almost surely well-defined. However, wlog we may assume $\|p \|_{\text{ess-}\infty} = \|p \|_{\infty}$, since changing $p$ on a null set leaves $\pi$ invariant. Under this condition, $H$ is indeed everywhere well defined.}

\begin{lemma}
	Let $(\mathsf{X}, \mathcal{X}, \mu)$ be a $\sigma$-finite measure space, and  let $p: \mathsf{X} \to (0, \infty)$
	be a measurable function with $Z:= \int_{\mathsf{X}} p(x)\ \mu(\d x) \in (0, \infty)$.
	Then $H$, as defined in \eqref{Eq: Slice sampling kernel - measrurable space}, is reversible with respect to $\pi$, given in \eqref{Eq: Definition target - general measurable space}.
\end{lemma}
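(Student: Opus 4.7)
The plan is a direct unfolding of the kernel plus a Fubini/Tonelli rearrangement. Reversibility amounts to verifying that for all $\mathsf{A},\mathsf{B}\in\mathcal{X}$,
\[
\int_\mathsf{A} H(x,\mathsf{B})\,\pi(\d x) = \int_\mathsf{B} H(x,\mathsf{A})\,\pi(\d x),
\]
so I would compute the left-hand side, show it equals an expression manifestly symmetric in $\mathsf{A}$ and $\mathsf{B}$, and invoke the symmetry to conclude.

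First I would substitute $\pi(\d x)=Z^{-1}p(x)\,\mu(\d x)$ into the left-hand side; the key algebraic point is that the $p(x)$ coming from the density of $\pi$ cancels exactly with the $1/p(x)$ prefactor of $H$. This gives
\[
\int_\mathsf{A} H(x,\mathsf{B})\,\pi(\d x) = \frac{1}{Z}\int_\mathsf{A}\int_0^{p(x)}\frac{1}{\mu(L(t))}\int_{L(t)\cap\mathsf{B}}\mu(\d y)\,\Leb_1(\d t)\,\mu(\d x).
\]
The second key observation is the equivalence $t<p(x) \iff x\in L(t)$, which lets me replace the indicator $\mathbbm{1}_{(0,p(x))}(t)$ by $\mathbbm{1}_{L(t)}(x)$ and thereby exchange the roles of $t$ and $x$ in the integration order.

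Next I would apply Tonelli's theorem to pull the $t$-integral outside; this is legitimate because the integrand is non-negative and $\mu$ is $\sigma$-finite (and $\Leb_1$ on $(0,\infty)$ is $\sigma$-finite), so no integrability check is required beyond observing the outer integral is dominated by $Z<\infty$. I obtain
\[
\int_\mathsf{A} H(x,\mathsf{B})\,\pi(\d x) = \frac{1}{Z}\int_0^{\|p\|_{\text{ess-}\infty}}\frac{\mu(L(t)\cap\mathsf{A})\,\mu(L(t)\cap\mathsf{B})}{\mu(L(t))}\,\Leb_1(\d t),
\]
where the upper limit may be taken as $\|p\|_{\text{ess-}\infty}$ since $\mu(L(t))=\infty$ or $0$ only on a $\Leb_1$-null subset of $(0,\|p\|_{\text{ess-}\infty})$ by the argument given in the text preceding the statement (where it is shown that $\int_0^\infty\mu(L(t))\,\Leb_1(\d t)=Z<\infty$, and $\mu(L(t))>0$ for $t<\|p\|_{\text{ess-}\infty}$ by definition of the essential supremum). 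The right-hand side of the last display is patently symmetric in $\mathsf{A}$ and $\mathsf{B}$, hence equals $\int_\mathsf{B} H(x,\mathsf{A})\,\pi(\d x)$, which establishes reversibility.

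There is essentially no genuine obstacle; the only points requiring care are the well-definedness of $H$ on a set of full $\pi$-measure (handled by noting $\pi(\{p>\|p\|_{\text{ess-}\infty}\})=0$ since $\pi\ll\mu$) and the justification of the Tonelli interchange, both of which are immediate in the $\sigma$-finite non-negative setting.
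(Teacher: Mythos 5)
Your proposal is correct and follows essentially the same route as the paper's proof: cancel the $p(x)$ from the density of $\pi$ against the $1/p(x)$ prefactor of $H$, rewrite the level-set indicator via $t<p(x)\iff x\in L(t)$, apply Tonelli, and arrive at the manifestly symmetric expression $\frac{1}{Z}\int_0^{\infty}\mu(L(t)\cap\mathsf{A})\,\mu(L(t)\cap\mathsf{B})/\mu(L(t))\,\Leb_1(\d t)$. The only cosmetic difference is that you truncate the outer integral at $\|p\|_{\text{ess-}\infty}$ and spell out the well-definedness of $\mu_t$, which the paper handles in the discussion preceding the lemma.
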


\begin{proof}
	Let $\mathsf{A}, \mathsf{B} \in \mathcal{X}$.
	We have
	\begin{align*}
		&\int_B H(x,\mathsf{A}) \ \pi(\d x)
		= \int_B \frac{p(x)}{Z\, p(x)} \int_0^{p(x)} \mu_t(\mathsf{A}) \ \Leb_1(\d t) \, \mu(\d x)\\
		&\qquad= \frac{1}{Z} \int_0^{\infty}   \int_B\mu_t(\mathsf{A})  \mathbbm{1}_{(0, p(x))}(t)\  \mu(\d x)\, \Leb_1(\d t)
		= \frac{1}{Z} \int_0^{\infty} \frac{\mu(\mathsf{A} \cap L(t)) \mu(\mathsf{B} \cap L(t))}{\mu(L(t))}  \ \Leb_1(\d t).
	\end{align*}
	This expression is symmetric in $\mathsf{A}$ and $\mathsf{B}$, such that 
	\begin{align*}
		\int_{\mathsf{B}} H(x,\mathsf{A}) \ \pi(\d x) = 	\int_{\mathsf{A}} H(x,\mathsf{B}) \ \pi(\d x) .
	\end{align*}
\end{proof}

\FloatBarrier
\section{Supplementary material for the numerical experiments}\label{Sec: Numerics supplement}
\subsection{MCMC-methods used in the numerical experiments}\label{Sec: MCMC samplers}
First we comment on the structure on the Stiefel and the Grassmann manifold needed for the implementation of GSS.
\begin{remark}
		Let $n,k \in \mathbb{N}$ with $k \leq n$.
	\begin{enumerate}
		\item(\emph{Stiefel manifold}.) 
		The Stiefel manifold $\mathcal{V}(n,k)$, including an explicit formula for its geodesic and the uniform distribution on its tangent spheres, is introduced in Example \ref{Ex:Manifolds}.\ref{Ex:Manifolds_iii} and Example \ref{Ex:_Illustative_scenarios_for_GSS}.\ref{Ex:_Illustative_scenarios_for_GSS_iii}.
		It is worth noting that the computation of a geodesic requires evaluating the matrix exponential map on skew symmetric matrices of size $2k\times 2k$, which can be efficiently done using the eigenvalue decomposition of skew symmetric matrices.
		\item(\emph{Grassmann manifold}.)  
		The $k(n-k)$-dimensional Grassmann manifold $\mathcal{G}(n,k)$ can be defined as the set of all $k$-dimensional subspaces of the Euclidean space $\mathbb{R}^n$ (see \citep{bendokat2020grassmann} for a complete overview), i.e.,
		\begin{equation}
			\mathcal{G}(n,k)=\{W\subseteq \mathbb{R}^n \mid W \textrm{ is a } k\textrm{-dimensional subspace} \}.
		\end{equation}
		The following representation allows for an efficient implementation of points on the Grassmann manifold. First, observe that for any subspace $W \in \mathcal{G}(n,k)$, there exists a (non-unique) orthonormal basis formed by the columns of $\Gamma \in \mathcal{V}(n,k)$ such that $W = \operatorname{span} \Gamma$. Thus, by defining the equivalence relation $\sim$ on $\mathcal{V}(n,k)$ as $\Gamma_1 \sim \Gamma_2$ if and only if $\operatorname{span} \Gamma_1 = \operatorname{span} \Gamma_2$, the Grassmann manifold $\mathcal{G}(n,k)$ can be identified with the quotient manifold $\mathcal{V}(n,k)/ \sim$. As a result, an element $W$ of $\mathcal{G}(n,k)$ can be represented by an element $\Gamma$ of $\mathcal{V}(n,k)$.
		Fixing $\Gamma_\perp \in \mathbb{R}^{n \times (n-k)}$ such that the columns of $\Gamma$ and $\Gamma_\perp$ form an orthonormal basis of $\mathbb{R}^n$, the tangent space to $\mathcal{G}(n,k)$ at $W$ is given by
		\[
		T_W\mathcal{G}(n,k) := \{ \Gamma_\perp \Sigma \in \mathbb{R}^{n \times k} \mid \Sigma \in \mathbb{R}^{(n-k) \times k}\}.
		\]
		We equip $\mathcal{G}(n,k)$ with the Riemannian metric
		\[
		\mathfrak{g}_W(\Delta_1, \Delta_2) = \mathrm{tr}(\Delta_1^\top \Delta_2), \qquad \Delta_1, \Delta_2 \in T_W\mathcal{G}(n,k),
		\]
		such that $\Sigma \mapsto \Gamma_\perp \Sigma$ is an isometric isomorphism between $\mathbb{R}^{n \times (n-k)}$ and $T_W\mathcal{G}(n,k)$. Let $\Delta = \widehat{U}D\widehat{V}^\top$ be the compact singular value decomposition (SVD) of an element $\Delta \in T_W\mathcal{G}(n,k)$. Then, the geodesic at $W$ with direction $\Delta \in T_W \mathcal{G}(n,k)$ admits the following explicit expression
		\[
		\gamma_{(W, \Delta)} (\theta) = \big(\Gamma \widehat{V} \cos(D \theta) + \widehat{U} \sin(D\theta)\big) \widehat{V}^\top, \qquad \theta \in \mathbb{R}.
		\]
		Further details and derivations can be found in \citep{edelman1998geometry}.
	\end{enumerate}
\end{remark}

For the convenience of the reader we provide  more details on the three comparison samplers appearing in our  numerical experiments.
Each sampler defines a Markov chain $(X_i)_{i \in \mathbb{N}}$, and we describe the respective transition mechanisms from $X_i$ to $X_{i+1}$ for $i \in \mathbb{N}$.
\begin{description}
	\item[\textbf{ (a) adaptive random walk Metropolis Hastings (RMH). }]  
	Given $X_i\in \mathcal{V}(n,k)$ and a step size $h_a\in (0, \infty)$ this sampler uses the following Metropolis-Hastings (MH) like transition mechanism:
	\begin{enumerate}
		\item Sample $V_{i+1} \sim \mathcal{N}(\boldsymbol{0}_{nk}, \mathrm{Id}_{n k})$ interpreted as a matrix in $ \mathbb{R}^{n \times k}$.
		Let $\operatorname{proj}_{\mathcal{V}(n,k)}:\mathbb{R}^{n \times k} \to \mathcal{V}(n,k),\  V = \widehat{U}D\widehat{V}^\top \mapsto \widehat{U}\widehat{V}^\top $ be the projection on the Stiefel manifold, where $ \widehat{U}D\widehat{V}^\top $ is the SVD of $V$.
		Then define $ \widetilde{X}_{i+1}=\operatorname{proj}_{\mathcal{V}(n,k)}(X_i+h_a V_{i+1})$.
		\item Sample $\Upsilon_{i+1} \sim \mathrm{Unif}\big([0,1]\big)$ independent of all previously appearing random variables. If $\Upsilon_{i+1}<\upalpha (X_i,\widetilde{X}_{i+1})$, where 
		\begin{equation}
			\label{eq:acceptance_ratio}
			\upalpha(X_i,\widetilde{X}_{i+1})=\frac{p(\widetilde{X}_{i+1})}{p(X_i)},
		\end{equation}
		then set $X_{i+1}=\widetilde{X}_{i+1}$, otherwise $X_{i+1}=X_i$.
	\end{enumerate}
	The hyperparameter $h_a$ is adaptively tuned to target an acceptance probability of $0.234$, a constant proposed by optimal design analysis when using the same sampler in Euclidean space \citep{roberts2001optimal}.
	As remarked in \citep{zappa2018monte}, the proposal mechanism is not symmetric here. More precisely, if  the conditional distribution of $\widetilde{X}_{i+1}$ given $X_i=\Gamma\in \mathcal{V}(n,k)$ admits a  transition density denoted by $q$, this function is not symmetric, i.e., $q(\Gamma_1 |\Gamma_2)\neq q(\Gamma_2 |\Gamma_1)$ for $\Gamma_1, \Gamma_2 \in  \mathcal{V}(n,k)$.
	Therefore, using \eqref{eq:acceptance_ratio} as the acceptance ratio leads to a biased MCMC algorithm, which is justified in \citep{mantoux2021understanding} by the observation that, for sufficiently small $h_a > 0$, $q(\Gamma_1 |\Gamma_2) \approx q(\Gamma_2 |\Gamma_1)$ holds for all $\Gamma_1, \Gamma_2 \in \mathcal{V}(n,k)$. However, this introduces a bias that grows with increasing $h_a$.
	In Section \ref{sub_sec:practical_case}, we compare adaptive RMH with GSS using the same application as discussed in \citep{mantoux2021understanding}.\\
	It is worth noting that \citet{zappa2018monte} propose an MCMC algorithm that, by employing another projection and an additional accept-reject step, results in an unbiased MCMC algorithm. However, their approach does not exploit the specificity of a ``tractable geodesics''-framework.
	\item[\textbf{ (b) geodesic adaptive Metropolis Hastings (GeoRMH). }]
	This methods is a bias free modification of RMH.
	It essentially uses the same transition mechanism, but replaces the proposal at step $i\in \mathbb{N}$ with $\gamma_{(X_i, V_{i+1})}(h_a)$
	where $V_{i+1}$ is distributed according to a standard Gaussian on $T_{X_i}\mathcal{V}(n,k)$. In \citep{goyal2019sampling}, a similar algorithm is proposed to sample uniformly from a convex subset of a manifold with non-negative curvature.
	\item[\textbf{ (c) geodesic Metropolis adjusted Langevin algorithm (GeoMALA). }] 
	On the Grassmann manifold, GeoMALA is used and presented in \citep[Algorithm 1]{holbrook2016bayesian} to estimate parameters in Bayesian inference models involving dimensionality reduction.
	Fix a step size $h \in (0, \infty)$ and denote by $\operatorname{proj}_{T_W\mathcal{G}(n,k)}:\mathbb{R}^{n \times k} \to T_W\mathcal{G}(n,k),  V \mapsto (\mathrm{Id}_n - \Gamma \Gamma^\top) V$ for $W = \mathrm{span}\ \Gamma \in \mathcal{G}(n,k)$ the projection onto the tangent space.
	The transition mechanism of GeoMALA given $X_i\in \mathcal{G}(n,k)$ works as follows:
	\begin{enumerate}
		\item Sample $V_{i+1} \sim \mathcal{N}(\boldsymbol{0}_{nk}, \mathrm{Id}_{n k})$ interpreted as a matrix in $ \mathbb{R}^{n \times k}$. Then define $ \bar{V}_{i+1}=\operatorname{proj}_{T_{X_i}\mathcal{G}(n,k)}(V_{i+1})$ and $E_0=\log p(X_i)-  \mathrm{Tr}(\bar{V}_{i+1}^\top \bar{V}_{i+1})/2$.
		\item Define $\widetilde{V}_{i+1}=\operatorname{proj}_{T_{X_i}\mathcal{G}(n,k)}(\bar{V}_{i+1}+h\nabla \log p(X_i)/2) $ and $\bar{X}_{i+1}=\gamma_{(X_i, \widetilde{V}_{i+1})}(h)$, $\bar{V}_{i+1}=(\d \gamma_{(X_i, \widetilde{V}_{i+1})}/\d t)\vert_h$,
		as well as $V_{i+1}^*=\operatorname{proj}_{T_{X_i}\mathcal{G}(n,k)}(\bar{V}_{i+1}+h\nabla \log p(\bar{X}_i)/2)$ and $E_1=\log p(\bar{X}_{i+1})- \mathrm{Tr}(V_{i+1}^{*\top} V_{i+1}^*)/2$.
		\item Sample $\Upsilon_{i+1} \sim \mathrm{Unif}\big([0,1]\big)$ independently. If $\log(\Upsilon_{i+1})<E_1-E_0$
		then set $X_{i+1}=\bar{X}_{i+1}$, otherwise $X_{i+1}=X_i$.
	\end{enumerate}
	Note that GeoMALA is biased (Figure \ref{fig:bias_analysis_geomala}) for the same reason as RMH is biased: the proposal kernel is not symmetric because of the projection step.
\item [\textbf{ (d) uniform independent Metropolis Hastings (UIMH). }] Given $X_i\in \mathcal{V}(n,k)$, this sampler uses the following Metropolis-Hastings (MH) like transition mechanism :
\begin{enumerate}
	\item Sample $\bar{X}_i$ uniformly\footnote{This is performed using scipy.stats.ortho_group.rvs using \cite{mezzadri2006generate}.} on $\mathcal{V}(n,k)$. Then, define $E_0=\log p(X_{i})$ and $E_1=\log p(\bar{X}_{i+1})$.
	\item Sample $\Upsilon_{i+1} \sim \mathrm{Unif}\big([0,1]\big)$ independently. If $\log(\Upsilon_{i+1})<E_1-E_0$
	then set $X_{i+1}=\bar{X}_{i+1}$, otherwise $X_{i+1}=X_i$.
\end{enumerate} 
\end{description}

Note that for GSS, GeoRMH, and GeoMALA we always reproject the (final) state of the Markov chain at each step onto the Stiefel manifold to mitigate numerical errors arising from the geodesic computations.
This reprojection step ensures that the resulting samples lie on the manifold, preserving the desired manifold structure and improving the accuracy of the sampling algorithms.

\subsection{Sampling the von Mises–Fisher distribution}\label{Sec: Experiments von Mises-Fisher distribution}
In this section we present numerical experiments targeting the matrix von Mises–Fisher distribution.
Given a matrix-valued parameter $F\in \mathbb{R}^{n \times k}$, the matrix von Mises–Fisher distribution $\operatorname{vMF}(F)$ on the Stiefel manifold $\mathcal{V}(n,k)$ has unnormalized density with respect to its Riemannian measure
\begin{equation}
	\label{eq:Mises_fisher_distribution}
	p_{\operatorname{vMF}(F)}(\Gamma) = \exp\left(\mathrm{tr}(F^\top \Gamma )\right), \qquad \Gamma \in \mathcal{V}(n,k).
\end{equation}  
However, this expression can not be used for the Grassmann manifold since $\operatorname{span} \Gamma_1=\operatorname{span} \Gamma_2$ with $(\Gamma_1, \Gamma_2) \in \mathcal{V}(n,k)^2$ does not imply  $p_{\operatorname{vMF}(F)}(\Gamma_1)= p_{\operatorname{vMF}(F)}(\Gamma_2)$.
Note that an element $W$ of the Grassmann manifold can be identified with its orthogonal projector $\Gamma\Gamma^\top$, which does not depend on the choice of the representative $\Gamma\in \mathcal{V}(n,k)$ for $W$.
Therefore, given a positive semi-definite matrix $P\in \mathbb{R}^{n \times n}$, the von Mises–Fisher distribution $\operatorname{vMF}(P)$ (also called matrix Langevin distribution \citep{chikuse2003concentrated}) on the Grassmann manifold $\mathcal{G}(n,k)$ has unnormalized density
\begin{equation}
	\label{eq:Mises_fisher_distribution_grassman}
	p_{\operatorname{vMF}(P)}(W) = \exp\left(\mathrm{tr}(P^\top \Gamma\Gamma^\top )\right),\quad \operatorname{span} \Gamma=W \qquad (\Gamma,W)\in \mathcal{V}(n,k)\times \mathcal{G}(n,k).
\end{equation}

In the following, given $n$ and $k$, we always choose the parameters $F$ and $P$ to be of the form
\begin{equation}
	\label{eq:F_exp12}
	F=\begin{pmatrix} 
		D \\ \boldsymbol{0}_{n-k,k}
	\end{pmatrix} \in \mathbb{R}^{n \times k},\qquad P=FF^\top.
\end{equation}
where $D \in \mathbb{R}^{k\times k}$.

We consider three different experimental setups on $\mathcal{V}(n,k)$.
Firstly, $k$ is fixed at $2$, we vary $n$ in the set $\{3, 30, 100\}$ and set $D = \mathrm{diag}((1,\ldots, k)^\top) \in \mathbb{R}^{d\times d}$ in \eqref{eq:F_exp12}.
Secondly, for the same choice of $D$ as in the first experiment, we fix $n = 30$ and vary $k$ in the set $\{3, 30, 100\}$.
Thirdly, the pair $(n,k)$ is fixed at $(30,2)$, but we choose $D= \mathrm{diag}((1, \lambda)^\top)\in \mathbb{R}^{2\times 2}$, where $\lambda$ varies in the set $\{1, 10, 100\}$.
This allows us to study the impact of the target distribution's  anisotropy on the samplers' performance.

On the Grassmann manifold $\mathcal{G}(n,k)$ we repeat similar experiments.
However in the first two setups, we use $D = \mathrm{Id}_k$ in \eqref{eq:F_exp12}.
For the third experiment, we set $(n,k) = (3,2)$ and $D = \sqrt{\lambda} \mathrm{Id}_k$  for $\lambda \in \{1, 10, 100\}$.
There, we also vary the stepping-out parameter $m$ of GSS  in the set $\{1,3,10\}$.
In all the other experiments, we fix the hyperparameter $m$ of GSS to 1 for the sake of simplicity.
The samplers run for $N=100,000$ iterations, and the stepsize $h_a$ for RMH and GeoRMH is initialized to 0.01 and updated every 20 steps.
  
We define the initialization $X_1$ by first sampling $\widetilde{X}_1 \sim \mathrm{Unif}\left([0,1]^{n \times k}\right)$.
Then, $\widetilde{X}_1$ is projected onto $\mathcal{V}(n,k)$ using \citep[Lemma 2]{mantoux2021understanding} to obtain $X_1$, thereby identifying a point on $\mathcal{V}(n,k)$ with its equivalence class when we work on $\mathcal{G}(n,k)$.
To evaluate the performance of the samplers, we compute the effective sample size (ESS) of $(\log(p_{\operatorname{vMF}(F)}(x_i)))_{i\in\{1,\ldots,N\}}\in \mathbb{R}^N$, where $(x_i)_{i\in \{1,\ldots, N\}}$ denotes the samples generated by a single sampler.
The results are averaged over 10 different resamplings, but the initialization is kept fixed since the use of a burn-in period has no significant impact.

 \begin{table}[!h]
	  \caption{
	  Effective sample size [min, median, max] of 10 repetitions for varying dimension $n$, on the Stiefel manifold $\mathcal{V}(n,k)$.
	  All values have been multiplied by $10^{-3}$.
		}
	  \centering
	  \begin{tabular}{llll}
		\toprule 
		$(n,k)$& (3,2) & (30,2)& (100,2) \\
	   GSS $w=1$   & $[1.8, 2.1, 2.5]$  & $[1.6, 2.0, 2.2]$& $[1.6, 2.0, 2.1]$      \\
	   GSS $w=3$  & $ [9.3,10.2 , 11.4]$ &$ [12.2, 13.4, 15.0]$& $ [14.3, 15.7, 18.2]$    \\
	   GSS $w=5$  & $ [11.9, 14.0, \textbf{18.8}]$  &$[22.7, 26.3, 32.0]$& $ [31.9, 35.7, 42.1]$   \\
	   GSS $w=7$  & $ [13.1, 14.7, 18.2]$  &$[30.2,33.8,35.7]$& $ [43.2, 52.1, 57.6]$  \\
	   GSS $w=9$  & $ [12.8, 14.1, 16.6]$  &$[34.5,39.0,\textbf{48.8}]$& $ [47.7, 57.8, \textbf{68.8}]$  \\
	   GSS $w=11$  & $ [\textbf{14.4, 15.6}, 16.7]$&$[\textbf{35.2, 39.7},46.5]$& $ [51.7, 57.7, 62.8]$  \\
		 RMH:  & $ [9.9, 12.0, 13.9]$  &$[33.5,37.8,43.5]$& $ [\textbf{52.1, 60.0}, 68.1]$  \\
		 GeoRMH:  & $ [7.0, 8.4, 10.7]$  &$[28.4,34.1,37.7]$& $ [45.4, 51.1, 56.3]$  \\
	  \end{tabular}
	  \label{table:sample_exp1}
	  
	\end{table}
	
	\begin{table}[!h]
		  \caption{
		  Effective sample size [min, median, max] of 10 repetitions for varying dimension $k$, on the Stiefel manifold $\mathcal{V}(n,k)$.
		  All values have been multiplied by $10^{-3}$.
		  }
		  
		  \centering
		  \begin{tabular}{llll}
			\toprule 
			$(n,k)$& (30,5) & (30,10)& (30,20) \\
		   GSS $w=1$   & $[0.7, 0.8, 0.9]$  & $[0.4, 0.4, 0.4]$& $[0.32, 0.33, 0.34]$      \\
			GSS $w=3$  & $ [3.1, 3.6, 4.0]$ &$ [0.9, 1.1, 1.2]$& $ [0.36, 0.39, 0.43]$    \\
			 GSS $w=5$  & $ [\textbf{5.3, 5.8, 6.5}]$  &$[\textbf{1.1, 1.2, 1.3}]$& $ [\textbf{0.38, 0.40, 0.43}]$   \\
			 RMH:  & $ [2.7, 3.3, 4.3]$  &$[1.0,1.0,1.2]$& $ [0.34,0.37, 0.39]$  \\
			 GeoRMH:  & $ [1.0, 4.2, 4.7]$  &$[1.0,1.2,1.3]$& $ [0.36, 0.37, 0.39]$  \\
		  \end{tabular}
		  \label{table:sample_exp2}
		\end{table}

\bigskip	
\textbf{Varying $(n,k)$ on the Stiefel manifold.}
First, we observe in Table \ref{table:sample_exp1} and \ref{table:sample_exp2} that the larger the value of $w$, the higher the effective sample size (ESS) for GSS.
This is coherent with the fact that the maximal size of the region taken into account by the stepping-out and shrinkage procedure increases with $w$ such that 
the space is better explored.
However, note that the finite diameter of a compact manifold like $\mathcal{V}(n,k)$ sets a natural limit to this effect.
As a result, the gain in ESS reaches a plateau when $w\geq 7\geq 2\pi$.

The second observation is that for all samplers, as $n$ increases for fixed $k$, the ESS increases, and inversely, as $k$ increases with $n$ fixed, the ESS decreases. 
Considering \eqref{eq:F_exp12}, the number of directions that impact the density is equal to $k$ since $\mathrm{tr}(F^\top\Gamma)=\sum_{i=1}^k f_i^\top\Gamma_i$ for any $F=(f_i)_{i\in \{1,\ldots,k\}},\Gamma=(\Gamma_i)_{i\in \{1,\ldots,k\}}\in \mathbb{R}^{n\times k}$. Thus, as $k/n$ is small, 
the target density is ``flat'' in many directions.
As a result there are more directions in which GSS can move further and the risk of proposal rejection is smaller for the Metropolis-Hastings type algorithms, respectively.

In Table \ref{table:sample_exp1}, we see that RMH outperforms GSS and GeoRMH for $n=100$, and remains competitive for $n\in\{3,30\}$.
This, maybe at first glance surprising, performance of RMH can be explained by the fact that when $k/n$ is small, the number of constraints is low compared to the dimensionality of the space, making $\mathcal{V}(n,k)$ nearly Euclidean and the Stiefel projection nearly equal to the identity.
Consequently, since the optimal acceptance rate of $0.234$ for the adaptive mechanism was found in an Euclidean space, it is reasonably explainable that RMH performs better in this scenario.
This interpretation is further confirmed by Table \ref{table:sample_exp2}, where we see that GSS($w=5$) outperforms RMH and GeoRMH for any $k\in\{5,10,20\}$, though this advantage is less significant for GeoRMH, which also explores the space taking into account its intrinsic geometry.
Recall, to put our observations into perspective, that while RMH may outperform GSS and GeoRMH in terms of ESS, it is fundamentally biased.

In Table \ref{table:sample_exp1}, we observe that GeoRMH is not as efficient as RMH but performs comparably to GSS($w=7$) when $n\in \{30,100\}$.
This difference can be linked to the fact that GeoRMH and GSS only differ in the method they employ to move on a geodesic. The first uses a Metropolis Hastings mechanism whereas the latter runs a slice sampler.
Using slice sampling ensures that the sampled direction in the tangent space is not wasted, 
	but (likely) increases the computational cost of each transition step, 
	since each iteration within the stepping-out and shrinkage procedure involves new computations.
	This is affected by the choice of the parameter $w$,
	as it determines the considered size of the portion of the geodesic, and thus influences the probability to miss the level set, within the shrinkage procedure.
For example, in the case $n=3$, there are, on average, $1.11$ attempts when $w=1$, but $1.41$ attempts when $w=5$.
Therefore, there is a trade-off when selecting $w$ to optimize the time efficiency of sampling, as larger values of $w$ increase both the ESS and the computation time.

\bigskip
\textbf{Varying $(n,k)$ on the Grassmann manifold.}
In Table \ref{table:sample_exp12_grasman}, we present only a subset of our experiments to convey the following message: When GeoMALA is well tuned, the gradient-informed sampler outperforms GSS regardless of the choice of $w$.
 However, the tuning of GeoMALA is very sensitive and GeoMALA is biased especially for large step size (Figure \ref{fig:bias_analysis_geomala}). For instance, in the case $(n,k)=(100,2)$ with $h=1$, the gradient information encourages to focus on high density area and does not explore enough to outperform GSS, but if we increase the stepsize $h$ to $2$, GeoMALA is better than GSS.
Regarding the role of the hyperparameter $w$, the conclusions are consistent with those on the Stiefel manifold. 
Increasing $w$ beyond a certain point does not pay off, 
since the manifold is compact. 
The sweet spot appears to be around $w=7\approx 2\pi$.
Both methods have the same complexity, as in both cases, we need to sample a point on the tangent space and compute a geodesic using SVD.
	\begin{table}
		  \caption{Effective sample size [min, median, max] of 10 repetitions for varying dimension $(n,k)$ with a fixed shape of distribution, on the Grassmann manifold.
		  We show the results only for $w=7$ since it does not affect the comparison with GeoMALA, and the dependence according to $w$ is globally the same as on the Stiefel manifold. 
		  ``GeoMALA Best $h$'' means that we provide the result for the best stepsize parameter $h$ in $\{0.01,0.1,0.5,1,2\}$. All values have been multiplied by $10^{-3}$.}

		  \centering
			\begin{tabular}{llll}
			$(n,k)$ & (3,2) & (30,20) & (100,2) \\
			GSS $w=7$  & $ [29, 37, 41]$  &$[22, 24, 28]$& $ [27, 30, 36]$   \\
			GeoMALA Best $h$  & $ [\textbf{46, 51, 55}]$  & $[\textbf{39, 41, 48}]$& $ [\textbf{56, 61, 71}]$  \\
		\end{tabular}
		\label{table:sample_exp12_grasman}
	\end{table}
	
\bigskip	
\textbf{Varying anisotropy on the Stiefel manifold.}
\begin{table}
	\caption{Effective sample size [min, median, max] of 10 repetitions for varying anisotropy factor $\lambda$ when $(n,k)=(30,2)$, on the Stiefel manifold.
	All values have been multiplied by $10^{-3}$.}
	\centering
	\begin{tabular}{llll}
				$\lambda$& 1 & 10& 100 \\
		 GSS $w=5$  & $ [28.4, 34.3, 37.4]$  &$[\textbf{4.9, 5.3, 5.48}]$& $ [\textbf{1.15, 1.33, 1.45}]$   \\
		 RMH   & $ [49.4, \textbf{55.8}, 60.4]$  &$[1.4,2.1,2.5]$& $ [0.86,1.03, 1.31]$  \\
		 GeoRMH  & $ [50.0, 53.1, 67.1]$  & $[1.6,2.4,2.9]$& $ [0.96, 1.04, 1.13]$  \\
		 UIMH & $[\textbf{51.2},55.5,\textbf{69.8}]$ & $[1.2,1.9,2.2] $ &$[0.25,0.26,0.56] $ \\ 
	\end{tabular}
	\label{table:sample_exp3}
\end{table}
In Table \ref{table:sample_exp3}, the samplers' performances worsen as $\lambda$ increases.
This indicates that the sampling task becomes more difficult with growing anisotropy of the target density.
This effect, related to the departure of the target distribution from uniformity, is particularly striking for UIMH.
Namely, we observe that UIMH is among the best performing samplers as $\lambda=1$ and becomes the worst for $\lambda=100$.
At the same time UIMH, RMH and GeoRMH outperform GSS only when $\lambda=1$, highlighting the effectiveness of the slice sampling approach of GSS in dealing with anisotropic densities.

In our experiments, we notice that the number of attempts in the shrinkage procedure is more sensitive to the variance of the target distribution than the value of the parameter $w$.
 Therefore, in cases where GSS is a good fit for the sampling task, the computation time increases accordingly.

\bigskip
\textbf{Varying the variance on the Grassmann manifold.}
In Table \ref{table:sample_exp3_grassman}, we observe that GSS outperforms GeoMALA when $\lambda \in \{10,100\}$, indicating its advantage in situations where the density is concentrated.
In a second phase of analysis, we extended the grid search for the GeoMALA step size to include values in the set $\{0.05, 0.15, 0.2, 0.25, 0.3\}$ to evaluate whether additional tuning could further improve the method’s performance compared to GSS.
 Notably, selecting $h=0.05$ for $\lambda=100$ or $h=0.3$ for $\lambda=10$ allowed GeoMALA to outperform GSS, supporting the idea that gradient-based methods can achieve superior performance at the cost of more extensive tuning. 

Furthermore, we find that using a lower value for $w$ is more suitable when the distribution is sharp (e.g., $w=1, \lambda=100$). Additionally, choosing a large value of the stepping out parameter $m$ can improve the performance when $w$ is chosen too small.

This experiment reaffirms the previous findings, demonstrating that GSS adapts to the shape
of the density, which is promising for practical applications. Moreover, the performance of GSS appears to be quite robust across different choices of $w$, and taking a large value for $m$ strengthens this feature.

	\begin{table}
  \caption{Effective sample size [min, median, max] of 10 repetitions for varying variance factor $\lambda$ when $(n,k)=(3,2)$, on the Grassmann manifold. ``GeoMALA Best $h$'' means that we provide the result for the best stepsize parameter $h$ in $\{0.01,0.1,0.5,1\}$.
  All values have been multiplied by $10^{-3}$.}
  \centering
  \begin{tabular}{llll}
  	$\lambda$& 1 & 10& 100 \\
  	GSS $w=1,m=1$  & $ [9, 10, 11]$  &$[11, 12, 13]$& $ [16, 19, 23]$   \\
  	GSS $w=1,m=3$  & $ [31, 35, 42]$  &$[13, 16, 19]$& $ [12, 20, 21]$   \\
  	GSS $w=1,m=10$  & $ [32, 36, 42]$  &$[\textbf{17, 19, 22}]$& $ [16, 19, 23]$   \\
  	GSS $w=7,m=1$  & $ [33, 36, 41]$  &$[15, 17, 21]$& $[\textbf{18, 19}, 23]$   \\
  	GeoMALA Best $h$   & $ [\textbf{44, 49, 56}]$  &$[11,13,14]$& $[0.395,2.6, \textbf{43}]$  \\
  \end{tabular}
  \label{table:sample_exp3_grassman}
  \end{table}

\bigskip
We discuss some further aspects of the conducted numerical experiments regarding complexity in Supplementary material \ref{appendix:complexity}.

	


\begin{remark}\emph{(Choice of the hyperparameters.)}
	The performance of GSS  seems to be quite robust to the choice of the hyperparameters $m$ and $w$ as soon as $mw \geq 2\pi$ (see Tables \ref{table:sample_exp1}, \ref{table:sample_exp2}, \ref{table:sample_exp3_grassman}).
	In Supplementary material \ref{appendix:sensitivy_analysis}, we further explore this in a sensitivity analysis with respect to $w$ and $m$ for the varying anisotropy experiment on the Stiefel manifold.
	
	Observe that the number $2\pi$ coincides with twice the diameter of the Grassmann and the Stiefel manifold.
	Therefore, as a heuristic we propose to choose $m$ and $w$ such that $mw$ is about twice the diameter for compact manifolds in general.
	Note that developing a theoretical foundation or clear empirical criteria for selecting the hyperparameters of the stepping-out procedure remains an open research question, even in the Euclidean setting \citep{latuszynski2024convergence, power2024weak}. 

\end{remark}
While this heuristic ensures that the effective sample size (ESS) remains close to optimal according to the GSS hyperparameters, it may not yield the optimal ESS per target evaluation.
Optimizing this ratio is meaningful since the method employs a zeroth-order oracle. For example, in a grid search with $w \in \{0.1, 0.5, 1, 3, 5, 7, 9\}$ and $m \in \{1, 3, 5, 7, 9\}$ for the anisotropic experiment with the matrix von Mises-Fisher distribution (see Table 8),
 we observe that the best hyperparameter configuration with respect to ESS per target evaluation is $(m = 1, w = 0.5)$ for $\lambda = 100$ (see Table \ref{table:comparison_nbeval}).
 Notably, this configuration does not adhere to the proposed heuristic.
 This is probably, because in this case most of the mass of the probability distribution is concentrated on a set which has a diameter that is much smaller than the diameter of the Stiefel manifold.
 \begin{table}

	\begin{tabular}{crrr}
		$\lambda$ & 1 & 10 & 100 \\ 
	Best hyperparameter & $(m=1,w=9)$ & $(m=1,w=3)$ & $(m=1,w=0.5)$ \\ 
		Our proposed method & $449$ & $\textbf{23.5}$ & $4.8$\\ 
		Random Walk Metropolis Hasting & $\textbf{542}$ & $23.1$ & $\textbf{8.7}$ \\ 
	\end{tabular}
	\caption{Comparison of ESS divided by the number of target evaluation for the anisotropy experiment on the Stiefel manifold. All values have been multiplied by $10^3$.}
	\label{table:comparison_nbeval}
\end{table}
When comparing our method to the RMH algorithm in Table \ref{table:comparison_nbeval}, which is tuned during a burn-in phase for the anisotropic experiment, we find that our method is competitive only for $\lambda = 10$.
In the worst case, RMH outperforms our method by a factor of two.
 To ensure a fair comparison in real data experiments where MCMC sampling is used for parameter estimation, we employ four times more MCMC iterations for RMH than for our method. 

Our method demonstrates competitiveness in real data experiments because the optimization process updates the target density at each step, making RMH tuning less effective.
 In contrast, our method remains efficient without requiring extensive tuning.
 Our proposed method is robust in the sense that sufficiently high values of $mw$ ensure ESS performance close to optimal, regardless of the anisotropy of the target distribution.
 However, this trick may not be relevant when computational efficiency is a concern. 

Although the proposed method may not be competitive in terms of ESS per target evaluation, it is particularly advantageous when the target density changes over time, as is common in optimization processes characteristic of empirical Bayesian estimation \cite{kuhn2004coupling}.

 When comparing GSS with GeoMALA in the anisotropy experiment, based on the ESS divided by computation time (see Table~\ref{table:comparison_time_geomala}), where GeoMALA is tuned by selecting the best stepsize \( h \in \{0.01, 0.1, 0.5, 1\} \), we observe that our best-performing method is consistently outperformed by GeoMALA.
  However, our heuristic approach with \( mw \geq 2\pi \) consistently performs better than GeoMALA with its second-best stepsize.
 We would like to emphasize that, while GeoMALA exhibits better performance in terms of ESS, the method is fundamentally biased due to the projection steps it employs. This bias is illustrated by the experiments shown in Figure \ref{fig:bias_analysis_geomala}.
 In contrast, GSS is theoretically guaranteed to be unbiased, a fact reflected in Figure~\ref{fig:bias_analysis_geomala} by its very narrow confidence intervals.

 \begin{table}

	\begin{tabular}{crrr}
	$\lambda$ & 1 & 10 & 100 \\ 
		\hline
	GeoMALA Best $h$         & $ \textbf{1361}$ & $\textbf{135.4}$ & $\textbf{41.7}$ \\ 
	Our best method       & $982$ & $93.8$ & $18.3$ \\ 
	\hline
	GeoMALA Second Best $h$  & $346$ & $61.7$ & $9.4$\\ 
		Our heuristic method ($w=7,m=1$) & $\textbf{874}$ & $\textbf{91.1}$ & $\textbf{15.9}$ \\ 
	\end{tabular}
	\caption{Comparison of the median effective sample size (ESS) divided by computation time of 10 repetitions for the anisotropy experiment on the Stiefel manifold.}
	\label{table:comparison_time_geomala}
\end{table}
\begin{figure}[t]
	\begin{center}
	\includegraphics[width=120mm]{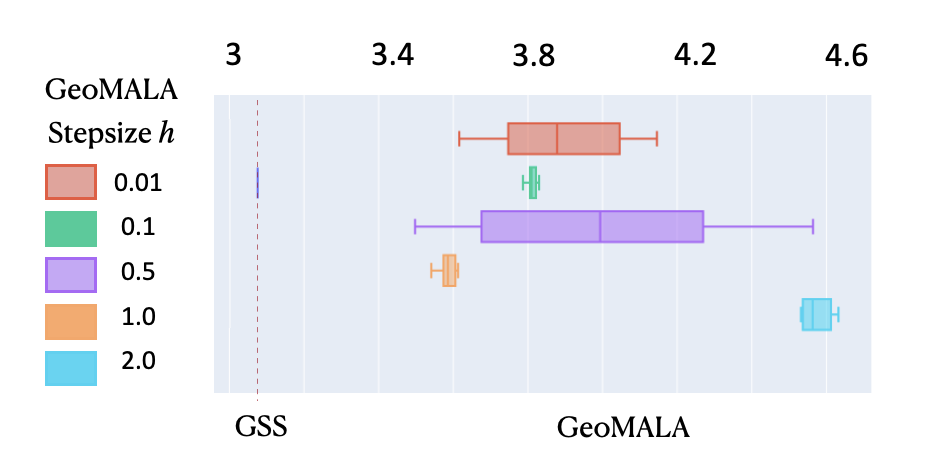}
	\end{center}
	\caption{We report the empirical average of the log density, \(\frac{1}{N} \sum_{k=1}^{N} \text{tr}(F^\top \tilde{X}_k)\), where \((\tilde{X}_k)_{k=1,\ldots,N}\) represents the samples generated by the algorithm for the different samplers when $\lambda=1$.
	The confidence interval are computed over 10 repetitions.}
	\label{fig:bias_analysis_geomala}
\end{figure}

\subsection{Complexity}
\label{appendix:complexity}
A discerning reader might consider the comparison of ESS between RMH, GSS, and GeoRMH unfair, as GSS involves additional computations due to the stepping-out and shrinkage procedure.

We illuminate this in more detail under the assumption that the main computational bottleneck in all methods is the computation of the geodesics, which comprises the reprojection of the proposal on the Stiefel manifold at the end of each step using an SVD  $O(nk^2)$, the sampling on the tangent space using a QR decomposition $O(k^3)$, and the eigenvalues of the skew-symmetric matrices $O((2k)^3)$.

In the case of the stepping-out and shrinkage procedure in GSS, an eigenvalue decomposition is initially computed, enabling the generation of geodesics with some matrix products for the subsequent attempts. Thus, from a computational perspective, the only difference between GeoRMH and GSS is the cost of these \textit{matrix products}.

However, when comparing RMH and GSS, we need to account for the additional computational cost of the QR decomposition and eigenvalue decomposition, both in $O(k^3)$.
Notably, the computational cost related to constraints naturally increases with $k$.
In practice, without any engineering optimization of the codes, we observe that GSS takes between two to four times longer to execute than RMH (which is biased), but it is nearly equivalent to GeoRMH in terms of computation time when $n$ is not excessively large.
The relative speed of RMH compared to GSS has to be weighted by its bias.


Moreover, the computation time of GSS can generally be reduced by using parallel computations.
 The shrinkage and stepping-out procedures involve numerous log-target and geodesic evaluations,
 which can be computed in parallel as long as the stopping rules are applied retrospectively at the end of all evaluations.
 For instance, for the stepping-out procedure one can proceed as follows:
 \begin{enumerate}
	\item First, a value $u$ is sampled in $[0,w]$ on a central server.
	\item Then, instead of computing the geodesic $\gamma$ at $(u+iw)_{i\in \{1,\ldots ,m\}}$ consecutively in the while loop of Algorithm 2, it is evaluated in parallel on different computers.
	 The values $(\gamma(u+iw))_{i\in \{1,\ldots ,m\}}$ are then sent to the central server.
	\item Finally, on the central server, the right value of $u+iw$ is selected a posteriori by executing the while loop of Algorithm 2 with the pre-computed quantities $(\gamma(u+iw))_{i\in \{1,\ldots ,m\}}$.
 \end{enumerate} 
 The same approach can be applied to the shrinkage procedure.


\subsection{A practical case: Understanding the variability in graph data sets.\\ Experiments of estimation on synthetic data.}
\label{appendix:estimation}
We follow the same procedure as \citep{mantoux2021understanding} to generate synthetic data $ (\Upphi^{(j)}_*,\Gamma^{(j)}_*,\kappa^{(j)}_*,\epsilon^{(j)}_*)_{j\in \{1, \ldots, J\}}$
with $J=100$, $(n,k)=(30,5)$ and $(n,k)=(3,2)$. 
The generation parameters are fixed as $\sigma_\epsilon^2=0.1$, $\sigma_\kappa^2=2 $, $\mu=(10,2,\ldots,2)\in \mathbb{R}^k $,
and  $F^* \in \mathbb{R}^{n \times k}$ is chosen as the matrix with columns $a_1 f_1, \ldots, a_k f_k \in \mathbb{R}^{n}$, where $(f_i)_{i\in \{1, \ldots, k\}}\in \mathcal{V}(n,k)$ is sampled uniformly from $\mathcal{V}(n,k)$, and $a_1=\lambda\in \{1,100\}$, $a_i=1$ for any $i\in\{2,\ldots, k\}$.
The factor $\lambda$ in this context serves as an anisotropy factor incorporated to examine how performance is influenced by anisotropy.
The optimization process involves random initialization of the parameters $F$ and $\kappa$, and we deterministically set $\sigma_\kappa^2=\sigma^2_\epsilon=1$.
Then, $(\Gamma^{(j)})_{j \in  \{1, \ldots, J\}}$ is initialised either by performing 200 iterations of GSS on $p_\Gamma$ or 800 iterations of RMH.
The results are averaged over 10 repetitions with different random initializations and generation parameters.

We run the MCMC-SAEM procedure for 100 iterations, and at each step of MCMC-SAEM, 20 iterations of MCMC are performed when GSS is used, while 80 iterations are performed when RMH is used for the E-step of the EM algorithm.
\begin{figure}[!h]
	\begin{center}
		\includegraphics[width=140mm]{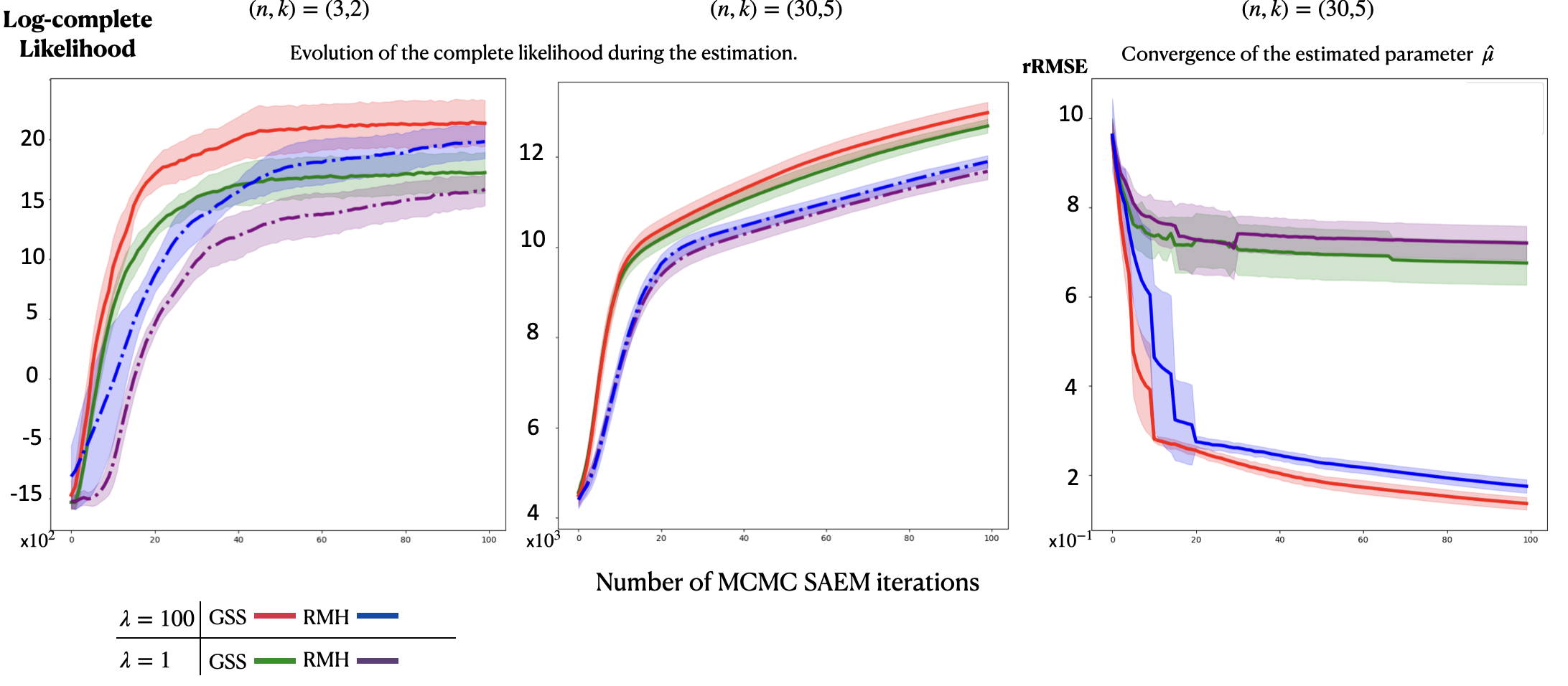}
	\end{center}
	\caption{We compare the performance of RMH and GSS($w=1,m=5$) used as MCMC-method within the MCMC-SAEM procedure to maximise the likelihood of the model.
		The complete log-likelihood is used as a proxy to perform the comparison since the log-likelihood is not tractable.
		The rRMSE is computed for the difference between the estimated parameter $\hat{\mu}$ and the true parameter $\mu$.
		The convergence of the estimated parameter $\hat{\mu}$ is depicted for the case $(n,k)=(30,5)$.
		To ensure a fair comparison, we use four times more MCMC steps for RMH compared to GSS, since our implementation of RMH is four times faster.}
	\label{fig:Exp_networks30,5}
\end{figure}

First, upon examining Figure \ref{fig:Exp_networks30,5}, we can observe that the complete log-likelihood curve\footnote{Denoting by $\theta$ the model parameter, the complete log-likelihood
	is $\log p(\Upphi^{(j)},\kappa^{(i)}, \Gamma^{(i)}|\theta)$ and the log-likelihood is $\log p(\Upphi^{(j)}|\theta)$.}
exhibits higher values when $\lambda=100$ in contrast to $\lambda=1$. This can be attributed to the unobserved variables $(\Gamma^{(j)})_{j\in {1,\ldots,J}}$ being more concentrated in the direction indicated by the first column of $F^*$, making recovery easier.

Secondly, GSS outperforms RMH regarding the complete log-likelihood. The curve increases more rapidly with the number of iterations and attains a higher final value. This improvement is particularly striking for $(n,k)=(30,5)$, which is consistent with the findings in Table \ref{table:sample_exp2}. Moreover, as illustrated by the third graph of Figure \ref{fig:Exp_networks30,5}, the relative root mean square error (rRMSE) of the estimated parameter to the true parameter is always smaller when using GSS, especially when $\lambda=100$.

\subsection{Sensitivity analysis}
\label{appendix:sensitivy_analysis}
In this section, the experiment with varying anisotropy on the Stiefel manifold of Section 3.1 is extended by a sensitivity analysis of GSS on the hyperparameters $w,m$.
The ESS median of GSS is reported in the Figures \ref{fig:sensitivity_analysis_lambda_1} to \ref{fig:sensitivity_analysis_lambda_100} for any $w\in [0.1,0.5,1,2,3,7,9]$, $m\in [1,3,5,7,10]$ and $\lambda=\{1,10,100\}$
We observe that the results are satisfactory on average as long as $mw \geq 2\pi$ and that $(w,m)=(5,1)$ is a fair choice on average even if suboptimal.
For higher $\lambda$, satisfactory results are obtained for lower $mw$ values. 

\begin{figure}[t]
	\begin{center}
	\includegraphics[width=120mm]{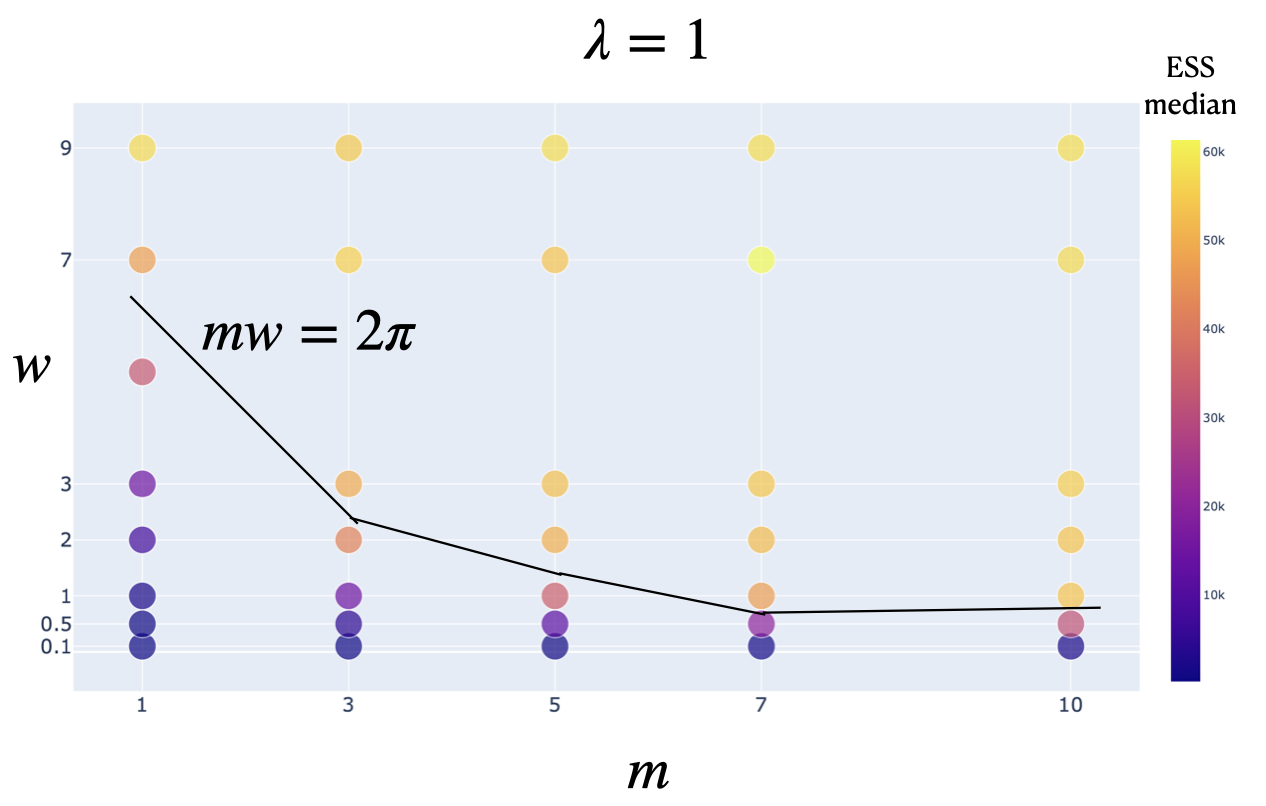}
	\end{center}
	\caption{Sensitivy analysis of the anisotropy experiment on the Stiefel manifold with $w\in [0.1,0.5,1,2,3,7,9]$ and $m\in [1,3,5,7,10]$ for $\lambda=1$.}
	\label{fig:sensitivity_analysis_lambda_1}
\end{figure}
\begin{figure}[t]
	\begin{center}
	\includegraphics[width=120mm]{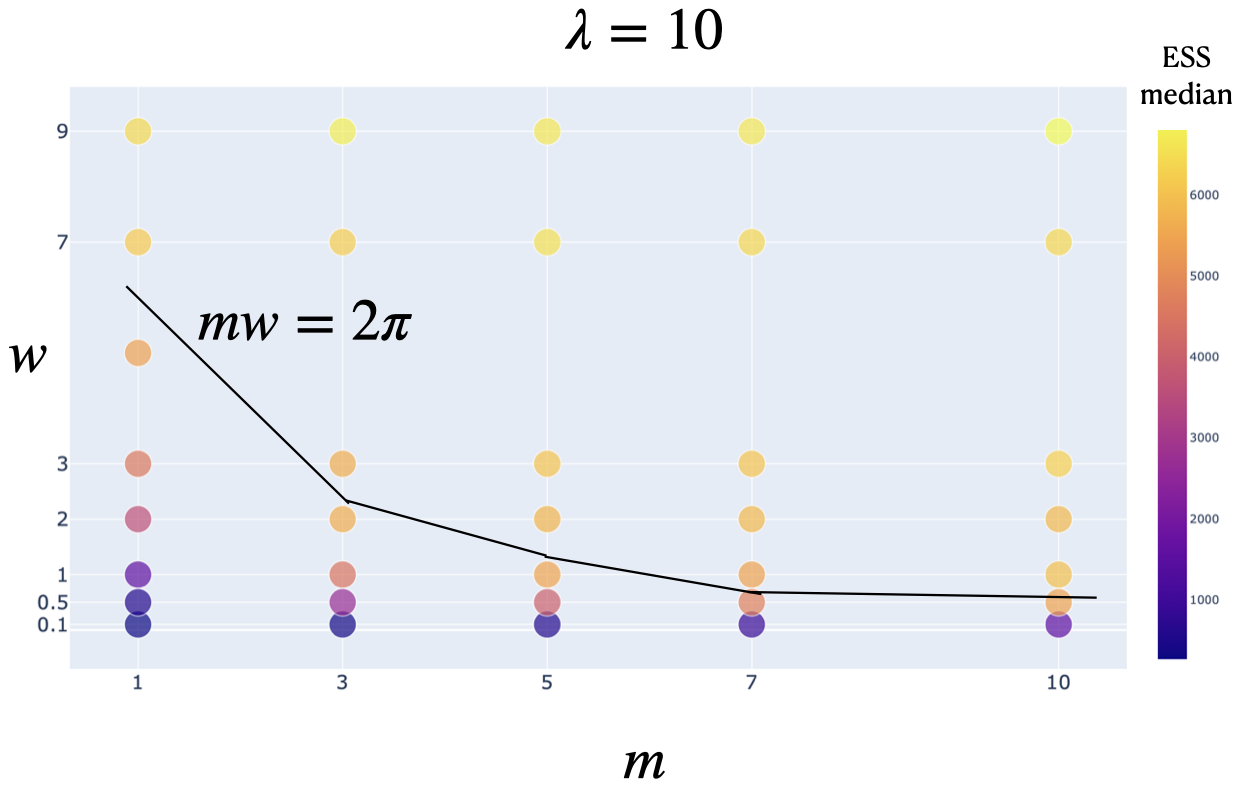}
	\end{center}
	\caption{Sensitivy analysis of the anisotropy experiment on the Stiefel manifold with $w\in [0.1,0.5,1,2,3,7,9]$ and $m\in [1,3,5,7,10]$ for $\lambda=10$.}
	\label{fig:sensitivity_analysis_lambda_10}
\end{figure}
\begin{figure}[t]
	\begin{center}
	\includegraphics[width=120mm]{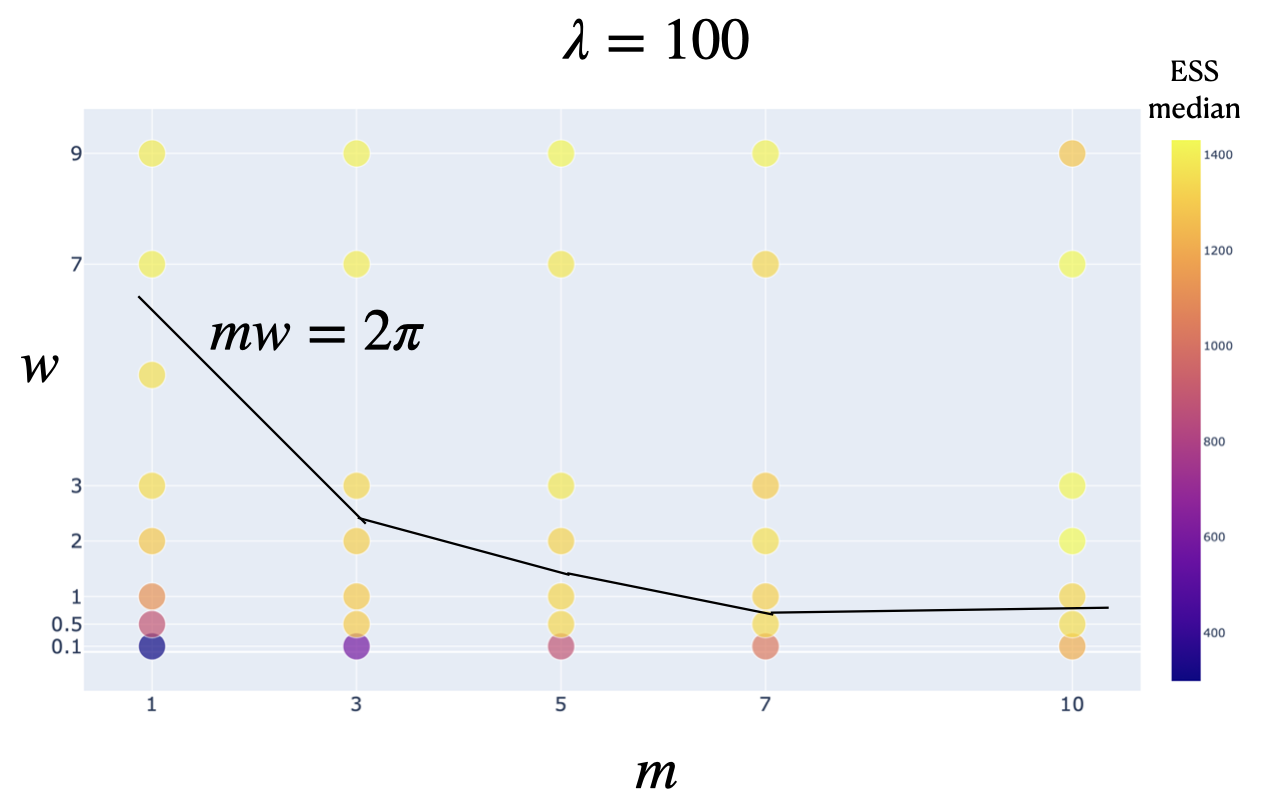}
	\end{center}
	\caption{Sensitivy analysis of the anisotropy experiment on the Stiefel manifold with $w\in [0.1,0.5,1,2,3,7,9]$ and $m\in [1,3,5,7,10]$ for $\lambda=100$.}
	\label{fig:sensitivity_analysis_lambda_100}
\end{figure}

\subsection{Additional experiment using the ARMA model}
\label{appendix:ARMA}
Time series related to different types of data, such as dynamic textures, shape sequences and videos, are often modelled as auto-regressive and moving average (ARMA) models
\citep{doretto2003dynamic,aggarwal2004system,bissacco2001recognition,veeraraghavan2005matching}.
Provided observations $z=(z_t)_{t\in\{1,\ldots, T\}} \in (\mathbb{R}^n)^{T} $ at $T$ time points, the ARMA model is described by the equations
\begin{eqnarray*}
	&z_t=H x_t+w_t, \quad w_t  \overset{\text{i.i.d.}}{\sim} \mathcal{N}(\boldsymbol{0}_n, R), \\
	&x_{t+1}=B x_t+v_t, \quad  v_t  \overset{\text{i.i.d.}}{\sim}  \mathcal{N}(\boldsymbol{0}_k, Q), \qquad t\in \{1,\ldots,T\},
\end{eqnarray*}
where $x=(x_t)_{t\in \{1,\ldots,T\}}\in (\mathbb{R}^k)^T$ is the hidden state vector, $B\in \mathbb{R}^{k \times k}$ the transition matrix between the states at consecutive time points, $H\in \mathcal{V}(n,k)$ the measurement matrix and $R\in \mathbb{R}^{n\times n}, Q\in \mathbb{R}^{k\times k}$ covariance matrices. 
We can wrap this model in a Bayesian framework by considering the priors $x_1 \sim \mathcal{N}(\boldsymbol{0}_k,Q) $ and $ H \sim \operatorname{vMF}(F)$ with $F\in \mathbb{R}^{n\times k}$, such that $(F,B,R,Q)$ are seen as hyperparameters.

In this experiment, we compare the ESS related to the sampling of the posterior $ p(H|z) \propto p(z|H)p(H)$ by using GSS and RMH.
More precisely, we compute the ESS of $(\mathrm{tr}(F^\top H_i))_{i\in\{1, \ldots,T\}}$ where $(H_i)_{i \in\{1,\ldots,T\}}$ are the samples.
The expression of $p(H)$ is given in \eqref{eq:Mises_fisher_distribution} and $p(z|H)$ can be computed using Kalman filter updates. 
The observations $(z_t)_{t \in \{1, \ldots,T\}}$ are synthetically generated from the model where the parameters are chosen randomly with $T=10$ and varying dimensions $(n,k)$.
We choose a low variance for the observation covariance matrix $R$ and a prior close to the true parameter in order to have a concentrated posterior.

In Table \ref{table:ARMAEXP}, GSS outperforms RMH.
Moreover, for large $w$ we obtain higher ESS for GSS.
Surprisingly, increasing $n$ improves the ESS for GSS, but not for RMH, contrary to the experiments with the matrix von Mises–Fisher distribution in Section \ref{Sec: Experiments von Mises-Fisher distribution}.
This highlights how the target distribution influences the sampling quality.
For the case $(n,k)=(30,5)$ we observe that increasing $m$ while reducing $w$ increases the performances.
This suggest that if large transitions can be achieved only in specific situations, then large $m$ allows this when necessary. 
\begin{table}
	\caption{Effective sample size [min, median, max] of 10 repetitions for varying dimensions $(n,k)$ of the Stiefel manifold.
	All values have been multiplied by $10^{-2}$.}
	\centering
	\begin{tabular}{lllll}
		\toprule 
		($n,k$) & $(30,2)$ & $(30,5)$& $(100,2)$ \\
		GSS $w=1, m=1$ :  & [3.9, 4.2, 4.5] &[3.9, 4.0, 4.4]& [3.0, 3.1, 3.4]   \\
		GSS $w=5, m=2$ :  & [4.2, 4.7, 4.8] &[\textbf{4.7, 5.1, 5.6}]& [1.1, 1.4, 1.8]   \\
		GSS $w=10, m=1$ :  & [\textbf{5.1, 5.5, 5.9}] &[4.4, 4.8, 5.3]& [\textbf{1.5, 2.0, 2.7}]   \\
		RMH :  & [2.9, 3.7, 3.9]  &[4.0, 4.2, 4.7]& [2.9, 3.0, 3.2] \\
		
	\end{tabular}
	\label{table:ARMAEXP}
\end{table}

\subsection{Additional Experiments on the Manifold of Symmetric Positive Definite Matrices $\mathsf{S}_d^{++}$}

Previous experiments were conducted on compact manifolds. Here, we consider the manifold of symmetric positive definite matrices defined by
\[
\mathsf{S}_d^{++} = \left\{ A \in \mathbb{R}^{d \times d} : A = A^\top,\ x^\top A x > 0,\ \forall x \in \mathbb{R}^d \setminus \{0\} \right\}.
\]
The target distribution is a Riemannian Gaussian measure \cite{said2017riemannian} centered at the identity, given by
$
\pi(x) \propto \exp(-\lambda\, \dist(x, I_d)^2),
$
where $\dist$ denotes the geodesic distance induced by the affine-invariant Riemannian metric, and $\lambda > 0$.
 Further details can be found in \cite[Section 3.3]{pennec2006riemannian} including the expression of geodesics and the Riemannian metric.

We compare GeoRMH, GSS (\(w = 1, m = 10\)), and RMH using the ESS associated with the statistic $M \in \mathsf{S}_d^{++} \mapsto \trace(M)$ on $N=10^5$ samples. The dimension $d$ is 10 and the samplers are initialized with the identity.
 RMH applies a pseudo-projection of symmetric matrices onto $\mathsf{S}_d^{++}$ by thresholding the eigenvalues as $x \mapsto \max(\min(x, 10^8), 10^{-4})$ to avoid numerical instabilities.

\begin{table}
	\centering
		\caption{ESS [min, median, max] of 10 repetitions for varying variance factor $\lambda$ for the experiment on the SPD manifold. All values are scaled by \(10^{-3}\).}
	\label{table:SPD}
	\begin{tabular}{crrr}
		$\lambda$ & 1 & 10 & 100 \\ 
		GeoRMH & [0.6, 0.6, 0.7] & [1.9, 2.1, 2.1] &  [2.9, 3.2, 3.7]\\ 
		GSS (\(w = 1, m = 10\)) & [\textbf{0.7}, \textbf{0.7}, \textbf{0.8}] & [\textbf{2.5}, \textbf{2.6}, \textbf{2.9}] & [\textbf{4.1}, \textbf{4.7}, \textbf{5.2}]\\ 
		RMH & [0.2, 0.2, 0.2] & [\textbf{0.4}, 0.4, 0.5] & [0.5, 0.5, 0.6] \\ 
	\end{tabular}
\end{table}

As shown in Table \ref{table:SPD}, RMH performs poorly compared to GSS and GeoRMH. Moreover, GSS outperforms GeoRMH for any \(\lambda \in \{1, 10, 100\}\) for the same reason mentioned earlier: its ability to adapt to the target geometry.

The performance gap between RMH and GeoRMH can be attributed to the fact that GeoRMH uses the exponential map to define geodesics, enabling large transitions that are well-suited to the target distribution, which involves the geodesic distance
$
\dist(M, I_d)^2 = \sum_{i=1}^d \log^2(\zeta_i),
$
where $(\zeta_i)_{i=1}^d$ are the eigenvalues of $M$.

\subsection{A practical case: Understanding the variability in graph data sets.\\ Used parameters.}

\label{appendix:numerical_details_missing_link_imputation}
\begin{table}[hbt!]
    \caption{Parameters used in the experiments of Section \ref{sub_sec:practical_case}. $\operatorname{Unif}(n,k$) denotes the uniform sampling on $\mathcal{V}(n,k)$ and $\cdot$ the colomn wise multiplication.}
    \centering
    \begin{tabular}{lllll}
              \toprule 
              Parameter & $\sigma_\epsilon$ &$\sigma_\kappa$ & $\mu $ & $F$ \\
              Synthetic data estimation $(n,k,\lambda)$ &0.1 &2 &$[10,2,\ldots]$& $[\lambda,1,\ldots]\cdot\operatorname{Unif}(n,k) $   \\
              Missing link imputation  & 0.3  &2& $ [20,10,5,2,-10]$ &[60,20,20,20,5]$\cdot \operatorname{Unif}(20,5)$  \\

            \end{tabular}
          \label{table:set_up_exp_practical}
      \end{table}



\end{document}